\documentclass[11pt,a4paper]{article}
\usepackage{setspace}
\usepackage[english]{babel}
\usepackage{amsmath,amssymb,amsfonts}
\usepackage[gen,right]{eurosym}
\usepackage{textcomp,longtable,epsfig,url}
\usepackage{bbm}
\usepackage{graphicx}
\usepackage[intoc]{nomencl}
\usepackage[printonlyused]{acronym}
\usepackage{verbatim}
\usepackage{tabularx}
\usepackage{url}
\usepackage{color}
\usepackage{xcolor}
\usepackage{setspace}
\usepackage{scrhack}
\usepackage{floatrow}
\usepackage{listings}
\usepackage{amsthm}
\usepackage{paralist}
\usepackage{xr}
\usepackage{graphicx}
\usepackage{listings}
\usepackage{booktabs}
\usepackage{flafter}
\usepackage{varwidth}
\usepackage[round]{natbib}
\usepackage[a4paper]{geometry}
\usepackage{enumitem}
\usepackage{titling}
\usepackage{subcaption}

\makeindex

\newtheorem{definition}{Definition}[section]
\newtheorem{assumption}{Assumption}
\newtheorem{proposition}[definition]{Proposition}
\newtheorem{lemma}[definition]{Lemma}

\newtheorem{theorem}[definition]{Theorem}

\newtheorem{remark}[definition]{Remark}
\usepackage{natbib}

\newcommand{\beqo}{\begin{eqnarray*}}
\newcommand{\eeqo}{\end{eqnarray*}\noindent}
\newcommand{\beq}{\begin{eqnarray}}
\newcommand{\eeq}{\end{eqnarray}\noindent}

\newcommand{\md}{\mathrm{d}}

\newcommand{\mE}{\mathbb{E}}
\newcommand{\mR}{\mathbb{R}}

\usepackage{tikz}
\usetikzlibrary{arrows.meta, shapes.geometric, positioning, calc}
\usetikzlibrary{fit}
\usepackage[utf8]{inputenc}

\title{Retirement Planning with Minimum Guarantees for Both Beneficiaries and Fund Managers}
\date{October 5, 2026}
\author{
Zhaojie Ren
\thanks{Department of Statistics and Data Science, The Chinese University of Hong Kong,
Shatin, N.T., Hong Kong SAR
(\url{rzj20@tsinghua.org.cn}).}
\and
Sheng Wang
\thanks{Department of Statistics and Actuarial Science, School of Computing and Data Science,
The University of Hong Kong, Pokfulam Road, Hong Kong SAR
(\url{sheng-wa15@tsinghua.org.cn}).}
\and
Tak Kwong Wong
\thanks{School of Mathematical Sciences, Shenzhen University,
518061 Shenzhen, P.\,R. China
(\url{takkwong@szu.edu.cn}).}
\and
Sheung Chi Phillip Yam
\thanks{Department of Statistics and Data Science, The Chinese University of Hong Kong,
Shatin, N.T., Hong Kong SAR
(\url{scpyam@sta.cuhk.edu.hk}).}
}

\usepackage{etoolbox}
\usepackage[colorlinks=true,linkcolor=blue,citecolor=blue]{hyperref}
\setcitestyle{numbers,square}
\numberwithin{equation}{section}
\newcommand{\keywords}[1]{\par\medskip\noindent\textbf{Keywords: }\begingroup\renewcommand{\and}{\unskip;\space}#1\endgroup\par}
\newcommand{\subclass}[1]{\par\smallskip\noindent\textbf{Mathematics Subject Classification: }\begingroup\renewcommand{\and}{\unskip;\space}#1\endgroup\par}
\AfterEndEnvironment{abstract}{\setstretch{1.15}\allowdisplaybreaks}

\begin{document}
\maketitle
\begin{abstract}
This article studies a retirement planning problem from a new perspective in which a retiree delegates an initial lump sum to a professional fund manager. The fund is managed dynamically to deliver lifelong benefits while satisfying a guarantee that, at all times, wealth remains above a prescribed solvency level and the benefit rate remains above a minimum level. We formulate this problem as a continuous-time stochastic control model, in which the controls are the benefit rate and the portfolio allocation, and the objective incorporates general utility functions over benefits, management fees, and terminal bequest. Conventional approaches such as standard viscosity solution and martingale methods are not immediately applicable here due to the simultaneous influence of (i) the control variable nature of the benefit rate; (ii) the guarantee constraints; and (iii) the running wealth-dependent utility. We develop a new approach that deals directly with the associated fully nonlinear Hamilton–Jacobi–Bellman equation and establish the unique existence of a classical solution.  One particular trick
for resolving this equation is to apply an intricate transform that diverts
the former to a semilinear auxiliary equation. A major contribution is to establish the
existence of the solution to the auxiliary equation and characterize its precise growth,
derivative, and boundary behavior. These results are crucial for establishing the verification theorems and deriving the optimal strategy. Besides, numerical experiments indicate that guarantee constraints induce more conservative policies, particularly at low wealth levels, while providing effective downside protection for both wealth and benefits. The associated well-being loss relative to the unconstrained framework remains small, suggesting that incorporating guarantees into retirement products is practically appealing.
\keywords{retirement planning \and stochastic control\and  control and state constraints    \and optimal benefit-investment \and proportional management fee \and classical solution of HJB equation}
\subclass{	49L12 \and 	91G05  \and 	91G10 }

\end{abstract}

\section{Introduction}
The transition from wealth accumulation in pension schemes to post-retirement decumulation poses a fundamental challenge for  retiring individuals, namely, how to ensure retirement income and maintain well-being in the retirement journey. 
Upon retirement, individuals typically hold a certain amount of  endowment that should be handled wisely for the rest of their uncertain lives. Moreover, since retirees often seek higher returns to sustain consumption over long retirement horizons, particularly given the rise of life expectancy, participation in risky financial assets such as equities is typically unavoidable. This exposes retirement wealth to significant market volatility along with longevity risk of the holder.
Traditional defined-benefit pension schemes transfer these risks from individuals to pension providers by providing a stable lifelong income stream; see \cite{booth2020modern}. However, such arrangements have become increasingly less prevalent in modern pension systems amid population aging; see, for example, \cite*{OECD2023Pensions, OECD2025Pensions}. With this urging issue, more individuals demand  professional management for their retirement wealth, and a range of retirement products has emerged;  see \cite*{chenmil,moenig2016revisiting, ici2025factbook}  for instance. These products  combine the flexibility of individual account management with the minimum guarantee of 
lifetime income.

A substantial amount of literature has been devoted to investigating various mechanisms for transforming accumulated retirement wealth into sustainable income streams. One prevalent approach is through the vehicle of life annuities, which provide stable lifelong income by fully insuring  longevity risk, but fail to hedge out inflation risk; see, e.g., \cite*{yaari1965uncertain,lockwood2012bequest}. 
One can imagine several possible extensions, especially those that can
 address specific retiree preferences.
For instance, variable annuities allow retirees to participate in financial market returns, but this upside potential comes with downside risk, which naturally  incorporates  guarantee features to balance risk and return; see, e.g., \cite*{chenmil,moenig2016revisiting}. 

Another  popular approach is to embed retirement income features in mutual funds, commonly referred to as retirement income funds; also see \cite*{ici2025factbook}. 
A number of such products are available in the market, for example, the {\it Vanguard Target Retirement Income Fund} \cite*{vanguard_vtinx}, the {\it Schwab Monthly Income Funds} \cite*{schwab_income_products}, and the {\it American Funds Retirement Income Portfolio} \cite*{americanfunds_fafwx}. 
These products   provide systematic withdrawals and diversified exposure to financial markets within the  common business jurisdiction. In contrast to annuity-based arrangements, such retirement income funds typically do not offer explicit benefit guarantees, leaving retirees fully exposed to both longevity and market risks. To fill this gap, we develop a baseline mathematical framework that incorporates guarantee features while maintaining market participation, providing theoretical guidance for the design of retirement income funds with both upside exposure to financial markets and downside protection.

\subsection{A Model of Retirement Planning with Minimum Guarantees}
In this article, we consider a framework in which the retiree delegates the management of the initial lump sum \(x_0\) to a professional fund manager.
In practice, the fund managers  may collect information about  clients such as their risk appetite. One possible way for obtaining such information is, in the era of big data,  through accumulated data gained by serving a large number of retirees with broadly similar backgrounds in terms of risk preferences and retirement needs. Accordingly, they are  able to put themselves in the shoes of retirees with a sufficiently accurate understanding of  utility functions of clients.

The fund is invested in a portfolio comprising a risk-free asset and a risky asset, and its value \(X(t)\) evolves randomly, subject to two constraints. First, a capital floor ensures that the fund value never falls below a fixed proportion of the initial contribution, i.e., \(X(t) \ge \alpha x_0\). This floor captures a regulatory or solvency requirement: the fund holds sufficient assets as a buffer against liquidity risk, ensuring that it can meet its obligations without relying  much on exogenous financial sources; see, e.g., \cite{di2011pension}, \cite{broeders2013pension} and \cite*{avanzi2024optimal}. Moreover, it also guarantees a lower bound on the bequest. Second, a benefit floor guarantees the retiree a lifelong  steady benefit stream \(b(t)\) that never drops below a minimum level \(b_0\), i.e., \(b(t) \ge b_0\). Together, these two constraints limit the fund manager’s risk-taking incentives, preventing the shifting of downside risks at  the expense of the retiree. In exchange for managing the fund and providing these guarantees, the manager charges proportional fees consisting of a running component \(\kappa_1 X(t)\) and a terminal component \(\kappa_2 X(\tau_d)\); see, for example, \cite*{elton2003incentive}, \cite*{khorana2009mutual}, \cite*{guasoni2015hedge} and \cite*{bensoussan2022inter}. As illustrated in Figure~\ref{fig:placeholder}, this arrangement transforms the retiree’s initial wealth into a stable lifetime benefit stream, protected by the benefit floor, while the capital floor ensures a residual amount of at least $(1-\kappa_2)\alpha x_0$, which is ultimately passed to the descendant as bequest  upon death at \(\tau_d\).

\begin{figure}[]
	\centering
	\caption{Structure of the retirement planning model with guarantees. Closely related arrangements include pension funds with minimum guarantees \cite{di2011pension}, and income-drawdown plans with minimum guarantees \cite{di2014income}}
	\label{fig:placeholder}
	
	\begin{tikzpicture}[
		node distance=3cm,
		every node/.style={align=center, font=\small},
		box/.style={rectangle, draw, thick, minimum width=2cm, minimum height=1.2cm, rounded corners},
		arrow/.style={-{Stealth[scale=1.2]}, thick}
		]
		
		\node[box, fill=blue!10] (retiree) {Retiree};
		\node[box, fill=green!10, right=of retiree] (manager) {Manager};
		\node[box, fill=yellow!10, right=6cm of manager] (fund) {Fund Value $X$};
		\node[box, fill=orange!10, below=1cm of fund, xshift=-8mm] (market) {Market\\ (Risk-free + Risky assets)};
		
		\node[box, fill=purple!10, above=1cm of retiree, xshift=3mm] (wealth) {Wealth\\Accumulation};
		
		\draw[arrow] (retiree) -- node[above] {initial wealth $x_0$} (manager);
		
		\draw[arrow] (manager) -- node[above] {control $(\pi, b)$} (fund);
		
		\draw[arrow] (fund) to[bend left=20]
		node[below] {benefit $b(t)$, bequest $(1-\kappa_2)X(\tau_d)$}
		(retiree);
		
		\draw[arrow] (fund) to[bend right=20]
		node[above] {fees $\kappa_1 X(t),\ \kappa_2 X(\tau_d)$}
		(manager);
		
		\draw[arrow] (fund.south) -- node[left] {investment $\pi$} ([xshift=8mm]market.north);
		\draw[arrow] ([xshift=8mm]market.north) -- node[right] {returns} (fund.south);
		
		\draw[arrow] ([xshift=-3mm]wealth.south) -- node[right] {lump sum at retirement} (retiree);
		
		\node (constraints) at ($(manager)!0.5!(fund) + (-0.2,-0.35cm)$)
		{constraints $X(t)\ge \alpha x_0,\ b(t)\ge b_0$};
		
	\node[
	draw=black,
    line width=1.2pt,
	rounded corners,
	inner sep=-3pt,
	fit=(constraints)
	] {};
		
	\end{tikzpicture}
\end{figure}

Formally, the evolution of the fund’s value $X(\cdot)$ is described by the following  stochastic differential equation (SDE):
\begin{align}\label{eq:sde:x:intro}
\left\{
\begin{aligned}
    &\mathrm{d} X(t)=[r_1+(\mu-r_1)\pi(t)]X(t)\mathrm{d}t-b(t)\mathrm{d}t-\kappa_1X(t)\md t+\sigma\pi(t)X(t)\mathrm{d}W(t),\\
    &X(0)=x_0,
    \end{aligned}
    \right.
\end{align}
where $r_1$ is the return rate of the long-term riskless asset available in the market to the manager but not to the individual clients,
$\mu$ and $\sigma$ are the expected return and
the volatility of the risky asset respectively, and $\pi(t)$ is the  proportion of wealth allocated to the risky asset at time $t$.
The manager’s objective is
\begin{align}
&\max_{(b,\pi)}
\mathbb{E}\bigg[
\int_0^{\tau_d} e^{-\rho t} U_{c,1}(b(t))\,\mathrm{d}t+e^{-\rho \tau_d}U_{c,2}((1-\kappa_2)X(\tau_d))
 \nonumber\\
&\quad\quad\quad\quad+\int_0^{\tau_d} e^{-\rho t} U_{m,1}(\kappa_1 X(t))\,\mathrm{d}t+e^{-\rho \tau_d}U_{m,2}(\kappa_2X(\tau_d))
\bigg],\label{eq:problem:max:intro}\\
&\text{subject to} \quad b(t) \geq b_0 \,\,\, \text{and} \,\,\, X(t) \geq \alpha x_0\quad \text{for a.e. } t\geq0, \text{ a.s.}.\label{eq:constraint:intro}
\end{align}
 The objective function integrates the preferences of both the retiree and the fund manager: the first and second terms correspond to the retiree’s utilities from the benefit and bequest, respectively, while the third and fourth terms capture the fund manager’s own utilities from running and terminal management fees, respectively. 

As discussed above, the fund manager has  sufficiently rich information about clients’ preferences and risk attitudes, which enables her/him to form a reasonably accurate perception of clients’ utility functions. 
In addition, taking greater account of retirees’ utilities enables the manager to gain reputation and credentials, which in turn triggers more future business opportunities.
Together, these considerations provide a rationale for modeling the fund manager's objective that internalizes both the retiree’s welfare and her/his own compensation in investment and benefit decisions on behalf of clients. Moreover, to accommodate different clients, whose utility functions can vary from one to another, we adopt a general utility specification. This framework also allows the manager to implicitly adjust the relative importance (or weights) assigned to her/his own utility and that of the retiree through the choice of utility functions $U_{c,1}, U_{c,2},U_{m,1}$ and $U_{m,2}$.

\subsection{Related Works}

The incorporation of constraints on controls and state variables has long been an interesting topic in stochastic control theory, particularly in mathematical finance and economic modeling;  more importantly, treating these two constraints together is generally more challenging.
For the classical  consumption and investment problems, a substantial amount of work focuses on constraints imposed on consumption, often through habit formation or ratcheting and drawdown requirements that confine admissible consumption paths; see \cite*{constantinides1990habit}, \cite*{dybvig1995dusenberry}, and \cite*{van2020consumption} for instance. These consumption constraints are closely related to the benefit floor \(b(t)\ge b_0\) in our retirement income setting, where the benefit rate plays the role of consumption. For completeness, we note that another line of research has focused on portfolio constraints such as no short selling; see \cite*{zariphopoulou1994consumption} and \cite*{czichowsky2012convex,czichowsky2013cone} for instance.  These portfolio constraints are not directly imposed in our model. Instead, the two minimum guarantees (benefit floor and wealth floor) naturally limit the investment behavior. As a consequence,  although no short-selling constraint is imposed \emph{a priori}, the optimal investment strategy is shown to satisfy the no short-selling property under general utility settings; see Equation \eqref{eq:b^*} in Theorem \ref{thm:verification}. This observation is particularly meaningful, as it demonstrates that the no short-selling feature emerges endogenously from the very nature of the problem.

Another important line of the literature considers constraints on the state variable, typically the wealth process in mathematical finance. These constraints can be broadly classified into two categories: terminal constraints and pathwise (or running) constraints. Many works focus on terminal constraints, requiring the terminal wealth to remain above a prescribed level; representative studies include \cite*{el2005optimal} and \cite*{avanzi2024optimal}. Problems of this type are often tackled using the martingale method for general utility functions, or via dynamic programming, where the use of specific functional forms of the utility functions allows one to employ the {\it ansatz} approach.

Closely related to our setting, but relatively less explored in the literature, are problems with running constraints on wealth, where the objective is to maximize a running wealth-dependent utility. Notable exceptions include \cite*{di2011pension} and \cite*{federico2011stochastic}, which investigate the optimal management of pension funds under minimum solvency constraints.  
Owing to the presence of such floor constraints together with wealth-dependent utility, the martingale approach is not immediately  applicable, and \cite*{di2011pension} instead relies on the theory of viscosity solutions. In particular, that study shows that the value function is a viscosity solution to the corresponding  Hamilton-Jacobi-Bellman (HJB) equation, and further establishes regularity properties of the value function. In the delay setting, \cite*{federico2011stochastic} establishes an equivalent infinite-dimensional formulation and continuity of the value function, providing a basis for subsequent HJB analysis. These works provide important insights into portfolio choice under capital guarantees and highlight the role of solvency considerations in long-term fund management. Nevertheless, the continuation of this line of research appears to be sluggish for a while,
until more recent contributions such as \cite*{federico2015utility}, \cite*{bensoussan2016unemployment} and \cite*{bensoussan2022inter}, which develop tools for handling general utility functions and thereby substantially broaden the analytical tractability of the discipline. 

Compared with  \cite*{di2011pension} and \cite*{federico2011stochastic}, where the benefit rate is assumed to be constant or determined by a prescribed rule, our model treats the benefit stream as a control variable. This feature allows for dynamic adjustment of withdrawals in response to market conditions, thereby enabling participation in favorable market returns. We emphasize that the martingale method, as well as the viscosity solution framework, are not directly applicable in our setting; see Appendix~\ref{sec:discussion} for a detailed discussion. Instead, we employ the dynamic programming approach, together with the transform techniques developed in \cite*{bensoussan2016unemployment} and \cite*{bensoussan2022inter}, to establish the existence of a classical solution to the associated HJB equation \eqref{eq:hjb}. For completeness, we also refer to related works on running constraints. Representative contributions in the deterministic framework include \cite*{federico2010hjb,federico2011hjb} and \cite*{calvia2024optimal}, while contributions in the stochastic control setting include \cite*{elie2008optimal} and \cite*{di2014income}, in which explicit solutions via dynamic programming in  specific settings are obtained.

\subsection{Contributions}
\subsubsection{Our Methodology}
To solve the problem \eqref{eq:problem:max:intro} under the wealth dynamics \eqref{eq:sde:x:intro} and subject to the constraints of \eqref{eq:constraint:intro}, we adopt a dynamic programming approach and formulate the associated stochastic control problem \eqref{eq:valuefunction:def} under \eqref{eq:sde:x} and \eqref{eq:mathcalA:def} below. The corresponding value function, denoted by $V$,  is defined on the state space $[\underline{x},\infty)$, where
 $
\underline{x}:=\max\left\{\alpha x_0,\frac{b_0}{r}\right\}$ and $r:=r_1-\kappa_1$.  \footnote{Although the solvency constraint only requires \(X(t)\ge\alpha x_0\), the benefit floor \(b(t)\ge b_0\) is self-financing only if wealth covers its capitalized value \(\frac{b_0}{r}\). The effective lower bound is thus \(\underline{x}=\max\{\alpha x_0,\frac{b_0}{r}\}\); see Lemma \ref{lma:b_0/r}. Accordingly, we consider the value function $V$ on the admissible state space $[\underline{x},\infty)$.} The HJB equation associated with this problem is given by \eqref{eq:hjb} (equivalently, \eqref{eq:hjb:simplify}).
The analysis then reduces to two main tasks: (i) proving a verification theorem to identify sufficient conditions under which a  solution to the HJB equation coincides with the value function and characterizes the optimal strategy; and (ii) establishing the existence of a solution to the HJB equation satisfying the required regularity, growth, and monotonicity properties. 
Both tasks are highly nonstandard and technically challenging, as they involve the simultaneous presence of state and control constraints, whereas handling even a single constraint can already be difficult, giving rise to intricate boundary behavior at $\underline{x}$.

To establish the verification theorem, several conditions must be imposed on a candidate value function $v$. In particular, besides satisfying the HJB equation, the candidate optimal strategy induced by $v$ must correspond to a wealth process whose associated SDE admits a unique strong solution, with coefficients satisfying appropriate linear growth conditions. Moreover, suitable boundary conditions must be imposed on $v$ and its derivatives at the boundary $\underline{x}$, which depend on the control admissible set corresponding to the initial wealth   $\underline{x}$ and, in turn, are determined by the relative magnitude of $\alpha x_0$ and $\frac{b_0}{r}$, leading to two distinct cases. In the first case, when $\alpha x_0 > \frac{b_0}{r}$, the set of  admissible strategies when initial wealth equals  $\underline{x}$ contains infinitely many elements, so the boundary is non-degenerate. Using the shorthand $U_1:=U_{c,1}$ introduced in Section~\ref{sec:dpp:hjb},
the candidate value function $v$ is then expected to satisfy $U_{1}'(r\underline{x})<v'(\underline{x})<\infty$; see Theorem \ref{thm:verification}.  This implies that the optimal benefit rate at the boundary $\underline{x}$ must be strictly below the interest income $r\underline{x}$. This condition ensures that $\underline{x}$ is not an absorbing boundary under the optimal strategy.
In contrast, in the second case, when $\alpha x_0 \le \frac{b_0}{r}$, the  admissible strategy when the initial wealth equals  $\underline{x}$ is unique, and the value function $V(\underline{x})$ can be computed explicitly. The key requirement is that the candidate value function satisfies $v(\underline{x}) = V(\underline{x})$.

To analyze the HJB equation  \eqref{eq:hjb:simplify}, we employ a transform method introduced in \cite*{bensoussan2016unemployment} and further developed in \cite*{bensoussan2022inter} and \cite*{bensoussan2022risk}. This technique converts the fully nonlinear HJB equation \eqref{eq:hjb:simplify} into the auxiliary equation \eqref{eq:H:first}. Solving \eqref{eq:H:first} poses fundamentally different difficulties in the two boundary cases: 
(i) when $\alpha x_0 > \frac{b_0}{r}$, the verification theorem requires the candidate value function to satisfy $U_{1}'(r\underline{x}) < v'(\underline{x}) < \infty$, with $v'(\underline{x})$ undetermined in advance, and the main challenge is to determine this value. By treating $v'(\underline{x})$ as a free parameter, we establish the global unique existence  of the solution to the semilinear equation \eqref{eq:extension:u_22} (a slight modification of \eqref{eq:H:first}) for each feasible parameter value, together with monotonicity and growth estimates of the solution. 
By analyzing the asymptotic behavior of the resulting family of global solutions, we select a suitable parameter value for which the corresponding solution is strictly increasing and satisfies the precise growth and derivative bounds; see Theorem~\ref{thm:H}. Inverting the transform then yields a solution to the original HJB equation \eqref{eq:hjb:simplify} satisfying the hypotheses of the verification theorem. (ii)
when $\alpha x_0 \le \frac{b_0}{r}$, equation  \eqref{eq:H:first} is posed on $(0,\infty)$ and the point $0$ becomes a degenerate boundary. 
To overcome this difficulty, we apply the Schauder fixed point theorem to establish the existence of a solution to equation  \eqref{eq:H:first}.
We then determine the precise asymptotic behavior of the derivative at the degenerate boundary and establish monotonicity and growth properties analogous to those in Case (i); see Theorem~\ref{thm:Hnew:case2}. We then invert the transform and apply the verification theorem to identify the resulting solution as the value function.

\subsubsection{Numerical Experiments}
Our numerical experiments reveal several eye-catching observations regarding the impact of minimum guarantee constraints on optimal strategies. First, both the optimal investment and benefit policies under minimum guarantees are more conservative than those under the unconstrained framework, particularly when the running wealth level is low, reflecting the prudent influence of the guarantee constraints. Moreover, tighter guarantees, either through a higher wealth floor $\alpha x_0$ or a larger minimum benefit requirement $b_0$, lead to lower investment proportions and, away from the lower wealth bound, lower benefit-to-wealth ratios.  On the other hand, when the running wealth level increases, the influence of these constraints gradually diminishes, and all constrained strategies approach those under the unconstrained framework. This suggests that guarantee mechanisms primarily affect decisions in low-wealth regions, while becoming negligible when the fund is sufficiently well-capitalized.

From a welfare perspective, numerical experiments show that the minimum guarantee constraints entail only a small well-being loss relative to that under the unconstrained framework. This loss is borne primarily by the retiree, while the manager’s utility is only marginally affected. At the same time, the guarantees provide effective downside protection by preventing both wealth and benefit payments from falling below prescribed levels under worst-performing scenarios.
In particular, our analysis of the 5th and 20th percentiles shows that wealth levels and benefit rates under the constrained policies remain at or above their respective guarantees, whereas their unconstrained counterparts quickly breach the lowest guarantee levels at the 5th percentile and also breach them, though more gradually, at the 20th.
Overall, these findings suggest that guarantee mechanisms can enhance retirement security only at the expense of a limited amount of sacrifice, supporting their use as a practical design feature in retirement decumulation products.

\subsection{Outline}
The remainder of the article is organized as follows. Section \ref{sec:formu} presents the mathematical formulation of the model. Section \ref{sec:dpp:hjb} applies the dynamic programming approach to derive the HJB equation and establishes verification theorems for two cases determined by the relative magnitude of $\alpha x_0$ and $\frac{b_0}{r}$. Section \ref{sec:well:hjb} establishes the existence and properties of the HJB solution in the nondegenerate boundary case, while Section \ref{sec:degenerate} treats the degenerate boundary case.
Section \ref{sec:numerical} presents numerical experiments. 
Section \ref{sec:conclusion} concludes the article.
Some discussions, extensions and proofs are given in the Appendix.

\section{Problem Formulation}\label{sec:formu}
Let \((\Omega, \mathcal{F}, \mathbb{F}, \mathbb{P})\) be a complete filtered probability space satisfying the usual conditions. 
The space supports a standard Brownian motion \(W=\{W(t):t\geq 0\}\), and 
  a random variable $\tau_d$ representing the death time of the retiree.  We assume that
  $W$ and $\tau_d$ are independent.
  The filtration
 $\mathbb{F}:=\{\mathcal{F}_t\}_{t \ge 0}$ is defined as the smallest filtration that contains $W$ and renders $\tau_d$ a stopping time.

As mentioned in the introduction, we consider a retiree who, upon retirement, receives a lump sum from a pension scheme and entrusts it to a professional fund manager, who makes decisions on behalf of the retiree.
The fund manager is responsible for controlling the investment proportion $\pi(\cdot)$ allocated to the risky asset, as well as offering the benefit rate $b(\cdot)$ distributed to the retiree. The evolution of the fund's value $X(\cdot)$ is described by  \eqref{eq:sde:x:intro}.
There,
$r_1>0$ is the return rate of the riskless asset offered by the fund,
$\mu>r_1$ and $\sigma>0$ are the expected return and
the volatility of the risky asset respectively, and
$0<\kappa_1<r_1$ is the running management fee rate;  see \cite*{elton2003incentive}, \cite*{khorana2009mutual}, \cite*{guasoni2015hedge} and \cite*{bensoussan2022inter}.
The market price of risk is $\theta := \frac{\mu - r_1}{\sigma}$.

It is typically the case that $r_1 > r_0$, where $r_0$ denotes the standard risk-free rate generally
available in the market to the retiree. This reflects the fact that the fund may have access to better
investment opportunities available only to companies or can offer a higher credited rate to attract
retirees' funds; see, e.g., \cite*{korajczyk2008pricing}, \cite*{gerhold2014transaction},  and \cite{le2020you}. In particular, owing to its large scale and long investment
horizon, the fund can allocate part of its assets to long-maturity bonds and
other buy-and-hold instruments that are effectively illiquid. These assets
typically command a liquidity premium, which is not directly accessible to
individual retirees who mainly invest in liquid, short- to medium-term bonds
through the market;  see, e.g., \cite*{korajczyk2008pricing}, \cite*{gerhold2014transaction},  and \cite{le2020you}. 
Financial intermediaries facilitate the transformation of illiquid investments into liquid claims for customers;  also see \cite*{booth2020modern}.
As a result, the fund is able to generate a higher
effective return,
and $r_1$ can be interpreted as an incentive rate that makes
the fund more attractive relative to outside market alternatives.

Meanwhile, the fund manager charges a running management fee at rate $\kappa_1$ ($\kappa_1<r_1$), reducing the net return experienced by the retiree; see \cite*{elton2003incentive}, \cite*{khorana2009mutual}, \cite*{guasoni2015hedge} and \cite*{bensoussan2022inter}. The net interest rate for the retiree within the fund is thus $r := r_1 - \kappa_1>0$.  
Despite this fee deduction, the net return typically still exceeds the market benchmark;  in practice, one often observes $r \geq r_0$.
For instance, based on the average rates observed in July 2026, the short risk-free rate in the U.S. market is $r_0=3.7\%$, while the fund may achieve a higher return $r_1=5.3\%$ by investing in long-term illiquid bonds that generate a liquidity premium; see \cite*{us_treasury_daily_bill_rates_2026} and \cite*{fed_nominal_yield_curve_2026}.
With a management fee of $\kappa_1=0.4\%$, as documented in \cite*{ici_expense_ratios}, the resulting net rate is $r := r_1 - \kappa_1 = 4.9\%$, which exceeds $r_0$.
Similar patterns can be observed in other markets, including Australia, the United Kingdom and Germany.  
From a mathematical perspective, however, the analysis in this article only requires $r>0$; the comparison with $r_0$ plays no role in the theoretical results.

 The dynamics of $X(\cdot)$ in \eqref{eq:sde:x:intro} can then be rewritten as:
\begin{align}
\left\{
\begin{aligned}
    &\mathrm{d} X(t)=[r+\sigma \theta\pi(t)]X(t)\mathrm{d}t-b(t)\mathrm{d}t+\sigma\pi(t)X(t)\mathrm{d}W(t),\\
    &X(0)=x_0.
    \end{aligned}
    \right.\label{eq:wealth:statics}
\end{align}
 The manager's objective is given by \eqref{eq:problem:max:intro} under the wealth dynamics \eqref{eq:wealth:statics} and subject to the constraints in \eqref{eq:constraint:intro}.
The parameter $\rho> 0$ is the discount rate, used to evaluate the present value of future benefits and management fees. It is taken to be arbitrary and may, in particular, coincide with either $r_0$ or $r$, or not.

In the objective function \eqref{eq:problem:max:intro}, the manager seeks to maximize the expected discounted utility from two distinct sources:
\begin{itemize}
    \item \textbf{Utility from benefit payments and bequest}: \(\int_0^{\tau_d} e^{-\rho t} U_{c,1}(b(t))\,\mathrm{d}t\) represents the utility derived from the continuous stream of benefits \(b(t)\) paid to the retiree over her/his lifetime; \(e^{-\rho \tau_d} U_{c,2}((1-\kappa_2)X(\tau_d))\) accounts for the utility from the net terminal wealth remaining at $\tau_d$ which, after deduction of the terminal management fee, constitutes the bequest to the descendant. As mentioned in the introduction, the manager is fully aware of the retiree’s satisfaction and preferences; in particular, the manager knows the utility functions $U_{c,1}$ and $U_{c,2}$ specified below.
    \item \textbf{Utility from management fees}:  $
\int_0^{\tau_d} e^{-\rho t} U_{m,1}(\kappa_1 X(t))\,\mathrm{d}t
+ e^{-\rho \tau_d}U_{m,2}(\kappa_2X(\tau_d))$
captures the utility obtained from management fees. The first term corresponds to running fees that are proportional to the fund wealth \(X(t)\) with rate \(\kappa_1\), while the second term represents a terminal fee charged at the death time \(\tau_d\) of the retiree, proportional to the terminal wealth with rate \(\kappa_2\in[0,1)\).

\end{itemize}

The constraints in \eqref{eq:constraint:intro} impose minimum guarantees on the fund's operation:
\begin{itemize}
    \item The condition 
    \begin{align*}
    b(t) \geq b_0,
    \end{align*}
    with \(b_0 \in (0, r x_0]\), ensures that beneficiaries receive at least a guaranteed minimum benefit rate at all times. We  require $b_0 \leq rx_0$, since
maintaining a constant minimum benefit level must be financially feasible
given the initial wealth $x_0$ and the risk-free return $r$. If \(b_0 > r x_0\), there exists no admissible strategy that can satisfy \eqref{eq:constraint:intro}.  This feasibility restriction is
consistent with the results in Lemma \ref{lma:b_0/r} and Lemma \ref{lma:stop}.
    \item The condition 
    \begin{align*}
    X(t) \geq \alpha x_0,
    \end{align*}
    where \(\alpha \in (0,1]\), sets a lower bound on the fund's wealth. Here, $\alpha$ is a fixed coefficient, and the product $\alpha x_0$
 represents a prescribed proportion of the initial wealth $x_0$.
    This constraint has the following two implications:
        \begin{enumerate}
            \item [(a)] It ensures that the fund wealth never falls below a fixed proportion of the initial capital, thereby guaranteeing that the net terminal wealth, after deduction of the terminal management fee, remains at least $(1-\kappa_2)\alpha x_0$ and thus provides a minimum bequest level for beneficiaries. In addition, it provides a minimum guarantee for the fund manager, since the running management fee satisfies \(\kappa_1 X(t) \geq \kappa_1 \alpha x_0\) and the terminal fee satisfies \(\kappa_2 X(\tau_d) \geq \kappa_2 \alpha x_0\).
            \item [(b)] More broadly, it serves as a solvency requirement aligned with regulatory or supervisory standards, preventing the manager from adopting overly risky strategies that could jeopardize the fund's ability to meet future obligations and thus averting bankruptcy; see, e.g., \cite*{di2011pension}, \cite{broeders2013pension} and \cite*{avanzi2024optimal}.
        \end{enumerate}
\end{itemize}
It is worth emphasizing that these constraints, while introducing minimum guarantees, do not lead to a significant loss in well-being measured by value function for either
the retiree or the fund manager. In fact, as demonstrated in Section \ref{sec:numerical}, the relative
utility cost of the above guarantees is at most $2.22\%$ for either the retiree or the fund manager
across the three baseline scenarios; see Table~\ref{tab:utility-loss-decomposition}.

 In Appendix \ref{sec:extend}, we further show that, under Constant Relative Risk Aversion (CRRA)  utility, both constraints can be extended to deterministic exponential growth processes. Specifically, we consider a setting in which the guaranteed benefit and the wealth floor grow at a common rate $0< \nu<r$, thereby leading to time-dependent constraints.

In this article, we assume that there exists a constant $\lambda>0$ such that
\begin{align*}
    \mathbb{P}(\tau_d\geq t)=\mathrm{e}^{-\lambda t}, \quad t\in[0,\infty).
\end{align*}
Under this assumption, the objective \eqref{eq:problem:max:intro} can be rewritten in an equivalent form that incorporates the constant mortality rate \(\lambda\); see \eqref{eq:valuefunction:def} below. 
The constant-force specification is a standard tractable benchmark in continuous-time retirement and financial models.  It appears in lifetime consumption, investment, and retirement problems \cite*{dybvig2010lifetime}, in consumption models with independent exponential termination \cite*{davey2021duality}, and in insurance valuation through an equivalent adjustment to the discount rate \cite*{chiarolla2022analytical}. Related insurance-control studies also use exponential random horizons or analyze the constant-hazard case; see \cite*{cao2025continuous,cao2025optimalreporting}. This benchmark keeps the control problem time-homogeneous and allows us to isolate the model's main mechanisms; even in this setting, the analysis remains technically challenging.

In future research, more realistic mortality models could be incorporated. For instance, one could adopt the Makeham mortality law; see, e.g., \cite*{turnbull2017history}. Under such a general mortality specification, the mortality rate \(\lambda\) becomes time-dependent, and the corresponding HJB equation would transform into a partial differential equation (PDE). Solving such problems poses significant challenges and remains an interesting direction for further investigation.

The following lemma establishes a sharper lower bound for the wealth process. 
This lower bound allows us to reformulate the problem as a standard stochastic control problem on a fixed state domain, 
which is essential for applying the dynamic programming approach in the subsequent analysis.
\begin{lemma}\label{lma:b_0/r}
   For any $(b,\pi)$ such that \eqref{eq:wealth:statics} and \eqref{eq:constraint:intro} hold,  we have
    \begin{align*}
        X(s)\geq \frac{b_0}{r}\quad \text{for all } s\geq0, \text{ a.s.}.
    \end{align*}
\end{lemma}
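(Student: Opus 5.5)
Set $Y(t):=X(t)-\frac{b_0}{r}$. We argue by contradiction: suppose there are $s_0\ge 0$ and an event $A$ of positive probability on which $Y(s_0)<0$. On $A$ the agent is below the capitalized value of the minimum benefit. Using $b\ge b_0$ we show that the wealth then runs below $\alpha x_0$ (indeed below any level) with positive probability, which contradicts \eqref{eq:constraint:intro}. Since $X(0)=x_0\ge b_0/r$ because $b_0\le rx_0$, the case $s_0=0$ is trivial.

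\textbf{Step 1 (dynamics of $Y$).} From \eqref{eq:wealth:statics}, $dY=(rY+r\tfrac{b_0}{r}-b)\,dt+\sigma\pi X(\theta\,dt+dW)=(rY-(b-b_0))\,dt+\sigma\pi X\,dW^{\mathbb Q}$, where $W^{\mathbb Q}(t)=W(t)+\theta t$. By Girsanov on each finite horizon $[0,T]$, $W^{\mathbb Q}$ is a $\mathbb Q_T$-Brownian motion, with $\mathbb Q_T\sim\mathbb P$ on $\mathcal F^W_T$.

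\textbf{Step 2 (discounted supermartingale).} By It\^o's formula, $d(e^{-rt}Y)=-e^{-rt}(b-b_0)\,dt+e^{-rt}\sigma\pi X\,dW^{\mathbb Q}$. The drift is nonpositive. The constraint $X\ge\alpha x_0$ gives $e^{-rt}Y\ge e^{-rt}(\alpha x_0-b_0/r)$, which is bounded below on $[0,T]$. So the local martingale part is bounded below, hence a $\mathbb Q_T$-supermartingale (Fatou), and therefore $e^{-rt}Y$ is a $\mathbb Q_T$-supermartingale on $[0,T]$. Conditioning on $\mathcal F_{s_0}$ and restricting to $A$, we get for $t\ge s_0$
\[
\mathbb E^{\mathbb Q_T}\!\left[e^{-rt}Y(t)\,\middle|\,\mathcal F_{s_0}\right]\le e^{-rs_0}Y(s_0)<0 .
\]

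\textbf{Step 3 (contradiction, the main obstacle).} A supermartingale alone only controls expectations, so this step needs care. Take the stopping time $\tau=\inf\{t\ge s_0: Y(t)\le 2Y(s_0)\}$ on $A$, with $Y(s_0)$ replaced by $Y(s_0)\wedge(-\epsilon)$ on a subset of $A$ of positive probability. We want to show $\tau<\infty$ with positive probability, and then iterate to get geometric descent $Y\le 2^kY(s_0)$, i.e. $Y\to-\infty$. That contradicts the deterministic lower bound $Y\ge \alpha x_0-b_0/r$.

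The cleanest route avoids probability of hitting altogether. On $\{Y(s_0)<0\}$ the process $Z(t):=e^{-rt}Y(t)$ is a supermartingale bounded below by $-c\,e^{-rt}\ge -c$, with $c:=(b_0/r-\alpha x_0)^+$. When $\alpha x_0\ge b_0/r$ the claim already follows from the constraint, so assume $c>0$. Then $\mathbb E[e^{-rt}Y(t)\mid\mathcal F_{s_0}]\ge -c\,e^{-rt}\to 0$, while the supermartingale inequality gives this conditional expectation $\le Z(s_0)<0$, a strictly negative bound independent of $t$. For $t$ large the two bounds are incompatible, contradicting $\mathbb Q_T(A)>0$ (take $T=t$).

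The delicate point is working on the finite-horizon measures $\mathbb Q_T$ rather than a global one, and using the independence of $\tau_d$ so that Girsanov on $\mathcal F^W$ extends to $\mathbb F$. Continuity of paths then upgrades "for each $s$ a.s." to "a.s. for all $s$" via rational times.
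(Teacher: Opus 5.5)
Your proof is correct and is essentially the paper's argument. The paper's process $Z_\epsilon(s)=e^{-rs}X(\hat\tau_\epsilon+s)+\int_0^s e^{-ru}b(\hat\tau_\epsilon+u)\,\mathrm{d}u$ is your $e^{-rt}Y(t)+\int_0^t e^{-ru}(b(u)-b_0)\,\mathrm{d}u$ up to a time shift and the constant $b_0/r$, it is shown to be a $\mathbb{Q}_{T+M}$-supermartingale in the same way (a local martingale bounded below), and its choice of $M$ with $\int_0^M e^{-rs}b_0\,\mathrm{d}s\ge b_0/r-\epsilon/2$ plays the role of your large $t$ making $c\,e^{-rt}$ negligible (the paper restarts at the hitting time $\tau_\epsilon\wedge T$ rather than a fixed $s_0$). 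You can delete the abandoned geometric-descent sketch at the start of Step 3, and you should say explicitly that path continuity turns the a.e.-in-$t$ constraint into the pointwise bound $X(t)\ge\alpha x_0$ used in Step 2.
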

The proof of Lemma \ref{lma:b_0/r} is put in Appendix \ref{appen:lma:b_0/r}. 
The intuition is that $\frac{b_0}{r}$ represents the minimum capital required to sustain the guaranteed benefit level $b_0$ using risk-free returns. If $X(s)<\frac{b_0}{r}$ at some time $s$, then even allocating the entire wealth to the risk-free asset would generate interest income $rX(s)<b_0$, making the minimum benefit payment unsustainable. Consequently, the wealth process would necessarily decline and eventually violate the solvency constraint $X(t)\ge \alpha x_0$.

\section{Dynamic Programming Method and Verification Theorem}\label{sec:dpp:hjb}
Fix $x_0>0$. For simplicity of notation,  we define  $\underline{x}:=\max\left\{\alpha x_0,\frac{b_0}{r}\right\}$, $\delta:=\rho+\lambda$, $U_1(\cdot):=U_{c,1}(\cdot)$ and $$U_2(x):=U_{m,1}(\kappa_1 x)+\lambda U_{m,2}(\kappa_2x)+\lambda U_{c,2}((1-\kappa_2)x), \quad x\in[\underline{x},\infty).$$ We employ the dynamic programming approach in \cite*{fleming2006controlled} and \cite*{pham2009continuous} to solve problem \eqref{eq:problem:max:intro}  under the wealth dynamics \eqref{eq:wealth:statics} and  subject to the constraints  in \eqref{eq:constraint:intro}.
Specifically, we consider the following control problem
\begin{align}\label{eq:valuefunction:def}
    V(x):=\sup_{(b,\pi)\in\mathcal{A}(x)}\mathbb{E}\bigg[\int_0^{\infty}\mathrm{e}^{-\delta s}U_1(b(s))\md s+\int_0^{\infty}\mathrm{e}^{-\delta s}U_2(X^{x;u}(s))\md s\bigg],
\end{align}
where $V$ is defined on $\left[\underline{x},\infty\right)$ and the state process satisfies
\begin{align}\label{eq:sde:x}
\begin{cases}
      &\mathrm{d} X^{x;u}(s)=rX^{x;u}(s)\md s+\sigma \theta\pi(s)X^{x;u}(s)\mathrm{d}s-b(s)\mathrm{d}s+\sigma\pi(s)X^{x;u}(s)\mathrm{d}W(s),\\
       &X^{x;u}(0)=x.
       \end{cases}
\end{align}
In light of Lemma \ref{lma:b_0/r}, we define the admissible   set as follows:
\begin{align}
    \mathcal{A}(x):&=\bigg\{u=(b,\pi):\int_0^T(|b(s)|+|\pi(s)|^2)\md s<\infty\quad \text{for all } T\geq0, \text{ a.s.}; \nonumber\\ &\quad\quad \quad\quad\quad\quad\quad \quad b(s)\geq b_0\,\, \text{and}\,\, {X^{x;u}(s)}\geq \underline{x} \quad \text{for a.e. } s\geq0, \text{ a.s.} \bigg\}.\label{eq:mathcalA:def}
\end{align}

Before applying dynamic programming, we first show that when $\alpha x_0\leq \frac{b_0}{r}$, the boundary point $\underline{x}$ is absorbing under any admissible strategy.
\begin{lemma}\label{lma:stop}
   Suppose that $\alpha x_0\leq \frac{b_0}{r}$. Let $x\geq \underline{x}$ and $u\in \mathcal{A}(x)$. If for some stopping time $\tau$ it holds that $\mathbb{P}(\{X^{x;u}(\tau)=\underline{x}\}\cap\{\tau<\infty\})>0$, then on the event $\{X^{x;u}(\tau)=\underline{x}\}\cap\{\tau<\infty\}$, we have  $X^{x;u}(\tau+s)=\underline{x}$ and  $b(\tau+s)=b_0$ \text{for a.e.} $s\geq0$, \text{a.s.}.
\end{lemma}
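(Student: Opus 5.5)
Since $\alpha x_0\le \frac{b_0}{r}$, we have $\underline{x}=\frac{b_0}{r}$. The plan is to look at the nonnegative process $Y(s):=X^{x;u}(\tau+s)-\underline{x}$ on the event $E:=\{X^{x;u}(\tau)=\underline{x}\}\cap\{\tau<\infty\}$, and show it is a supermartingale that starts at $0$, so it must vanish. Because $r\underline{x}=b_0$, the dynamics \eqref{eq:sde:x} can be rewritten as
\begin{align*}
\md \big(X^{x;u}(s)-\underline{x}\big)=r\big(X^{x;u}(s)-\underline{x}\big)\md s-\big(b(s)-b_0\big)\md s+\sigma\theta\pi(s)X^{x;u}(s)\md s+\sigma\pi(s)X^{x;u}(s)\md W(s).
\end{align*}
The drift term involving $\pi$ will be absorbed by changing to the risk-neutral measure $\mathbb{Q}$, whose density on $\mathcal{F}_t$ is $\exp(-\theta W(t)-\frac12\theta^2t)$. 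Under $\mathbb{Q}$, $\widetilde W(t)=W(t)+\theta t$ is a Brownian motion; this measure change is harmless because $\tau_d$ is independent of $W$ and only affects $\mathbb{F}$ through the death indicator. It is then convenient to discount and set $Z(s):=e^{-rs}(X^{x;u}(s)-\underline{x})$, which satisfies
\begin{align*}
\md Z(s)=-e^{-rs}\big(b(s)-b_0\big)\md s+e^{-rs}\sigma\pi(s)X^{x;u}(s)\md\widetilde W(s).
\end{align*}

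First, the admissibility conditions give $X^{x;u}\ge\underline{x}$ (by continuity, for all $s$) and $b\ge b_0$. Hence $Z\ge0$, and $Z$ is a local martingale under $\mathbb{Q}$ minus a nondecreasing process. A nonnegative local supermartingale is a true supermartingale by Fatou's lemma, so $Z$ is a $\mathbb{Q}$-supermartingale.

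Second, apply optional sampling at the stopping times $\tau\wedge n$ and $(\tau+s)\wedge n$, or use the strong Markov-type restart at $\tau$ conditioned on $\mathcal{F}_\tau$. This gives $\mathbb{E}^{\mathbb{Q}}[Z(\tau+s)\mathbf 1_E\mid\mathcal{F}_\tau]\le Z(\tau)\mathbf 1_E=0$. Since $Z\ge0$, it follows that $Z(\tau+s)=0$ on $E$, $\mathbb{Q}$-a.s., for each rational $s$, and by path continuity for all $s\ge0$. Equivalence of $\mathbb{Q}$ and $\mathbb{P}$ on each $\mathcal{F}_t$, together with a localization over $\{\tau\le n\}$, transfers this to $\mathbb{P}$-a.s. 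This shows $X^{x;u}(\tau+s)=\underline{x}$.

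Third, once $Z\equiv0$ on $[\tau,\infty)$ on $E$, the continuous local martingale part has zero quadratic variation on that interval, so $\int_\tau^{\tau+s}\pi^2X^2\,\md v=0$. Consequently the finite-variation part $\int_\tau^{\tau+s}e^{-rv}(b(v)-b_0)\md v$ vanishes for all $s$. Since the integrand is nonnegative, $b=b_0$ a.e. on $[\tau,\infty)$. The same argument also gives $\pi=0$ a.e. there, because $X\ge\underline{x}>0$.

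The main obstacle is the technical handling of unbounded $\tau$ and of the measure change on an infinite horizon, since $\mathbb{Q}$ and $\mathbb{P}$ are not equivalent on $\mathcal{F}_\infty$. I would handle this by working on $E\cap\{\tau\le n\}$ with horizons $\tau+s\le n+s$, where the Girsanov density is a true martingale. One could avoid Girsanov entirely by using the $\mathbb{P}$-deflator $H(s)=e^{-rs-\theta W(s)-\frac12\theta^2s}$ and showing that $H(X^{x;u}-\underline{x})$ is a nonnegative $\mathbb{P}$-local supermartingale directly via It\^o's product rule; I would present this version as the cleaner one.
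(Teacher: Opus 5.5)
Your proof is correct, and its core matches the paper's: change to the Girsanov measure $\mathbb{Q}_{T+M}$ while localizing on $\{\tau\le T\}$, note that a nonnegative discounted process sits at its minimum at time $\tau$ on $E$, and use the supermartingale inequality conditional on $\mathcal{F}_\tau$ to force it to stay there.

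The difference is the choice of process. The paper uses, with $\hat\tau=\tau\wedge T$,
\[
Z(s)=e^{-rs}X^{x;u}(\hat\tau+s)+\int_0^s e^{-ru}b(\hat\tau+u)\,\md u ,
\]
which is a nonnegative $\mathbb{Q}$-local martingale. Using $r\underline{x}=b_0$, on the event it writes $Z(s)-\underline{x}=e^{-rs}\bigl(X^{x;u}(\tau+s)-\underline{x}\bigr)+\int_0^s e^{-ru}\bigl(b(\tau+u)-b_0\bigr)\,\md u$ as a sum of two nonnegative terms. Once the conditional expectation forces $Z(s)=\underline{x}$, both conclusions $X=\underline{x}$ and $b=b_0$ follow at the same time.

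Your process $e^{-rs}(X^{x;u}(s)-\underline{x})$ leaves out the cumulative benefit term. It is therefore only a supermartingale, and the supermartingale step gives you only $X^{x;u}(\tau+\cdot)=\underline{x}$. For $b=b_0$ you need the extra quadratic-variation step. That step is valid, but it uses more than ``zero quadratic variation.'' Since everything vanishes only on the event $E$, not identically, you need the standard fact that a continuous local martingale and its bracket have a.s. the same intervals of constancy (see Revuz--Yor, Chapter~IV). That fact shows the stochastic integral itself vanishes on $E$, and only then can you conclude $\int_\tau^{\tau+s}e^{-rv}(b-b_0)\,\md v=0$.

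You can avoid this entirely. Apply your supermartingale argument to $e^{-rs}(X^{x;u}(s)-\underline{x})+\int_0^s e^{-rv}(b(v)-b_0)\,\md v$, which is exactly the paper's device.

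Your route does give two things in return. It yields the by-product $\pi=0$ a.e. after $\tau$. And your deflator version with $H(s)\bigl(X^{x;u}(s)-\underline{x}\bigr)$ under $\mathbb{P}$, which you correctly check is a nonnegative local supermartingale, removes the need for the horizon-dependent measures $\mathbb{Q}_{T+M}$.
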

The proof of Lemma \ref{lma:stop} is put in Appendix \ref{appen:lma:stop}.
The intuition is as follows. When the fund wealth reaches the lower bound $\underline{x}=\frac{b_0}{r}$, the interest income generated by investing the entire wealth in the risk-free asset is exactly sufficient to finance the minimum benefit payment $b_0$. Any attempt to take investment risk or increase the benefit payment would immediately violate the wealth constraint. Therefore, the only admissible strategy is to invest the entire wealth in the risk-free asset and continuously pay the minimum benefit $b_0$, which keeps the wealth process at the boundary level $\underline{x}$ thereafter.

Based on Lemma~\ref{lma:stop}, we make the following remark concerning the structure of $\mathcal{A}(\underline{x})$ and the expected behavior of $V'(\underline{x})$ in the two parameter cases.
\begin{remark}\phantomsection\label{rmk:cases}
\begin{enumerate}
    \item[(i)]
 If $\alpha x_0 > \frac{b_0}{r}$, then $\underline{x} = \alpha x_0$.
Consider the constant strategy $u \equiv (b_0,0)$. The corresponding state process satisfies
    \begin{align*}
\mathrm{d} X^{\underline{x};u}(s)=(rX^{\underline{x};u}(s)-b_0)\md s,\quad X^{\underline{x};u}(0)=\underline{x},
    \end{align*}
    and thus 
    \begin{align*}
        X^{\underline{x};u}(s)=\left(\underline{x}-\frac{b_0}{r}\right)\mathrm{e}^{rs}+\frac{b_0}{r}\geq \underline{x}\quad \text{for all } s\geq 0.
    \end{align*}
    Therefore, $u\in\mathcal{A}(\underline{x})$. Moreover, it is clear that  $(r\underline{x},0)\in \mathcal{A}(\underline{x})$. In this case, we expect that $V'(\underline{x})\in(0,\infty)$; similar representative examples can be found in \cite*{di2011pension}; our model additionally treats the benefit rate as a control variable subject to a minimum guarantee in \eqref{eq:mathcalA:def}.
    \item[(ii)] If $\alpha x_0\leq \frac{b_0}{r}$, then $\underline{x}= \frac{b_0}{r}$. By Lemma~\ref{lma:stop}, the admissible set reduces to $\mathcal{A}\left(\underline{x}\right)=\left\{(b_0,0)\right\}$. Thus,

    \begin{align}\label{eq:V:underline:x}
    V\left(\underline{x}\right)=\frac{1}{\delta}U_1(b_0)+\frac{1}{\delta}U_2\left(\underline{x}\right).
    \end{align}
    In this case, we expect that $V'\left(\underline{x}\right)=\infty$; similar representative and instructive examples can be found in \cite{merton1992continuous}, \cite{herdegen2021elementary}, which do not consider state or control constraints.
\end{enumerate} 
\end{remark}

Using the dynamic programming principle, we first formally derive the HJB equation \eqref{eq:hjb} satisfied by the value function. We then establish verification theorems in Section \ref{sec:verify}, distinguishing two cases according to whether $\alpha x_0>b_0/r$ or $\alpha x_0\leq b_0/r$. These theorems show that any sufficiently regular solution of the HJB equation satisfying appropriate growth and boundary conditions coincides with the value function, and that the associated optimal strategy can be constructed accordingly. Furthermore, the existence of such solutions will be investigated separately in Sections \ref{sec:well:hjb} and \ref{sec:degenerate} for the nondegenerate and degenerate boundary cases, respectively. To this end, we impose the following growth and regularity assumptions on the utility functions.

\begin{assumption}\label{assump:u:1}
     \item[(i)] Assume that $U_1\in C^2([b_0,\infty))$. Moreover, $U_1'(b)>0$ and $U_1''(b)<0$ for all $b\in[b_0,\infty)$, and $U_1'(\infty)=0$. There exist  constants $C_{1,0}>0$  and $\gamma_1\in(0,1)$ such that \footnote{Hereafter, $C_{1,i}$, $i=0,1,2,\dots$, denote constants associated with the utility function $U_1$, in a similar way as $C_{2,i}$ correspond to $U_2$. We will also use $C_{H,i}$, $C_{v,i}$, etc., to denote constants related to functions $H$, $v$, and so on.}
     \begin{align*}
         & 0<U_1(b)\leq C_{1,0} b^{\gamma_1}\quad \text{for all } b\in [b_0,\infty).
     \end{align*}
    \item[(ii)] Assume that  $U_2\in C^1([\underline{x},\infty))$. Moreover, there exist  constants $C_{2,0}>0$ and $\gamma_2\in(0,1)$  such that
     \begin{align*}
        0<U_2(x)\leq C_{2,0} x^{\gamma_2}\quad \text{for all } x\in [\underline{x},\infty).
     \end{align*}
\end{assumption}
We emphasize that Assumption \ref{assump:u:1}, together with the parameter condition specified in Assumption \ref{assump:parameter}, suffices for the formal derivation of the HJB equation and for the verification argument. However, the well-posedness of the HJB equation, in particular the existence and regularity of classical solutions, requires stronger structural assumptions on the utility functions. These additional assumptions will be introduced later in Assumptions \ref{assmup:U12}, \ref{assump:u_2'}  and \ref{assump:holder:u2}.
\begin{remark}\phantomsection\label{rmk:assump:1}
\begin{enumerate}
    \item[(i)]
The positivity of $U_1$ and $U_2$ is assumed without loss of generality. Indeed, adding or subtracting a constant from the utility functions does not affect the optimal strategies; it only shifts the value function by an additive constant.

\item[(ii)] Assumption \ref{assump:u:1} is satisfied by a broad class of utility functions commonly used in the literature, including standard CRRA utility functions. In particular, $U_1$ and $U_2$ may correspond to different  parameters $\gamma$ related to risk aversion, and each of them may also be represented as a linear combination of CRRA utilities with different $\gamma$. We do not impose any specific functional form; rather, our framework accommodates any utility function that satisfies Assumption \ref{assump:u:1}.

\item[(iii)] Define $\gamma := \max\{\gamma_1,\gamma_2\}$. Since $b_0>0$ and $\underline{x}>0$, the growth conditions in Assumption~\ref{assump:u:1}(i) and (ii) for $U_1$ and $U_2$ can equivalently be written using the single exponent $\gamma$ by suitably adjusting the constants $C_{1,0}$ and $C_{2,0}$.
\end{enumerate}
\end{remark}

We also  impose the following standard assumption on the model parameters.
\begin{assumption}\label{assump:parameter}
   We shall assume throughout that
    \begin{align*}
        \delta>r \gamma +\frac{\gamma}{2(1-\gamma)}\theta^2.
    \end{align*}
\end{assumption}
\begin{remark}\label{rmk:unconstraint}
When $U_1$ and $U_2$ are both of CRRA type with the same exponent $\gamma\in(0,1)$, namely $U_1(b)=b^\gamma/\gamma$ and $U_2(x)=\eta x^\gamma/\gamma$, and in the absence of constraints (i.e., $b_0=0$ and $\alpha=0$), Assumption~\ref{assump:parameter} is necessary and sufficient for the finiteness of the value function; see, for example, \cite*{merton1992continuous} and \cite*{herdegen2021elementary}.

In this case, the value function admits the explicit form $V_{un}(x)=\bar{b}\frac{x^{\gamma}}{\gamma}$, where $\bar{b}\in(0,\infty)$ is the unique positive solution to the algebraic equation in $b>0$
\begin{align*}
    \eta+(1-\gamma) b^{\frac{\gamma}{\gamma-1}}-b\left(\delta-r \gamma-\frac{\gamma}{2 (1-\gamma)}\theta^2\right)=0.
\end{align*}
Moreover, the optimal benefit rate and investment proportion are given by
\begin{align}\label{eq:unconstraint:bpi}
b_0^*(x)=\bar{b}^{\frac{1}{\gamma-1}} x, \quad
\pi_0^*(x)=\frac{\theta}{(1-\gamma) \sigma}.
\end{align}
 To demonstrate that Assumption~\ref{assump:parameter} holds for a wide range of economically reasonable parameter values, consider the parameter set $\gamma=0.3$, $\theta=0.135$, $r=0.049$ and $\delta=0.082$ used in Section~\ref{sec:numerical}. A straightforward calculation gives
\begin{align*}
\delta-r\gamma-\frac{\gamma}{2(1-\gamma)}\theta^2=0.0634  >0.
\end{align*}
Therefore, Assumption~\ref{assump:parameter} is satisfied under these parameter choices.

\end{remark}

Formally applying the dynamic programming principle, the HJB equation for the value function reads, see, e.g., \cite*{soner1986optimal} and \cite*{ishii2002class}, 
\begin{align}\label{eq:hjb}
 	-\delta v(x)&+\sup _{b\geq b_0} \{U_1(b)-bv'(x)\}+\sup_{\pi\in\mR}\{\sigma \theta\pi x v'(x)+\frac{\sigma^2}{2}\pi^2x^2v''(x)\}\nonumber\\&\quad\quad\quad\quad\quad\quad\quad\quad\quad\quad\quad\quad\quad\quad\quad+ U_2(x)+rxv'(x)=0,\quad x>\underline{x}.
 \end{align}
Define $I:(0,\infty)\to [b_0,\infty)$ 
by
\begin{align}\label{eq:I}
    I(p):=\left\{\begin{aligned}
       &(U_1')^{-1}(p), && p\in(0,U_1'(b_0)];\\
        &b_0,&&p\in(U_1'(b_0),\infty),
    \end{aligned}\right.
\end{align}
and $D:(0,\infty)\to \mR$ by
\begin{align}\label{eq:D}
    D(p):=\sup _{b\geq b_0} \{U_1(b)-bp\}=U_1(I(p))-pI(p).
\end{align}
Then, $D\in C^1((0,\infty))$ and $D'=-I$.
If, in addition, $v'>0$ and $v''<0$ on $(\underline{x},\infty)$, then \eqref{eq:hjb} is equivalent to
 \begin{align}
 	\delta v(x)=D(v'(x))+ U_2(x)
 	-\frac{1}{2}\theta^2\frac{\left(v'(x)\right)^2}{v''(x)}+rx v'(x), \quad  x>\underline{x}.\label{eq:hjb:simplify}
 \end{align}

\subsection{Verification Theorem}\label{sec:verify}
Before analyzing \eqref{eq:hjb:simplify} in detail, we establish a verification theorem.
Since the behavior of $V'(\underline{x})$ may be either finite or infinite (see Remark \ref{rmk:cases}), we consider two cases, namely $\alpha x_0> \frac{b_0}{r}$ and $\alpha x_0\le \frac{b_0}{r}$.
These cases are addressed separately in Theorems~\ref{thm:verification} and~\ref{thm:verification2}, as they give rise to different boundary conditions and require different verification arguments.

Our verification approach is inspired by \cite*{di2011pension}, 
but substantial modifications are needed because, in our model, 
the benefit rate is part of the control rather than being fixed exogenously. 
This feature leads to a different boundary structure and requires additional arguments in the verification step. 
For completeness and to keep the main exposition focused on the analysis of the HJB equation, 
the detailed proofs of the two verification theorems are deferred to Appendices \ref{appendix:sec:veri1} and \ref{appendix:sec:veri2}, respectively.

\begin{theorem}\label{thm:verification} (Verification theorem (Case I: $\alpha x_0>\frac{b_0}{r}$))
Under Assumptions \ref{assump:u:1} and \ref{assump:parameter},
    suppose that $v$ satisfies the following conditions:
    \begin{enumerate}
         \item[(i)] $v\in C^1([\underline{x},\infty))\cap C^2((\underline{x},\infty))$, $v'>0$ on $[\underline{x},\infty)$ and $v''<0$ in $(\underline{x},\infty)$. Moreover, $v$ satisfies the HJB equation \eqref{eq:hjb} (equivalently, \eqref{eq:hjb:simplify}).
 \item[(ii)] $v'(\underline{x})>U_1'(r\underline{x})$, $\lim\limits_{x\downarrow \underline{x}}v''(x)=-\infty$, and there exists some $C_{v,0}>0$ such that $0\leq v(x)\leq C_{v,0}x^{\gamma}$ for any $x\geq \underline{x}$.
    \item[(iii)] Define
\begin{align}\label{eq:b^*}
b^*(x)
:=
\begin{cases}
I\bigl(v'(x)\bigr),   & x \ge \underline{x},\\
I\bigl(v'(\underline{x})\bigr), & x \le \underline{x},
\end{cases}
\quad
\pi^*(x)
:=
\begin{cases}
-\dfrac{\theta v'(x)}{\sigma x\, v''(x)}, & x > \underline{x},\\
0,                                              & x \le \underline{x}.
\end{cases}
\end{align}
For any $x\geq \underline{x}$, suppose the following SDE admits a unique strong solution satisfying $X^{x;*}\geq \underline{x}$:
\begin{align}\left\{\begin{aligned}\label{eq:verify:wealth}
	&\md X^{x;*}(s)=rX^{x;*}(s)\md s+\sigma \theta\pi^*(X^{x;*}(s))X^{x;*}(s)\md s-b^*(X^{x;*}(s))\md s\\&\quad\quad\quad\quad\quad\quad\quad\quad\quad\quad+\sigma\pi^*(X^{x;*}(s))X^{x;*}(s)\md W(s),\\
	&X^{x;*}(0)=x.
	\end{aligned}\right.
\end{align}
Suppose further that  $I(v')$ and $\frac{v'}{v''}$ are of linear growth.
    \end{enumerate}
Then, $v=V$ and $(b^*(X^{x;*}),\pi^*(X^{x;*}))$ constitutes the optimal strategy for an initial wealth of $x$.
\end{theorem}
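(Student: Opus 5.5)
We prove the verification theorem in two halves: first $v\ge V$, then show the feedback strategy $(b^*,\pi^*)$ attains $v$.

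\emph{Upper bound $v\ge V$.} Fix $x\ge\underline{x}$ and $u=(b,\pi)\in\mathcal{A}(x)$, and write $X=X^{x;u}$.
\begin{itemize}
\item Since $v$ is only $C^2$ on the open set $(\underline{x},\infty)$ and $v''\to-\infty$ at $\underline{x}$, we cannot apply It\^o's formula directly up to the boundary. Instead we approximate. For $\varepsilon>0$ set $v_\varepsilon(y):=v(y+\varepsilon)$. This function is $C^2$ on a neighbourhood of $[\underline{x},\infty)$, and by continuity of $v'$ up to $\underline{x}$ it satisfies an approximate HJB inequality with an error $o(1)$ as $\varepsilon\to0$.
\item A cleaner alternative is to apply It\^o to $\mathrm{e}^{-\delta s}v(X(s))$ only on time sets where $X>\underline{x}$, using a localizing sequence $\tau_n$ (exit of $X$ from $(\underline{x}+1/n,n)$, plus a cap on $\int\pi^2X^2 v'^2$).
\item Since $v$ solves \eqref{eq:hjb}, for every $b\ge b_0$ and $\pi\in\mR$ the generator satisfies
\[
-\delta v+U_1(b)-bv'+\sigma\theta\pi xv'+\tfrac{\sigma^2}{2}\pi^2x^2v''+U_2+rxv'\le0 .
\]
Hence $\mathbb{E}[\mathrm{e}^{-\delta \tau_n}v(X(\tau_n))]+\mathbb{E}\int_0^{\tau_n}\mathrm{e}^{-\delta s}(U_1(b)+U_2(X))\,\md s\le v(x)$.
\item The difficulty is time spent at the boundary, where $X=\underline{x}$. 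There It\^o's formula in the local-time/occupation sense still applies to $v$, since $v\in C^1$ and is concave; the drift term $v'(\underline{x})\,(r\underline{x}-b)$ is finite. On $\{X=\underline{x}\}$ the diffusion must vanish, so $\pi=0$ a.e. there, and admissibility forces $b\le r\underline{x}$. The inequality $U_1(b)-bv'(\underline{x})+r\underline{x}v'(\underline{x})\le\delta v(\underline{x})-U_2(\underline{x})$ follows from the HJB by letting $x\downarrow\underline{x}$, with $\pi=0$ permitted because the sup over $\pi$ is nonnegative.
\item Let $n\to\infty$. Because $v\ge0$, Fatou/monotone convergence gives $v(x)\ge J(x;u)$.
\end{itemize}

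\emph{Achievability.} Take $u^*=(b^*(X^*),\pi^*(X^*))$.
\begin{itemize}
\item Admissibility holds because $X^*\ge\underline{x}$ by hypothesis and $b^*\ge b_0$ by the definition of $I$. Integrability follows from the linear growth of $I(v')$ and $\pi^*X=-\theta v'/(\sigma v'')$, which also gives moment bounds $\mathbb{E}[X^*(s)^p]\le C\mathrm{e}^{c_p s}(1+x^p)$.
\item The key point is that $X^*$ spends zero Lebesgue time at $\underline{x}$. At the boundary the drift equals $r\underline{x}-I(v'(\underline{x}))>0$, by $v'(\underline{x})>U_1'(r\underline{x})$ and monotonicity of $I$. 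The volatility $\pi^*X$ is bounded near $\underline{x}$ since $v'/v''\to0$. So the boundary is instantly reflecting through drift, and the occupation time is zero.
\item With this, It\^o holds with equality, and the HJB is attained exactly by $(b^*,\pi^*)$. The local martingale is handled by localization. Then $v(x)=\mathbb{E}\int_0^{\tau_n}\mathrm{e}^{-\delta s}(\cdots)\,\md s+\mathbb{E}[\mathrm{e}^{-\delta\tau_n}v(X^*(\tau_n))]$.
\end{itemize}

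\emph{Transversality (main obstacle).} We must show $\liminf \mathbb{E}[\mathrm{e}^{-\delta\tau_n}v(X^*(\tau_n))]=0$.
\begin{itemize}
\item We use the bound $0\le v\le C_{v,0}x^\gamma$ together with a growth estimate for $\mathbb{E}[(X^*_t)^\gamma]$.
\item Linear-growth coefficients give only the crude bound $\mathbb{E}[(X^*_t)^\gamma]\le C\mathrm{e}^{c\gamma t}$, where $c$ depends on the linear-growth constants. This need not be below $\delta$.
\item So we first compare with an arbitrary admissible strategy. The bound $\mathbb{E}\,[X^{x;u}(t)^\gamma]\le x^\gamma\mathrm{e}^{(r\gamma+\frac{\gamma\theta^2}{2(1-\gamma)})t}$ holds for every admissible $u$: apply It\^o to $X^\gamma$ and maximize over $\pi$, dropping $-b$. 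Assumption~\ref{assump:parameter} then gives $\mathrm{e}^{-\delta t}\mathbb{E}[X(t)^\gamma]\to0$ uniformly.
\item Uniform integrability along $\tau_n$ (pass first $n\to\infty$ for fixed $T$ with $\tau_n\wedge T$, then $T\to\infty$) concludes. Monotone convergence of the running utilities gives $v(x)=J(x;u^*)$, hence $v=V$.
\end{itemize}
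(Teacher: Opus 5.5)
Your argument is correct in substance, but it takes a different route from the paper.

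\textbf{How the paper argues.} The paper stays within the classical $C^2$ It\^o formula. It replaces $v$ on $[\underline{x},\underline{x}+\epsilon)$ by the quadratic that matches $v,v',v''$ at $\underline{x}+\epsilon$. It then uses the boundary rate $(x-\underline{x})[v''(x)]^2\to\frac14\theta^2v'(\underline{x})^2/(r\underline{x}-I(v'(\underline{x})))$ of Lemma~\ref{lma:limit:v''}, which is where $v'(\underline{x})>U_1'(r\underline{x})$ enters. That rate does two jobs: it makes the HJB residual vanish uniformly for the upper bound, and it dominates $(\pi^*)^2x^2v_\epsilon''$ when letting $\epsilon\to0$ along $X^{x;*}$. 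Transversality is Lemma~\ref{lma:X^gamma}, proved for $X^{x;*}$ alone by Girsanov and H\"older.

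\textbf{How your route differs.} Your shift $v(\cdot+\varepsilon)$ is a simpler upper-bound device. The reason it works is that $\sup_{\pi}\{\sigma\theta\pi yp+\frac{\sigma^2}{2}\pi^2y^2q\}=-\theta^2p^2/(2q)$ for $q<0$ does not depend on $y$; continuity of $v'$ is not the operative point. Hence the shifted function satisfies the generator inequality with residual $U_2(y)-U_2(y+\varepsilon)-r\varepsilon v'(y+\varepsilon)\le U_2(y)-U_2(y+\varepsilon)$. Moving $U_2(y)$ across gives $\mE\int_0^\infty \mathrm{e}^{-\delta s}\bigl(U_1(b(s))+U_2(X^{x;u}(s)+\varepsilon)\bigr)\md s\le v(x+\varepsilon)$, and Fatou as $\varepsilon\downarrow0$ finishes. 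This uses neither $v''\to-\infty$ nor any uniform smallness of the residual.

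For the optimal process, the local-time (Meyer--It\^o) formula you invoke applies to $v$ extended affinely below $\underline{x}$. Since $v'$ has a finite limit at $\underline{x}$, $v''\,\md a$ is then a locally finite measure with no atom at $\underline{x}$, so this replaces the paper's dominated-convergence estimate. Your supermartingale bound for $\mathrm{e}^{-ct}(X^{x;u})^\gamma$, with $c=\gamma r+\frac{\gamma\theta^2}{2(1-\gamma)}$, gives the same exponent as Lemma~\ref{lma:X^gamma}, and it holds uniformly over $\mathcal{A}(x)$ rather than only for $X^{x;*}$. In exchange, the paper's route stays within classical It\^o calculus and produces the boundary asymptotics of Lemma~\ref{lma:limit:v''}, which it reuses for the H\"older-$\frac12$ estimate in Theorem~\ref{thm:strong:case1}.

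\textbf{Two soft spots.}

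First, the alternative localization by exit from $(\underline{x}+1/n,n)$ cannot carry the upper bound on its own. In Case I the boundary is not absorbing for general admissible strategies (contrast Lemma~\ref{lma:stop}). Letting $n\to\infty$ therefore only captures the payoff up to the first hitting time of $\underline{x}$. Rest the upper bound on the shift, or on Meyer--It\^o localized only from above.

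Second, in the achievability step, zero occupation time at $\underline{x}$ is not what licenses It\^o's formula; Meyer--It\^o does. It is also not needed. Letting $x\downarrow\underline{x}$ in \eqref{eq:hjb:simplify}, the term $(v')^2/v''\to0$, which gives $\delta v(\underline{x})=D(v'(\underline{x}))+U_2(\underline{x})+r\underline{x}v'(\underline{x})$. So the HJB holds with equality at $\underline{x}$ with $\pi=0$ and $b=b^*(\underline{x})$. Moreover, your stated justification (positive drift, bounded or even vanishing volatility) does not by itself exclude a sticky boundary. The claim is true here only because $\theta^2(v'/v'')^2=O(x-\underline{x})$ by Lemma~\ref{lma:limit:v''}, which forces the local time at $\underline{x}$ to vanish.
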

\begin{remark}

Theorem \ref{thm:verification} relies essentially on three sets of conditions, among which the most distinctive from classical results is the boundary behavior at $\underline{x}$, in particular the requirement that $
v'(\underline{x}) > U_1'(r\underline{x})$. We emphasize that all these conditions, such as $
v'(\underline{x}) > U_1'(r\underline{x})$ and the linear growth conditions on $I(v')$ and $\frac{v'}{v''}$, will be proved later in Theorem \ref{thm:v}.

First, the candidate function $v$ must possess sufficient regularity, namely 
$v \in C^1([\underline{x},\infty)) \cap C^2((\underline{x},\infty))$, and it must satisfy the HJB equation \eqref{eq:hjb}. 
These smoothness requirements ensure that Itô's formula can be applied and that the HJB equation holds in the classical sense.

Second, appropriate boundary behavior at $\underline{x}$ is required. 
The condition $
\lim\limits_{x \downarrow \underline{x}} v''(x) = -\infty $
is standard in optimal investment and consumption problems  and characterizes the singular behavior of the value function at the boundary $\underline{x}$; see \cite{pham2009continuous} and \cite{touzi2012optimal} for example. It ensures that the optimal investment in the risky asset vanishes as the wealth approaches the boundary, i.e., $\lim\limits_{x \downarrow \underline{x}}\pi^*(x)=0$.
Moreover, the inequality $
v'(\underline{x}) > U_1'(r\underline{x})$
implies that 
the optimal benefit rate at the boundary satisfies 
$b^*(\underline{x})=I(v'(\underline{x})) < r\underline{x}$ 
(see Lemma \ref{lma:limit:v''}), 
that is, at $\underline{x}$ the optimal payout must be strictly below the interest income $r\underline{x}$. This condition ensures that $\underline{x}$ is not an absorbing boundary corresponding to the optimal strategy. It further yields the asymptotic behavior $v''(x)\asymp -(x-\underline{x})^{-\frac12}$ as $x\downarrow \underline{x}$ (see Lemma \ref{lma:limit:v''}). This rate is independent of $\gamma$ and is crucial for the proof of the verification theorem; see \eqref{eq:conver:cross} and \eqref{eq:for:something}. 
In addition, $v$ must satisfy a suitable growth condition, which is consistent with the properties of $U_1$ and $U_2$ imposed in Assumption~\ref{assump:u:1}.

Third, the state equation associated with the candidate optimal strategy must admit a unique strong solution, 
which ensures that the candidate optimal strategy is admissible and that the verification procedure is well-posed. 
Moreover, the linear growth conditions on $I(v')$ and $\frac{v'}{v''}$ guarantee the required integrability estimates such as \eqref{eq:moment:} in this procedure.
\end{remark}

\begin{theorem}\label{thm:verification2} (Verification theorem (Case II: $\alpha x_0\leq\frac{b_0}{r}$))
Under Assumptions \ref{assump:u:1} and \ref{assump:parameter}, 
suppose that $v$ satisfies the following conditions:
    \begin{enumerate}
         \item[(i)] $v\in C([\underline{x},\infty))\cap C^2((\underline{x},\infty))$,  with $v'>0$ and $v''<0$ in $(\underline{x},\infty)$. Moreover,  $v$ satisfies the  HJB equation \eqref{eq:hjb} (equivalently, \eqref{eq:hjb:simplify}).
 \item[(ii)] $v(\underline{x})= V(\underline{x})$,  and there exists some $C_{v,1}>0$ such that $0\leq v(x)\leq C_{v,1}x^{\gamma}$ for any $x\geq \underline{x}$.
    \item[(iii)] Let $(b^*,\pi^*)$ be defined as in \eqref{eq:b^*}.  
For any $x\geq \underline{x}$, the SDE \eqref{eq:verify:wealth} admits a unique strong solution satisfying $X^{x;*}\geq \underline{x}$.

Suppose further that  $I(v')$ and $\frac{v'}{v''}$ are of linear growth.
    \end{enumerate}
Then, $v=V$ and $(b^*(X^{x;*}),\pi^*(X^{x;*}))$ constitutes the optimal strategy for an initial wealth of $x$.
\end{theorem}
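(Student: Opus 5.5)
The proof splits into two inequalities: $v \ge V$ via Itô's formula and the HJB equation, and $v \le V$ via the candidate feedback strategy. The main difference from Theorem~\ref{thm:verification} is that $v'$ may blow up at $\underline{x}$ and $v$ is only continuous there. So Itô's formula cannot be applied on $[\underline{x},\infty)$ directly; we localize away from the boundary and use Lemma~\ref{lma:stop} to describe what happens once the boundary is reached.

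\textbf{Upper bound $V\le v$.} Fix $x>\underline{x}$ (the case $x=\underline{x}$ is immediate from $v(\underline{x})=V(\underline{x})$ and $\mathcal{A}(\underline{x})=\{(b_0,0)\}$, see Remark~\ref{rmk:cases}(ii)). Take $u=(b,\pi)\in\mathcal{A}(x)$ and write $X=X^{x;u}$. For $\epsilon>0$ small and $n$ large, set
\[
\tau_{\epsilon,n}:=\inf\{s: X(s)\le \underline{x}+\epsilon \text{ or } X(s)\ge n\}\wedge n .
\]
Apply Itô's formula to $e^{-\delta s}v(X(s))$ on $[0,\tau_{\epsilon,n}]$. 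Since the HJB equation \eqref{eq:hjb} gives $-\delta v+U_1(b)+U_2+\mathcal{L}^{b,\pi}v\le 0$, taking expectations (the stochastic integral is a true martingale because $v'$ is bounded on $[\underline{x}+\epsilon,n]$) yields
\[
v(x)\ge \mathbb{E}\Big[\int_0^{\tau_{\epsilon,n}}e^{-\delta s}\big(U_1(b)+U_2(X)\big)\md s+e^{-\delta\tau_{\epsilon,n}}v(X(\tau_{\epsilon,n}))\Big].
\]
Let $\tau_\epsilon:=\lim_n\tau_{\epsilon,n}$. Let $n\to\infty$ using monotone convergence for the integral. For the terminal term, drop it on $\{X\ge n\}$ since $v\ge0$; on $\{\tau_\epsilon<\infty\}$ it converges by continuity.

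Now send $\epsilon\downarrow0$. Then $\tau_\epsilon\uparrow\tau_0:=\inf\{s:X(s)=\underline{x}\}$, and path continuity gives $v(X(\tau_\epsilon))\to v(\underline{x})$ on $\{\tau_0<\infty\}$, using $v\in C([\underline{x},\infty))$. On this event Lemma~\ref{lma:stop} forces $b=b_0$ and $X=\underline{x}$ after $\tau_0$. By the strong Markov structure the remaining reward is exactly
\[
e^{-\delta\tau_0}\Big(\tfrac{1}{\delta}U_1(b_0)+\tfrac1\delta U_2(\underline{x})\Big)=e^{-\delta\tau_0}V(\underline{x})=e^{-\delta\tau_0}v(\underline{x}).
\]
Combining gives $v(x)\ge J(x;u)$, since on $\{\tau_0=\infty\}$ the integral already covers the whole horizon. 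Dominated convergence is justified by the bound $0\le v\le C_{v,1}x^\gamma$, together with the moment estimate $\mathbb{E}[X(s)^\gamma]\le C x^\gamma e^{(r\gamma+\frac{\gamma\theta^2}{2(1-\gamma)})s}$. That estimate follows from $b\ge0$ and the CRRA (Merton) bound, and Assumption~\ref{assump:parameter} makes $e^{-\delta s}\mathbb{E}[X(s)^\gamma]$ integrable. Taking the supremum over $u$ gives $V\le v$.

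\textbf{Lower bound and optimality.} Take the feedback control from \eqref{eq:b^*} and the strong solution $X^*=X^{x;*}\ge\underline{x}$ guaranteed by (iii). Then $b^*\ge b_0$, and local integrability holds because $I(v')$ and $\pi^*X^*=-\theta v'/(\sigma v'')$ have linear growth. Hence $u^*\in\mathcal{A}(x)$. Repeating the localization above, every inequality becomes an equality, because the suprema in \eqref{eq:hjb} are attained at $(b^*,\pi^*)$ in the interior. This gives
\[
v(x)=\mathbb{E}\Big[\int_0^{\tau_{\epsilon,n}}e^{-\delta s}\big(U_1(b^*)+U_2(X^*)\big)\md s+e^{-\delta\tau_{\epsilon,n}}v(X^*(\tau_{\epsilon,n}))\Big].
\]

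\textbf{Main obstacle: the terminal term.} We need $\lim_{n\to\infty}\mathbb{E}[e^{-\delta\tau_{\epsilon,n}}v(X^*(\tau_{\epsilon,n}))]=\mathbb{E}[e^{-\delta\tau_\epsilon}v(X^*(\tau_\epsilon))]$, and that the expectation $\mathbb{E}[e^{-\delta T}v(X^*(T))]$ vanishes as $T\to\infty$. I would get both from uniform integrability. Linear growth of the coefficients gives a bound $\mathbb{E}[\sup_{s\le T}(X^*)^{p}]\le Ce^{cT}$ for some $p>\gamma$. Then $v\le C x^\gamma$ together with $e^{-\delta T}\mathbb{E}[(X^*_T)^\gamma]\to0$ finishes the argument. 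The decay $e^{-\delta T}\mathbb{E}[(X^*_T)^\gamma]\to0$ again uses the Merton-type bound, which holds for any admissible control, so Assumption~\ref{assump:parameter} suffices. Finally, as $\epsilon\to0$ the boundary case is handled exactly as in the upper bound via Lemma~\ref{lma:stop} and continuity of $v$ at $\underline{x}$. This yields $v(x)=J(x;u^*)\le V(x)$, so $v=V$ and $u^*$ is optimal.
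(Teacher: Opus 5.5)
Your proposal follows the paper's own proof. You localize at the hitting time of $\underline{x}+\epsilon$ and use the HJB inequality (equality for $(b^*,\pi^*)$). You let $\epsilon\downarrow0$ using continuity of $v$ at $\underline{x}$, replace the post-hitting reward by $e^{-\delta\tau}V(\underline{x})=e^{-\delta\tau}v(\underline{x})$ via Lemma~\ref{lma:stop}, and kill the terminal term at infinity with the Merton-type moment bound. Two of your justifications need repair; the paper handles both.

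\textbf{Upper bound.} For a general $u\in\mathcal{A}(x)$, the stopped stochastic integral need not be a true martingale just because $v'$ is bounded on $[\underline{x}+\epsilon,n]$. The reason is that $\pi$ is only a.s.\ locally square integrable, not square integrable in expectation. What does hold, and is all you need, is the following: by the HJB inequality and $v,U_1,U_2\ge 0$, it is a local martingale bounded below by $-v(x)$, hence a supermartingale. Correspondingly, every limit in the upper bound can be taken with Fatou's lemma and monotone convergence, so no dominated convergence is needed.

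\textbf{Optimality step.} Your localization $\tau_{\epsilon,n}$ ties the time horizon to the level $n$. Passing $n\to\infty$ then requires uniform integrability of $e^{-\delta\tau_{\epsilon,n}}v(X^*(\tau_{\epsilon,n}))$ over unbounded horizons. The bound $\mathbb{E}[\sup_{s\le T}(X^*)^p]\le Ce^{cT}$ does not give this, because $c$ comes from the linear-growth constants and is not controlled by $\delta$.

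Decouple the limits as the paper does:
\begin{itemize}
\item Work on $[0,t\wedge\tau_\epsilon]$ with no level truncation. There $v'\le v'(\underline{x}+\epsilon)$, $\pi^*$ is bounded, and $\sup_{s\le t}X^*$ has all moments, so the stochastic integral is a true martingale.
\item For fixed $t$, let $\epsilon\downarrow 0$ by dominated convergence.
\item Only then let $t\to\infty$. Use the $x^\gamma$-moment bound (as in Lemma~\ref{lma:X^gamma}) on $\{\tau>t\}$, and monotone convergence on $\{\tau\le t\}$.
\end{itemize}
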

\begin{remark}
Theorem \ref{thm:verification2} also relies on three sets of conditions, to be proved later in Theorem \ref{thm:vnew:case2}, 
that  are largely parallel to those in Theorem \ref{thm:verification}. We emphasize that all these conditions, such as the linear growth conditions on $I(v')$ and $\frac{v'}{v''}$, will be proved later in Theorem \ref{thm:vnew:case2}.

Regarding the first and third sets of conditions, the regularity of $v$, the validity of the HJB equation, 
the existence of a unique strong solution to the state equation under the candidate optimal strategy $(b^*,\pi^*)$, 
as well as the linear growth conditions on $I(v')$ and $\frac{v'}{v''}$ used to get \eqref{eq:moment:}, 
play exactly the same technical role as in Theorem \ref{thm:verification}. 
They ensure that Itô's formula can be applied, that the candidate optimal strategy is admissible, 
and that the required integrability estimates such as \eqref{eq:moment:} hold in the verification argument.

Second, the boundary condition is different. 
In the present case, recall from Remark \ref{rmk:cases} that the admissible
 set $\mathcal{A} (\underline{x})$ reduces to a singleton, so that the value $V
(\underline{x})$ is explicitly given by  \eqref{eq:V:underline:x}.
Therefore, instead of imposing a structural boundary condition such as 
$\lim\limits_{x\downarrow \underline{x}} v''(x)=-\infty$, 
we require directly that 
$v(\underline{x})=V(\underline{x})$,
which guarantees consistency between the candidate function and the original control problem at the boundary. For completeness, we note that $\lim\limits_{x\downarrow \underline{x}} v''(x)=-\infty$
does hold for the candidate solution as shown in Theorem \ref{thm:vnew:case2}; however, this property is not needed for the verification argument.
\end{remark}
 
 \section{Existence and Properties of the Solution to the HJB Equation}
 \label{sec:well:hjb}
The HJB equation \eqref{eq:hjb:simplify} is highly nonlinear due to the presence of the terms 
 $\frac{(v')^2}{v''}$ and $D(v')$.
 This fully nonlinear structure makes a direct analysis difficult. 
 To overcome this difficulty, we employ a transform method introduced in 
\cite*{bensoussan2016unemployment} and further developed in 
\cite*{bensoussan2022inter} and \cite*{bensoussan2022risk}. 
This approach allows us to convert the fully nonlinear HJB equation into a semi-linear differential equation.

We first formally derive the  semilinear equation under the assumption that the solution $v$ to \eqref{eq:hjb:simplify} is sufficiently smooth. We then rigorously establish the existence and properties of its solutions. Finally, by inverting the transform, we obtain the existence and properties of solutions to the original HJB equation.

\subsection{Formal Derivation of the Auxiliary Equation}\label{sec:transform}
To derive the semilinear equation, we first differentiate \eqref{eq:hjb:simplify} with respect to $x$, obtaining
\begin{align*}
		\left(\delta-r+ \theta^2\right)v'(x)= 
		  U_2'(x) +\frac{1}{2} \theta^2 \frac{\left(v'(x)\right)^2 v'''(x)}{\left(v''(x)\right)^2}  +\left[r x-I\left(v'(x)\right)\right] v''(x),\quad  x>\underline{x}.
	\end{align*}
Let $A:=\lim\limits_{x\downarrow\underline{x}}v'(x)$ and
    $$f(x):=v'(x)\in(0,A), \quad x\in(\underline{x},\infty),$$
    where (i) $0<A<\infty$ when $\alpha x_0>\frac{b_0}{r}$; or (ii) $A=\infty$ when $\alpha x_0\leq\frac{b_0}{r}$.
Then $f$ satisfies
	\begin{align}\label{eq:f}
    \left\{
    \begin{aligned}
		&\left(\delta-r+ \theta^2\right)f(x)= 
		  U_2'(x) +\frac{1}{2} \theta^2 \frac{\left(f(x)\right)^2 f''(x)}{\left(f'(x)\right)^2}   +\left[r x-I\left(f(x)\right)\right] f'(x),\quad x>\underline{x}, \\
        &\lim\limits_{x\downarrow\underline{x}}f(x)=A,\quad \lim\limits_{x\downarrow \underline{x}}f'(x)=-\infty.
        \end{aligned}
        \right.
	\end{align}
To eliminate the nonlinear quotient term, we introduce the inverse function $F:(0,A)\to (\underline{x},\infty)$ defined by
 $F(f(x))=x$ for all $x\in(\underline{x},\infty)$. A direct computation yields
	\begin{align*}
		f'(x) =\frac{1}{F'(f(x))},\quad  f''(x)=-\frac{F''(f(x))}{(F'(f(x)))^3},\quad x\in(\underline{x},\infty).
	\end{align*}
Moreover, the boundary conditions become: $\lim\limits_{z\uparrow A}F(z)=\lim\limits_{x\downarrow\underline{x}}F(f(x))=\underline{x}$ and $\lim\limits_{z\uparrow A}F'(z)=\lim\limits_{x\downarrow\underline{x}}F'(f(x))=\lim\limits_{x\downarrow\underline{x}}\frac{1}{f'(x)}=0$.
	We then get
	\begin{align}\label{eq:F:noboud}
		\!-\!\frac{1}{2} \theta^2 (f(x))^2\frac{F''(f(x))}{F'(f(x))} \!+\![r F(f(x))\!-\!I( f(x))]\frac{1}{F'(f(x))}\!+\!(r\!-\!\theta^2\!-\!\delta) f(x)\!+\!  U_2'(F(f(x)))\!=\!0.
	\end{align}	
By performing the change of variable \(z = f(x)\) and substituting into \eqref{eq:F:noboud}, we obtain the following ordinary differential equation (ODE)  for \(F\):
	\begin{align}\label{eq:F}
    \left\{
    \begin{aligned}
		&\!-\!\frac{1}{2} \theta^2 z^2F''(z)+\left(r\!-\!\theta^2\!-\!\delta\right) zF'(z)  +  U_2'(F(z))F'(z)+r F(z)-I\left( z\right)=0,~ z\in(0,A),\\
        &\lim\limits_{z\uparrow A}F(z)=\underline{x},\quad \lim\limits_{z\uparrow A}F'(z)=0.
        \end{aligned}
        \right.
	\end{align}
By introducing the change of variable \(h = 1/z\), we reformulate \eqref{eq:F} as an initial-value problem starting from \(h = 1/A\). This forward formulation ensures that the solution is monotone increasing, in accordance with the monotonicity of the value function \(V\) with respect to \(x\). Specifically, we define the auxiliary function $H$ by
	\begin{align*}
		H(h):=F\left(\frac 1 h\right), \quad h\in\left(\frac {1}{A}, \infty\right).
	\end{align*}
	Then $H$ satisfies
	\begin{align*}
		& H'(h)=-\frac{1}{h^2}F'\left(\frac 1 h\right),\quad  H''(h)=2\frac{1}{h^3}F'\left(\frac 1 h\right)+\frac{1}{h^4}F''\left(\frac 1 h\right),\quad h\in\left(\frac {1}{A}, \infty\right).
	\end{align*}
	Consequently, the auxiliary function $H$ solves the auxiliary equation \eqref{eq:H:first}:
	\begin{align}\label{eq:H:first}
    \left\{
    \begin{aligned}
		&\!-\!\frac{1}{2} \theta^2 h^2H''(h) \!-\!(r\!-\!\delta) hH'(h)\!-\!  U_2'(H(h))h^2H'(h) \!+\!r H(h) \!-\!I\left( \frac1h\right)\!=\!0,~ h\in\left(\frac {1}{A}, \infty\right),\\
        &\lim\limits_{h\downarrow \frac{1}{A}}H\left(h\right)=\underline{x}, \quad\lim\limits_{h\downarrow \frac{1}{A}}h^2H'\left(h\right)=0.
        \end{aligned}
        \right.
        \end{align}

\subsection{Assumptions and the Main Result for the Auxiliary Equation}\label{sec:case1}
In the remainder of this section, we focus on the nondegenerate boundary case \(\alpha x_0 > b_0/r\).   The degenerate boundary
case $\alpha x_0\leq\frac{b_0}{r}$ is treated separately in
Section~\ref{sec:degenerate}. We start by examining the ODE \eqref{eq:H:first}. Its solution depends on the initial point \(y = 1/A\), where the boundary slope \(A = v'(\underline{x})\) is not known in advance. For each fixed \(y\), we prove that \eqref{eq:H:first}, after a suitable
extension, admits a unique global solution; see
Proposition~\ref{lma:global}. Hence, varying \(y\) over the admissible range yields a family of
solutions.
Whenever the inverse transform is well-defined, the corresponding member of this family gives rise to a candidate solution of the original HJB equation. The remaining challenge is to determine the appropriate value of $y$, or
equivalently the actual boundary slope $A=v'(\underline{x})=1/y$ of the value
function. This actual boundary slope cannot be determined directly.
Guided by
the verification theorem (Theorem~\ref{thm:verification}), we instead seek a value of $y$ for
which the candidate value function reconstructed from $H(\cdot;y)$ satisfies
the requirements of verification.
Once such a value of $y$ is found, the verification theorem ensures that the
candidate coincides with the value function, thereby also identifying
$y$ as precisely the reciprocal of the actual boundary slope of the value
function.
Theorem~\ref{thm:H} below shows that there exists an initial point $y_0$
such that the associated solution $H(\cdot;y_0)$ has the monotonicity and growth
properties needed for the subsequent reconstruction and verification of the
value function.

        In order to establish existence and growth estimates for the solution to \eqref{eq:H:first}, and building on Assumption \ref{assump:u:1}, we impose the following stronger assumptions on the utility functions $U_1$ and $U_2$; see  \cite*{bensoussan2022inter} for a similar assumption.
\begin{assumption}\label{assmup:U12}
    \item[(i)]  Assume that $U_1\in C^3([b_0,\infty))$. Moreover,
    $U_1(b)>0$ and $U_1''(b)<0$ for all $b\in[b_0,\infty)$, with
    $U_1'(\infty)=0$ and $U_1(\infty)=\infty$.
    There exist constants
    $C_{1,1}>0$ and $\gamma\in(0,1)$ such that
     \begin{align*}
          0<U_1'(b)\leq C_{1,1} b^{\gamma-1}\quad \text{for all } b\in[b_0,\infty).
     \end{align*}
    \item[(ii)] Assume that $U_2\in C^1([\underline{x},\infty))$.
    Moreover, $U_2'(x)>0$ for all $x\in[\underline{x},\infty)$.
    There exists a constant $C_{2,1}>0$ such that
     \begin{align*}
        0<U_2(x)\leq C_{2,1} x^{\gamma}\quad \text{for all }x\in[\underline{x},\infty).
     \end{align*}
\end{assumption}
Assumption \ref{assmup:U12} covers a broad class of utility functions commonly used in the literature (see Remark \ref{rmk:assump:1}(ii)).  Similar to Remark \ref{rmk:assump:1}(iii), the growth conditions in Assumption~\ref{assmup:U12}(i) and (ii) could equivalently use separate exponents $\gamma_1$ and $\gamma_2$ for $U_1$ and $U_2$, with the constants $C_{1,1}$ and $C_{2,1}$ adjusted accordingly. For convenience in the subsequent analysis, we adopt a single exponent $\gamma$ without loss of generality. In addition,
it does not require $U_2$ to be concave.

For the nondegenerate boundary case, we impose the following mild regularity condition on \(U_2'\) so that the local well-posedness of \eqref{eq:H:first} follows from standard ODE theory. The degenerate boundary case, treated in Section~\ref{sec:degenerate}, instead relies on a Hölder continuity condition on $U_2$.
\begin{assumption}\label{assump:u_2'}
    $U_2'$ is locally Lipschitz continuous.
\end{assumption}

The following theorem is the main result of this section, establishing the existence of a suitable initial point $y_0$ and a corresponding global solution to \eqref{eq:H:first} with the required monotonicity and growth properties.

         \begin{theorem}\label{thm:H}
  Under Assumptions \ref{assump:parameter}, \ref{assmup:U12} and \ref{assump:u_2'},  suppose further that $\alpha x_0>\frac{b_0}{r}$. Then there is a  $y_0 \in \left(0,\frac{1}{U_1'(r\underline{x})}\right)$ such that the following ODE
\begin{align}\label{eq:extension:u_222}
\left\{
\begin{aligned}
		&\!-\!\frac{1}{2} \theta^2 h^2H''(h) \!-\!(r\!-\!\delta) hH'(h)\!-\!  U_2'(H(h))h^2H'(h)\! +\!r H(h) \!-\!I\left( \frac1h\right)\!=\!0, ~h\in(y_0, \infty),\\
        &H\left(y_0\right)=\underline{x}, \quad H'\left(y_0\right)=0
        \end{aligned}
        \right.
\end{align}
 admits a  solution $H\in C^2([y_0,\infty))$, and $H$ satisfies 
 \begin{align}
     &(i)\quad\quad\quad\quad\quad\quad{\frac{2}{\theta^2p_1(p_1+p_2)}I\left( \frac1h\right)\leq H(h)\leq C_{H,0}h^{\frac{1}{1-\gamma}}}\quad \text{for all } h\in[y_0,\infty),\label{eq:H:property}\\
     &(ii)\quad\quad\quad\quad\quad\quad0<\frac{hH'(h)}{H(h)}\le p_1\quad \text{for all }h\in(y_0,\infty)\label{eq:H':property}
 \end{align}
 for some constant $C_{H,0}>0$, where $p_1$, $q_1=-p_2$ are the two roots of the following quadratic equation:
\begin{align}\label{eq:p}
    -\frac{1}{2} \theta^2 p(p-1) - (r-\delta) p + r = 0,
\end{align}
such that $p_1>\frac{1}{1-\gamma}$ and $p_2>0$.
\end{theorem}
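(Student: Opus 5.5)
Our approach is a shooting argument in the initial point $y$. For each $y\in(0,1/U_1'(r\underline{x}))$ we consider the initial-value problem \eqref{eq:extension:u_222} with $y_0$ replaced by $y$. Near $y$ the equation is a regular second-order ODE: $h^2\ge y^2>0$, $I(1/h)$ is continuous, and $U_2'$ is locally Lipschitz by Assumption \ref{assump:u_2'}. Standard ODE theory therefore gives a unique local $C^2$ solution $H(\cdot;y)$.

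To make this solution global, we extend $U_2'$ below $\underline{x}$ (for instance constantly), so that the solution cannot leave the domain of the coefficients. We then rule out blow-up in finite $h$ by a comparison argument. Writing the equation as a first-order system in $(H,hH')$, the term $-U_2'(H)h^2H'$ is dissipative whenever $H'>0$. A comparison with the linear Euler equation
\[
-\tfrac12\theta^2h^2H''-(r-\delta)hH'+rH=I(1/h)
\]
then gives an upper bound of the form $Ch^{p_1}$. This bound is locally finite, so the solution is global. This is the content we attribute to Proposition~\ref{lma:global}.

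At $h=y$ we have $H=\underline{x}$ and $H'=0$, so the equation gives
\[
\tfrac12\theta^2y^2H''(y)=r\underline{x}-I(1/y).
\]
The condition $y<1/U_1'(r\underline{x})$ means $I(1/y)<r\underline{x}$, hence $H''(y)>0$ and $H$ is increasing immediately to the right of $y$.

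We next study how the global solutions depend on $y$. Continuous dependence on initial data lets us define two sets:
\[
S_-=\{y:\ H(\cdot;y)\ \text{attains a critical point (turns down) at finite } h\},
\qquad
S_+=\{y:\ hH'/H\ \text{exceeds } p_1 \text{ at some finite } h\}.
\]
Both sets are open. We aim to show they are nonempty and disjoint:
\begin{itemize}
\item As $y\uparrow 1/U_1'(r\underline{x})$, the forcing $r\underline{x}-I(1/y)$ vanishes, the solution barely rises, and the source term $I(1/h)$ should drive it down before it grows. This would put such $y$ in $S_-$.
\item As $y\downarrow0$, the source term $I(1/h)$ is large already near $y$, which should push $hH'/H$ above $p_1$. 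This would put such $y$ in $S_+$.
\end{itemize}
Connectedness of the interval then yields some $y_0$ in neither set. For this $y_0$ we get $H'>0$ and $hH'/H\le p_1$ for all $h>y_0$, which is \eqref{eq:H':property}.

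The bounds in \eqref{eq:H:property} come from comparison with particular solutions of the Euler operator, whose characteristic roots are $p_1$ and $-p_2$ from \eqref{eq:p}. For the lower bound, the variation-of-constants formula
\[
\int G(h,s)\,I(1/s)\,ds
\]
for the Euler operator produces the constant $\frac{2}{\theta^2p_1(p_1+p_2)}$. Since $H'>0$, the extra term $-U_2'h^2H'$ has a favorable sign. Using also that $I(1/h)$ is increasing in $h$, we obtain $H\ge \frac{2}{\theta^2p_1(p_1+p_2)}I(1/h)$.

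For the upper bound, Assumption \ref{assmup:U12}(i) gives $I(1/h)\le Ch^{1/(1-\gamma)}$. Assumption \ref{assump:parameter} implies $p_1>1/(1-\gamma)$, so the particular solution grows like $h^{1/(1-\gamma)}$. The bound $hH'/H\le p_1$ alone only controls growth by $h^{p_1}$, so we must suppress the homogeneous $h^{p_1}$ mode. Here the selection of $y_0$ matters: the borderline solution cannot contain a positive $h^{p_1}$ component, or it would lie in $S_+$, nor a negative one, or it would lie in $S_-$. We expect to make this rigorous by writing $H$ as the particular solution plus a homogeneous part and comparing the two.

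The main obstacle is the shooting step: proving that $S_-$ and $S_+$ are nonempty and open, and that the borderline $y_0$ excludes the $h^{p_1}$ mode, given the nonlinear, non-monotone term $U_2'(H)h^2H'$. We would first try to use positivity of $U_2'$ together with comparison against the explicit Euler solutions as sub- and supersolutions.
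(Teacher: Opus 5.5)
Your shooting framework breaks down at the step that carries the theorem: the ``overshoot'' set $S_+$ is defined by a criterion that increasing solutions can never satisfy.

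Write the solution in the variation-of-parameters form of Lemma~\ref{lma:represent}(i), $H(h)=a(h)h^{p_1}+B(h)h^{-p_2}$, with $a'h^{p_1}+B'h^{-p_2}=0$ and
\[
B(h)=\frac{p_1\underline{x}}{p_1+p_2}\,y^{p_2}+\frac{2}{\theta^2(p_1+p_2)}\int_y^h\Bigl[\tilde I(z)+\tilde U_2'(H(z))z^2H'(z)\Bigr]z^{p_2-1}\,\mathrm{d}z .
\]
Then $hH'=p_1ah^{p_1}-p_2Bh^{-p_2}$, so $p_1H(h)-hH'(h)=(p_1+p_2)B(h)h^{-p_2}$. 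Moreover $B(h)>0$ whenever $H'\ge0$ on $[y,h]$.

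Hence every increasing solution satisfies $hH'/H<p_1$ for all $h$. This includes one growing like $ch^{p_1}$ with $c>0$, whose ratio tends to $p_1$ from below. A solution can enter $S_+$ only after $H'$ has become negative, so $S_+\subseteq S_-$ and your disjointness claim fails.

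Worse, for small $y$ the solutions are increasing with a \emph{positive} $h^{p_1}$ component. Its coefficient is proportional to $G(y)$ from Lemma~\ref{lma:G}, and $\liminf_{y\downarrow0}G(y)\ge r\underline{x}-b_0>0$. Your heuristic is also backwards here, since $I(1/h)=b_0$ for $h<z_0$. These $y$ lie in neither set. So ``$y_0$ in neither set'' does not exclude the $h^{p_1}$ mode, and such $H$ violate the upper bound in (i) because $p_1>\frac{1}{1-\gamma}$. Your lower-bound argument silently relies on the same mode suppression.

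What is missing is a direct handle on the $h^{p_1}$ coefficient and a proof that it vanishes at the selected point. The paper works with
\[
G(y)=r\underline{x}-p_1y^{p_1}\int_y^\infty\Bigl[\tilde I(z)+\tilde U_2'(H(z;y))z^2H'(z;y)\Bigr]z^{-p_1-1}\,\mathrm{d}z .
\]
This is well defined and continuous in $y$ only thanks to the a priori bound $H\le\underline{x}(h/y)^{p_1}$, together with an integration by parts. That bound rests on the one-turning-point Proposition~\ref{lma:phi}, which you would also need just to make $S_-$ open.

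The paper sets $y_0=\sup\{G>0\}$, so $G(y_0)\ge0$, and then proves monotonicity of $H(\cdot;y_0)$ by a tail argument you do not have:
\begin{itemize}
\item A turned-down solution cannot converge to a finite limit, because $\int_{z_1}^\infty I(1/z)z^{-2}\,\mathrm{d}z\ge\int_{r\underline{x}}^\infty U_1'(b)\,\mathrm{d}b=\infty$. This is where $U_1(\infty)=\infty$ enters.
\item Nor can it tend to $-\infty$ when $G\ge0$.
\end{itemize}
Comparison with $\bar G$ then gives $y_0<\bar y$ and $G(y_0)=0$. Your connectedness idea can be repaired by taking $y_0=\inf S_-$. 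However, showing that its $h^{p_1}$ coefficient is zero needs exactly these two ingredients: continuity of $G$ and the tail argument.

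Finally, even once $G(y_0)=0$, the representation only yields $U_2(H)\lesssim h^{\gamma p_1}$. That contributes $h^{1+\gamma p_1}$, which still dominates $h^{\frac{1}{1-\gamma}}$. Reaching $C_{H,0}h^{\frac{1}{1-\gamma}}$ requires the bootstrap $a_{n+1}=1+\gamma a_n$, which your ``particular plus homogeneous part'' comparison does not address.
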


\begin{figure}[htbp]
    \centering
    \includegraphics[width=0.5\textwidth]{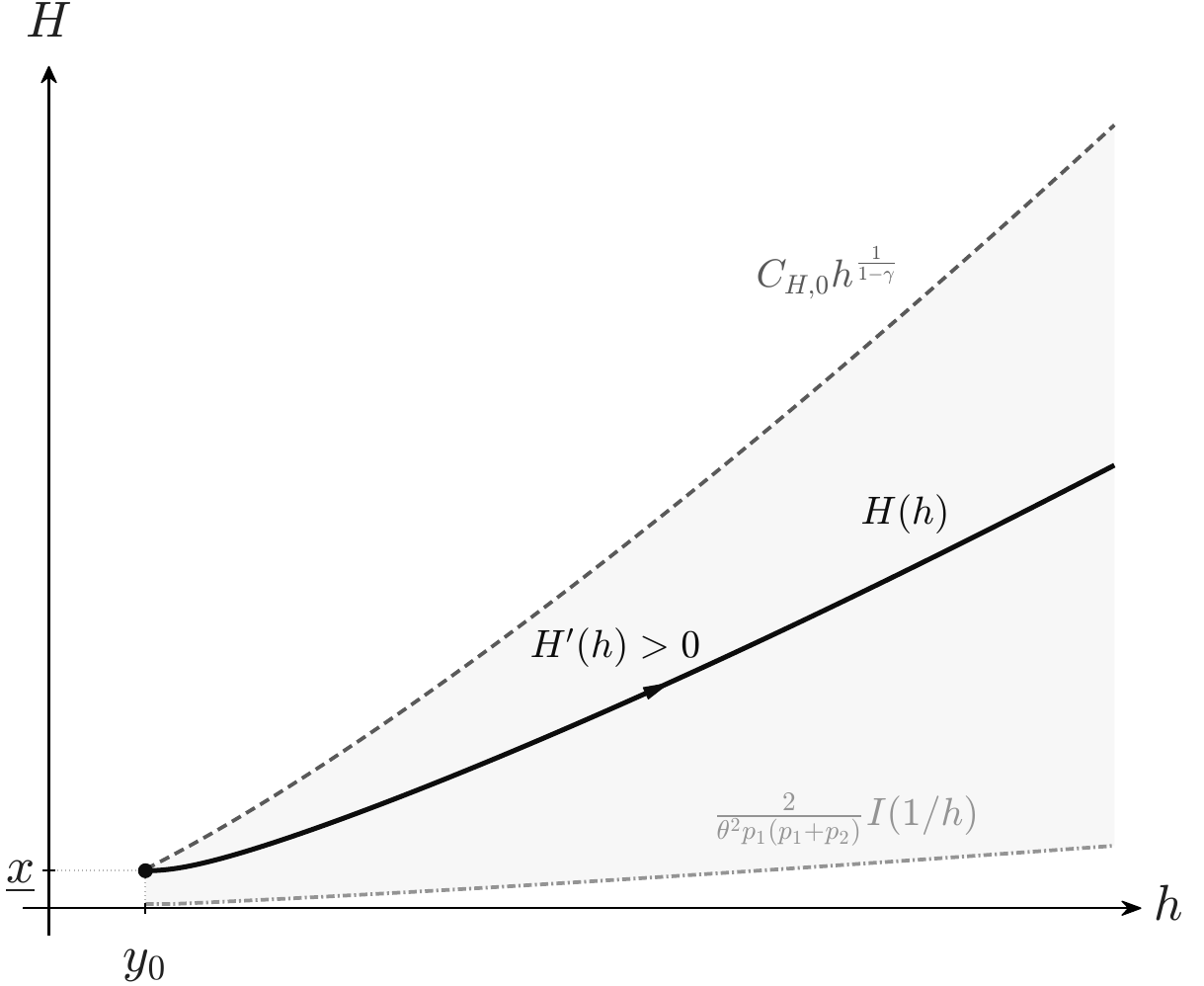}
    \caption{Schematic illustration of the selected
    solution $H$ in Theorem~\ref{thm:H}, together with the lower and
    upper bounds in \eqref{eq:H:property}. The arrow highlights $H'>0$;
    $H$ is not necessarily convex or concave.}
    \label{fig:thm3-growth}
\end{figure}

 Figure~\ref{fig:thm3-growth} schematically depicts the selected solution \(H\), its strict increase, and the two-sided bounds. The strict increase of \(H\) translates into the strict concavity \(v''<0\) of the value function ((i) of Theorem \ref{thm:verification}), reflecting the diminishing marginal utility of wealth. The upper growth bound, in turn, ensures that \(v(x)\) grows at most like \(x^\gamma\) with \(\gamma<1\) ((ii) of Theorem \ref{thm:verification}), so that total welfare grows sublinearly with wealth. The remaining two estimates in \eqref{eq:H:property}--\eqref{eq:H':property} further imply that both the optimal benefit rate and the amount invested in the risky asset grow at most linearly with wealth ((iii) of Theorem \ref{thm:verification}). These bounds prevent excessive payouts and extreme leverage. Together, these properties provide the analytical estimates needed to reconstruct and subsequently verify the value function and its associated optimal policies, while also ensuring that the resulting strategy is economically sensible, namely, diminishing marginal utility of wealth discourages excessive risk-taking, and linearly bounded policies rule out extreme leverage and unsustainable payouts.

 In the following, we always assume that Assumptions \ref{assump:parameter}, \ref{assmup:U12} and \ref{assump:u_2'} hold.

\subsection{ Global Existence for a Fixed Initial Point}
\label{sec:case1:existence}

To prove Theorem~\ref{thm:H}, we first address a technical issue in solving
\eqref{eq:extension:u_222}. Since $U_2$ is originally defined only on
$[\underline{x},\infty)$, the Picard--Lindel\"of local existence theorem
cannot be directly applied: a local solution might fall below the level
$\underline{x}$ within its construction steps, which would make the term
$U_2'(H)$ undefined. We therefore
begin by extending $U_2$ to the entire real line.
Under Assumptions~\ref{assmup:U12} and \ref{assump:u_2'}, there exists an extension $\tilde{U}_2\in C^1(\mathbb{R})$ whose derivative is locally Lipschitz continuous and which satisfies
\begin{align}
&
\tilde{U}_2=U_2 \ \text{on }[\underline{x},\infty),\qquad
\tilde{U}_2\geq0,\quad \tilde{U}_2'\geq0 \ \text{on }\mathbb{R},\qquad
\tilde{U}_2'=0 \ \text{on }(-\infty,0].
\label{eq:tildeU2:properties}
\end{align}
 The subsequent verification
$H\geq\underline{x}$ ensures that the final result is independent of the
particular choice of the extension. For an initial point $y>0$, consider the following
extended initial-value problem. For notational simplicity,
we use the same symbol $H$ for the unknown function of the extended equation \eqref{eq:extension:u_22}
and write
$H(\cdot;y)$ when emphasizing its dependence on $y$. Note that any solution of this extended equation agrees with that of \eqref{eq:H:first} provided \(H\ge \underline{x}\), a property that will be established for the initial point \(y_0\) selected in the sequel.
\begin{align}\label{eq:extension:u_22}
\left\{
\begin{aligned}
		&-\frac{1}{2} \theta^2 h^2H''(h) -\left(r-\delta\right) hH'(h)- \tilde{U}_2'(H(h))h^2H'(h) +r H(h) -\tilde{I}(h)=0,\quad h\in(y,\infty),\\
        &H\left(y\right)=\underline{x}, \quad H'\left(y\right)=0,
        \end{aligned}
        \right.
\end{align}
where the benefit rule expressed in terms of $z$ is defined by
$\tilde{I}(z):=I(1/z)$ for $z>0$, with $I$ defined in \eqref{eq:I}. That is,
\begin{align}\label{eq:tildeI}
    \tilde{I}(z)=
    \left\{
    \begin{aligned}
       &b_0, && z\in(0,z_0),\\
       &(U_1')^{-1}\left(\frac{1}{z}\right),
       && z\in[z_0,\infty),
    \end{aligned}
    \right.
\end{align}
where
\begin{equation*}
z_0:=\frac{1}{U_1'(b_0)}.
\end{equation*}
The threshold $z_0$ marks the point beyond which the guaranteed
minimum-benefit constraint ceases to bind.

We next introduce a second threshold to specify the admissible range
of the initial point, namely
\begin{equation*}
z_1:=\frac{1}{U_1'(r\underline{x})}.
\end{equation*}
Since $r\underline{x}>b_0$ and $U_1'$ is strictly decreasing,
$0<z_0<z_1<\infty$. At the threshold $z_1$, the benefit rule gives
$\tilde I(z_1)=r\underline{x}$, the net risk-free return at the
minimum wealth level. We therefore expect the initial point $y$ to lie in 
$(0,z_1)$, ensuring that the minimum-wealth boundary is not
absorbing. In particular, this is precisely the range implied by the boundary requirement
in (ii) of Theorem~\ref{thm:verification}. Accordingly, we henceforth regard \eqref{eq:extension:u_22} as a
family of extended initial-value problems  parametrized by the
free parameter $y\in(0,z_1)$. Figure~\ref{fig:H-family-equation32}
illustrates the resulting family: different choices of $y$ give rise to
solutions $H(\cdot;y)$ with distinct qualitative behaviors.
\begin{figure}[htbp]
    \centering
    \includegraphics[width=0.5\textwidth]{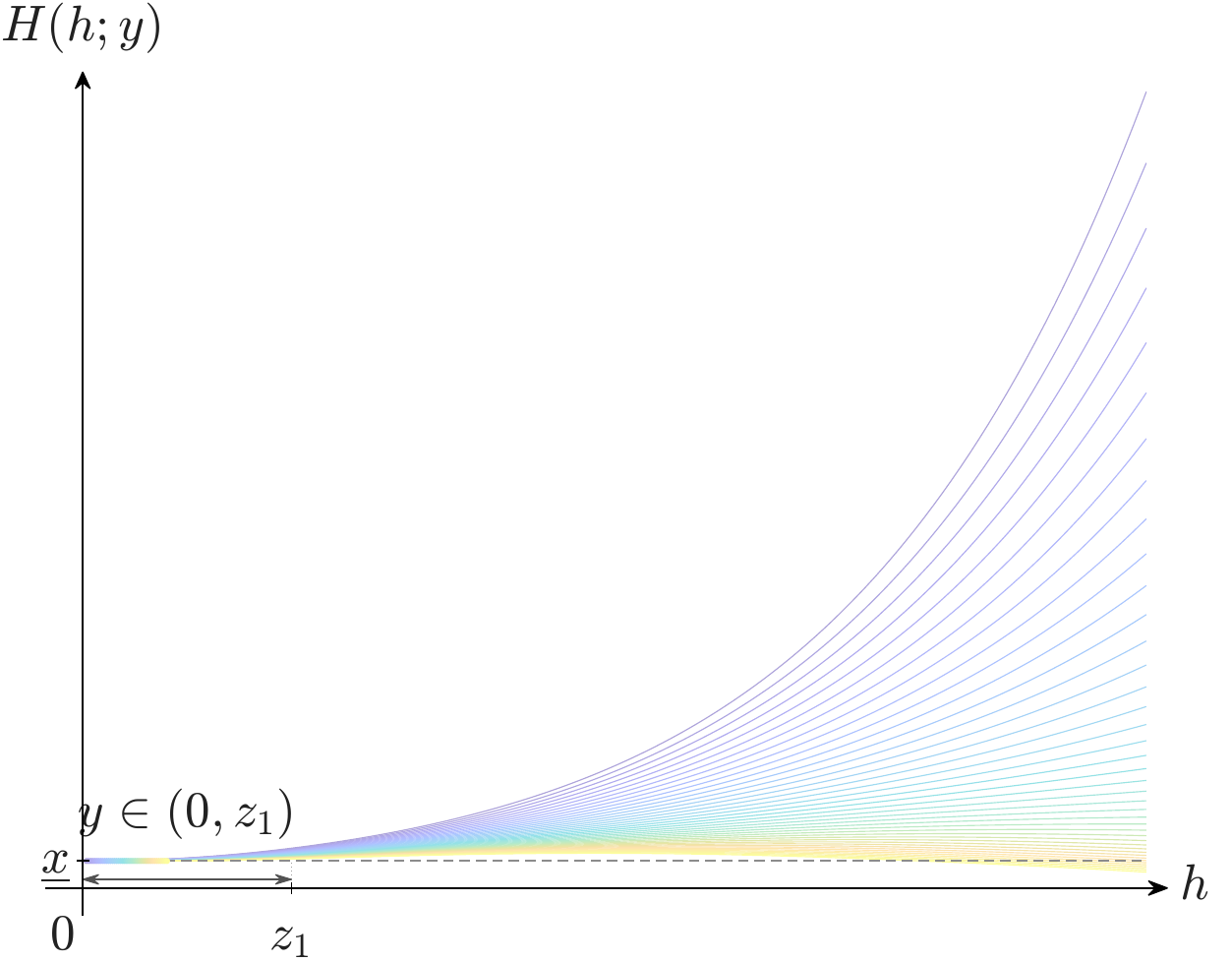}
    \caption{ Schematic illustration of the solution family $H(\cdot;y)$
    to \eqref{eq:extension:u_22} for representative initial points
    $y\in(0,z_1)$.}
    \label{fig:H-family-equation32}
\end{figure}

For each fixed $y\in(0,z_1)$, our goal in this subsection is to
establish global well-posedness of \eqref{eq:extension:u_22}. We first
record the standard local existence and uniqueness result in
Lemma~\ref{lma:local:tildeh}. The key step is the one-turning-point
property established in Proposition~\ref{lma:phi}, which imposes a
global shape restriction on the solution. Combined with the kernel
representation, this structural property
yields an {\it a priori} upper bound, see Lemma~\ref{lma:represent}. The resulting bound allows the
local solution to be extended uniquely to $[y,\infty)$, as stated in
Proposition~\ref{lma:global}.

The proof of the following local well-posedness result is
included in Appendix~\ref{sec:proof:local:tildeh} for completeness.
\begin{lemma}\label{lma:local:tildeh}
    For any $y \in \left(0,z_1\right)$, \eqref{eq:extension:u_22} admits a unique solution $H(\cdot;y)$ in $C^2([y,y+\epsilon(y)))$ for some $\epsilon(y)>0$.
\end{lemma}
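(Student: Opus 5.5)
We rewrite \eqref{eq:extension:u_22} as a first-order system in the variables $(H,P)$ with $P:=H'$ and apply the Picard--Lindel\"of theorem. Since $y>0$, on any interval $[y,y+1]$ we have $h^2\ge y^2>0$. Dividing the equation by $\tfrac12\theta^2h^2$ gives
\begin{align*}
H'=P,\qquad P'=\frac{2}{\theta^2h^2}\Bigl(-(r-\delta)hP-\tilde U_2'(H)h^2P+rH-\tilde I(h)\Bigr),
\end{align*}
with initial data $(H,P)(y)=(\underline{x},0)$. The right-hand side $G(h,H,P)$ must be continuous in $h$ and locally Lipschitz in $(H,P)$, uniformly for $h$ in a compact subinterval.

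\emph{Lipschitz dependence on $(H,P)$.} The terms $-(r-\delta)hP$ and $rH$ are linear in $(H,P)$, with coefficients bounded on the compact interval. The term $\tilde U_2'(H)h^2P$ is a product of the locally Lipschitz function $\tilde U_2'$ (Assumption~\ref{assump:u_2'} together with the extension property) and the linear function $P$. A product of locally Lipschitz functions is locally Lipschitz, since both factors are bounded on bounded sets.

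\emph{Continuity in $h$.} The term $\tilde I(h)$ depends only on $h$, and \eqref{eq:tildeI} shows it is continuous in $h$:
\begin{itemize}
\item on $(0,z_0)$ it equals the constant $b_0$;
\item at $z_0$ we have $(U_1')^{-1}(1/z_0)=(U_1')^{-1}(U_1'(b_0))=b_0$, so the two pieces match;
\item on $[z_0,\infty)$, $(U_1')^{-1}$ is continuous because $U_1'$ is continuous and strictly decreasing by Assumption~\ref{assmup:U12}.
\end{itemize}
Hence $G$ is jointly continuous on $[y,y+1]\times\mathbb{R}^2$, and each term is also continuous in $h$.

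\emph{Conclusion.} Picard--Lindel\"of yields some $\epsilon(y)>0$ and a unique $C^1$ solution $(H,P)$ on $[y,y+\epsilon(y))$. Note that no Lipschitz continuity in $h$ is needed, so the possible kink of $\tilde I$ at $z_0$ causes no difficulty. Then $H'=P\in C^1$, so $H\in C^2([y,y+\epsilon(y)))$, and $H$ satisfies \eqref{eq:extension:u_22}, including at the endpoint $y$ in the one-sided sense. Conversely, any $C^2$ solution of \eqref{eq:extension:u_22} gives a solution of the system, so uniqueness transfers.

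The only point needing care is that $\tilde U_2'$ is defined and locally Lipschitz on all of $\mathbb{R}$. This is exactly what the extension \eqref{eq:tildeU2:properties} provides, so the argument does not need to track whether $H$ stays above $\underline{x}$.
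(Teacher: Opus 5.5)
Your proof is correct and follows the same route as the paper: rewrite \eqref{eq:extension:u_22} as a first-order system, use $y>0$ together with the locally Lipschitz extension $\tilde U_2'$, and apply Picard--Lindel\"of. The only difference is cosmetic. The paper asserts local Lipschitz continuity of the right-hand side jointly in $(h,p,q)$, whereas you observe, correctly, that continuity in $h$ suffices, so the kink of $\tilde I$ at $z_0$ plays no role.
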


The next proposition establishes a one-turning-point
property for the nonlinear ODE \eqref{eq:extension:u_22}. Specifically,
each solution is either strictly increasing throughout its interval of
existence or is strictly increasing up to a single turning point and
strictly decreasing thereafter. Thus, despite the nonlinearity of the
ODE, its solution family obeys a simple global shape restriction and
cannot exhibit oscillatory behavior. This structural regularity is key
to the tractability of the analysis that follows. The two possible
qualitative behaviors established in Proposition~\ref{lma:phi} are
illustrated in Figure~\ref{fig:prop1-turning}. By the definition of the first turning point $\phi(y)$ in \eqref{eq:underliney}, the core assertion is that once the solution passes $\phi(y)$, the sign
of $H'$ never changes: it remains strictly negative for all $h\in(\phi(y),M_y)$,
ruling out any subsequent return to zero or sign reversal.

Establishing this global shape restriction directly from
\eqref{eq:extension:u_22} is difficult because the nonlinear term
$\tilde U_2'(H(h))h^2H'(h)$ couples $H$ and $H'$. We instead exploit
a linear viewpoint along a fixed solution. We treat the equation as a first-order
linear equation for $H'$; the coupled sign structure of $H'$ and
$rH-\tilde I$ then rules out more than one sign change of $H'$. Figure~\ref{fig:prop1-sign-pattern} provides a schematic illustration
of this coupled sign argument.

\begin{proposition}\label{lma:phi}
For any $y \in (0,z_1)$, suppose that $H(\cdot;y)$ is a local (or global)
solution to \eqref{eq:extension:u_22} defined on $[y,M_y)$ for some
$M_y\in(y,\infty]$.
Define the first turning point by
\begin{align}\label{eq:underliney}
\phi(y):=\inf\{h\in(y,M_y):H'(h;y)\le 0\}.
\end{align}
In other words, if the set in \eqref{eq:underliney} is empty, we set $\phi(y):=M_y$.
Then the following conclusions hold:
\begin{enumerate}
    \item[(i)] $\phi(y)>y$. \label{it:phi_gt_y}
    \item[(ii)] The following inequalities hold:
    \begin{enumerate}
        \item $H'(h;y)>0 \quad \text{for } h\in(y,\phi(y))$, \label{it:Hprime:pos}
        \item $H'(h;y)<0 \quad \text{for } h\in(\phi(y),M_y)$. \label{it:Hprime:neg}
    \end{enumerate}
\end{enumerate}
\end{proposition}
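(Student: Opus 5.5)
The plan is to read \eqref{eq:extension:u_22} along the fixed solution $H=H(\cdot;y)$ as a \emph{linear first-order} equation for $w:=H'$. Set
\begin{align*}
G(h):=rH(h)-\tilde I(h),\qquad a(h):=(r-\delta)h+\tilde U_2'(H(h))h^2 .
\end{align*}
Then $\tfrac12\theta^2h^2w'(h)+a(h)w(h)=G(h)$ on $[y,M_y)$, with $a$ continuous. With the positive integrating factor $E(h):=\exp\bigl(\int_y^h \tfrac{2a(s)}{\theta^2 s^2}\,ds\bigr)$, this gives for any $h_0<h$ in $[y,M_y)$ the representation
\begin{align*}
E(h)w(h)=E(h_0)w(h_0)+\int_{h_0}^h E(s)\frac{2G(s)}{\theta^2s^2}\,ds .
\end{align*}
Two facts drive the argument. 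First, $\tilde I$ is nondecreasing and continuous: it is constant $b_0$ on $(0,z_0)$ and strictly increasing on $[z_0,\infty)$ because $U_1''<0$. Second, $\tilde I$ is $C^1$ with $\tilde I'>0$ on $(z_0,\infty)$, since $U_1\in C^3$.

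For (i), evaluate the equation at $h=y$ using $H(y)=\underline{x}$ and $H'(y)=0$. This gives $H''(y)=2\bigl(r\underline{x}-\tilde I(y)\bigr)/(\theta^2y^2)$. Since $y<z_1$ and $\tilde I$ is nondecreasing, $\tilde I(y)\le \tilde I(z_1)=r\underline{x}$. The inequality is strict: if $y<z_0$, then $\tilde I(y)=b_0<r\underline{x}$; if $y\ge z_0$, strict monotonicity applies. Hence $H''(y)>0$, so $H'>0$ on some $(y,y+\epsilon)$ and $\phi(y)>y$. Part (ii)(a) is then immediate from the definition of $\phi(y)$ as an infimum.

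The substance is (ii)(b). Assume $\phi:=\phi(y)<M_y$. Then $w(\phi)=0$ and $w>0$ on $(y,\phi)$, so $w'(\phi)\le 0$, and the equation at $\phi$ gives $G(\phi)\le 0$. I first show that $w<0$ immediately to the right of $\phi$.
\begin{itemize}
\item If $G(\phi)<0$, then $w'(\phi)<0$ and we are done.
\item If $G(\phi)=0$, then $\tilde I(\phi)=rH(\phi)$. Because $H$ increased on $(y,\phi)$, $H(\phi)>\underline{x}$, so $\tilde I(\phi)>r\underline{x}>b_0$ and therefore $\phi>z_0$. Near $\phi$, $\tilde I$ is $C^1$, so $G$ is $C^1$ with $G'(\phi)=rw(\phi)-\tilde I'(\phi)=-\tilde I'(\phi)<0$. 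By the representation, $E(h)w(h)=\int_\phi^h E\,2G/(\theta^2s^2)\,ds$. Since $G<0$ on $(\phi,\phi+\epsilon)$, this gives $w<0$ there. (Equivalently, differentiate the equation to get $w'(\phi)=0$ and $w''(\phi)<0$.)
\end{itemize}

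Next comes a continuation argument. Let $\bar h:=\sup\{h\in(\phi,M_y):w<0\text{ on }(\phi,h]\}>\phi$, and suppose $\bar h<M_y$, so that $w(\bar h)=0$. On $(\phi,\bar h)$ we have $H'<0$ while $\tilde I$ is nondecreasing. Hence $G$ is nonincreasing there, so $G\le G(\phi)\le0$. Moreover, by the local step, $G<0$ on a nontrivial initial subinterval. The representation from $h_0=\phi$ then yields $E(\bar h)w(\bar h)=\int_\phi^{\bar h}E\,2G/(\theta^2s^2)\,ds<0$, which is a contradiction. Hence $\bar h=M_y$ and $H'<0$ on $(\phi(y),M_y)$.

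The main obstacle is the degenerate case $G(\phi)=0$. There the first-order sign information vanishes, and one must exclude $H$ staying flat or turning back up. This is exactly where the nonsmooth point $z_0$ of $\tilde I$ could cause trouble. The observation $H(\phi)>\underline{x}$ forces $\phi>z_0$, which puts us in the region where $\tilde I$ is $C^1$ and strictly increasing; I expect this to be the delicate step to write carefully.
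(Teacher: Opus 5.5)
Your proof is correct and follows essentially the same route as the paper. Both read \eqref{eq:extension:u_22} as a linear first-order equation for $H'$, use the coupled signs of $H'$ and $rH-\tilde I$, and handle the degenerate case $rH(\phi)=\tilde I(\phi)$ through $\phi>z_0$ and $\tilde I'(\phi)>0$. The differences are minor and both versions are valid: the paper integrates backward from $\phi$ to show this degenerate case cannot occur at all, since it would force $H'<0$ just left of $\phi$, and it rules out a return of $H'$ to zero by a pointwise second-derivative contradiction at the first return point, whereas you integrate forward in both steps.
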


\begin{figure}[htbp]
    \centering
    \begin{subfigure}[t]{0.4\textwidth}
        \centering
        \includegraphics[width=\textwidth]{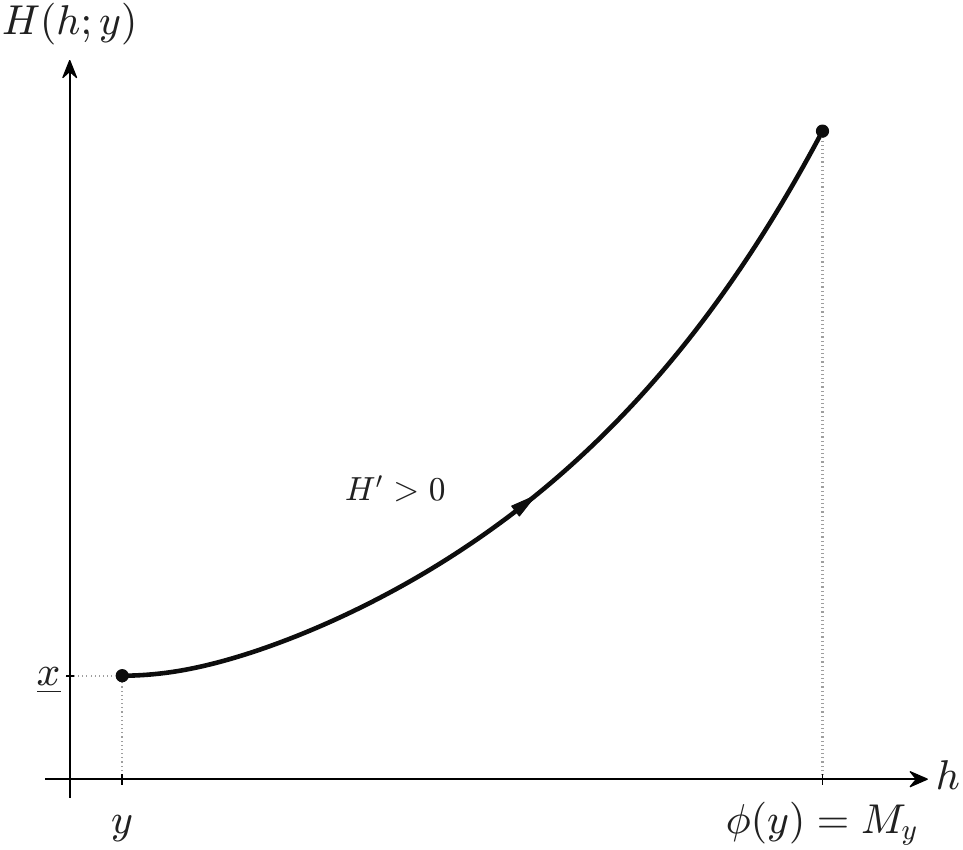}
        \caption{$\phi(y)=M_y$}
    \end{subfigure}
    \hspace{0.5cm}
    \begin{subfigure}[t]{0.4\textwidth}
        \centering
        \includegraphics[width=\textwidth]{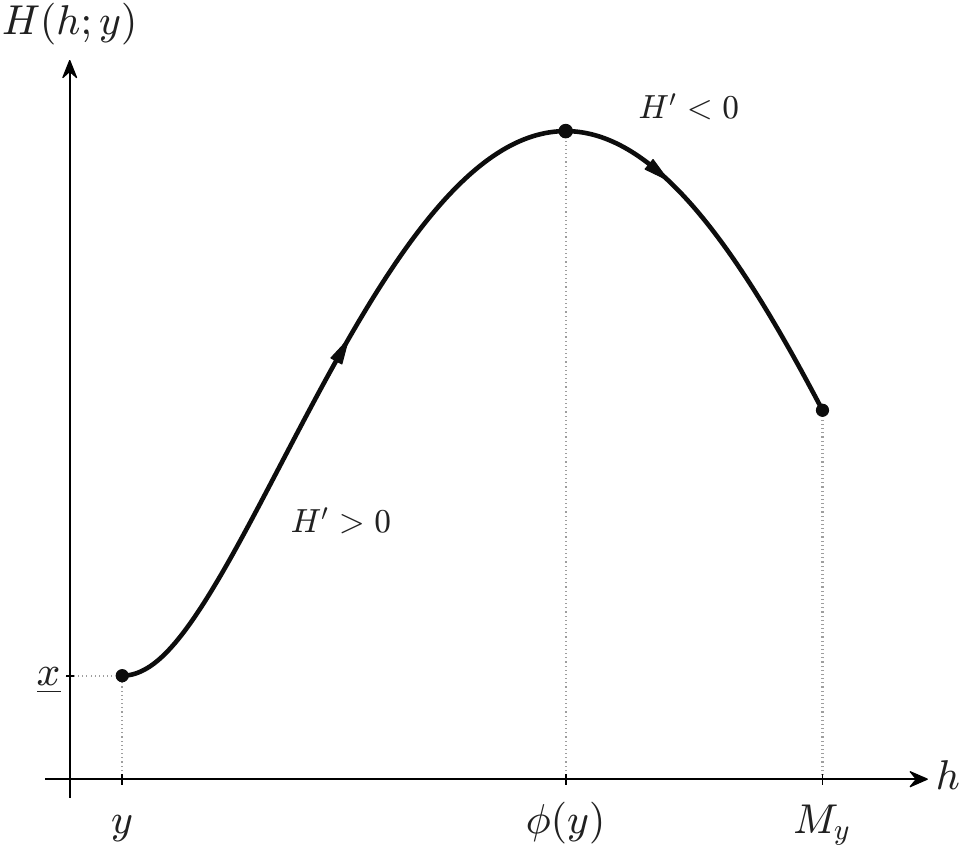}
        \caption{$\phi(y)<M_y$}
    \end{subfigure}
    \caption{Schematic illustration of the two possible
    behaviors of $H(\cdot;y)$ in Proposition~\ref{lma:phi}: the solution is
    strictly increasing in panel~(a), and is strictly increasing up to a single
    turning point and strictly decreasing thereafter in panel~(b).}
    \label{fig:prop1-turning}
\end{figure}

\begin{proof}
 Fix $y\in(0,z_1)$ and suppress the dependence on $y$. We denote $H(\cdot;y)$ by $H(\cdot)$ and $\phi(y)$ by $\phi$.
 Set
\begin{align*}
\Delta(h):=rH(h)-\tilde I(h),
\end{align*}
where $\tilde{I}$ is defined in \eqref{eq:tildeI}. 
Equation \eqref{eq:extension:u_22} can then be written as
\begin{align}\label{eq:Z:first-order}
(H'(h))'+\left(\frac{2(r-\delta)}{\theta^2h}
+\frac{2}{\theta^2}\tilde U_2'(H(h))\right)H'(h)
&=\frac{2}{\theta^2h^2}\Delta(h).
\end{align}
Since $H'(y)=0$ and, by $y<z_1$,
$\Delta(y)=r\underline{x}-\tilde I(y)>0$, letting $h\downarrow y$ in
\eqref{eq:Z:first-order} gives $(H'(y))'>0$. Thus $H'>0$ immediately to
the right of $y$, and hence $\phi>y$.

\begin{figure}[htbp]
    \centering
    \includegraphics[width=0.5\textwidth]{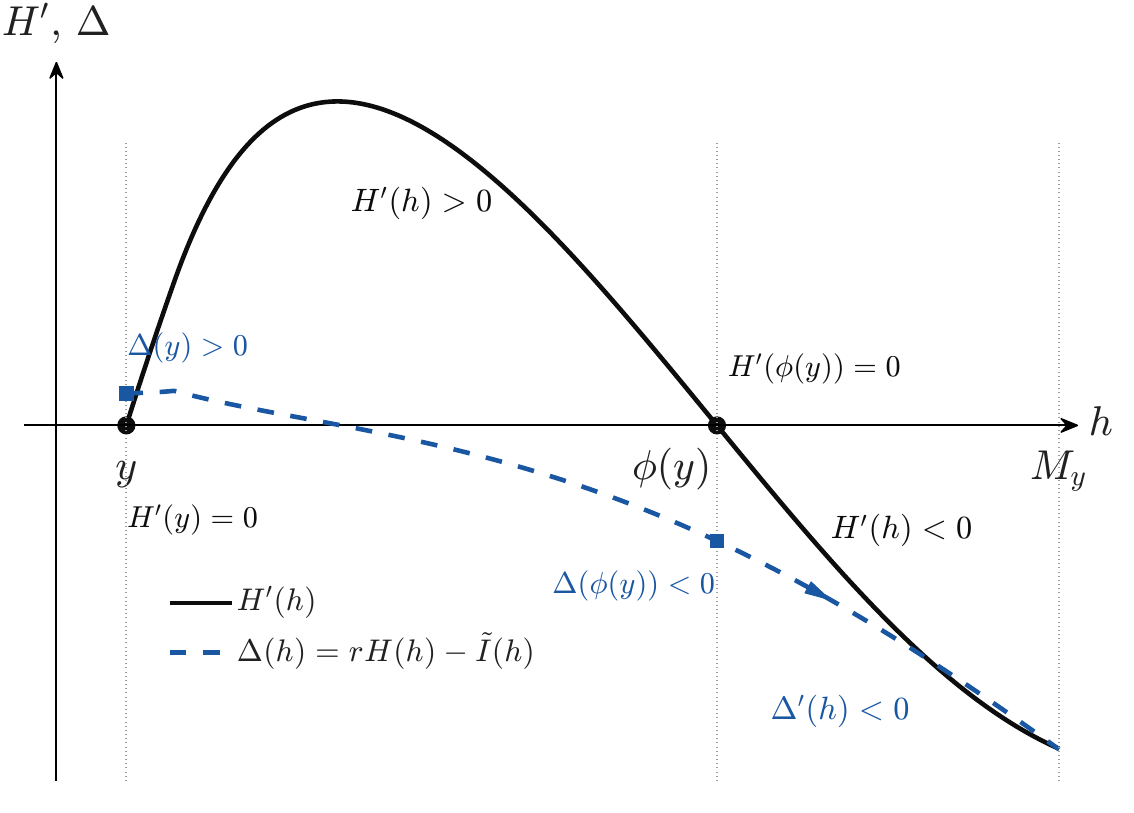}
    \caption{Schematic illustration of the sign pattern of $H'$ and
    $\Delta=rH-\tilde I$.}
    \label{fig:prop1-sign-pattern}
\end{figure}

If $\phi=M_y$, then the set in \eqref{eq:underliney} is empty,
and hence $H'>0$ on $(y,M_y)$. The result therefore follows. Suppose instead that
$\phi<M_y$. By the definition of $\phi$ and the continuity of
$H'$,
\[
H'>0\ \text{on }(y,\phi),\qquad H'(\phi)=0,
\qquad (H')'(\phi)\leq0.
\]
Evaluating \eqref{eq:Z:first-order} at $\phi$ yields
\begin{align}\label{eq:Q:phi}
\Delta(\phi)=\frac{\theta^2\phi^2}{2}(H')'(\phi)\leq0.
\end{align}
In particular, $\phi>z_0$. Otherwise, $\tilde I(\phi)=b_0$ and
$H(\phi)>H(y)=\underline{x}$, so
$\Delta(\phi)>r\underline{x}-b_0>0$, a contradiction.

We claim that the inequality in \eqref{eq:Q:phi} is strict. If $\Delta(\phi)=0$,
then $\phi>z_0$ and $\tilde I'>0$ on $(z_0,\infty)$ give
\[
\Delta'(\phi)=rH'(\phi)-\tilde I'(\phi)=-\tilde I'(\phi)<0.
\]
Hence $\Delta>0$ immediately to the left of $\phi$. Solving
\eqref{eq:Z:first-order} backward from $H'(\phi)=0$ gives
\begin{align*}
H'(h)=&\exp\!\left[-\int_\phi^h \left(\frac{2(r-\delta)}{\theta^2s}
+\frac{2}{\theta^2}\tilde U_2'(H(s))\right)\,\md s\right]
\\&\quad\quad\quad\quad\quad\quad\int_\phi^h\frac{2\Delta(u)}{\theta^2u^2}
\exp\!\left[\int_\phi^u \left(\frac{2(r-\delta)}{\theta^2s}
+\frac{2}{\theta^2}\tilde U_2'(H(s))\right)\,\md s\right]\md u.
\end{align*}
For $h<\phi$ sufficiently close to $\phi$, the right-hand side is
negative, contradicting $H'>0$ on $(y,\phi)$. Therefore
$\Delta(\phi)<0$, and \eqref{eq:Z:first-order} gives $(H')'(\phi)<0$.
Consequently, $H'<0$ immediately to the right of $\phi$.

It remains to exclude a return to zero. Suppose, to the contrary, that
such a return occurs, and let
\[
\psi:=\inf\{h\in(\phi,M_y):H'(h)\geq0\}.
\]
Then, $\phi<\psi\le M_y$.
Moreover, $H'<0$ on $(\phi,\psi)$ and $H'(\psi)=0$. Since
$\phi>z_0$,
\[
\Delta'(h)=rH'(h)-\tilde I'(h)<0,
\qquad h\in(\phi,\psi),
\]
and therefore $\Delta(\psi)<\Delta(\phi)<0$. Because $H'$ reaches zero from
below at $\psi$, we must have $(H')'(\psi)\geq0$. On the other hand,
since $\Delta(\psi)<0$ and $H'(\psi)=0$,
evaluating \eqref{eq:Z:first-order} at $h=\psi$ gives
\[
(H')'(\psi)=\frac{2}{\theta^2\psi^2}\Delta(\psi)<0,
\]
a contradiction. Hence $H'<0$ on $(\phi,M_y)$, completing the proof.
\end{proof}

By regarding 
\(
\tilde I(h)+\tilde U_2'(H(h))h^2H'(h)
\)
as a prescribed forcing term, we reduce the semilinear equation \eqref{eq:extension:u_22}
to a second-order linear Cauchy--Euler equation (\cite{teschl2012ordinary}) for $H$.
The solution of this linear equation gives the kernel representation 
in part~\textup{(i)} of Lemma~\ref{lma:represent}. This representation
remains implicit and self-consistent, since the forcing term still
depends on $H$ and $H'$. 
Combining this representation with Proposition~\ref{lma:phi}
yields the upper estimate for $H$ in part~\textup{(ii)}.

\begin{lemma}\label{lma:represent}
For any $y \in (0,z_1)$, suppose that $H(\cdot;y)$ is a local (or global)
solution to \eqref{eq:extension:u_22} defined on $[y,M_y)$ for some
$M_y\in(y,\infty]$. Then
\medskip

\noindent\textup{(i)} for any $h\in[y,M_y)$,
  \begin{align}
        &H(h;y)\nonumber\\
        =\ &\frac{2}{\theta^2p_1(p_1+p_2)}y^{-p_1}\left\{r\underline{x}-p_1\int_{y}^{h}\left[\tilde{I}(z)+\tilde{U}_2'(H(z;y))z^2H'(z;y)\right]z^{-p_1-1}\md z \,y^{p_1}\right\}h^{p_1}\nonumber\\
    &+\frac{2}{\theta^2p_2(p_1+p_2)}y^{p_2}\left\{r\underline{x}+p_2
    \int_{y}^{h}\left[\tilde{I}(z)+\tilde{U}_2'(H(z;y))z^2H'(z;y)\right]z^{p_2-1}\md z\,y^{-p_2}\right\} h^{-p_2}
    \label{eq:tildeH},
  \end{align}
where $p_1>0$ and $-p_2<0$ are the two roots of
\eqref{eq:p}.

\noindent\textup{(ii)}  define
\[
\bar H(h;y):=\underline{x}h^{p_1}y^{-p_1}\geq\underline{x},
\qquad h\geq y.
\]
Then
\[
H(h;y)\leq \bar H(h;y),\qquad h\in[y,M_y).
\]
\end{lemma}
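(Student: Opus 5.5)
The plan is to prove (i) by variation of constants for the Cauchy–Euler operator, and then to get (ii) from (i) using the sign of the kernel together with the one-turning-point property of Proposition~\ref{lma:phi}.

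\textbf{Part (i).} Fix $y\in(0,z_1)$ and write $H=H(\cdot;y)$. Set
\[
g(z):=\tilde I(z)+\tilde U_2'(H(z))z^2H'(z),
\]
which is continuous on $[y,M_y)$. Then \eqref{eq:extension:u_22} becomes the linear Cauchy--Euler equation
\[
-\tfrac12\theta^2h^2H''-(r-\delta)hH'+rH=g(h).
\]
Its homogeneous solutions are $h^{p_1}$ and $h^{-p_2}$ by \eqref{eq:p}. From the quadratic $-\tfrac12\theta^2p^2+(\tfrac12\theta^2+\delta-r)p+r=0$, Vieta's formulas give $p_1p_2=2r/\theta^2$, and this identity is used repeatedly below. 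Rather than rederive the variation-of-parameters formula, I would verify \eqref{eq:tildeH} directly, in three checks.
\begin{itemize}
\item At $h=y$ the integrals vanish, so the formula gives $H(y)=\frac{2r\underline{x}}{\theta^2(p_1+p_2)}\bigl(\frac1{p_1}+\frac1{p_2}\bigr)=\frac{2r\underline{x}}{\theta^2p_1p_2}=\underline{x}$.
\item When the formula is differentiated, the terms produced by differentiating the integral limits cancel, which is the standard variation-of-constants cancellation. What remains is $\frac{2r\underline{x}}{\theta^2(p_1+p_2)}\bigl(h^{p_1-1}y^{-p_1}-y^{p_2}h^{-p_2-1}\bigr)$ plus the integral terms, and this vanishes at $h=y$. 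Hence $H'(y)=0$.
\item Differentiating once more and substituting into the operator, the homogeneous parts are annihilated and the remaining boundary terms reproduce exactly $g(h)$.
\end{itemize}
Uniqueness of the local solution (Lemma~\ref{lma:local:tildeh}) then identifies $H$ with the right-hand side of \eqref{eq:tildeH}.

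\textbf{Part (ii).} I would regroup \eqref{eq:tildeH}, using $2r/(\theta^2p_1p_2)=1$, as
\[
H(h)=\underline{x}\Bigl[\tfrac{p_2}{p_1+p_2}(h/y)^{p_1}+\tfrac{p_1}{p_1+p_2}(y/h)^{p_2}\Bigr]-\frac{2}{\theta^2(p_1+p_2)}\int_y^h g(z)\,z^{-1}\Bigl[(h/z)^{p_1}-(z/h)^{p_2}\Bigr]\md z .
\]
For $y\le z\le h$ the kernel $(h/z)^{p_1}-(z/h)^{p_2}$ is nonnegative. Also $\tilde I>0$ and $\tilde U_2'\ge0$ by \eqref{eq:tildeU2:properties}. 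So on any interval where $H'\ge0$ we have $g\ge0$, and the integral term is nonpositive. Since $(y/h)^{p_2}\le1\le(h/y)^{p_1}$ and the two bracket weights sum to one, this gives $H(h)\le\underline{x}(h/y)^{p_1}=\bar H(h;y)$ for every $h\in[y,\phi(y)]$, by Proposition~\ref{lma:phi}(ii)(a).

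For $h\in(\phi(y),M_y)$, Proposition~\ref{lma:phi}(ii)(b) says $H$ is decreasing there. Therefore
\[
H(h)\le H(\phi(y))\le\underline{x}(\phi(y)/y)^{p_1}\le\bar H(h;y),
\]
using that $\bar H$ is increasing. The bound $\bar H\ge\underline{x}$ is immediate from $h\ge y$ and $p_1>0$.

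The main point of care is the sign of the forcing term $g$: it can be negative once $H'<0$. The plan avoids this by using the kernel bound only up to the turning point $\phi(y)$ and relying on monotonicity beyond it.
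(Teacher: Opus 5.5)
Your proposal is correct and follows essentially the paper's proof. For (i), you check that the variation-of-parameters formula satisfies the Cauchy--Euler equation with forcing $g$ and the initial data, using $p_1p_2=2r/\theta^2$. For (ii), you drop the nonpositive kernel integral on $[y,\phi(y)]$ via Proposition~\ref{lma:phi} and use monotonicity beyond $\phi(y)$; your chain $H(h)\le H(\phi(y))\le\bar H(\phi(y);y)\le\bar H(h;y)$ in fact makes explicit a step the paper leaves implicit.

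One small correction concerns the identification step in (i). It should rest on uniqueness for the \emph{linear} initial-value problem: the difference between $H$ and the right-hand side of \eqref{eq:tildeH} solves the homogeneous Cauchy--Euler equation with zero data at $y$, hence vanishes on all of $[y,M_y)$. Lemma~\ref{lma:local:tildeh} is the wrong tool, because it concerns the nonlinear equation (which that right-hand side is not known to satisfy) and only gives uniqueness on a short interval.
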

The proof of Lemma~\ref{lma:represent} is provided in the
Appendix~\ref{sec:proof:lma:represent}.

With the upper bound in Lemma~\ref{lma:represent}(ii), standard
ODE extension theory yields the following global well-posedness
result. Specifically, the upper bound on $H$ ensures that $\tilde U_2'(H)$
is bounded on any finite interval. Then, the Gr\"onwall estimate rules out
finite-time blow-up, thereby allowing global extension to the entire half-line
$[y,\infty)$.

 \begin{proposition}\label{lma:global}
  For any $y \in (0,z_1)$, the ODE \eqref{eq:extension:u_22} admits a unique solution $H(\cdot;y)\in C^2([y,\infty))$.
\end{proposition}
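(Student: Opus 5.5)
The approach is the standard continuation argument: take the local solution from Lemma~\ref{lma:local:tildeh} and show that its maximal interval of existence cannot end at a finite point.

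Fix $y\in(0,z_1)$. Rewrite \eqref{eq:extension:u_22} as a first-order system in $(H,G)$ with $G=H'$:
\begin{align*}
H'&=G,\\
G'&=\frac{2}{\theta^2h^2}\Bigl[-(r-\delta)hG-\tilde U_2'(H)h^2G+rH-\tilde I(h)\Bigr],
\end{align*}
on $h\ge y>0$. The right-hand side is locally Lipschitz in $(H,G)$, because $\tilde U_2'$ is locally Lipschitz on $\mathbb R$. It is continuous in $h$, since $\tilde I$ is continuous: $\tilde I$ equals $b_0$ up to $z_0$, matches $(U_1')^{-1}(1/z_0)=b_0$ there, and is smooth beyond. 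Hence there is a unique maximal solution on $[y,M_y)$, which is $C^2$ because $G'$ is continuous. Uniqueness on any common interval follows from Picard--Lindel\"of.

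Suppose, for contradiction, that $M_y<\infty$. Standard theory then gives $\limsup_{h\uparrow M_y}(|H|+|H'|)=\infty$. I obtain a two-sided bound on $H$ over $[y,M_y)$.

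\textbf{Upper bound.} Lemma~\ref{lma:represent}(ii) gives $H\le \underline{x}M_y^{p_1}y^{-p_1}$.

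\textbf{Lower bound.} I use Proposition~\ref{lma:phi}. On $(y,\phi)$ we have $H'>0$, so $H\ge\underline{x}$ there. On $(\phi,M_y)$, $H$ is decreasing, so it is bounded above, but a lower bound is still needed. There I will bound $H'$ directly through the linear first-order equation \eqref{eq:Z:first-order}. Its coefficient $\frac{2(r-\delta)}{\theta^2h}+\frac{2}{\theta^2}\tilde U_2'(H)$ is bounded on the finite interval, once I know $H$ lies in a compact set. To avoid circularity, I run a Gr\"onwall argument on the whole system instead.

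For $h\in[y,M_y)$, set $K:=\sup_{(-\infty,\bar H(M_y;y)]}\tilde U_2'$. This is finite because $\tilde U_2'$ is continuous and vanishes on $(-\infty,0]$, so the supremum is over a compact set. With this,
\begin{align*}
|G'|\le C\bigl(|G|+|H|+1\bigr),
\end{align*}
where $C$ depends on $y$, $M_y$, $K$ and $\sup_{[y,M_y]}\tilde I$. Combining this with $|H'|=|G|$, Gr\"onwall applied to $|H|+|G|$ bounds both quantities on $[y,M_y)$. This contradicts blow-up, so $M_y=\infty$.

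The only delicate point is the uniform bound on $\tilde U_2'(H)$ when $H$ might be very negative. The design of the extension in \eqref{eq:tildeU2:properties} handles this: $\tilde U_2'$ is zero for negative arguments and bounded on $[0,\bar H]$.
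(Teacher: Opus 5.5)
Your argument is correct and is essentially the paper's proof: the a priori bound $H\le\bar H(\cdot;y)$ from Lemma~\ref{lma:represent}(ii), together with $\tilde U_2'=0$ on $(-\infty,0]$, makes $\tilde U_2'(H)$ bounded on any finite interval, and Gr\"onwall applied to $|H|+|H'|$ then rules out blow-up at a finite $h$. The paper phrases this as existence on $[y,M]$ for every $M>y$ rather than as a contradiction at $M_y$, and your ``lower bound'' detour via Proposition~\ref{lma:phi} is unnecessary, as you noticed yourself, since the Gr\"onwall step already controls $H$ from below.
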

The proof of Proposition \ref{lma:global}  is provided in Appendix \ref{sec:proof:lma:global}.

\subsection{Selection of a Suitable Global Solution}
\label{sec:case1:selection}
\label{sec:case1:properties}

We have shown in Proposition~\ref{lma:global} that, for any $y \in (0,z_1)$, the ODE \eqref{eq:extension:u_22} admits a unique solution $H(\cdot;y)\in C^2([y,\infty))$. However, we still need to identify a suitable $y_0$ such that the
corresponding solution $H(\cdot;y_0)$ has the properties required to
apply the verification theorem, Theorem~\ref{thm:verification}.

Recall that $p_1>0$ and $-p_2<0$ are the two roots of
\eqref{eq:p}. Let $\chi(\lambda)$ denote the characteristic polynomial associated
with \eqref{eq:p}. Under Assumption~\ref{assump:parameter}, a direct calculation gives
\[
\chi\left(\frac{1}{1-\gamma}\right)
=
\frac{1}{1-\gamma}
\left[
\delta-\frac{\gamma}{2(1-\gamma)}\theta^2-r\gamma
\right]>0.
\]
Since $\chi$ has roots $-p_2<0<p_1$ and a negative leading coefficient,
it follows that
\begin{align}\label{eq:p1:gamma}
p_1>\frac{1}{1-\gamma}.
\end{align}
Assumption~\ref{assmup:U12}(i) and the definition of
$\tilde I$ in \eqref{eq:tildeI} imply that, with
$M_{1,1}:=C_{1,1}^{\frac{1}{1-\gamma}}$,
\begin{equation}\label{eq:tildeI:growth}
\tilde I(z)\leq M_{1,1}z^{\frac{1}{1-\gamma}}
\quad\text{for }z\geq z_0,
\qquad
\tilde I(z)=b_0
\quad\text{for }0<z<z_0.
\end{equation}
With these preliminary estimates in hand, we first introduce the
function $G$ and establish its basic properties in the following lemma.

\begin{lemma}\phantomsection\label{lma:G}
\begin{enumerate}
\item[(i)] For $y\in(0,z_1]$, define
\[
\bar G(y):=r\underline{x}
-p_1y^{p_1}\int_y^\infty\tilde I(z)z^{-p_1-1}\md z.
\]
There exists a unique $\bar y\in(0,z_1)$ such that $\bar G(\bar y)=0$. 

\item[(ii)]  For
$y\in(0,\bar y]$, define
\begin{equation}\label{eq:G:def}
G(y):=r\underline{x}
-p_1y^{p_1}\int_y^\infty
\left[\tilde I(z)+\tilde U_2'(H(z;y))z^2H'(z;y)\right]
z^{-p_1-1}\md z.
\end{equation}
Then $G$ is well-defined and continuous on $(0,\bar y]$, and
$\liminf\limits_{y\downarrow0}G(y)>0$. Consequently, the point
\[
y_0:=\sup\{0<y\leq\bar y:G(y)>0\}\in(0,\bar y]
\]
is well-defined and satisfies $G(y_0)\geq0$.
\end{enumerate}
\end{lemma}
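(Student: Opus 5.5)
For part (i), I will study $\bar G$ directly. By \eqref{eq:tildeI:growth} and \eqref{eq:p1:gamma}, $\tilde I(z)z^{-p_1-1}\le M_{1,1}z^{\frac{1}{1-\gamma}-p_1-1}$ for large $z$, which is integrable at infinity. Hence $\bar G$ is finite and continuous on $(0,z_1]$. The substitution $z=yt$ gives
\[
p_1y^{p_1}\int_y^\infty\tilde I(z)z^{-p_1-1}\md z=p_1\int_1^\infty \tilde I(yt)t^{-p_1-1}\md t .
\]
Since $\tilde I$ is nondecreasing and strictly increasing on $[z_0,\infty)$, this expression is strictly increasing in $y$. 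So $\bar G$ is strictly decreasing, and uniqueness of $\bar y$ follows.

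For the limits: as $y\downarrow0$, $\tilde I(yt)\to b_0$ for each $t$. By dominated convergence, $\bar G(y)\to r\underline{x}-b_0>0$. At $y=z_1$ we have $\tilde I(z_1t)>\tilde I(z_1)=r\underline{x}$ for $t>1$, so $\bar G(z_1)<r\underline{x}-r\underline{x}=0$. The intermediate value theorem then gives $\bar y\in(0,z_1)$.

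For part (ii), I first need the integral in \eqref{eq:G:def} to converge. The term $\tilde I$ is handled exactly as above. For the nonlinear term, Proposition \ref{lma:phi} shows that $H'(\cdot;y)$ changes sign at most once. Lemma \ref{lma:represent}(ii) gives $\underline{x}\le$-type control $H\le \underline{x}(h/y)^{p_1}$.

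Convergence of $\int^\infty \tilde U_2'(H)z^{1-p_1}H'\,\md z$ is the delicate point, because the integrand grows like $z^{p_1-1}$ times $\tilde U_2'$ and $z^{1-p_1}$ does not cancel it trivially. I plan to use the representation \eqref{eq:tildeH}. Since $H(h)\le \bar H$ and, by Proposition \ref{lma:phi}, $H$ is eventually monotone, the coefficient of $h^{p_1}$ in \eqref{eq:tildeH},
\[
r\underline{x}-p_1y^{p_1}\int_y^h[\cdots]z^{-p_1-1}\md z ,
\]
is monotone in $h$ wherever the integrand has a fixed sign. This holds on $(y,\phi)$, where the integrand is positive, and also on $(\phi,\infty)$, where $\tilde U_2'\ge0$ and $H'<0$ work against the positive $\tilde I$ term. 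I will use this to show that the limit as $h\to\infty$ exists in $[-\infty,\infty)$. Then $G(y)$ is defined as this limit.

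If the limit were $-\infty$ with $H$ increasing, then $H$ would become negative, which contradicts monotonicity. Hence, in the relevant range the limit is finite. In the case $H'<0$ eventually, the nonlinear contribution is negative, so convergence reduces to the $\tilde I$ term. I also expect that for $y\le\bar y$ the first representation term stays controlled.

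Continuity of $G$ will follow from continuous dependence of $H(\cdot;y)$ on $y$ on compact sets, together with a tail bound uniform in $y$ on compact subsets of $(0,\bar y]$. This tail bound is the main obstacle. I would try to dominate the tails uniformly using the estimate $\tilde U_2'(H)H'\le$ a uniform bound obtained from the ODE evaluated on the region $H'>0$. Failing that, I would use only semicontinuity plus monotonicity in $y$.

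For $\liminf_{y\downarrow0}G(y)>0$, I will write $G(y)\ge\bar G(y)-p_1y^{p_1}\int_y^\infty \tilde U_2'(H)z^{1-p_1}(H')^+\md z$. As $y\downarrow0$, the equation near the start is dominated by $rH-b_0$ over a region where $\tilde I\equiv b_0$. I expect the positive part to be bounded by $C y^{p_1}\cdot y^{1-p_1}\cdot$(scale), which tends to $0$. Since $\bar G\to r\underline{x}-b_0$, this gives a positive liminf.

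Finally, the set $\{0<y\le\bar y:G(y)>0\}$ is nonempty, so $y_0\in(0,\bar y]$ is well defined. Continuity of $G$ gives $G(y_0)\ge0$ as a limit of positive values.
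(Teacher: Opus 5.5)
Your part (i) is correct and is essentially the paper's argument; the paper gets $\liminf_{y\downarrow0}\bar G(y)\ge r\underline{x}-b_0$ from an explicit lower bound built from \eqref{eq:tildeI:growth} instead of dominated convergence, which makes no difference.

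Part (ii), however, has a genuine gap: none of its three claims is actually established.

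\emph{Well-definedness.} The coefficient of $h^{p_1}$ in \eqref{eq:tildeH} is not monotone on $(\phi(y),\infty)$, because there the integrand $\tilde I(z)+\tilde U_2'(H)z^2H'$ has no fixed sign. On the increasing branch, the claim that a divergent integral would force $H<0$ is asserted, not proved. The $h^{-p_2}$ term in \eqref{eq:tildeH} is positive and also grows with the forcing, so a careful comparison would be needed, and even then only the case $\phi(y)=\infty$ is covered. On the decreasing branch, ``the nonlinear contribution is negative'' does not give convergence: a nonpositive integrand can integrate to $-\infty$, which would make $G(y)=+\infty$.

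\emph{Continuity.} You correctly identify a tail bound uniform in $y$ as the main obstacle, but you do not supply it, and neither proposed remedy works. Assumption~\ref{assmup:U12} controls the growth of $U_2$ but not of $U_2'$. Solving the ODE for $\tilde U_2'(H)h^2H'$ brings in $H''$, for which you have no bound. No monotonicity of $G$ in $y$ is available either.

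\emph{The $\liminf$.} This is left as an expectation.

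The missing idea is to move the derivative off $\tilde U_2$ by integrating by parts. Use $\tilde U_2'(H(z;y))H'(z;y)=\frac{\md}{\md z}\tilde U_2(H(z;y))$ separately on the monotone pieces $(y,\phi(y))$ and $(\phi(y),\infty)$ from Proposition~\ref{lma:phi}. You already quote $H\le\bar H$; combined with $\tilde U_2'\ge0$, $\bar H\ge\underline{x}$ and the growth of $U_2$, it gives the only estimate needed:
\[
0\le\tilde U_2(H(z;y))\le\tilde U_2(\bar H(z;y))\le C_{2,1}\underline{x}^{\gamma}y^{-\gamma p_1}z^{\gamma p_1},\qquad z\ge y .
\]
Since $(1-\gamma)p_1>1$, the boundary term $\tilde U_2(H(h;y))h^{1-p_1}$ vanishes as $h\to\infty$. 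On the decreasing piece, nonnegativity of $\tilde U_2$ bounds the integral by $\tilde U_2(H(\phi))\phi^{1-p_1}$. Hence everything converges absolutely, and
\[
G(y)=\bar G(y)+p_1\tilde U_2(\underline{x})\,y-p_1(p_1-1)\,y^{p_1}\int_y^\infty\tilde U_2(H(z;y))z^{-p_1}\md z .
\]
All three claims follow from this identity:
\begin{enumerate}
\item The last integral is at most $\frac{C_{2,1}\underline{x}^\gamma}{(1-\gamma)p_1-1}y^{1-p_1}$. Hence $G(y)\ge\bar G(y)-Cy$ and $\liminf_{y\downarrow0}G(y)\ge r\underline{x}-b_0>0$; this is exactly the $O(y)$ scaling you guessed.
\item For $y\in[y_*,\bar y]$ with $y_*>0$, the integrand is dominated by $C_{2,1}\underline{x}^\gamma y_*^{-\gamma p_1}z^{\gamma p_1-p_1}\mathbf 1_{\{z\ge y_*\}}$, uniformly in $y$. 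Continuous dependence of $H(z;y)$ on $y$ plus dominated convergence then gives continuity of $G$.
\end{enumerate}
Your final step defining $y_0$ with $G(y_0)\ge0$ is then fine.
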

The proof of Lemma~\ref{lma:G} is provided in the Appendix;
see Section~\ref{sec:proof:lma:G}.

We next explain why $G$ provides the appropriate criterion for selecting
the initial point.
Substituting \eqref{eq:G:def} into the kernel representation
\eqref{eq:tildeH} yields, for every $y\in(0,\bar y]$ and $h\geq y$,
\begin{align}\label{eq:H:G:preIBP}
&H(h;y)=
\frac{2G(y)}{\theta^2p_1(p_1+p_2)y^{p_1}}h^{p_1}\nonumber\\
&+\frac{2}{\theta^2(p_1+p_2)}
\Bigg\{\left[
\int_h^\infty\tilde I(z)z^{-p_1-1}\md z
+\int_h^\infty
\tilde U_2'(H(z;y))H'(z;y)z^{1-p_1}\md z
\right]h^{p_1}\nonumber\\
&+\left[
\frac{r\underline{x}}{p_2}y^{p_2}
+\int_y^h\tilde I(z)z^{p_2-1}\md z
+\int_y^h
\tilde U_2'(H(z;y))H'(z;y)z^{p_2+1}\md z
\right]h^{-p_2}\Bigg\}.
\end{align}
Integrating by parts in the terms involving
$\tilde U_2'(H(z;y))H'(z;y)$, as in the proof of
Lemma~\ref{lma:G}(ii), then gives
\begin{align}\label{eq:H:G:representation}
&H(h;y)-
\frac{2G(y)}{\theta^2p_1(p_1+p_2)y^{p_1}}h^{p_1}\nonumber\\
=&\frac{2}{\theta^2(p_1+p_2)}
\Bigg\{\left[
\int_h^\infty\tilde I(z)z^{-p_1-1}\md z
+(p_1-1)\int_h^\infty
\tilde U_2(H(z;y))z^{-p_1}\md z
\right]h^{p_1}\nonumber\\
&+\Bigg[
\frac{r\underline{x}}{p_2}y^{p_2}
+\int_y^h\tilde I(z)z^{p_2-1}\md z
-\tilde U_2(\underline{x})y^{p_2+1}
-(p_2+1)\int_y^h
\tilde U_2(H(z;y))z^{p_2}\md z
\Bigg]h^{-p_2}\Bigg\}.
\end{align}
Recall from \eqref{eq:p1:gamma} that
$p_1>\frac{1}{1-\gamma}$. We claim that, for every
$y\in(0,\bar y]$,
\begin{equation}\label{eq:H:G:asymptotic}
H(h;y)-
\frac{2G(y)}{\theta^2p_1(p_1+p_2)y^{p_1}}h^{p_1}
=o\!\left(h^{p_1}\right),
\qquad h\to\infty,
\end{equation}
Indeed, by \eqref{eq:tildeI:growth}, direct integration gives,
for every $h\geq y$,
\begin{equation}\label{eq:H:G:I-bound}
\begin{split}
0&\le h^{p_1}\int_h^\infty\tilde I(z)z^{-p_1-1}\md z
+h^{-p_2}\int_y^h\tilde I(z)z^{p_2-1}\md z\\
&\le
 \max\left\{M_{1,1},b_0y^{-\frac{1}{1-\gamma}}\right\}
\left(
\frac{1}{p_1-\frac{1}{1-\gamma}}
+\frac{1}{p_2+\frac{1}{1-\gamma}}
\right)h^{\frac{1}{1-\gamma}}.
\end{split}
\end{equation}
Similarly, the properties of $\tilde U_2$ and
Lemma~\ref{lma:represent}(ii) yield
$\tilde U_2(H(z;y))
\leq C_{2,1}\underline{x}^{\gamma}y^{-\gamma p_1}
z^{\gamma p_1}$ for $z\geq y$.
Direct integration then gives, for every $h\geq y$,
\begin{equation}\label{eq:H:G:U-bound}
\begin{split}
0&\le(p_1-1)h^{p_1}\int_h^\infty
\tilde U_2(H(z;y))z^{-p_1}\md z +(p_2+1)h^{-p_2}\int_y^h
\tilde U_2(H(z;y))z^{p_2}\md z\\
&\leq
C_{2,1}\underline{x}^{\gamma}y^{-\gamma p_1}\left[
\frac{p_1-1}{(1-\gamma)p_1-1}
+\frac{p_2+1}{p_2+\gamma p_1+1}
\right]h^{1+\gamma p_1}.
\end{split}
\end{equation}
Both exponents $\frac{1}{1-\gamma}$ and
$1+\gamma p_1$ are strictly smaller than $p_1$.
Together with $p_2>0$, \eqref{eq:H:G:I-bound} and
\eqref{eq:H:G:U-bound} therefore show that
the right hand side of \eqref{eq:H:G:representation} is
$o(h^{p_1})$, proving \eqref{eq:H:G:asymptotic}.

Consequently, if $G(y)>0$, then
\eqref{eq:H:G:asymptotic} shows that $H(h;y)$ grows at the rate
$h^{p_1}$. Since $p_1>\frac{1}{1-\gamma}$, this violates the upper
growth bound $H(h)\leq C_{H,0}h^{\frac{1}{1-\gamma}}$ in
\eqref{eq:H:property}. On the other hand, if $G(y)<0$, then
$H(h;y)\to-\infty$, which contradicts the requirement in
\eqref{eq:H':property} that $H$ be strictly increasing. Thus, neither
sign is admissible, and a
suitable initial point must satisfy $G(y_0)=0$.

Lemma~\ref{lma:G} is devoted to constructing such an initial point
$y_0$. Since $G$ depends on $y$ implicitly through $H(\cdot;y)$,
locating a root of $G$ directly is difficult. We instead remove the
term involving $H(\cdot;y)$ from $G$ to obtain a new function
$\bar G$, whose root determines a suitable search interval, and then
select $y_0$ by a supremum argument. With $G(y_0)\geq0$
established, the following proposition first proves that
$H(\cdot;y_0)$ is strictly increasing and then shows that $G(y_0)=0$.

\begin{proposition}\label{lma:H'}
Let $y_0$ be the point defined in Lemma~\ref{lma:G}(ii), and let
$H(\cdot;y_0)$ denote the unique global solution of
\eqref{eq:extension:u_22} corresponding to the initial point $y_0$. Then:
\begin{enumerate}
\item[(i)] The function $H(\cdot;y_0)$ satisfies
\[
H'(h;y_0)>0,\qquad h\in(y_0,\infty).
\]
\item[(ii)] The initial point satisfies
\[
y_0\in(0,\bar y),\qquad G(y_0)=0.
\]
\end{enumerate}
\end{proposition}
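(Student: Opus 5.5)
The plan is to prove (ii) first, since it only uses continuity of $G$ and the sign structure from Proposition~\ref{lma:phi}, and then prove (i) by a perturbation argument in the initial point $y$.

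\emph{Step 1: $y_0<\bar y$.} I will show $G(\bar y)<0$. Since $\bar G(\bar y)=0$,
\[
G(\bar y)=-p_1\bar y^{p_1}\int_{\bar y}^\infty\tilde U_2'(H(z;\bar y))H'(z;\bar y)z^{1-p_1}\md z .
\]
I will show the integral is strictly positive. By Proposition~\ref{lma:phi}, $H'(\cdot;\bar y)>0$ on $(\bar y,\phi(\bar y))$. On that interval $H(\cdot;\bar y)>\underline{x}$, so $\tilde U_2'(H)=U_2'(H)>0$ by Assumption~\ref{assmup:U12}(ii); hence the integrand is strictly positive there. After $\phi(\bar y)$ the integrand may be negative, and this is the delicate point. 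I would handle it through the integrated-by-parts form used for \eqref{eq:H:G:representation}, which rewrites the integral as
\[
(p_1-1)\int_{\bar y}^\infty\tilde U_2(H)z^{-p_1}\md z-\tilde U_2(\underline{x})\bar y^{1-p_1}.
\]
The aim is to show this quantity is positive, using $H\ge\underline{x}$ on the increasing stretch. If this route does not close, I would fall back on the alternative below. Once $G(\bar y)<0$ is known, $G(y_0)\ge0$ from Lemma~\ref{lma:G}(ii) forces $y_0\neq\bar y$, so $y_0<\bar y$.

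\emph{Step 2: $G(y_0)=0$.} Given $y_0<\bar y$, suppose $G(y_0)>0$. Continuity of $G$ on $(0,\bar y]$ (Lemma~\ref{lma:G}) gives $G>0$ on $[y_0,y_0+\eta)$ for some $\eta>0$, which contradicts the definition of $y_0$ as a supremum. Hence $G(y_0)=0$.

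\emph{Alternative for Step 1.} Instead of establishing $G(\bar y)<0$ directly, I could show that $G(y)>0$ forces $H(\cdot;y)$ to be increasing. Then I would argue by contradiction at $\bar y$, comparing $H(\cdot;\bar y)$ with the explicit linear solution obtained by dropping the $\tilde U_2'$ term, whose leading coefficient is exactly $\bar G(\bar y)=0$.

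\emph{Step 3: part (i).} Suppose, for contradiction, that $\phi(y_0)<\infty$. By Proposition~\ref{lma:phi}, $H'(h;y_0)<0$ for all $h>\phi(y_0)$; in particular $H'(h_1;y_0)<0$ at some fixed $h_1>\phi(y_0)$. Since $y_0$ is a supremum, there are $y_n\to y_0$ with $y_n\le y_0$ and $G(y_n)>0$, where possibly $y_n=y_0$.
\begin{itemize}
\item If $G(y_0)>0$ itself, then \eqref{eq:H:G:asymptotic} gives $H(h;y_0)\sim c\,h^{p_1}\to\infty$ with $c>0$. This is incompatible with $H$ being eventually decreasing.
\item Otherwise, I will use continuous dependence of solutions of \eqref{eq:extension:u_22} on the initial point $y$, uniformly on the compact interval $[y_0,h_1]$. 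This holds because $\tilde U_2'$ is locally Lipschitz and $H$ is bounded on compacts by Lemma~\ref{lma:represent}(ii), so a Gr\"onwall estimate applies; solutions from $y_n<y_0$ are compared after translating the initial data. It gives $H'(h_1;y_n)<0$ for large $n$. By Proposition~\ref{lma:phi}, $H'(\cdot;y_n)<0$ on $(h_1,\infty)$, so $H(\cdot;y_n)$ is bounded above. But $G(y_n)>0$ and \eqref{eq:H:G:asymptotic} force $H(h;y_n)\to\infty$, a contradiction.
\end{itemize}
Hence $\phi(y_0)=\infty$, and $H'(\cdot;y_0)>0$ on $(y_0,\infty)$. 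As a consequence $H(\cdot;y_0)\ge\underline{x}$, so the extension $\tilde U_2$ is immaterial.

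The main obstacle is establishing $G(\bar y)<0$, because the contribution after the turning point has the wrong sign. I would try the integration-by-parts identity first and the comparison route second. The continuous-dependence step in Step 3 is routine but has to be set up carefully, since the initial point itself moves with $y$.
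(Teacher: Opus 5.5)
Your Step~1 does not close as written, and it is exactly the step you flag. Since $(p_1-1)\int_{\bar y}^\infty z^{-p_1}\,\md z=\bar y^{1-p_1}$, the integrated-by-parts quantity equals $(p_1-1)\int_{\bar y}^\infty[\tilde U_2(H(z;\bar y))-\tilde U_2(\underline{x})]z^{-p_1}\,\md z$. Knowing $H\ge\underline{x}$ only on the increasing stretch says nothing after a possible turning point. There $H(\cdot;\bar y)$ may fall below $\underline{x}$, and $\tilde U_2(H)$ may drop toward $\tilde U_2(0)$, whose size is arbitrary. So this route presupposes that $H(\cdot;\bar y)$ is increasing, i.e.\ part~(i) at $\bar y$. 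Your fallback (``$G(y)>0$ forces monotonicity'') misses precisely the borderline case $G(\bar y)=0$, and $y_0=\bar y$ with $G(\bar y)=0$ is exactly what must be excluded. The comparison with the $\tilde U_2$-free linear solution is not yet an argument.

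The missing step is already in your proposal. Step~3 uses neither Step~1 nor Step~2. It needs only that $y_0=\sup\mathcal{Y}$ with $\mathcal{Y}:=\{0<y\le\bar y:\ G(y)>0\}$, continuity of $y\mapsto H'(h_1;y)$, Proposition~\ref{lma:phi}, and \eqref{eq:H:G:asymptotic}. So prove (i) first. Then, if $y_0=\bar y$, the integrand in your first display is strictly positive on all of $(\bar y,\infty)$. This gives $G(\bar y)<\bar G(\bar y)=0\le G(y_0)$, a contradiction. You never need $G(\bar y)<0$ unconditionally, only $y_0\neq\bar y$. Your Step~2 then gives $G(y_0)=0$. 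This is the order the paper uses, and with it your proof is complete.

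Your proof of (i) is genuinely different from the paper's. The paper assumes a turning point and studies $L=\lim_{h\to\infty}H(h;y_0)$. A finite $L$ is excluded by integrating the ODE divided by $h^2$ against $\int^\infty\tilde I(z)z^{-2}\,\md z=\infty$; this is where $U_1(\infty)=\infty$ enters. The case $L=-\infty$ is excluded via \eqref{eq:H:G:representation}, $G(y_0)\ge0$ and $\tilde I(h)\to\infty$. This yields the pointwise statement that $G(y)\ge0$ implies monotonicity, for any $y$.

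You instead combine two facts. First, $G(y)>0$ forces $h^{p_1}$-growth and hence rules out a turning point. Second, monotonicity passes to limits in $y$: a strict sign $H'(h_1;y)<0$ survives small changes of $y$ and then propagates to $(h_1,\infty)$ by Proposition~\ref{lma:phi}. Since $y_0\in\overline{\mathcal{Y}}$, you are done.

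Your route is shorter and does not need the tail divergence, hence not $U_1(\infty)=\infty$, at this point. It does rely on the sup-construction of $y_0$ and on continuous dependence in $y$. The paper's pointwise tail argument is the one it reuses in Section~\ref{sec:degenerate} (Propositions~\ref{prop:Hnew:boundary} and~\ref{prop:Hnew:global}), where there is no family of initial points to perturb.
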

\begin{proof}
\textit{Part~\textup{(i)}.}
By Proposition~\ref{lma:phi}, it suffices to show that
$\phi(y_0)=\infty$. If $\phi(y_0)<\infty$,
Proposition~\ref{lma:phi} gives $H'(h;y_0)<0$ for all
$h>\phi(y_0)$. Consequently,
\[
L:=\lim_{h\to\infty}H(h;y_0)
\]
exists in $[-\infty,H(\phi(y_0);y_0)]$.
By the definition of $\tilde I$ in \eqref{eq:tildeI}, for
$b>r\underline{x}$,
$\tilde I(z)\geq b$ if and only if $z\geq1/U_1'(b)$. Moreover,
$1/U_1'(b)\geq1/U_1'(r\underline{x})=z_1$.
Hence, by the monotone convergence theorem,
\begin{align}
\int_{z_1}^{\infty}\tilde I(z)z^{-2}\md z\!\geq\!\int_{z_1}^{\infty}\int_{r\underline{x}}^\infty
\mathbf 1_{\{\tilde I(z)\geq b\}}\md b\,z^{-2}\md z\!=\!\int_{r\underline{x}}^\infty
\int_{1/U_1'(b)}^\infty z^{-2}\md z\,\md b\!=\!\int_{r\underline{x}}^\infty U_1'(b)\md b
\!=\!\infty,\label{eq:I:tail:divergence}
\end{align}
where the last equality follows from $U_1(\infty)=\infty$.
On the other hand, dividing ODE~\eqref{eq:extension:u_22} by $h^2$ yields
\begin{align}
\frac{\tilde I(h)}{h^2}
=\frac{\md}{\md h}
\left[-\frac{\theta^2}{2}H'(h;y_0)
-(r-\delta)\frac{H(h;y_0)}{h}-\tilde U_2(H(h;y_0))\right]
+\delta\frac{H(h;y_0)}{h^2}.\label{eq:H:tail:identity}
\end{align}
Suppose first that $L$ is finite. Then $H(\cdot;y_0)$ is bounded on
$[z_1,\infty)$, $\tilde U_2(H(h;y_0))$ converges, and
$\int_{z_1}^{\infty}|H(h;y_0)|h^{-2}\md h<\infty$. By integrating
\eqref{eq:H:tail:identity} over $[z_1,h]$ and using
\eqref{eq:I:tail:divergence}, we obtain
\[
-\frac{\theta^2}{2}H'(h;y_0)
-(r-\delta)\frac{H(h;y_0)}{h}
-\tilde U_2(H(h;y_0))\longrightarrow\infty
\quad\text{as }h\to\infty.
\]
It follows that $H'(h;y_0)\to-\infty$, contradicting the finiteness of $L$.

Suppose next that $L=-\infty$. Then there
exists $h_1>\phi(y_0)$ such that
$H(h;y_0)\leq0$ for all $h\geq h_1$. Since $\tilde U_2'=0$ on
$(-\infty,0]$, $\tilde U_2(H(h;y_0))=\tilde U_2(0)$ for $h\geq h_1$.
The two integral terms involving $\tilde U_2$ in
\eqref{eq:H:G:representation} therefore satisfy, for $h\geq h_1$,
\begin{align*}
&(p_1-1)h^{p_1}\int_h^\infty
\tilde U_2(H(z;y_0))z^{-p_1}\md z-(p_2+1)h^{-p_2}
\int_{y_0}^h\tilde U_2(H(z;y_0))z^{p_2}\md z\\
=&\left[\tilde U_2(0)h_1^{p_2+1}
-(p_2+1)\int_{y_0}^{h_1}\tilde U_2(H(z;y_0))z^{p_2}\md z\right]h^{-p_2}
\longrightarrow0
\quad\text{as }h\to\infty.
\end{align*}
Moreover, $\tilde I$ is nondecreasing and, since $U_1'(\infty)=0$,
$\tilde I(h)=(U_1')^{-1}(1/h)\to\infty$ as $h\to\infty$.
Therefore, the first integral term involving $\tilde I$ in
\eqref{eq:H:G:representation} satisfies
\begin{align*}
h^{p_1}\int_h^\infty\tilde I(z)z^{-p_1-1}\md z
\geq \tilde I(h)h^{p_1}\int_h^\infty z^{-p_1-1}\md z
=\frac{1}{p_1}\tilde I(h)\longrightarrow\infty.
\end{align*}
Substituting the preceding two estimates into
\eqref{eq:H:G:representation} and using $G(y_0)\geq0$ from
Lemma~\ref{lma:G}(ii), together with $\tilde I\geq0$ and $p_2>0$, we
obtain $H(h;y_0)\to\infty$, contradicting $L=-\infty$.
Thus, part~\textup{(i)} follows.
\medskip

\noindent\textit{Part~\textup{(ii)}.}
Suppose, for contradiction, that $y_0=\bar y$. By the definitions of
$G$ and $\bar G$ in
Lemma~\ref{lma:G}(i)--(ii), and since
$U_2'>0$ and $H'(z;y_0)>0$ for all $z>y_0$,
\[
G(y_0)-\bar G(y_0)
=-p_1y_0^{p_1}\int_{y_0}^{\infty}
U_2'(H(z;y_0))H'(z;y_0)z^{1-p_1}\md z<0.
\]
Since $\bar G(\bar y)=0$, this contradicts $G(y_0)\geq0$; hence $y_0<\bar y$. The continuity of $G$ and
the definition of $y_0$ then yield $G(y_0)=0$.
\end{proof}

With $y_0$ now determined, we henceforth write
$H=H(\cdot;y_0)$ for the corresponding global solution.
Proposition~\ref{lma:H'}(i) shows that $H$ is strictly increasing.
Since $H(y_0)=\underline{x}$, it follows that
$H(h)\geq\underline{x}$ for every $h\geq y_0$. Hence,
\eqref{eq:extension:u_22} coincides along this solution with the
original problem \eqref{eq:extension:u_222}, because
$\tilde U_2=U_2$ on $[\underline{x},\infty)$ and
$\tilde I(h)=I(1/h)$. The strict monotonicity of $H$ also yields the
lower inequality in \eqref{eq:H':property}. Moreover, Proposition~\ref{lma:H'}(ii) gives \(G(y_0)=0\). Substituting this identity into \eqref{eq:H:G:representation}  yields a simplified representation of \(H(h;y_0)\), which serves as the starting point for the remaining growth and derivative estimates established in the next proposition.

\begin{proposition}\label{lem:limsup}
The auxiliary function $H=H(\cdot;y_0)$ satisfies
\begin{align*}
\textup{(i)}\quad&H(h)\leq C_{H,0}h^{\frac{1}{1-\gamma}},
\qquad h\in[y_0,\infty),\\
\textup{(ii)}\quad&H(h)\geq
\frac{2}{\theta^2p_1(p_1+p_2)}I\left(\frac1h\right),
\qquad h\in[y_0,\infty),\\
\textup{(iii)}\quad&\frac{hH'(h)}{H(h)}\leq p_1,
\qquad h\in(y_0,\infty),
\end{align*}
where $C_{H,0}>0$ is a constant.
\end{proposition}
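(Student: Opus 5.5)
Everything starts from the representation \eqref{eq:H:G:representation} with $G(y_0)=0$. Since $H'>0$ by Proposition~\ref{lma:H'}(i), we have $H\geq \underline{x}$ on $[y_0,\infty)$, so $\tilde U_2=U_2$ along the solution. The identity then reads
\[
H(h)=\frac{2}{\theta^2(p_1+p_2)}\big\{A(h)h^{p_1}+B(h)h^{-p_2}\big\},
\]
where $A(h)=\int_h^\infty \tilde I z^{-p_1-1}\,\md z+(p_1-1)\int_h^\infty U_2(H)z^{-p_1}\,\md z$ and $B$ is the bracket multiplying $h^{-p_2}$. I will also use the pre-integration-by-parts form \eqref{eq:H:G:preIBP}, in which every term is nonnegative once $H'>0$ and $\tilde U_2'\geq 0$.

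\textbf{Lower bound (ii).} In \eqref{eq:H:G:preIBP} with $G(y_0)=0$, drop all the nonnegative $U_2'$-terms and the $h^{-p_2}$ bracket. Since $\tilde I$ is nondecreasing,
\[
H(h)\ge \frac{2}{\theta^2(p_1+p_2)}h^{p_1}\int_h^\infty \tilde I(z)z^{-p_1-1}\,\md z\ge \frac{2}{\theta^2 p_1(p_1+p_2)}\tilde I(h),
\]
and $\tilde I(h)=I(1/h)$, which is (ii).

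\textbf{Upper bound (i).} I bootstrap. The $\tilde I$-terms are already $\le Ch^{1/(1-\gamma)}$ by \eqref{eq:H:G:I-bound}, with $y=y_0$ fixed. For the $U_2$-terms, the bound \eqref{eq:H:G:U-bound} only gives $h^{1+\gamma p_1}$. Instead I use the self-consistent estimate: let $K(h):=\sup_{[y_0,h]}H(z)z^{-1/(1-\gamma)}$, which is finite on compacts. Bound $U_2(H(z))\le C_{2,1}H(z)^\gamma$. The term $(p_1-1)h^{p_1}\int_h^\infty U_2(H)z^{-p_1}\,\md z$ involves $H$ beyond $h$, so it is cleaner to first get a crude global bound and then iterate. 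From Lemma~\ref{lma:represent}(ii), $H\le Ch^{p_1}$, so $U_2(H)\le Ch^{\gamma p_1}$ and the $U_2$-terms are $O(h^{1+\gamma p_1})$. Hence $H\le C h^{\max\{1/(1-\gamma),\,1+\gamma p_1\}}$. Iterate: if $H\le Ch^{a}$ with $a>1/(1-\gamma)$, then the $U_2$-terms are $O(h^{1+\gamma a})$, and the exponent map $a\mapsto \max\{1/(1-\gamma),1+\gamma a\}$ is a contraction with fixed point $1/(1-\gamma)$. One subtlety is that the integrals must converge, i.e.\ $1+\gamma a<p_1$ along the iteration; this holds since $a\le p_1$ and $p_1(1-\gamma)>1$. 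Finitely many steps reach an exponent within $\epsilon$ of $1/(1-\gamma)$, but not exactly. Closing exactly is the main obstacle. There I would use a Gronwall/absorption argument: write $H\le C_0h^{1/(1-\gamma)}+C_1h^{1/(1-\gamma)}\cdot \sup(\ldots)^\gamma$ with $K^\gamma$ appearing, i.e.\ $K\le C_0+C_1K^\gamma$ after bounding the tail integral using monotonicity of the weighted sup (or using a sup over $[y_0,\infty)$, which is already finite after the iteration, since then exponent $1+\gamma a<$ … handled via $a$ close enough that $K_\infty:=\sup H h^{-1/(1-\gamma)}$ with a slightly larger exponent is finite). Since $\gamma<1$, $K\le C_0+C_1K^\gamma$ forces $K$ bounded, giving (i).

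\textbf{Derivative bound (iii).} Differentiate the representation. With $G(y_0)=0$, the boundary terms at $z=h$ cancel as in variation of constants, giving
\[
hH'(h)=\frac{2}{\theta^2(p_1+p_2)}\big\{p_1A(h)h^{p_1}-p_2B(h)h^{-p_2}\big\}.
\]
Hence $p_1H-hH'=\frac{2}{\theta^2(p_1+p_2)}(p_1+p_2)B(h)h^{-p_2}$, and it suffices to show $B(h)\ge0$. I would use the preIBP form of $B$,
\[
B(h)=\frac{r\underline{x}}{p_2}y_0^{p_2}+\int_{y_0}^h \tilde I z^{p_2-1}\,\md z+\int_{y_0}^h U_2'(H)H'z^{p_2+1}\,\md z,
\]
which is manifestly positive because $H'>0$. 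One must check that differentiating the preIBP form gives the same identity; this is routine since both forms are equal. This yields $hH'/H\le p_1$.
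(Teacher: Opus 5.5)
\textbf{Part (i) has a genuine gap,} at exactly the step you flag as the main obstacle. The absorption inequality $K\le C_0+C_1K^\gamma$ for $K:=\sup_{h\ge y_0}H(h)h^{-\frac{1}{1-\gamma}}$ bounds $K$ only if $K<\infty$ is known beforehand. Otherwise it reads $\infty\le\infty$ and says nothing, and nothing in your proposal supplies this finiteness.

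Your finitely many bootstrap steps only give $\sup_{h\ge y_0}H(h)h^{-a}<\infty$ for exponents $a>\frac{1}{1-\gamma}$. The exponent never reaches $\frac{1}{1-\gamma}$, since $a_n=\frac{1}{1-\gamma}+\gamma^n\bigl(p_1-\frac{1}{1-\gamma}\bigr)$. Because you leave the constants $C$ unspecified, you cannot let the number of steps tend to infinity.

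The truncated supremum over $[y_0,h]$ does not help either. The term $h^{p_1}\int_h^\infty U_2(H(z))z^{-p_1}\,\md z$ depends on $H$ on $[h,\infty)$. Bounding the part beyond the truncation point with the crude estimate $H\le\underline{x}y_0^{-p_1}h^{p_1}$ brings back a term of order $h^{1+\gamma p_1}$, which is not $O(h^{\frac{1}{1-\gamma}})$. So your closing step is circular as written.

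\textbf{The missing ingredient} is uniform control of the constants along the bootstrap, followed by a limit. The paper proves by induction that $H(h)\le c_n\bigl(h^{\frac{1}{1-\gamma}}+h^{a_n}\bigr)$ for $h\ge y_0$, with $a_{n+1}=1+\gamma a_n$ and $c_{n+1}=C_0+\kappa c_n^\gamma$ for an explicit $\kappa$. Since $\gamma<1$, there is $\bar c\ge c_0$ with $C_0+\kappa\bar c^\gamma\le\bar c$, so $c_n\le\bar c$ for all $n$. Letting $n\to\infty$ at each fixed $h$ then gives $H(h)\le 2\bar c\,h^{\frac{1}{1-\gamma}}$.

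Alternatively, you can make your absorption idea rigorous with the perturbed weight $K_\epsilon:=\sup_{h\ge y_0}H(h)\big/\bigl(h^{\frac{1}{1-\gamma}}+\epsilon h^{p_1}\bigr)$, which is finite by Lemma~\ref{lma:represent}(ii). Young's inequality $\epsilon^\gamma h^{1+\gamma p_1}\le(1-\gamma)h^{\frac{1}{1-\gamma}}+\gamma\epsilon h^{p_1}$ gives $K_\epsilon\le C_0+C_1K_\epsilon^\gamma$ with $C_0,C_1$ independent of $\epsilon$. You then let $\epsilon\downarrow0$.

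\textbf{Parts (ii) and (iii)} are correct and coincide with the paper's proof. One caution in (iii): the identity $p_1H-hH'=\frac{2}{\theta^2}B(h)h^{-p_2}$ holds for the bracket of \eqref{eq:H:G:y0}. It does not hold for the bracket of \eqref{eq:H:G:representation} that you defined first. The two brackets differ by $U_2(H(h))h^{p_2+1}$; only the full sums agree. Differentiating the integrated-by-parts form produces an extra term $-(p_1+p_2)U_2(H(h))h$. So work with \eqref{eq:H:G:y0} throughout, as the paper does.
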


\begin{proof}
\textit{Part~\textup{(i)}.}
Setting \(y=y_0\) in \eqref{eq:H:G:representation}, using \(G(y_0)=0\) and $\tilde{U}_2=U_2\geq 0$, we obtain
    \begin{align}
    {H}(h)
    \le&\frac{2}{\theta^2(p_1+p_2)}\left\{\left[\int_h^\infty \tilde{I}(z)z^{-p_1-1}\md z+(p_1-1)\int_h^\infty {U}_2({H}(z))z^{-p_1}\md z\right]h^{p_1}\right.\nonumber\\
    &\quad\left.+\left[\frac{1}{p_2}r\underline{x}y_0^{p_2}+\int_{y_0}^{h}\tilde{I}(z)z^{p_2-1}\md z\right]h^{-p_2}\right\}.\label{eq:H:growth-upper}
    \end{align}
By \eqref{eq:H:G:I-bound} 
evaluated
at $y=y_0$ and $p_2>0$,
there is some constant $C_0>0$ such that
\begin{equation}\label{eq:H:bootstrap}
H(h)\leq
C_0h^{\frac{1}{1-\gamma}}
+\frac{2(p_1-1)}{\theta^2(p_1+p_2)}
h^{p_1}\int_h^\infty U_2(H(z))z^{-p_1}\md z,
\qquad h\geq y_0.
\end{equation}
 Then, the idea is to repeatedly substitute the resulting bound for \(H\) into the right-hand side of \eqref{eq:H:bootstrap}. Each iteration improves the growth estimate until the desired bound is obtained.
 Specifically,  define $a_0$ and $c_0$ by
\[
a_0:=p_1,
\qquad
c_0:=\underline{x}y_0^{-p_1},
\]
and, for $n\geq0$, recursively set
\begin{align*}
a_{n+1}:=1+\gamma a_n,\quad
c_{n+1}:=
C_0+
\frac{2(p_1-1)C_{2,1}}
{\theta^2(p_1+p_2)(p_1-\gamma p_1-1)}
c_n^\gamma.
\end{align*}
The first recursion gives
\[
a_n=
\frac{1}{1-\gamma}
+\gamma^n\left(p_1-\frac{1}{1-\gamma}\right)
\in\left[\frac{1}{1-\gamma},p_1\right]
\quad\text{and}\quad
a_n\overset{n\to\infty}{\longrightarrow}\frac{1}{1-\gamma}.
\]
We prove by induction that
\begin{equation}\label{eq:H:recursive-bound}
H(h)\leq c_n\left(h^{\frac{1}{1-\gamma}}+h^{a_n}\right),
\qquad h\geq y_0,\quad n\geq0.
\end{equation}
For $n=0$, Lemma~\ref{lma:represent}(ii) gives
$H(h)\leq c_0h^{a_0}$, so \eqref{eq:H:recursive-bound} holds.
Fix $k\geq0$ and suppose that \eqref{eq:H:recursive-bound} holds for
$n=k$.
By the induction hypothesis,
Assumption~\ref{assmup:U12}(ii) and $0<\gamma<1$, we have, for
$z\geq y_0$,
\[
U_2(H(z))
\leq C_{2,1}c_k^\gamma
\left(z^{\frac{1}{1-\gamma}}+z^{a_k}\right)^\gamma
\leq C_{2,1}c_k^\gamma
\left(z^{\frac{\gamma}{1-\gamma}}+z^{\gamma a_k}\right).
\]
Since
$p_1-\frac{1}{1-\gamma}>p_1-\gamma p_1-1>0$
and
$p_1-\gamma a_k-1\geq p_1-\gamma p_1-1>0$,
direct integration yields
\begin{align*}
&h^{p_1}\int_h^\infty U_2(H(z))z^{-p_1}\md z
\leq C_{2,1}c_k^\gamma h^{p_1}\int_h^\infty
\left(z^{\frac{\gamma}{1-\gamma}-p_1}
+z^{\gamma a_k-p_1}\right)\md z\\
=\ &C_{2,1}c_k^\gamma
\left[
\frac{h^{\frac{1}{1-\gamma}}}
{p_1-\frac{1}{1-\gamma}}
+
\frac{h^{1+\gamma a_k}}
{p_1-\gamma a_k-1}
\right]\leq
\frac{C_{2,1}c_k^\gamma}
{p_1-\gamma p_1-1}
\left(h^{\frac{1}{1-\gamma}}+h^{a_{k+1}}\right).
\end{align*}
Combining this estimate with \eqref{eq:H:bootstrap} and the definition
of $c_{k+1}$ gives
\begin{align*}
H(h)
\leq
c_{k+1}
\left(h^{\frac{1}{1-\gamma}}+h^{a_{k+1}}\right),
\qquad h\geq y_0.
\end{align*}
This proves \eqref{eq:H:recursive-bound} for $n=k+1$ and completes
the induction.

Since $\gamma\in(0,1)$, there exists $\bar c\geq c_0$ such that
\[
C_0+
\frac{2(p_1-1)C_{2,1}}
{\theta^2(p_1+p_2)(p_1-\gamma p_1-1)}
\bar c^\gamma
\leq\bar c.
\]
The recursion and the preceding inequality imply inductively that
$c_n\leq\bar c$ for every $n$.
Hence, by \eqref{eq:H:recursive-bound}, for each $h\geq y_0$,
letting $n\to\infty$ yields
\[
H(h)\leq 2\bar c\,h^{\frac{1}{1-\gamma}}.
\]
Thus, the assertion holds with
\(
C_{H,0}:=2\bar c.
\)
\medskip

\noindent\textit{Part~\textup{(ii)}.}
Applying \eqref{eq:H:G:preIBP} with $y=y_0$ and $G(y_0)=0$ gives
\begin{align}\label{eq:H:G:y0}
H(h)
=\frac{2}{\theta^2(p_1+p_2)}
\Bigg\{\left[
\int_h^\infty\tilde I(z)z^{-p_1-1}\md z
+\int_h^\infty
U_2'(H(z))H'(z)z^{1-p_1}\md z
\right]h^{p_1}\nonumber\\
+\left[
\frac{r\underline{x}}{p_2}y_0^{p_2}
+\int_{y_0}^h\tilde I(z)z^{p_2-1}\md z
+\int_{y_0}^h
U_2'(H(z))H'(z)z^{p_2+1}\md z
\right]h^{-p_2}\Bigg\},
 ~h\geq y_0.
\end{align}
Using $U_2'\geq0$ and $p_2>0$, and the fact that
$\tilde I(z)\geq\tilde I(h)$ for $z\geq h$, we obtain
\begin{align*}
H(h)
\geq \frac{2}{\theta^2(p_1+p_2)}\tilde I(h)h^{p_1}
\int_h^\infty z^{-p_1-1}\md z
=\frac{2}{\theta^2p_1(p_1+p_2)}\tilde I(h).
\end{align*}
\medskip

\noindent\textit{Part~\textup{(iii)}.}
Differentiating \eqref{eq:H:G:y0} gives
\begin{align*}
    H'(h)=\frac{2p_1}{\theta^2(p_1+p_2)}\left[\int_h^\infty \tilde{I}(z)z^{-p_1-1}\md z+\int_h^\infty U_2'(H(z))H'(z)z^{1-p_1}\md z\right]h^{p_1-1}\\
    -\frac{2p_2}{\theta^2(p_1+p_2)}\left[\frac{1}{p_2}r\underline{x}y_0^{p_2}+\int_{y_0}^{h}\tilde{I}(z)z^{p_2-1}\md z+\int_{y_0}^hU_2'(H(z))H'(z)z^{1+p_2}\md z\right]h^{-p_2-1}.
\end{align*}
Using $U_2'\geq 0$ and $p_2>0$, we obtain
\begin{align*}
H'(h)
&\leq \frac{2p_1}{\theta^2(p_1+p_2)}\left[\int_h^\infty \tilde{I}(z)z^{-p_1-1}\md z+\int_h^\infty U_2'(H(z))H'(z)z^{1-p_1}\md z\right]h^{p_1-1}\\
&\leq\frac{p_1H(h)}{h},
\end{align*}
where the second inequality uses the non-negativity of the second line of \eqref{eq:H:G:y0}.
\end{proof}

Overall, Propositions \ref{lma:global}, \ref{lma:H'}, and
\ref{lem:limsup} together establish that the resulting function $H$
satisfies \eqref{eq:extension:u_222}, \eqref{eq:H:property}, and
\eqref{eq:H':property}. Therefore, the proof of Theorem \ref{thm:H} is
complete.

\subsection{Reconstruction and Verification of the Value Function}
We now invert the transform introduced in
Section~\ref{sec:transform} to construct a candidate solution $v$ of the
original HJB equation \eqref{eq:hjb:simplify}. The theorem below establishes the regularity, boundary
behavior, and growth properties of $v$ needed to apply
Theorem~\ref{thm:verification}.

\begin{theorem}\label{thm:v}
    Under Assumptions \ref{assump:parameter}, \ref{assmup:U12}, and \ref{assump:u_2'}, suppose further that $\alpha x_0>\frac{b_0}{r}$. Then there is an $A\in\left(U_1'(r\underline{x}),\infty\right)$ such that the following ODE admits a solution in $C^1([\underline{x},\infty))\cap C^3((\underline{x},\infty))$:
    \begin{align}\label{eq:ode:final}
    \left\{\begin{aligned}
		&\delta \cdot v(x)=D(v'(x))+ U_2(x)
		-\frac{1}{2}\theta^2\frac{\left(v'(x)\right)^2}{v''(x)}+rxv'(x),  \quad x\in(\underline{x},\infty),\\
        &v'(\underline{x})=A,\quad \lim\limits_{x\downarrow \underline{x}}v''(x)=-\infty.
        \end{aligned}
        \right.
	\end{align}
 Moreover, there exist constants $C_{v,2}>0$,
    $C_{v,3}>0$, $C_{v,4}>0$, and $C_{v,5}>0$ such that
    \begin{equation}\label{eq:property:v}
    \begin{aligned}
    &(i) \quad v(x)\leq C_{v,2}x^\gamma,\quad
    0<v'(x)\leq C_{v,3}x^{\gamma-1},\quad
    I(v'(x))\leq C_{v,4}x,
    \qquad x\geq\underline{x},\\
    &(ii)\quad v''(x)<0,\quad
    -\frac{v'(x)}{xv''(x)}\leq C_{v,5},
    \qquad x>\underline{x}.
    \end{aligned}
    \end{equation}
\end{theorem}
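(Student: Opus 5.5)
We invert the transform of Section~\ref{sec:transform} using the function $H=H(\cdot;y_0)$ from Theorem~\ref{thm:H}, and set $A:=1/y_0$. Since $y_0<z_1=1/U_1'(r\underline{x})$, this gives $A>U_1'(r\underline{x})$.

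\textbf{Construction of $v$.} By Theorem~\ref{thm:H}, $H\in C^2([y_0,\infty))$, $H'>0$ on $(y_0,\infty)$, and $H(y_0)=\underline{x}$. The lower bound in \eqref{eq:H:property} together with $I(1/h)\to\infty$ gives $H(\infty)=\infty$. Hence $H$ is a bijection from $[y_0,\infty)$ onto $[\underline{x},\infty)$. We define $F(z):=H(1/z)$ on $(0,A]$ and $f:=F^{-1}:[\underline{x},\infty)\to(0,A]$. Then $f$ is continuous and strictly decreasing, with $f(\underline{x})=A$ and $f(\infty)=0$.

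On $(\underline{x},\infty)$ we have $F'<0$, so $f\in C^2$ there. In fact $f\in C^3$, because $H''$ can be read off from the ODE, whose coefficients are $C^1$ wherever $U_2'$ is differentiable. If only local Lipschitz regularity of $U_2'$ is available, I would obtain $C^3$ by bootstrapping the identity for $f''$ given by \eqref{eq:f}. The condition $H'(y_0)=0$ gives $f'(x)=1/F'(f(x))\to-\infty$ as $x\downarrow\underline{x}$.

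We then set
\[
v(x):=\frac{1}{\delta}\Big[D(f(x))+U_2(x)-\tfrac12\theta^2\frac{f(x)^2}{f'(x)}+rxf(x)\Big].
\]
The main computation is to check that $v'=f$. Differentiating and using $D'=-I$, the identity $\delta v'=\delta f$ reduces exactly to \eqref{eq:f}, which is equivalent to \eqref{eq:H:first} after the change of variables. Once $v'=f$, the definition of $v$ is precisely \eqref{eq:ode:final}. The boundary conditions then follow: $v'(\underline{x})=A$ and $v''=f'\to-\infty$. We also get $v\in C^1([\underline{x},\infty))$, since $f$ is continuous up to $\underline{x}$ and $f^2/f'\to0$ there.

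\textbf{Estimates.} We write $x=H(h)$ with $h=1/v'(x)$.
\begin{itemize}
\item Concavity and the investment bound. We have $v''=f'<0$. Moreover $-v'/(xv'')=-f/(xf')=hH'(h)/H(h)\le p_1$ by \eqref{eq:H':property}, so we may take $C_{v,5}=p_1$.
\item Bound on $v'$. The upper bound $x=H(h)\le C_{H,0}h^{1/(1-\gamma)}$ gives $h\ge (x/C_{H,0})^{1-\gamma}$, so $v'(x)=1/h\le C_{v,3}x^{\gamma-1}$.
\item Bound on $I(v')$. The lower bound in \eqref{eq:H:property} gives $I(v'(x))=I(1/h)\le \tfrac{\theta^2p_1(p_1+p_2)}{2}H(h)=C_{v,4}x$.
\item Bound on $v$. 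Each term of $\delta v$ is controlled separately:
  \begin{itemize}
  \item $D(f)\le U_1(I(f))\le C(1+x^\gamma)$, using $U_1$ concave and increasing with $U_1'\le C_{1,1}b^{\gamma-1}$, together with $I(f)\le C_{v,4}x$;
  \item $U_2(x)\le C_{2,1}x^\gamma$;
  \item $rxf\le C_{v,3}x^\gamma$;
  \item $-\tfrac12\theta^2f^2/f'=\tfrac12\theta^2 f\cdot(-f/f')\le \tfrac12\theta^2 C_{v,5}\,x f\le Cx^\gamma$.
  \end{itemize}
  Since $x\ge\underline{x}>0$, constants can be absorbed into $x^\gamma$.
\item Positivity of $v$. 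Every term in the formula for $\delta v$ is nonnegative except possibly $D(f)$. Here $D(p)\ge U_1(b_0)-pb_0$ when $I(p)=b_0$, and otherwise $D(p)=U_1(I)-pI\ge U_1(I)-U_1'(I)I>0$ by concavity and positivity of $U_1$. Positivity of $v$ is needed for Theorem~\ref{thm:verification} but is not part of this statement.
\end{itemize}

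\textbf{Main obstacle.} The delicate step is verifying $v'=f$ rigorously and the regularity at the boundary. This means confirming that differentiating the defining formula reproduces the $F$-equation, handling the kink of $I$ at $z_0$ (where $D$ is only $C^1$, which suffices since $v'$ involves $D'(f)f'$), and establishing $C^3$ regularity away from $\underline{x}$. All of these come from the chain rule applied to \eqref{eq:F:noboud}. The growth bounds are direct translations of Theorem~\ref{thm:H}.
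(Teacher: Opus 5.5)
Your proof is correct and follows the paper's overall strategy: take $A=1/y_0$, invert $H$ to get $f=F^{-1}$, and read the bounds off Theorem~\ref{thm:H}. Your constants $C_{v,3},C_{v,4},C_{v,5}$ coincide with the paper's, and you also make explicit that $H(\infty)=\infty$, which the paper uses tacitly.

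The difference is the direction in which you pass from $f$ to $v$. The paper sets $v:=a+\int_{\underline{x}}^x f$, so $v'=f$ is automatic. It then integrates \eqref{eq:f} over $[\underline{x}+\epsilon,x]$, applies three integration-by-parts identities, lets $\epsilon\downarrow0$ using $[v']^2/v''\to0$, and finally chooses $a=\frac1\delta\bigl(U_2(\underline{x})+r\underline{x}A+D(A)\bigr)$. You instead define $v$ by the right-hand side of the HJB equation, so the equation holds by construction and only $v'=f$ needs checking. That check is a single differentiation using $D'=-I$, which reproduces \eqref{eq:f}, and the constant $a$ then appears as the boundary limit of your formula rather than being chosen. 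This is shorter. The cost is the growth bound on $v$: the paper just integrates $v'\le C_{v,3}x^{\gamma-1}$, while you bound four terms separately. That needs the extra (easy) estimate $U_1(b)\le U_1(b_0)+\frac{C_{1,1}}{\gamma}b^\gamma$ from Assumption~\ref{assmup:U12}(i). Once $v'=f$ is known, you could simply integrate as well.

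Two side remarks are wrong, though neither affects the theorem as stated.

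First, $f\in C^3$ is neither needed nor generally true. The regularity $v\in C^3((\underline{x},\infty))$ already follows from $v'=f\in C^2$. Meanwhile \eqref{eq:f} expresses $f''$ through $U_2'(x)$, which is only locally Lipschitz under Assumption~\ref{assump:u_2'}, and through $I(f)$, which has a kink where $f=U_1'(b_0)$. So $f''$ is in general only locally Lipschitz, and no bootstrapping produces a continuous $f'''$.

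Second, your positivity argument fails. The quantity $U_1(I)-U_1'(I)I$ need not be positive: $U_1(b)=b^\gamma/\gamma-K$ with $b_0^\gamma(1-\gamma)/\gamma<K<b_0^\gamma/\gamma$ satisfies Assumption~\ref{assmup:U12} but makes it negative near $b_0$. Likewise $U_1(b_0)-pb_0$ is negative for large $p$. The correct route, which the paper uses for Theorem~\ref{thm:verification}(ii), is
$\delta v(\underline{x})=U_2(\underline{x})+U_1(I(A))+A\bigl(r\underline{x}-I(A)\bigr)>0$.
This holds because $I(A)<r\underline{x}$, which follows from $A>U_1'(r\underline{x})$. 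Combined with $v'>0$, it gives $v>0$.
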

The proof of Theorem~\ref{thm:v} is provided in Appendix~\ref{appendix:sec:case1:2}.

\begin{remark}\label{rmk:case1:properties}
To better understand \eqref{eq:property:v}, it is useful to
compare it with the unconstrained CRRA benchmark in Remark~\ref{rmk:unconstraint}, where $U_1(b)=b^\gamma/\gamma$ and
$U_2(x)=\eta x^\gamma/\gamma$. In that case, the value function admits the explicit form $V_{un}(x)=\bar{b}\frac{x^\gamma}{\gamma}$. The upper bounds on $v$ and $v'$ in \eqref{eq:property:v} indicate that, despite the lack of an explicit representation for $v$ due to constraints and general functions, both $v(x)$ and $v'(x)$
 grow at most at the same orders as $V_{un}(x)$ and $V_{un}'(x)$ for large values of $x$. Moreover, under the unconstrained CRRA benchmark, the optimal investment proportion is constant, and the optimal benefit rate is proportional to wealth (so that the benefit-to-wealth ratio is constant). In contrast, the bounds on $I(v'(x))$ and $-\frac{v'(x)}{xv''(x)}$ in \eqref{eq:property:v} indicate that, under constraints, the optimal investment proportion and the benefit-to-wealth ratio are no longer necessarily constant; nevertheless, both quantities remain uniformly bounded. These observations are consistent with the numerical experiments in Section~\ref{sec:numerical}; see Figure \ref{fig:investment_benefit}  for the illustration.
\end{remark}

The sign properties of $v'$ and $v''$ established in
Theorem~\ref{thm:v} ensure that the candidate feedback functions
$(b^*,\pi^*)$ defined in \eqref{eq:b^*} are well-defined. We next establish global existence and pathwise uniqueness for the state
equation \eqref{eq:verify:wealth} induced by these feedback functions. The main difficulty is the lack of standard Lipschitz regularity of the feedback coefficients at $\underline{x}$. We choose a suitable change of measure under which the drift is locally Lipschitz and prove that the diffusion coefficient is locally H\"older continuous of order $\frac{1}{2}$. A truncation argument then gives a unique strong solution remaining in $[\underline{x},\infty)$, while the linear growth bounds in \eqref{eq:property:v} rule out finite-time explosion and yield global well-posedness.

\begin{theorem}\label{thm:strong:case1}
 Under Assumptions \ref{assump:parameter}, \ref{assmup:U12} and \ref{assump:u_2'},  suppose further that $\alpha x_0>\frac{b_0}{r}$, let $v$ be the solution of \eqref{eq:ode:final} obtained in Theorem \ref{thm:v}. Then, for all $x\geq \underline{x}$, SDE \eqref{eq:verify:wealth} admits a unique strong solution satisfying $X^{x;*}\geq \underline{x}$.
\end{theorem}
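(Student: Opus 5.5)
The plan follows the hint given just before the statement: remove the drift by a change of measure, use Yamada--Watanabe for the diffusion, and then localize.

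\textbf{Regularity of the coefficients.} Write the feedback coefficients as
\[
\beta(x):=rx+\sigma\theta\pi^*(x)x-b^*(x),\qquad \varsigma(x):=\sigma\pi^*(x)x=-\frac{\theta v'(x)}{v''(x)}
\]
for $x>\underline{x}$, with $\varsigma=0$ and $\beta(x)=rx-I(v'(\underline{x}))$ for $x\le\underline{x}$. By Theorem~\ref{thm:v}, $v\in C^3((\underline{x},\infty))$ and $v''<0$, so $\varsigma$ is $C^1$, hence locally Lipschitz, on $(\underline{x},\infty)$. Also $b^*=I(v')$ is locally Lipschitz there, since $I$ is $C^1$ away from the kink at $U_1'(b_0)$ and Lipschitz across it. The only delicate point is $\underline{x}$. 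I would use the asymptotics $v''(x)\asymp-(x-\underline{x})^{-1/2}$ (Lemma~\ref{lma:limit:v''}, derived from $v'(\underline{x})=A>U_1'(r\underline{x})$). These give $\varsigma(x)\asymp (x-\underline{x})^{1/2}$. I would then show that $\varsigma$ is locally $\tfrac12$-H\"older on $[\underline{x},\infty)$, by bounding $|\varsigma'(x)|\lesssim (x-\underline{x})^{-1/2}$ near $\underline{x}$. Via the transform this is a statement about $H$: we have $\varsigma(x)=\theta\, h^{-1}\cdot$ (a multiple of $h^2H'(h)$) in the variable $h=1/v'(x)$, and $H'(h)\sim c(h-y_0)$ near $y_0$ because $H''(y_0)>0$ by \eqref{eq:Z:first-order}. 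With this, the Yamada--Watanabe condition applies to the diffusion.

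\textbf{Drift and pathwise uniqueness.} The drift term $\sigma\theta\pi^*x=\theta\varsigma$ is only H\"older at $\underline{x}$, so the plain Yamada--Watanabe theorem does not yet apply. Two routes are available.
\begin{itemize}
\item The first is a Girsanov change of measure with $\mathrm{d}\tilde W=\mathrm{d}W+\theta\,\mathrm{d}s$. This absorbs $\theta\varsigma\,\mathrm{d}s$ into $\varsigma\,\mathrm{d}\tilde W$ and leaves the drift $rx-b^*(x)$, which is locally Lipschitz including at $\underline{x}$. Under the new measure the Yamada--Watanabe theorem gives pathwise uniqueness and strong existence on each truncated domain. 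Since the measures are equivalent on each $\mathcal F_T$, this transfers back to $\mathbb P$.
\item The second, which is more direct, observes that $\theta\varsigma$ is increasing near $\underline{x}$. Tanaka's formula with Yamada--Watanabe functions then still handles the non-Lipschitz drift through a one-sided monotonicity argument.
\end{itemize}
I would try the Girsanov route first.

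\textbf{Localization, staying above $\underline{x}$, and non-explosion.} To localize, truncate the coefficients outside $[\underline{x},n]$ so they become bounded and globally H\"older/Lipschitz. This gives unique strong solutions up to the exit time $\tau_n$ from $(-\infty,n)$; patching over $n$ yields a solution up to $\tau_\infty=\lim\tau_n$. Next, the solution never goes below $\underline{x}$. At $\underline{x}$ the diffusion vanishes and the drift equals $r\underline{x}-I(A)>0$. A local-time or comparison argument (Tanaka applied to $(\underline{x}-X)^+$, whose drift contribution is nonpositive and whose martingale part vanishes on $\{X\le\underline{x}\}$) then gives $(\underline{x}-X)^+\equiv0$. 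Finally, non-explosion follows from the linear growth bounds in \eqref{eq:property:v}: $b^*\le C_{v,4}x$ and $\varsigma\le \theta C_{v,5}x$. A Gronwall estimate on $\mathbb E[\sup_{s\le t\wedge\tau_n}X^2]$ then shows $\tau_\infty=\infty$ a.s.

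\textbf{Main obstacle.} I expect the hardest step to be the uniform $\tfrac12$-H\"older control of $\varsigma$ near $\underline{x}$, which rests on a sharp two-sided derivative asymptotic for $v''$ at the boundary. If a direct bound on $\varsigma'$ is awkward, I would instead establish $\varsigma^2$ locally Lipschitz near $\underline{x}$, since $\varsigma^2\asymp x-\underline{x}$ with a bounded derivative coming from the $C^2$ regularity of $H$ at $y_0$. Then $|\varsigma(x)-\varsigma(x')|\le|\varsigma^2(x)-\varsigma^2(x')|^{1/2}$, which is the desired H\"older bound.
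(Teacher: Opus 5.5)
Your treatment of the diffusion coefficient, of the barrier at $\underline{x}$, and of non-explosion is fine. In particular, writing $\varsigma=\theta hH'(h)$ with $h=1/v'(x)$ gives $\frac{\mathrm{d}}{\mathrm{d}x}\varsigma^2=2\theta^2h\bigl(H'(h)+hH''(h)\bigr)$, which is bounded near $y_0$; this is a clean substitute for the paper's Step~2.

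\textbf{The gap is the drift.} After the full shift $\mathrm{d}\tilde W=\mathrm{d}W+\theta\,\mathrm{d}s$, you assert that $rx-b^*(x)=rx-I(v'(x))$ is locally Lipschitz up to $\underline{x}$. This holds only if $v'(\underline{x})>U_1'(b_0)$, because then $b^*\equiv b_0$ near $\underline{x}$.

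The other case is $v'(\underline{x})\le U_1'(b_0)$, i.e.\ $y_0\ge z_0$, where the benefit floor is slack at the boundary. The construction of $y_0$ does not exclude it, and the paper treats it separately. It occurs, for example, for CRRA $U_1$ and small $b_0$: there $z_0=1/U_1'(b_0)\to0$, while $y_0$ stays bounded away from $0$.

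In this case $b^*=(U_1')^{-1}(v')$ near $\underline{x}$. Hence $(b^*)'(x)=v''(x)/U_1''(b^*(x))\asymp(x-\underline{x})^{-1/2}$, by the same asymptotics $v''\asymp-(x-\underline{x})^{-1/2}$ that you use for $\varsigma$. In the $h$-variable, $b^*=\tilde I(h)$ has nonzero derivative at $y_0$, while $h-y_0\asymp(x-\underline{x})^{1/2}$. So the drift left after your change of measure is only $\frac12$-H\"older at $\underline{x}$. Yamada--Watanabe needs a Lipschitz (or Osgood) modulus for the drift, so it does not apply.

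Your second route does not rescue this. A one-sided argument needs $(x-x')\bigl(\beta(x)-\beta(x')\bigr)\le K(x-x')^2$. An \emph{increasing} square-root term such as $\theta\varsigma$ violates exactly this condition (compare $\dot x=\sqrt{x}$, $x(0)=0$). So you invoke monotonicity on the wrong term and in the wrong direction.

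\textbf{How the paper closes it.} Step~1 of the paper's proof uses a \emph{partial} change of measure, $\widehat W=W+(\theta-c^*/\theta)t$. This leaves the drift $rx-\Psi_{c^*}(x)$, where $\Psi_c=I(v')+c\,v'/v''=\tilde I(h)-c\,hH'(h)$. The paper chooses $c^*=-I'(1/y_0)/(y_0^3H''(y_0))$, so that the $h$-derivative of $\Psi_{c^*}$ vanishes at $y_0$. The square-root singularities of $b^*$ and $\varsigma$ then cancel, and L'H\^opital with $H''(y_0)>0$ shows that $\Psi_{c^*}'$ is bounded near $\underline{x}$.

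\textbf{A simpler repair in your framework.} Apply monotonicity to the right term. Since $v''<0$ and $I$ is nonincreasing, $b^*=I\circ v'$ is nondecreasing. So after your full Girsanov shift the drift $\beta(x)=rx-b^*(x)$ satisfies $(x-x')\bigl(\beta(x)-\beta(x')\bigr)\le r(x-x')^2$. Combined with $|\varsigma(x)-\varsigma(x')|^2\le C|x-x'|$, the local time of $X-X'$ at $0$ vanishes. Tanaka's formula plus Gronwall (with localization) then give pathwise uniqueness. Weak existence for continuous coefficients of linear growth then yields the unique strong solution.

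You need one of these two ingredients. As written, your argument covers only the case $v'(\underline{x})>U_1'(b_0)$.
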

The proof of Theorem \ref{thm:strong:case1}  is put in Appendix \ref{appendix:sec:case1:3}.

Combining Theorems~\ref{thm:v} and
\ref{thm:strong:case1} shows that all the conditions of the verification
theorem, Theorem~\ref{thm:verification}, are satisfied. 
More precisely, Assumption~\ref{assmup:U12} implies
Assumption~\ref{assump:u:1}. The regularity and sign properties of
\(v\), together with the HJB identity in \eqref{eq:ode:final}, verify
Theorem~\ref{thm:verification}(i). The choice
\(A>U_1'(r\underline{x})\), the boundary condition
\(\lim_{x\downarrow\underline{x}}v''(x)=-\infty\), and the growth bound
for \(v\) in \eqref{eq:property:v} verify
Theorem~\ref{thm:verification}(ii). Here, \(v\geq0\) follows from
\(v'>0\) and the positive boundary value used in its construction.
Moreover, the bounds on \(I(v')\) and
\(-v'/(xv'')\) in \eqref{eq:property:v} give the two linear-growth
conditions in Theorem~\ref{thm:verification}(iii), while
Theorem~\ref{thm:strong:case1} provides the required unique strong
solution satisfying \(X^{x;*}\geq\underline{x}\). Thus, all conditions
of Theorem~\ref{thm:verification} are satisfied.

Consequently, when \(\alpha x_0>b_0/r\), \(v=V\), and the feedback
functions \((b^*,\pi^*)\) defined in \eqref{eq:b^*} constitute an
optimal strategy. In particular, \(V\) is a classical solution of
\eqref{eq:hjb:simplify} and possesses all the properties stated in
Theorem~\ref{thm:v}.

The verification argument also yields uniqueness: any classical solution
of \eqref{eq:hjb:simplify} satisfying the conditions of
Theorem~\ref{thm:verification} must coincide with $V$. Since the
construction links the initial point to the boundary slope through
$y_0=[V'(\underline{x})]^{-1}$, the initial point giving rise to a
classical solution satisfying the conditions of
Theorem~\ref{thm:verification} is unique as well. This completes the characterization of the
value function and the optimal strategy in the nondegenerate boundary
case $\alpha x_0>b_0/r$. This case corresponds to Scenarios~1 and~2 in the numerical experiments
presented in Section~\ref{sec:numerical}.

\section{The Degenerate Boundary Case: $\alpha x_0\leq\frac{b_0}{r}$}\label{sec:degenerate}
After we completed the analysis of the nondegenerate boundary case $\alpha x_0>\frac{b_0}{r}$ in Section~\ref{sec:well:hjb}, we next turn to the degenerate boundary case $\alpha x_0\leq\frac{b_0}{r}$. As anticipated in Remark~\ref{rmk:cases}(ii), we take the boundary tangent to be infinite: $A:=\lim_{x\downarrow\underline{x}}v'(x)=\infty$. Hence, unlike in the nondegenerate case, no boundary parameter needs to be determined, and the auxiliary equation \eqref{eq:H:first} is posed on $(0,\infty)$ with a degenerate boundary at $h=0$. Specifically, the coefficient $-\frac12 \theta^2h^2$ of $H''$ vanishes at $h=0$, so writing \eqref{eq:H:first} in standard form produces singular coefficients and prevents a direct application of standard local ODE theory at the boundary.

\subsection{Assumptions and the Main Result in the Degenerate Case}\label{sec:case2}
 To address the degeneracy of \eqref{eq:H:first} at $h=0$, we impose the following additional condition on $U_2$.
\begin{assumption}\label{assump:holder:u2}
There exist constants
   $C_{2,2}>0$ and $\gamma_0\in(0,\gamma]$ such that
    \begin{align}\label{eq:holder:u2}
        |U_2(x)-U_2(y)|\leq C_{2,2}|x-y|^{\gamma_0} \quad\text{for all } \,x,y\in[\underline{x},\infty).
    \end{align}
\end{assumption}
Note that a sufficient condition for~\eqref{eq:holder:u2} is
\[
0\leq U_2'(x)\leq \gamma_0 C_{2,2}x^{\gamma_0-1}
\quad\text{for all }x\in[\underline{x},\infty).
\]

The following theorem gives the main result for the degenerate boundary case and its auxiliary function $H$.  
  \begin{theorem}\label{thm:Hnew:case2}
  Under Assumptions \ref{assump:parameter}, \ref{assmup:U12}  and \ref{assump:holder:u2},  suppose further that $\alpha x_0\le \frac{b_0}{r}$. Then the following ODE:
\begin{align}\label{eq:extension:u_222new}
\left\{
\begin{aligned}
		&\!-\!\frac{1}{2} \theta^2 h^2H''(h) \!-\!(r\!-\!\delta) hH'(h)\!-\!{U}_2'(H(h))h^2H'(h)\!+\!r H(h)\! -\!I\left( \frac1h\right)\!=\!0,~ h\in(0,\infty),\\
        &\limsup_{h\downarrow 0}|H(h)| \,\,\text{and}\,\, \limsup_{h\to \infty}\frac{|H\left(h\right)|}{h^{\frac{1}{1-\gamma}}} \text{ are finite},
        \end{aligned}
        \right.
\end{align}
 admits a  solution $H\in C^1([0,\infty))\cap C^2((0,\infty))$. Moreover, $H$ satisfies
 \begin{align}\label{eq:Hnew:boundary-asymptotics}
 H(0)=\frac{b_0}{r}\quad \text{and}\quad
 \lim_{h\downarrow0}\frac{H'(h)}{h^{p_1-1}}=l_H\in(0,\infty).
 \end{align}
 In addition,
 \begin{align}
     &(i)\quad H(h)\geq \frac{2}{\theta^2p_1(p_1+p_2)}I\left( \frac1h\right)
     \quad \text{for all } h\in(0,\infty);\label{eq:Hnew:lower-bound}\\
     &(ii)\quad 0<\frac{hH'(h)}{H(h)}\le p_1
     \quad \text{for all }h\in(0,\infty).\label{eq:Hnew:derivative-bound}
 \end{align}
 where $p_1$, $q_1=-p_2$ are the two roots of the quadratic equation \eqref{eq:p} such that $p_1 >\frac{1}{1-\gamma}$ and $p_2>0$.
\end{theorem}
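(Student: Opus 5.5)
The plan is to mirror the nondegenerate construction of Section~\ref{sec:well:hjb} with the initial point pushed to $y=0$. Since the boundary at $h=0$ is degenerate, the Cauchy problem is replaced by a fixed-point formulation built on the kernel representation of Lemma~\ref{lma:represent}. Formally letting $y\downarrow0$ in \eqref{eq:H:G:y0}, with $\underline{x}=b_0/r$ and the terms $r\underline{x}y^{p_2}$, $U_2(\underline{x})y^{p_2+1}$ vanishing, the natural candidate is a solution of the integral equation
\begin{align*}
H(h)=\frac{2}{\theta^2(p_1+p_2)}\Bigg\{&\left[\int_h^\infty\tilde I(z)z^{-p_1-1}\md z+(p_1-1)\int_h^\infty U_2(H(z))z^{-p_1}\md z\right]h^{p_1}\\
&+\left[\int_0^h\tilde I(z)z^{p_2-1}\md z-(p_2+1)\int_0^h U_2(H(z))z^{p_2}\md z\right]h^{-p_2}\Bigg\}=: \mathcal T[H](h).
\end{align*}
This is the integrated-by-parts form \eqref{eq:H:G:representation} with $G=0$ and $y=0$. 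The $h^{p_1}$ mode is killed at infinity, which encodes the growth condition in \eqref{eq:extension:u_222new}, while the $h^{-p_2}$ mode is killed at $0$, which encodes boundedness near $0$. Any $C^2$ fixed point then solves the ODE, as in the derivation of Lemma~\ref{lma:represent}(i).

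I would apply the Schauder fixed point theorem to $\mathcal T$ on the convex set
\[
\mathcal K:=\left\{H\in C([0,\infty)):\ \tfrac{b_0}{r}\le H(h)\le C\bigl(1+h^{\frac{1}{1-\gamma}}\bigr)\right\},
\]
equipped with the weighted sup-norm $\|H\|:=\sup_h |H(h)|/(1+h^{\frac{1}{1-\gamma}})^{1+\epsilon}$ or with locally uniform convergence.

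The upper bound is the analogue of the bootstrap in Proposition~\ref{lem:limsup}(i). Using \eqref{eq:H:G:I-bound}-type integrals, the growth $U_2(H)\le C_{2,1}H^\gamma$, and $(1-\gamma)^{-1}<p_1$, we get $\mathcal T[H]\le C_0(1+h^{\frac{1}{1-\gamma}})+K C^\gamma(1+h^{\frac{1}{1-\gamma}})$, and sublinearity in $C$ closes the bound for large $C$.

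The lower bound $\mathcal T[H]\ge b_0/r$ is subtler, because of the negative $U_2$ term. To handle it I would use the H\"older Assumption~\ref{assump:holder:u2}: write $U_2(H(z))=U_2(b_0/r)+[U_2(H(z))-U_2(b_0/r)]$. The constant part integrates exactly to zero net contribution, as in the identity used in part (i) of Proposition~\ref{lma:H'}. The $\tilde I$ part with $\tilde I=b_0$ near $0$ gives exactly $b_0/r$, using $\frac{2}{\theta^2(p_1+p_2)}(\frac1{p_1}+\frac1{p_2})=\frac1r$ from \eqref{eq:p}. Monotonicity of $\tilde I$ gives a surplus. The remaining H\"older increments are then controlled by that surplus, or alternatively handled by defining $\mathcal T$ with $U_2$ replaced by the monotone form $\int U_2'(H)H'z^{1-p_1}$.

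For compactness and continuity I would differentiate $\mathcal T[H]$ and show $|(\mathcal T[H])'|\le C h^{p_1-1}$ near $0$ and polynomially bounded elsewhere. Arzel\`a--Ascoli on compacts, together with the uniform growth envelope and dominated convergence, then gives continuity and relative compactness in the weighted norm.

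\textbf{Main obstacle.} Near $h=0$ the H\"older assumption is essential. Since $H(z)-b_0/r=O(z^{p_1})$ only a posteriori, a priori one has just $|U_2(H)-U_2(b_0/r)|\le C_{2,2}|H-b_0/r|^{\gamma_0}$. The plan here is to include in $\mathcal K$ the decay constraint $|H(h)-b_0/r|\le Ch^{p_1}$ for $h\le1$ and check that $\mathcal T$ preserves it. The relevant integrals behave like $h^{-p_2}\int_0^h z^{p_2+\gamma_0 p_1}dz\sim h^{1+\gamma_0p_1}$, and since $\tilde I\equiv b_0$ on $(0,z_0)$ the leading correction is exactly of order $h^{p_1}$. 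This consistency check, together with getting the sign bookkeeping right, is where I expect the difficulty.

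Once a fixed point exists, the remaining properties follow in order. First, $H(0)=b_0/r$ holds by construction. Second, differentiating gives $H'(h)h^{1-p_1}\to \frac{2p_1}{\theta^2(p_1+p_2)}[\int_0^\infty\tilde I z^{-p_1-1}dz+\int_0^\infty U_2'(H)H'z^{1-p_1}dz]=:l_H>0$; only the $h^{p_1}$-bracket survives as $h\downarrow0$. Third, $H'>0$: near $0$ this follows from $l_H>0$, and globally it follows by the one-turning-point argument of Proposition~\ref{lma:phi} combined with the exclusion of $L$ finite or $-\infty$ exactly as in Proposition~\ref{lma:H'}(i). Finally, the lower bound \eqref{eq:Hnew:lower-bound} and the ratio bound \eqref{eq:Hnew:derivative-bound} follow verbatim as in Proposition~\ref{lem:limsup}(ii)--(iii), using the representation \eqref{eq:H:G:y0} with $y_0=0$.
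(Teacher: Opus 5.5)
There is a genuine gap in the fixed-point step. The overall plan is the paper's: a fixed point of the kernel operator $\mathcal{T}$ (the paper's $\Gamma$ in \eqref{eq:GammaH}, up to replacing $U_2$ by an extension), then $H(0)=b_0/r$, the boundary rate of $H'$, monotonicity via Proposition~\ref{lma:phi} and the tail argument, and the bounds via Proposition~\ref{lem:limsup}. The problem is that you keep $U_2$ rather than an extension, so $\mathcal{T}$ is only defined when $H\ge b_0/r$, and you therefore need $\mathcal{T}(\mathcal{K})\subset\mathcal{K}$ for the lower bound. This fails because $\mathcal{K}$ contains non-monotone functions.

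Here is why. For $h<z_0$, set $D(z):=U_2(H(z))-U_2(b_0/r)\ge0$ and $S:=\int_{z_0}^\infty(\tilde I(z)-b_0)z^{-p_1-1}\,\md z$. The constant part cancels and
\begin{align*}
\mathcal{T}[H](h)-\frac{b_0}{r}=\frac{2}{\theta^2(p_1+p_2)}\left[Sh^{p_1}+(p_1-1)h^{p_1}\int_h^\infty D(z)z^{-p_1}\,\md z-(p_2+1)h^{-p_2}\int_0^h D(z)z^{p_2}\,\md z\right].
\end{align*}
Take $D$ to be a bump of height $d$ supported in $[h/2,h]$. Then the middle term vanishes and the last is at most $-c\,d\,h$, with $c>0$ independent of $h$. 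Since $p_1>1$, this beats $Sh^{p_1}$ for small $h$, so $\mathcal{T}[H]$ drops below $b_0/r$ and the surplus cannot absorb the increments.

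Your decay constraint does not rescue this with the estimate you propose. The H\"older bound gives only $d\lesssim h^{\gamma_0p_1}$, hence an increment term of order $h^{1+\gamma_0p_1}$. Because $p_1>\frac{1}{1-\gamma}$, we have $1+\gamma_0p_1\le1+\gamma p_1<p_1$, so this term is larger, not smaller, than the $h^{p_1}$ correction. For the same reason your check that $\mathcal{T}$ preserves $|H-b_0/r|\le Ch^{p_1}$ cannot close.

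The increment term has a sign only when $D$, i.e.\ $H$, is nondecreasing. Your ``monotone form'' alternative therefore presupposes monotonicity, which is neither built into $\mathcal{K}$ nor preserved by $\mathcal{T}$ in general. It also puts $H'$ into the operator, so compactness and continuity would have to be redone in a $C^1$ topology.

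The missing idea is to keep the lower bound out of the fixed-point set altogether.
\begin{itemize}
\item The paper extends $U_2$ to $\tilde U_2\in C^1(\mR)$ with $\tilde U_2\ge0$, $\tilde U_2'\ge0$, $\tilde U_2'=0$ on $(-\infty,0]$, and $\tilde U_2$ globally $\gamma_0$-H\"older.
\item It applies Schauder on the symmetric set $\{|H(h)|\le C_{\mathcal{D}}(1+h^{\frac{1}{1-\gamma}})\}$ in the space weighted by $h^{-\beta_1}+h^{\frac{1}{1-\gamma}+\beta_2}$. Invariance then needs only your upper estimate. The global H\"older bound is what gives continuity of the operator; that, not the boundary layer at $0$, is where Assumption~\ref{assump:holder:u2} enters.
\item $H(0)=b_0/r$ and $H'(0)=0$ follow from l'H\^opital.
\item $H\ge b_0/r$ is proved a posteriori. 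Suppose $H(h_0)<b_0/r$ and $H'(h_1)>0$ for some $h_1>h_0$. Then the minimum of $H$ on $[0,h_1]$ is attained at an interior $h_m$ with $H(h_m)<b_0/r$. Yet the ODE gives $rH(h_m)=\tilde I(h_m)+\frac12\theta^2h_m^2H''(h_m)\ge b_0$, a contradiction. Hence $H$ is nonincreasing after $h_0$, which the tail argument of Proposition~\ref{lma:H'}(i) excludes; the case $L=-\infty$ is where $\tilde U_2'=0$ on $(-\infty,0]$ is used.
\item The rate $H'(h)\sim l_Hh^{p_1-1}$ is not imposed but bootstrapped from the differentiated representation by an induction on the exponent.
\end{itemize}

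Your formula for $l_H$ is also wrong as written. It contains $\int_0^\infty\tilde I z^{-p_1-1}\,\md z$, which diverges because $\tilde I=b_0$ near $0$. The $b_0h^{-1}$ terms of the two brackets cancel, so the $h^{-p_2}$-bracket does contribute. The correct $\tilde I$-part is $p_1\int_{z_0}^\infty(\tilde I(z)-b_0)z^{-p_1-1}\,\md z$.
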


Compared with Theorem~\ref{thm:H}, Theorem~\ref{thm:Hnew:case2} now
concerns a global problem with the degenerate point at $h=0$, where
standard initial-value arguments are unavailable. To address this
difficulty, we impose the endpoint growth conditions
in~\eqref{eq:extension:u_222new}, construct a global solution through
the associated kernel representation in
Proposition~\ref{prop:Hnew:existence}, and then recover its precise
boundary behavior in~\eqref{eq:Hnew:boundary-asymptotics} through
Proposition~\ref{prop:Hnew:boundary}. In particular, the boundary
behavior of $H'$ in Theorem~\ref{thm:H} follows directly by evaluating
the ODE at the nondegenerate initial point $y_0>0$, whereas at $h=0$,
the vanishing coefficient of $H''$ prevents the same argument from
being applied. The second relation
in~\eqref{eq:Hnew:boundary-asymptotics} must therefore be established
through a refined asymptotic analysis. This relation is crucial for
establishing the global monotonicity in
Proposition~\ref{prop:Hnew:global} and reconstructing the value
function in Theorem~\ref{thm:vnew:case2}.

In the rest of this section, we always assume that Assumptions \ref{assump:parameter}, \ref{assmup:U12} and \ref{assump:holder:u2} hold.

\subsection{Global Existence for the Auxiliary Equation}

As at the beginning of Section \ref{sec:case1:existence}, we first extend $U_2$ to the entire real line.
Under Assumptions~\ref{assmup:U12} and
\ref{assump:holder:u2}, there exists an extension
$\tilde U_2\in C^1(\mathbb R)$ satisfying
\begin{align}
&(i)\quad 
\tilde U_2=U_2 \ \text{on }[\underline{x},\infty),\qquad
\tilde U_2\geq0,\quad \tilde U_2'\geq0 \ \text{on }\mathbb R,
\qquad \tilde U_2'=0 \ \text{on }(-\infty,0];\nonumber\\
&(ii)\quad
|\tilde U_2(x)-\tilde U_2(y)|
\leq C_{2,2}|x-y|^{\gamma_0},\qquad x,y\in\mathbb R.
\label{eq:tildeU2:properties:case2}
\end{align}
Here, $C_{2,2}$ is enlarged if necessary. Unlike in
Section~\ref{sec:well:hjb}, we do not require $\tilde U_2'$ to be locally
Lipschitz continuous; instead, we require $\tilde U_2$ to be globally
H\"older continuous of order $\gamma_0$. 

The following proposition constructs a global solution of the extended equation without  imposing the range condition $H\geq\underline{x}=b_0/r$ yet, and provides the kernel representation in terms of the fundamental solutions and boundary values needed below. Its proof uses the integral operator suggested by applying variation of parameters to the Cauchy--Euler part: $-\frac{1}{2}\theta^2h^2H''(h)-(r-\delta)hH'(h)+rH(h)$ of the ODE \eqref{eq:Hnew:case2:extended}  and applies the Schauder fixed point theorem to a weighted space defined in \eqref{eq:norm:weight} of continuous functions.

\begin{proposition}\label{prop:Hnew:existence}
The following extended problem associated with ODE~\eqref{eq:extension:u_222new} admits a solution
$H\in C^1([0,\infty))\cap C^2((0,\infty))$:
\begin{align}\label{eq:Hnew:case2:extended}
\left\{
\begin{aligned}
&-\frac{1}{2}\theta^2h^2H''(h)-(r-\delta)hH'(h)
-\tilde{U}_2'(H(h))h^2H'(h)+rH(h)-\tilde{I}(h)=0,\quad h>0,\\
&\limsup_{h\downarrow0}|H(h)|\,\,\text{and}\,\,
\limsup_{h\to\infty}\frac{|H(h)|}{h^{\frac{1}{1-\gamma}}} \text{ are finite}.
\end{aligned}
\right.
\end{align}
Moreover, $H$ has the following integral representation:
\begin{align}\label{eq:Hdef}
H(h) =\ & \frac{2}{\theta^2(p_1+p_2)}\bigg\{\int_h^\infty\left[\tilde{I}(z)+(p_1-1)\tilde{U}_2(H(z))z\right]z^{-p_1-1}\mathrm{d}z \, h^{p_1}\nonumber\\
& \quad +\int_0^h\left[\tilde{I}(z)-(p_2+1)\tilde{U}_2(H(z))z\right]z^{p_2-1}\mathrm{d}z \, h^{-p_2}\bigg\},\quad h>0,
\end{align}
which satisfies $
H(0)=\frac{b_0}{r}$ and $ H'(0)=0$.
\end{proposition}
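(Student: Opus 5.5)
Define the operator $\mathcal T$ by the right-hand side of \eqref{eq:Hdef} and look for a fixed point with the Schauder theorem. The motivation is variation of parameters: $h^{p_1}$ and $h^{-p_2}$ solve the homogeneous Cauchy--Euler equation $-\frac12\theta^2h^2H''-(r-\delta)hH'+rH=0$. Applying variation of parameters with forcing $\tilde I+\tilde U_2'(H)h^2H'$, and integrating by parts the $\tilde U_2'(H)H'$ terms as in \eqref{eq:H:G:representation}, turns the forcing into $\tilde I(z)+(p_1-1)\tilde U_2(H(z))z$ in the first integral and $\tilde I(z)-(p_2+1)\tilde U_2(H(z))z$ in the second. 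The limits of integration are chosen so that the $h^{p_1}$ mode is killed at infinity and the $h^{-p_2}$ mode is killed at $0$. This removes $H'$ from the operator, so it acts on merely continuous $H$. I will work on the Banach space
\[
X:=\Bigl\{H\in C([0,\infty)) : \|H\|_w:=\sup_{h\ge0}\frac{|H(h)|}{1+h^{1/(1-\gamma)}}<\infty\Bigr\},
\]
which is the natural weighted space \eqref{eq:norm:weight}. The invariant set will be a closed convex ball $K=\{\|H\|_w\le R\}$.

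\textbf{Step 1 (invariance).} Use \eqref{eq:tildeI:growth}, namely $\tilde I=b_0$ on $(0,z_0)$ and $\tilde I\le M_{1,1}z^{1/(1-\gamma)}$ beyond, together with $0\le\tilde U_2(x)\le \tilde U_2(0)+C_{2,2}|x|^{\gamma_0}$ from \eqref{eq:tildeU2:properties:case2}. Estimate the integrals exactly as in \eqref{eq:H:G:I-bound}--\eqref{eq:H:G:U-bound}, using $p_1>\frac{1}{1-\gamma}$ and $p_2>0$. The $\tilde U_2$ contribution is then bounded by $C(1+R^{\gamma_0})(1+h^{1+\gamma_0/(1-\gamma)})$. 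Since $\gamma_0\le\gamma$, we have $1+\gamma_0/(1-\gamma)\le 1/(1-\gamma)$, and the $z^{p_2}$ integral near $0$ converges. This gives $\|\mathcal TH\|_w\le C_0+C_1R^{\gamma_0}\le R$ for $R$ large, because $\gamma_0<1$.

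\textbf{Step 2 (continuity and compactness).} Continuity of $\mathcal T$ on $K$ follows from the Hölder bound on $\tilde U_2$ together with dominated convergence. For compactness I use Arzelà--Ascoli on compacts; differentiating $\mathcal TH$ gives a local derivative bound uniform on $K$. Tails are handled by a slightly heavier weight: I will show that $|\mathcal TH(h)|/(1+h^{1/(1-\gamma)})$ is uniformly small for large $h$ on a marginally weaker norm, or else run Schauder in a space with weight $h^{a}$, $a\in(1/(1-\gamma),p_1)$, which makes the embedding compact. Schauder then gives a fixed point $H$ satisfying \eqref{eq:Hdef}.

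\textbf{Step 3 (regularity and boundary values).} Since $H$ is continuous, the integrands in \eqref{eq:Hdef} are continuous, so $H\in C^1((0,\infty))$. Reversing the integration by parts (now legitimate because $H$ is $C^1$ and $\tilde U_2\in C^1$) shows that $H$ is $C^2$ and solves \eqref{eq:Hnew:case2:extended}; the growth conditions hold by membership in $K$.

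At $h=0$, for $z<z_0$ we have $\tilde I=b_0$. The first term has order
\[
h^{p_1}\Bigl(\int_h^{z_0}b_0z^{-p_1-1}\,\md z+O(1)\Bigr)\to \frac{b_0}{p_1},
\]
and the second tends to $\frac{b_0}{p_2}$, while the $\tilde U_2$ terms are $O(h)$. Hence
\[
H(0)=\frac{2b_0}{\theta^2(p_1+p_2)}\Bigl(\frac1{p_1}+\frac1{p_2}\Bigr)=\frac{2b_0}{\theta^2p_1p_2}=\frac{b_0}{r},
\]
since $p_1p_2=2r/\theta^2$ from \eqref{eq:p}.

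For $H'(0)=0$, differentiate \eqref{eq:Hdef}. The $b_0$ contributions cancel exactly to leading order: $p_1\cdot\frac{b_0}{p_1}h^{-1}-p_2\cdot\frac{b_0}{p_2}h^{-1}=0$. The remaining terms are $O(1)\cdot h^{p_1-1}+O(1)$ contributions coming from the $\tilde U_2$ parts, which must also be shown to vanish. I expect they do: the $\tilde U_2(H(z))z$ terms yield $(p_1-1)\tilde U_2(H(0))\cdot\frac{1}{p_1-1}\cdot$ const minus the analogous term from the other side, and these should cancel to $o(1)$ by continuity of $\tilde U_2(H)$ at $0$.

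The main obstacle is this delicate cancellation at $h=0$ for $H'$, together with the compactness in Step 2. For the cancellation I would write $\tilde U_2(H(z))=\tilde U_2(H(0))+o(1)$ and compute explicitly.
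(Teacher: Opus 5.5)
Your route is the paper's. It uses the same integral operator ($\Gamma$ in \eqref{eq:GammaH}), Schauder's theorem, invariance from the sublinear growth of $\tilde U_2\circ H$, a regularity bootstrap from the integral identity, and boundary limits via $p_1p_2=2r/\theta^2$. Two steps need correction.

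The first is compactness in your space $X$ itself, which fails as written. When $\gamma_0=\gamma$ (forced, e.g., by $U_2(x)=x^\gamma$), the $H$-dependent part of $\mathcal{T}H$ has the same order $h^{1/(1-\gamma)}$ as your weight, and it can carry mass escaping to infinity. Take a cutoff $\psi$ vanishing on $[0,1]$ and equal to $1$ on $[2,\infty)$, and set $H_n(z)=R\,\psi(z/n)z^{1/(1-\gamma)}$. These functions lie in $K$, and $\mathcal{T}H_n\to\mathcal{T}0$ locally uniformly. Yet $|\mathcal{T}H_n(4n)-\mathcal{T}0(4n)|\ge c\,n^{1/(1-\gamma)}$ for large $n$, because $\frac{p_1-1}{p_1-1/(1-\gamma)}>1>\frac{p_2+1}{p_2+1/(1-\gamma)}$. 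Hence $\mathcal{T}(K)$ is not relatively compact in $\|\cdot\|_w$. Your first option does work when $\gamma_0<\gamma$, since the $H$-dependent part is then uniformly $O(h^{1+\gamma_0/(1-\gamma)})$, but the CRRA case needs $\gamma_0=\gamma$.

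So commit to your second option, in the form the paper uses. Keep the convex set defined by $|H(h)|\le R(1+h^{1/(1-\gamma)})$ from your Step 1. It stays closed, since convergence in the new norm implies pointwise convergence. Measure distances with a strictly larger weight at infinity, i.e.\ a weaker norm, such as $h^{-\beta_1}+h^{\frac{1}{1-\gamma}+\beta_2}$ in \eqref{eq:norm:weight}. Then:
\begin{itemize}
\item the images have uniformly small weighted tails, so Arzel\`a--Ascoli on compacts gives precompactness;
\item the H\"older bound gives continuity, provided $\beta_2\gamma<p_1-\frac{1}{1-\gamma}$;
\item the growth condition in \eqref{eq:Hnew:case2:extended} holds automatically for the fixed point.
\end{itemize}
Taking instead a ball in an $h^a$-norm would only give $|H|\lesssim 1+h^a$, and you would need an extra bootstrap as in Proposition~\ref{lem:limsup}(i).

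The second is your mechanism for $H'(0)=0$, which is wrong. In $H'(h)$ the two integral $\tilde U_2$-contributions have the same sign:
\begin{itemize}
\item $p_1(p_1-1)h^{p_1-1}\int_h^\infty\tilde U_2(H(z))z^{-p_1}\,\md z\to p_1\tilde U_2(H(0))$;
\item $p_2(p_2+1)h^{-p_2-1}\int_0^h\tilde U_2(H(z))z^{p_2}\,\md z\to p_2\tilde U_2(H(0))$.
\end{itemize}
They do not cancel each other. What cancels them is the term $-(p_1+p_2)\tilde U_2(H(h))$ produced by differentiating the variable limits; the $\tilde I$ boundary terms cancel among themselves.

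Equivalently, $(p_1-1)h^{p_1}\int_h^\infty z^{-p_1}\,\md z=h=(p_2+1)h^{-p_2}\int_0^h z^{p_2}\,\md z$. So a constant value of $\tilde U_2\circ H$ contributes exactly nothing to $\mathcal{T}H$. After subtracting $\tilde U_2(H(0))$, every remaining term of $H'$ is $o(1)$; split $\int_h^\infty$ at a small fixed $\eta$ and use the growth bound beyond $\eta$.

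With that correction, your plan of writing $\tilde U_2(H(z))=\tilde U_2(H(0))+o(1)$ and computing explicitly gives $\lim_{h\downarrow0}H'(h)=0$. This limit, not merely a vanishing difference quotient, is what $C^1([0,\infty))$ requires. The rest of Step~3 is fine.
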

Note that compared with \eqref{eq:extension:u_222new}, \eqref{eq:Hnew:case2:extended} replaces $U_2$ by its extension $\tilde U_2$; see Section \ref{sec:case1:existence} for the same treatment for the nondegenerate boundary case $\alpha x_0>b_0/r$.
 The proof of Proposition~\ref{prop:Hnew:existence} is provided in Appendix~\ref{appendix:case2:schauder}.

We henceforth fix the auxiliary function $H$ of the extended problem provided by
Proposition~\ref{prop:Hnew:existence}, for which $H(0)=b_0/r$ and
$H'(0)=0$. To recover a solution of the original ODE and prepare for the
subsequent global analysis, the following proposition shows that
$H(h)\geq H(0)=b_0/r$ for all $h\geq0$ and determines the precise
asymptotic behavior of $H'$ stated in the second relation
of~\eqref{eq:Hnew:boundary-asymptotics}.

\begin{proposition}\label{prop:Hnew:boundary}
The auxiliary function $H$ has the following properties:
\begin{enumerate}
\item[(i)] $H(h)\ge H(0)=b_0/r$ for all $h\ge0$;
\item[(ii)] There exists $l_H\in(0,\infty)$ such that
\(
\lim_{h\downarrow0}\frac{H'(h)}{h^{p_1-1}}=l_H.
\)
\end{enumerate}
\end{proposition}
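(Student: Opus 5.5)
Both parts rest on the kernel representation \eqref{eq:Hdef} of Proposition~\ref{prop:Hnew:existence}, rewritten in terms of $\Delta(h):=H(h)-\frac{b_0}{r}$. The first step is an exact cancellation. Put $c:=\frac{2}{\theta^2(p_1+p_2)}$. From \eqref{eq:p} we have $p_1p_2=\frac{2r}{\theta^2}$, so $c\,b_0\left(\frac1{p_1}+\frac1{p_2}\right)=\frac{b_0}{r}$. Hence the constant part $b_0$ of $\tilde I$ contributes exactly $b_0/r$.

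Similarly, the constant $\tilde U_2(b_0/r)$ contributes $(p_1-1)h^{p_1}\int_h^\infty z^{-p_1}\md z-(p_2+1)h^{-p_2}\int_0^h z^{p_2}\md z=h-h=0$. Therefore
\begin{align*}
\Delta(h)=c\Big\{h^{p_1}\!\int_h^\infty(\tilde I(z)-b_0)z^{-p_1-1}\md z+(p_1-1)h^{p_1}\!\int_h^\infty \!E(z)z^{-p_1}\md z-(p_2+1)h^{-p_2}\!\int_0^h \!E(z)z^{p_2}\md z\Big\},
\end{align*}
with $E(z):=\tilde U_2(H(z))-\tilde U_2(b_0/r)$. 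Note that $\tilde I-b_0\ge0$ and $\tilde I-b_0=0$ on $(0,z_0)$. Integrating by parts, using $H'(0)=0$ and the growth bounds at $0$ and $\infty$, turns the $E$-terms into $h^{p_1}\int_h^\infty\tilde U_2'(H)H'z^{1-p_1}\md z+h^{-p_2}\int_0^h\tilde U_2'(H)H'z^{p_2+1}\md z$, exactly as in \eqref{eq:H:G:preIBP}.

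\emph{Part (i).} I would use a minimum principle for the ODE in \eqref{eq:Hnew:case2:extended}, written as
\[
-\tfrac12\theta^2h^2\Delta''-(r-\delta)h\Delta'-\tilde U_2'(H)h^2\Delta'+r\Delta=\tilde I(h)-b_0\ge0 .
\]
If $\Delta$ attains a negative minimum at some $h^*\in(0,\infty)$, then $\Delta'(h^*)=0$ and $\Delta''(h^*)\ge0$. The left-hand side is then at most $r\Delta(h^*)<0$, a contradiction. Since $\Delta(0)=0$, it remains to exclude a negative infimum approached only as $h\to\infty$.

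For that I would use the representation above. The first term satisfies $h^{p_1}\int_h^\infty(\tilde I-b_0)z^{-p_1-1}\md z\ge(\tilde I(h)-b_0)/p_1\to\infty$, because $U_1'(\infty)=0$. For the $E$-terms I would argue as in the proof of Proposition~\ref{lma:H'}(i), via the divergence \eqref{eq:I:tail:divergence} and the identity \eqref{eq:H:tail:identity}. If $H$ stays below $b_0/r$ along a tail, then $\tilde U_2(H)\le\tilde U_2(b_0/r)$ there, because $\tilde U_2$ is nondecreasing. So $E\le0$ on that tail and $E$ stays bounded, and the same integration of \eqref{eq:H:tail:identity} forces $H'\to-\infty$ or $H\to+\infty$. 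Either outcome contradicts boundedness of $H$ from above or the assumed behaviour.

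\emph{Part (ii).} Here I would use the differentiated representation. Since $\tilde I=b_0$ on $(0,z_0)$, for $h<z_0$ we get
\[
\varphi(h):=\frac{H'(h)}{h^{p_1-1}}=cp_1\Big[K+\int_h^\infty\tilde U_2'(H)H'z^{1-p_1}\md z\Big]-cp_2h^{-p_1-p_2}\int_0^h\tilde U_2'(H)H'z^{p_2+1}\md z,
\]
with $K:=\int_{z_0}^\infty(\tilde I-b_0)z^{-p_1-1}\md z\in(0,\infty)$. Because $U_2\in C^1([\underline x,\infty))$ and $H$ is continuous with $H(0)=\underline x$, we have $M:=\sup_{[0,h_0]}\tilde U_2'(H)<\infty$. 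This gives the key estimate
\[
|\varphi(h)|\le C+M cp_1\int_h^{h_0}|\varphi|\,\md z+\frac{Mcp_2\,h}{p_1+p_2+1}\sup_{(0,h]}|\varphi| .
\]
I would choose $h_0$ so small that the last coefficient is below $\frac12$. The main obstacle is getting a priori finiteness of $\sup_{(0,h]}|\varphi|$ so that it can be absorbed. I would first try a cruder bootstrap, $H'(h)=O(h^{\beta})$ with $\beta$ increasing, fed by the same representation and by $H'(0)=0$. Alternatively I would run the argument on $[\epsilon,h_0]$ and control the contribution of $(0,\epsilon)$ through continuity of $H'$. Once $\varphi$ is bounded, Gr\"onwall gives boundedness on $(0,h_0]$.

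Boundedness of $\varphi$ then makes $\tilde U_2'(H)H'z^{1-p_1}$ integrable at $0$, and it makes the last term $O(h)\to0$. Hence
\[
l_H=cp_1\Big[K+\int_0^\infty\tilde U_2'(H)H'z^{1-p_1}\md z\Big].
\]
To see $l_H>0$, integrate by parts back to $(p_1-1)\int_0^\infty E(z)z^{-p_1}\md z$. The boundary term at $0$ vanishes since $\Delta=O(h^{p_1})$, so $|E(z)|z^{1-p_1}\le M'z\to0$. This integral is $\ge0$ by part (i) and the monotonicity of $\tilde U_2$. Hence $l_H\ge cp_1K>0$, and finiteness is clear.
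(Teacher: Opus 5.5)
Your plan follows the paper's proof in both parts. For part~(i) it is a minimum principle plus the tail argument of Proposition~\ref{lma:H'}(i). For part~(ii) it is the differentiated kernel representation: for $h<z_0$ the $\tilde I$-terms contribute exactly $cp_1K\,h^{p_1-1}$, and the $U_2$-terms must be shown to be negligible. Your constant $l_H=cp_1\bigl[K+(p_1-1)\int_0^\infty E(z)z^{-p_1}\md z\bigr]$ and your positivity argument coincide with the paper's. One minor point: your formula for $\Delta$ omits the term $c\,h^{-p_2}\int_0^h(\tilde I(z)-b_0)z^{p_2-1}\md z$, but it is nonnegative and vanishes for $h\le z_0$, so this is harmless. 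Two steps, however, do not close as written.

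\textbf{Part (i).} First record what your minimum principle actually gives. If $\Delta(h_0)<0$ and $\Delta'(h_1)>0$ for some $h_1>h_0$, then the minimum of $\Delta$ on $[0,h_1]$ is negative and interior, which is impossible. Hence $\Delta'\le0$ on $(h_0,\infty)$, and $L:=\lim_{h\to\infty}H(h)\in[-\infty,b_0/r)$ exists. The two cases must then be treated separately.
\begin{itemize}
\item If $L$ is finite, integrating \eqref{eq:H:tail:identity} and using \eqref{eq:I:tail:divergence} forces $H'\to-\infty$, which is a contradiction.
\item If $L=-\infty$, the integrated identity does not by itself yield a contradiction. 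The information you invoke, that $E\le0$ and $E$ is bounded, is also too weak for the representation. It bounds the $E$-terms below only by $-Ch$, whereas $\tilde I$ may grow sublinearly under Assumption~\ref{assmup:U12}; for example, $U_1'(b)=1/(b\log b)$ gives $\tilde I(h)\sim h/\log h$.
\end{itemize}
The missing observation for $L=-\infty$ is that $H\le0$ on some $[h_1,\infty)$, and $\tilde U_2'=0$ on $(-\infty,0]$, so $E$ is constant there. The two $E$-terms then cancel up to $O(h^{-p_2})$, and the $\tilde I$-term drives $\Delta\to+\infty$.

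\textbf{Part (ii).} The a priori bound on $\varphi$ is the bulk of the paper's argument, and you leave it open.

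Your second option fails. On $[\epsilon,h_0]$, the contribution of $(0,\epsilon)$ to the last term of $\varphi$ is only bounded by $C\sup_{(0,\epsilon)}|H'|\,\epsilon^{2-p_1}$. Knowing $H'(0)=0$ does not control this when $p_1>2$, and in the paper's calibration $p_1\approx5.6$.

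Your first option works, and it makes the absorption/Gr\"onwall step unnecessary. Suppose $|H'(z)|\le Az^{\beta}$ on $(0,h_0]$ with $h_0<z_0$ and $\beta<p_1-2$. Then your formula for $\varphi$ yields $|H'(h)|\le A'h^{\beta+1}$. Start from $\beta=0$, using boundedness of $H'$ near $0$, and iterate finitely many times, with a logarithmic loss if $p_1$ is an integer. You reach some $\beta>p_1-2$, and then the same formula gives $|H'(h)|\le Ch^{p_1-1}$ directly. This is what the paper's induction over integer exponents, followed by its $\varepsilon$-step, achieves; there it is phrased with $o(\cdot)$-bounds and l'H\^opital's rule on the non-integrated form of $H'$.
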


\begin{proof}
\textit{Part~\textup{(i)}.}
 Suppose, to the contrary, that there exists $h_0\in(0,\infty)$ such that $H(h_0)<b_0/r$. We first claim that
\[
H'(h)\leq0,\qquad h>h_0.
\]
If not, there exists $h_1>h_0$ such that $H'(h_1)>0$. The minimum of $H$ on $[0,h_1]$ cannot be attained at either endpoint and must therefore be attained at some $h_m\in(0,h_1)$. Since $h_m$ is a minimum point, $H(h_m)\leq H(h_0)<b_0/r$.
On the other hand, $H'(h_m)=0$ and $H''(h_m)\geq0$. Evaluating ODE~\eqref{eq:Hnew:case2:extended} at $h_m$ gives
\[
rH(h_m)=\tilde I(h_m)+\frac{1}{2}\theta^2h_m^2H''(h_m)
\geq\tilde I(h_m)\geq b_0,
\]
where the last inequality uses the fact $\tilde I\geq b_0$ by the definition of $\tilde{I}$ in \eqref{eq:tildeI}.
Hence, $H(h_m)\geq b_0/r$, contradicting the immediately  strict inequality up above. This proves the claim. The tail argument in the proof of Proposition~\ref{lma:H'}(i), applied here with \eqref{eq:Hdef} in place of \eqref{eq:H:G:representation}, rules out this claim. Therefore, $H(h)\geq b_0/r$ for all $h\geq0$.

\medskip
\noindent\textit{Part~\textup{(ii)}.}
 By part~\textup{(i)}, $H(h)\geq H(0)=b_0/r$, and hence
$\tilde U_2(H(h))=U_2(H(h))$ for all $h\geq0$. Differentiating
\eqref{eq:Hdef} and telescoping with the term $U_2(H(0))$, we obtain
  \begin{align}\label{eq:h':new}
        H'(h)&=\frac{2}{\theta^2(p_1+p_2)}\bigg\{p_1\int_h^\infty\left[\tilde{I}(z)+(p_1-1)\left({U}_2(H(z))-U_2(H(0))\right)z\right]z^{-p_1-1}\mathrm{d}z \, h^{p_1-1}\nonumber\\
   &\quad\quad\quad\quad\quad\quad\quad -p_2\int_0^h\left[\tilde{I}(z)-(p_2+1)\left({U}_2(H(z))-U_2(H(0))\right)z\right]z^{p_2-1}\mathrm{d}z \, h^{-p_2-1}\nonumber\\
   &\quad\quad\quad\quad\quad\quad\quad-(p_1+p_2)\left({U}_2(H(h))-{U}_2(H(0))\right)\bigg\}.
\end{align}
By combining the two $\tilde I$-terms in \eqref{eq:h':new} and using $\tilde I(z)=b_0$ for $z\in(0,z_0)$, a direct calculation yields
\begin{align}\label{eq:lim:i}
    \lim_{h\downarrow 0}\frac{p_1\int_h^\infty\tilde{I}(z)z^{-p_1-1}\mathrm{d}z \, h^{p_1-1} \!-\!p_2\int_0^h\tilde{I}(z)z^{p_2-1}\mathrm{d}z \, h^{-p_2-1}}{h^{p_1-1}}\!=\!p_1\int_{z_0}^\infty \tilde{I}(z) z^{-p_1-1}\mathrm{d}z\!-\!b_0z_0^{-p_1}.
\end{align}
For the remaining terms, let $k\geq1$ be the unique integer such that $k<p_1\leq k+1$. We first proceed by induction to show that
\begin{align}\label{eq:induction:l}
    \lim_{h\downarrow0}\frac{H'(h)}{h^{m}}=0 ,
\end{align}
for any $ m=0,\cdots, k-1$.
First, because $H'(0)=0$, \eqref{eq:induction:l} holds for $m=0$. Suppose that \eqref{eq:induction:l} holds for some $0\leq m=l\le k-2$. We show that \eqref{eq:induction:l}  also holds for   $m=l+1$.  
Applying l'H\^opital's rule and using
$\lim\limits_{h\downarrow0}\frac{H'(h)}{h^{l}}=0$, we obtain
\begin{align}\label{eq:limit:part1}
    &\lim_{h\downarrow 0} \frac{{U}_2(H(h))-{U}_2(H(0))}{h^{l+1}}=\lim_{h\downarrow 0}\frac{U_2'(H(h))H'(h)}{(l+1)h^{l}}=0,\\
    &\lim_{h\downarrow 0} \frac{\int_0^h\left({U}_2(H(z))-U_2(H(0))\right)z^{p_2}\mathrm{d}z \, h^{-p_2-1}}{h^{l+1}}
    =\lim_{h\downarrow 0} \frac{\int_0^h \left({U}_2(H(z))-U_2(H(0))\right)z^{p_2}\mathrm{d}z}{h^{l+p_2+2}}\nonumber\\
    =&\lim_{h\downarrow 0} \frac{\left({U}_2(H(h))-U_2(H(0))\right)h^{p_2}}{(l+p_2+2)h^{l+p_2+1}}
    =\lim_{h\downarrow 0} \frac{{U}_2(H(h))-U_2(H(0))}{(l+p_2+2)h^{l+1}}
    =0,\label{eq:limit:part2}\\
     &\lim_{h\downarrow 0} \frac{\int_h^\infty\left({U}_2(H(z))-U_2(H(0))\right)z^{-p_1}\mathrm{d}z \, h^{p_1-1}}{h^{l+1}}
    =\lim_{h\downarrow 0} \frac{\int_h^\infty \left({U}_2(H(z))-U_2(H(0))\right)z^{-p_1}\mathrm{d}z}{h^{l+2-p_1}}\nonumber\\
    =&\lim_{h\downarrow 0} \frac{\left({U}_2(H(h))-U_2(H(0))\right)h^{-p_1}}{(p_1-l-2)h^{l+1-p_1}}
    =\lim_{h\downarrow 0} \frac{{U}_2(H(h))-U_2(H(0))}{(p_1-l-2)h^{l+1}}
    =0,\label{eq:limit:part3}
\end{align}
where \eqref{eq:limit:part3} follows from the fact that $l+2-p_1\leq k-p_1<0$.
Substituting \eqref{eq:lim:i},
\eqref{eq:limit:part1}--\eqref{eq:limit:part3} into \eqref{eq:h':new}, and using
$l+1<p_1-1$, we obtain
\[
\lim_{h\downarrow0}\frac{H'(h)}{h^{l+1}}=0.
\]
 Therefore, \eqref{eq:induction:l} holds for $m=l+1$. By induction, we conclude that  $\lim\limits_{h\downarrow0}\frac{H'(h)}{h^{k-1}}=0$. 

 We next extend \eqref{eq:induction:l} beyond the integer exponents.
Fix $\varepsilon\in(0,1)$. Since
$p_1-3+\varepsilon<p_1-2\leq k-1$, the calculations leading to
\eqref{eq:limit:part1}--\eqref{eq:limit:part3} remain valid with
$l=p_1-2$ for the first two limits and
$l=p_1-3+\varepsilon$ for the third. Hence,
\begin{align*}
&\lim_{h\downarrow0}
\frac{U_2(H(h))-U_2(H(0))}{h^{p_1-1}}=0,\quad \lim_{h\downarrow0}
\frac{\displaystyle
\int_0^h\bigl(U_2(H(z))-U_2(H(0))\bigr)z^{p_2}\mathrm{d}z\,
h^{-p_2-1}}
{h^{p_1-1}}=0,\\
&\lim_{h\downarrow0}
\frac{\displaystyle
\int_h^\infty\bigl(U_2(H(z))-U_2(H(0))\bigr)z^{-p_1}\mathrm{d}z\,
h^{p_1-1}}
{h^{p_1-2+\varepsilon}}=0.
\end{align*}
 Since $\varepsilon<1$, substituting these three estimates into \eqref{eq:h':new} yields
$\lim_{h\downarrow0}\frac{H'(h)}{h^{p_1-2+\varepsilon}}=0$.
Applying \eqref{eq:limit:part1} with $l=p_1-2+\varepsilon$ then gives
\begin{align}\label{eq:limit:u2:epsilon}
\lim_{h\downarrow0}
\frac{U_2(H(h))-U_2(H(0))}{h^{p_1-1+\varepsilon}}=0.
\end{align}
Consequently, \eqref{eq:h':new} yields
\begin{align*} 
    \lim_{h\downarrow 0} \frac{H'(h)}{h^{p_1-1}}=~&\frac{2}{\theta^2(p_1+p_2)}\left[p_1\int_{z_0}^\infty \tilde{I}(z) z^{-p_1-1}\mathrm{d}z-b_0z_0^{-p_1}\right.\\
   &\left.+p_1(p_1-1)\int_0^\infty(U_2(H(z))-U_2(H(0)))z^{-p_1}\md z\right]=:l_H\in(0,\infty).
\end{align*}
Here, the finiteness of $l_H$ follows from the fact that 
 $\bigl(U_2(H(z))-U_2(H(0))\bigr)z^{-p_1}
=o(z^{-1+\varepsilon})\quad\text{as }z\downarrow0$ by \eqref{eq:limit:u2:epsilon}, while
$l_H>0$ follows from $\tilde I(z)>b_0$ for all $z>z_0$ and
$H(z)\geq b_0/r=H(0)$, together with the monotonicity of $U_2$.
\end{proof}

 By Proposition~\ref{prop:Hnew:boundary}(i),
$H(h)\geq b_0/r$ for all $h\geq0$. Hence,
$\tilde U_2'(H(h))=U_2'(H(h))$, and the solution $H$ of the extended
problem~\eqref{eq:Hnew:case2:extended} is therefore a solution of the
original ODE~\eqref{eq:extension:u_222new}. It remains only to establish
the global estimates~\eqref{eq:Hnew:lower-bound} and
\eqref{eq:Hnew:derivative-bound}, which is the purpose of the following
proposition.

\begin{proposition}\label{prop:Hnew:global}
The auxiliary function $H$ satisfies
\begin{align*}
\textup{(i)}\quad&H'(h)>0;
\qquad h\in(0,\infty);\\
\textup{(ii)}\quad&H(h)\geq
\frac{2}{\theta^2p_1(p_1+p_2)}I\left(\frac1h\right),
\qquad h\in(0,\infty);\\
\textup{(iii)}\quad&\frac{hH'(h)}{H(h)}\leq p_1,
\qquad h\in(0,\infty).
\end{align*}
\end{proposition}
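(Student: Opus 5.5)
For part (i) we mirror the one-turning-point argument of Proposition~\ref{lma:phi} together with the tail argument of Proposition~\ref{lma:H'}(i), now started from the degenerate endpoint $h=0$. Proposition~\ref{prop:Hnew:boundary}(ii) gives $H'(h)\sim l_H h^{p_1-1}$ with $l_H>0$, so $H'>0$ on some interval $(0,\epsilon)$. This replaces the step ``$\phi>y$'' of Proposition~\ref{lma:phi}. We then define $\phi:=\inf\{h>0:H'(h)\le 0\}$. Suppose $\phi<\infty$ and set $\Delta:=rH-\tilde I$.

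We repeat the argument of Proposition~\ref{lma:phi} through the first-order linear equation \eqref{eq:Z:first-order} for $H'$.
\begin{itemize}
\item At $\phi$ we have $\Delta(\phi)\le0$. Since $H(\phi)>H(0)=b_0/r$, this forces $\phi>z_0$.
\item The backward integration excludes $\Delta(\phi)=0$.
\item Because $\Delta$ is strictly decreasing wherever $H'<0$ beyond $z_0$, $H'$ cannot return to zero.
\end{itemize}
Only continuity of $\tilde U_2'(H)$ is used here, not Lipschitz continuity. Hence $H'<0$ on $(\phi,\infty)$ and $L=\lim_{h\to\infty}H(h)$ exists. If $L$ is finite, integrate the identity \eqref{eq:H:tail:identity} and use the divergence \eqref{eq:I:tail:divergence}; this forces $H'\to-\infty$, a contradiction. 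If $L=-\infty$, use the representation \eqref{eq:Hdef}, which plays the role of \eqref{eq:H:G:representation} with $G=0$. The $\tilde U_2$ terms stay bounded or vanish once $H\le0$, while $h^{p_1}\int_h^\infty\tilde I z^{-p_1-1}\md z\ge \tilde I(h)/p_1\to\infty$. Since the $h^{-p_2}$ bracket contains $\tilde I\ge0$ and only a bounded $\tilde U_2$ contribution, this gives $H\to\infty$, again a contradiction. The case $L=-\infty$ is in fact already excluded by Proposition~\ref{prop:Hnew:boundary}(i). Therefore $\phi=\infty$.

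For (ii) and (iii), we first rewrite \eqref{eq:Hdef} in the non-integrated-by-parts form, analogous to \eqref{eq:H:G:y0}. Undo the integration by parts on the $\tilde U_2$ terms, using $H\ge b_0/r$ so that $\tilde U_2=U_2$:
\[
H(h)=\frac{2}{\theta^2(p_1+p_2)}\Big\{\Big[\int_h^\infty\tilde I z^{-p_1-1}\md z+\int_h^\infty U_2'(H)H'z^{1-p_1}\md z\Big]h^{p_1}+\Big[\int_0^h\tilde I z^{p_2-1}\md z+\int_0^h U_2'(H)H'z^{p_2+1}\md z\Big]h^{-p_2}\Big\}.
\]
The boundary terms must vanish. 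At $0$ this follows from $z^{p_2+1}\to0$. At $\infty$ it requires $U_2(H(z))z^{1-p_1}\to0$, which holds because $U_2(H)\lesssim H^\gamma\lesssim z^{\gamma/(1-\gamma)}$ and $p_1>1/(1-\gamma)$. I expect this justification to be the only delicate bookkeeping step, and I would verify it carefully.

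Given the formula, part (i) makes every term nonnegative. Keeping only the $\tilde I$ term on $[h,\infty)$ and using that $\tilde I$ is nondecreasing yields (ii) exactly as in Proposition~\ref{lem:limsup}(ii). For (iii), differentiate the formula. The derivative of the $h^{-p_2}$ bracket is nonpositive and the boundary terms cancel, so $H'(h)\le p_1 h^{-1}\cdot$(first bracket term)$\le p_1H(h)/h$, as in Proposition~\ref{lem:limsup}(iii).
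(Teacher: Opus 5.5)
Your proposal is correct and follows essentially the same route as the paper. For part (i) you use the one-turning-point sign argument of Proposition~\ref{lma:phi}, started from the positivity of $H'$ near $0$ given by Proposition~\ref{prop:Hnew:boundary}(ii) and closed off by the tail argument of Proposition~\ref{lma:H'}(i); for parts (ii)--(iii) you use the arguments of Proposition~\ref{lem:limsup}(ii)--(iii). Two steps the paper leaves implicit are made explicit in your version: undoing the integration by parts in \eqref{eq:Hdef}, with the boundary terms controlled via $p_1>\frac{1}{1-\gamma}$, and observing that $L=-\infty$ is already excluded by $H\geq b_0/r$. In that redundant branch, note that it is the sum of the two $\tilde U_2$ terms that stays bounded; the $h^{-p_2}$ term alone grows like $-\tilde U_2(0)h$.
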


\begin{proof}
\textit{Part~\textup{(i)}.} By
Proposition~\ref{prop:Hnew:boundary}(ii), $H'(h)>0$ for all
sufficiently small $h>0$. Since $\tilde I(h)=b_0$ near $h=0$ and
$H(0)=b_0/r$, we also have $rH(h)>\tilde I(h)$ there. The sign-change
argument in the proof of Proposition~\ref{lma:phi}, applied here to
\eqref{eq:Hnew:case2:extended},  then shows that
either $H'(h)>0$ for all $h>0$, or $H'$ becomes negative and remains
negative thereafter. The latter possibility is ruled out by the tail
argument in the proof of Proposition~\ref{lma:H'}(i), applied with
\eqref{eq:Hdef} in place of \eqref{eq:H:G:representation}. Therefore,
$H'(h)>0$ for all $h>0$. 

\medskip
\noindent\textit{Parts~\textup{(ii)}--\textup{(iii)}.}
The arguments used to prove
Proposition~\ref{lem:limsup}(ii)--(iii) then yield the remaining
estimates directly from \eqref{eq:Hdef}.

\end{proof}

Overall, Propositions~\ref{prop:Hnew:existence}, \ref{prop:Hnew:boundary}, and~\ref{prop:Hnew:global} together establish that the resulting auxiliary function $H$ solves \eqref{eq:extension:u_222new} and satisfies all the properties stated in Theorem~\ref{thm:Hnew:case2}. Therefore, the proof of Theorem~\ref{thm:Hnew:case2} is complete.

\subsection{Reconstruction and Verification of the Value Function}
We now invert the transform (Eqs.~\eqref{eq:f}--\eqref{eq:H:first}) introduced in Section~\ref{sec:transform}. The following theorem establishes the regularity, boundary behavior, and growth properties required in Theorem \ref{thm:verification2}.

\begin{theorem}\label{thm:vnew:case2}
   Under Assumptions \ref{assump:parameter}, \ref{assmup:U12}, and \ref{assump:holder:u2}, suppose further that $\alpha x_0\leq \frac{b_0}{r}$. Then the following ODE admits a solution in $C([\underline{x},\infty))\cap C^3((\underline{x},\infty))$:
    \begin{align}\label{eq:ode:final:case2}
    \left\{\begin{aligned}
		&\delta \cdot v(x)=D(v'(x))+ U_2(x)
		-\frac{1}{2}\theta^2\frac{\left(v'(x)\right)^2}{v''(x)}+rxv'(x),  \quad x\in(\underline{x},\infty),\\
        &v(\underline{x})=V(\underline{x}),\quad\lim\limits_{x\downarrow \underline{x}}v'(x)=+\infty,\quad \lim\limits_{x\downarrow \underline{x}}v''(x)=-\infty.
        \end{aligned}
        \right.
	\end{align}
     Moreover, there exist constants $C_{v,6}>0$, $C_{v,7}>0$,
    and $C_{v,8}>0$ such that
    \begin{equation}\label{eq:property:v:case2}
    \begin{aligned}
    &(i)\quad\lim_{x\downarrow \underline{x}}
    \frac{v'(x)}{(x-\underline{x})^{-\frac{1}{p_1}}}
    =\left(\frac{l_H}{p_1}\right)^{\frac{1}{p_1}},\qquad
    \limsup_{x\to\infty}\frac{v'(x)}{x^{\gamma-1}}<\infty;\\
    &(ii)\quad v(x)\leq C_{v,6}x^\gamma
    \quad\text{for all }\,\, x\geq\underline{x};\\
    &(iii)\quad v'(x)>0\quad\text{and}\quad
    I(v'(x))\leq C_{v,7}x
    \quad \text{for all }\,\, x>\underline{x};\\
    &(iv)\quad v''(x)<0\quad\text{and}\quad 
    -\frac{v'(x)}{xv''(x)}\leq C_{v,8} 
    \quad\text{for all } \,\,x>\underline{x}.
    \end{aligned}
    \end{equation}
\end{theorem}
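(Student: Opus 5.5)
We invert the transform of Section~\ref{sec:transform}, starting from the auxiliary function $H$ of Theorem~\ref{thm:Hnew:case2}. Since $H\in C^1([0,\infty))\cap C^2((0,\infty))$ with $H'>0$ on $(0,\infty)$, $H(0)=b_0/r=\underline{x}$ and, by the upper growth condition together with the lower bound \eqref{eq:Hnew:lower-bound} and $I(\infty)=\infty$, $H(h)\to\infty$ as $h\to\infty$, the function $H$ is a $C^2$ bijection from $(0,\infty)$ onto $(\underline{x},\infty)$. We set $F(z):=H(1/z)$, take $f:=F^{-1}:(\underline{x},\infty)\to(0,\infty)$, which is strictly decreasing, and then define
\[
v(x):=\frac1\delta\Bigl[D(f(x))+U_2(x)-\tfrac12\theta^2\frac{f(x)^2}{f'(x)}+rxf(x)\Bigr],\qquad x>\underline{x}.
\]
Reversing the differentiation carried out in Section~\ref{sec:transform}, and using the ODE \eqref{eq:extension:u_222new} rewritten for $F$ as in \eqref{eq:F}, a direct computation gives $v'=f$. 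Hence $v$ solves \eqref{eq:ode:final:case2} in the interior, with $v'=f>0$ and $v''=f'<0$. For the $C^3$ regularity, $f'=1/F'(f)$ is $C^1$ because $F\in C^2$.

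\textbf{Boundary behaviour.} By Proposition~\ref{prop:Hnew:boundary}(ii), $H'(h)\sim l_H h^{p_1-1}$, so $H(h)-\underline{x}\sim \frac{l_H}{p_1}h^{p_1}$ as $h\downarrow0$. Writing $x=H(h)$ with $h=1/v'(x)$, this becomes $v'(x)\sim (l_H/p_1)^{1/p_1}(x-\underline{x})^{-1/p_1}$, which gives the first limit in (i) and $v'(\underline{x}+)=\infty$. Moreover,
\[
v''(x)=\frac{1}{F'(f)}=-\frac{f^2}{H'(1/f)}\sim -\frac{f^{2}}{l_H f^{1-p_1}}=-\frac{f^{p_1+1}}{l_H}\to-\infty.
\]
For the value $v(\underline{x})$, we note that $D(f)\to U_1(b_0)-b_0 f$ for $f\ge U_1'(b_0)$. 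The singular terms are
\[
-b_0 f+rxf-\tfrac12\theta^2 f^2/f'=r(x-\underline{x})f+\tfrac12\theta^2 f^{2}\,F'(f)\cdot(\text{sign}).
\]
Using $F'(z)=-z^{-2}H'(1/z)$, we get $-f^2/f'=-f^2F'(f)=H'(1/f)$. Since $H'(1/f)\sim l_H f^{1-p_1}\to0$ and $(x-\underline{x})f\sim \frac{l_H}{p_1}f^{1-p_1}\to0$ (because $p_1>1$), it follows that $\lim_{x\downarrow\underline{x}}v(x)=\frac1\delta[U_1(b_0)+U_2(\underline{x})]=V(\underline{x})$ by \eqref{eq:V:underline:x}. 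We then extend $v$ continuously to $\underline{x}$. I expect this boundary limit, in particular the exact cancellation between $b_0f$ and $r\underline{x}f$, to be the delicate step; it hinges on the sharp asymptotics of Proposition~\ref{prop:Hnew:boundary}(ii).

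\textbf{Growth estimates.} These follow by translating the bounds on $H$. From $H(h)\le C h^{1/(1-\gamma)}$ we get $x=H(1/v')\le C(v')^{-1/(1-\gamma)}$, i.e.\ $v'(x)\le C x^{\gamma-1}$, which is the second part of (i). From \eqref{eq:Hnew:lower-bound} we get $I(v'(x))\le \frac{\theta^2p_1(p_1+p_2)}{2}x$, which is (iii). Next,
\[
-\frac{v'}{xv''}=\frac{hH'(h)}{H(h)}\le p_1\qquad (h=1/v'(x))
\]
by \eqref{eq:Hnew:derivative-bound}, which is (iv). For (ii), every term in the definition of $v$ is $O(x^\gamma)$: $U_2\le C x^\gamma$, $xv'\le Cx^\gamma$, $-(v')^2/v''\le p_1 x v'$, and $D(v')\le U_1(I(v'))\le C(Cx)^\gamma$. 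This yields $v\le C_{v,6}x^\gamma$ for $x\ge\underline{x}$, with the behaviour near $\underline{x}$ already controlled by continuity.
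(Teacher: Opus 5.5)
Your argument is correct, but you organize the reconstruction differently from the paper. The paper sets $v(x):=V(\underline{x})+\int_{\underline{x}}^x f(z)\,\md z$. This first requires integrability of $f$ at $\underline{x}$, which follows from $f(x)\sim (l_H/p_1)^{1/p_1}(x-\underline{x})^{-1/p_1}$ and $p_1>1$. With that definition $v'=f$ holds by construction, and the HJB equation is recovered by integrating the differentiated equation \eqref{eq:f} over $[\underline{x}+\epsilon,x]$ and letting $\epsilon\downarrow0$. You instead define $v$ as $\delta^{-1}$ times the right-hand side of the HJB equation. Then the equation holds by construction, and $v'=f$ follows by differentiating and invoking \eqref{eq:f}; this computation is valid.

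Both routes rest on the same three boundary facts: $[v']^2/v''=-H'(1/v')\to0$; $(x-\underline{x})v'(x)=(H(h)-H(0))/h\to0$ with $h=1/v'(x)$; and $D(v')=U_1(b_0)-r\underline{x}\,v'$ near $\underline{x}$. In your version these yield $v(\underline{x}+)=V(\underline{x})$ directly, with no additive constant to choose and no integrability needed in advance. The paper's version instead gets $v\le C_{v,6}x^\gamma$ immediately by integrating the bound on $v'$. You must estimate each term of the HJB expression separately, and your estimates do the job: $D(v')\le U_1(I(v'))$, $-(v')^2/v''\le p_1xv'$, and continuity on compact sets.

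One small correction: $H(h)\le Ch^{1/(1-\gamma)}$ does not hold globally, since $H(0)=b_0/r>0$. You only have it for large $h$, or in the form $C(1+h^{1/(1-\gamma)})$ from the fixed-point class. Consequently $v'(x)\le Cx^{\gamma-1}$ and $xv'(x)\le Cx^\gamma$ hold only for large $x$. That is all (i) requires, and it matches how you already split the argument for (ii).
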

 The proof of Theorem~\ref{thm:vnew:case2} is provided in Appendix~\ref{appendix:sec:case2:2}.

 \begin{remark}\label{rmk:case2:properties}
The properties in \eqref{eq:property:v:case2} are consistent with those in \eqref{eq:property:v}, except for the first relation, which gives the boundary asymptotic
$v'(x)\sim (l_H/p_1)^{1/p_1}(x-\underline{x})^{-1/p_1}$ as $x\downarrow\underline{x}$. Thus, unlike the finite boundary slope $v'(\underline{x})=A$ in the nondegenerate case: $\alpha x_0>\frac{b_0}{r}$, $v'$ diverges at the lower boundary in the degenerate case. Moreover, in the
unconstrained CRRA benchmark of Remark~\ref{rmk:unconstraint},
$V_{un}'(x)=\bar{b}x^{\gamma-1}$ and hence is of order
$x^{\gamma-1}$ as $x\downarrow0$. Since
$-1/p_1>\gamma-1$, the boundary singularity of $v'$ at
$\underline{x}$ is weaker than that of the CRRA benchmark at zero.
\end{remark}

The growth and regularity properties established in Theorem \ref{thm:vnew:case2} ensure that the candidate feedback functions $(b^*,\pi^*)$ in \eqref{eq:b^*} are well-defined. In the degenerate boundary case, the diffusion and drift coefficients of the candidate state equation are locally Lipschitz. As in the nondegenerate case, a truncation argument and the linear growth estimates yield a unique global strong solution.

\begin{theorem}\label{thm:strong:case2}
 Under Assumptions \ref{assump:parameter}, \ref{assmup:U12} and \ref{assump:holder:u2},  suppose further that $\alpha x_0\le \frac{b_0}{r}$, let $v$ be the function obtained in Theorem \ref{thm:vnew:case2}. Then, for all $x\geq \underline{x}$, SDE \eqref{eq:verify:wealth} admits a unique strong solution satisfying $X^{x;*}\geq \underline{x}$.
\end{theorem}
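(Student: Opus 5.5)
We prove Theorem~\ref{thm:strong:case2} by writing the feedback coefficients of \eqref{eq:verify:wealth} explicitly in terms of $v$ from Theorem~\ref{thm:vnew:case2}. Set
\[
\beta(x):=rx+\sigma\theta\pi^*(x)x-b^*(x),\qquad \Sigma(x):=\sigma\pi^*(x)x .
\]
On $(\underline{x},\infty)$ these are $\beta(x)=rx-\theta^2\frac{v'(x)}{v''(x)}-I(v'(x))$ and $\Sigma(x)=-\theta\frac{v'(x)}{v''(x)}$. For $x\le\underline{x}$ they are $\beta(x)=rx-I(v'(\underline{x}))$ and $\Sigma(x)=0$.

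In the degenerate case $v'(\underline{x}+)=\infty$. Since $I=b_0$ on $(U_1'(b_0),\infty)$, we get $I(v'(x))=b_0$ near $\underline{x}$. Hence $b^*\equiv b_0$ on a neighborhood of $\underline{x}$, both from the left and the right, with the convention in \eqref{eq:b^*} read as $b^*(x)=b_0$ for $x\le\underline{x}$.

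The first step is to establish local Lipschitz continuity of $\beta$ and $\Sigma$ on all of $\mathbb{R}$.

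\emph{Away from the boundary.} On $(\underline{x},\infty)$ we have $v\in C^3$ and $v''<0$. So $v'/v''$ is $C^1$, and $I(v')$ is locally Lipschitz because $(U_1')^{-1}$ is $C^1$ (as $U_1\in C^3$ with $U_1''<0$) and $I$ is continuous and piecewise $C^1$.

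\emph{At the boundary.} Here we use the transform. With $f=v'$ and $F=f^{-1}$, $H(h)=F(1/h)$, the substitution $h=1/v'(x)$ gives
\[
-\frac{v'(x)}{v''(x)}=-f(x)F'(f(x))=hH'(h)\Big|_{h=1/v'(x)}.
\]
As $x\downarrow\underline{x}$ we have $h\downarrow0$, and \eqref{eq:Hnew:boundary-asymptotics} gives $hH'(h)\sim l_Hh^{p_1}\to0$. So $\Sigma(\underline{x}+)=0$ and $\beta(\underline{x}+)=r\underline{x}-b_0=0$, which makes $\beta$ and $\Sigma$ continuous at $\underline{x}$.

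For Lipschitz continuity at the boundary, write
\[
\frac{\mathrm{d}}{\mathrm{d}x}\bigl(hH'(h)\bigr)=\bigl(H'(h)+hH''(h)\bigr)\frac{\mathrm{d}h}{\mathrm{d}x},
\qquad \frac{\mathrm{d}h}{\mathrm{d}x}=\frac{1}{H'(h)}.
\]
This derivative therefore equals $1+hH''(h)/H'(h)$. I would bound $hH''/H'$ near $0$ by using the ODE \eqref{eq:extension:u_222new} to solve for $h^2H''$:
\[
\frac{hH''}{H'}=\frac{2}{\theta^2}\cdot\frac{-(r-\delta)hH'-U_2'(H)h^2H'+rH-b_0}{hH'}.
\]
The first two terms are bounded. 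The remaining term is $(rH-b_0)/(hH')$. By L'H\^opital together with $H'\sim l_Hh^{p_1-1}$, we get $rH-b_0\sim\frac{rl_H}{p_1}h^{p_1}$ while $hH'\sim l_Hh^{p_1}$, so the ratio is bounded.

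This is the main obstacle: it needs the precise asymptotics \eqref{eq:Hnew:boundary-asymptotics} and not merely $H'>0$. If L'H\^opital for $H''$ is delicate, I would instead bound the one-sided difference quotients of $\Sigma$ directly, using $\Sigma(x)/(x-\underline{x})\to$ const. This follows from $x-\underline{x}=H(h)-H(0)\sim\frac{l_H}{p_1}h^{p_1}$, combined with the bounded derivative on $(\underline{x},\underline{x}+\epsilon)$.

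Given local Lipschitz coefficients, standard theory yields a unique strong solution up to explosion, via truncation $\beta_n,\Sigma_n$ that agree with $\beta,\Sigma$ on $[-n,n]$, and pathwise uniqueness. Non-explosion follows from linear growth: $I(v'(x))\le C_{v,7}x$ and $-v'/(xv'')\le C_{v,8}$ from Theorem~\ref{thm:vnew:case2}, plus a standard moment and Gronwall estimate.

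It remains to show $X^{x;*}\ge\underline{x}$. Note that $\beta(\underline{x})=0$ and $\Sigma(\underline{x})=0$, so $X\equiv\underline{x}$ solves the SDE from $\underline{x}$. By uniqueness, and the strong Markov property at $\tau=\inf\{s:X(s)=\underline{x}\}$, the solution started from $x\ge\underline{x}$ stays at $\underline{x}$ once it hits it, since $\underline{x}$ is an equilibrium of a locally Lipschitz SDE. Continuity of paths then excludes crossing below $\underline{x}$. This also shows that the $x\le\underline{x}$ branch of the coefficients is irrelevant.
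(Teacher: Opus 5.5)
Your proof is correct and follows essentially the paper's route: the drift is handled through $I(v')\equiv b_0$ near $\underline{x}$, and the diffusion is shown to be Lipschitz at $\underline{x}$ by bounding its derivative with the ODE and the asymptotics $H'(h)\sim l_H h^{p_1-1}$, followed by truncation and non-explosion from linear growth. You do the derivative computation in the $h$-variable, $\frac{\mathrm{d}}{\mathrm{d}x}\bigl(hH'(h)\bigr)=1+hH''/H'$, while the paper does the identical computation in $x$ via $g/v'$ and the limit $v'/((x-\underline{x})v'')\to -p_1$. Your equilibrium/pathwise-uniqueness argument for $X^{x;*}\ge\underline{x}$ is a welcome explicit detail, because here the drift vanishes at $\underline{x}$, so the case-I reasoning ($b_n(\underline{x})>0$) that the paper refers back to does not literally apply.
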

The proof of Theorem \ref{thm:strong:case2}  is provided in Appendix \ref{appendix:sec:case2:3}.

Similarly, combining Theorem \ref{thm:vnew:case2} with Theorem \ref{thm:strong:case2}, the verification argument applies, and $v$ is identified as the value function in the case $\alpha x_0\le \frac{b_0}{r}$. This case corresponds to Scenario~3 in the numerical experiments
presented in Section~\ref{sec:numerical}.

\section{Numerical Experiments}\label{sec:numerical}
In this section, we present numerical experiments to validate the theoretical results derived above and to analyze the impact of key parameters on the optimal strategy. The parameter assignments in the model are based on reasonable economic assumptions and financial practices.

First, for the financial-market parameters, we use average interest rates observed in July 2026. We set the standard risk-free interest rate to \(r_0=0.037\ (3.7\%)\), based on the average continuously compounded yield on 4-week Treasury bills, representing a short-term and liquid investment available to individual investors. For the fund's long-term riskless return, we use the Federal Reserve's estimated 30-year zero-coupon Treasury yield and set \(r_1=0.053\ (5.3\%)\), reflecting the fund's longer investment horizon and lower liquidity needs; see \cite*{us_treasury_daily_bill_rates_2026} and \cite*{fed_nominal_yield_curve_2026}. The running management fee is set to \(\kappa_1=0.004\ (0.4\%)\), based on the expense-ratio data reported by the Investment Company Institute \cite*{ici_expense_ratios}. Consequently, the retiree's net interest rate within the fund is \(r=r_1-\kappa_1=0.049\ (4.9\%)\), which remains above \(r_0\). We set the terminal management fee to \(\kappa_2=0\), consistent with the absence of a redemption fee for the Vanguard Target Retirement Income Fund \cite*{vanguard_vtinx_prospectus_2026}.
For the risky asset, we use unhedged global equities as a diversified investment benchmark. Rounding the figures reported in PGIM's Q1 2026 ten-year capital-market assumptions, we set the expected return and volatility to \(\mu= 0.08\ (8\%)\) and \(\sigma=0.2\ (20\%)\), respectively \cite*[p.~12]{pgim_cma_2026}.

Second, for retiree characteristics, we calibrate the constant mortality rate $\lambda$ by matching the expected remaining lifetime $1/\lambda$ to the U.S. life-table estimate of $23.6$ years at age 60 \cite*[Table~A]{nchs_life_expectancy}, giving $\lambda=1/23.6\approx0.042$; see also Figure~\ref{fig:makeham-time-constant-six} in Section~\ref{sec:conclusion} for a comparison with the Makeham law.
The common discount rate is set to \(\rho=0.04\), corresponding to an annual discount factor of approximately \(0.96\), consistent with structural estimates in the life-cycle literature \cite*{gourinchas_parker_2002,laibson_discount_functions_2026}.

Third, we assume that the retiree and the fund manager have the following utility functions:
\[
U_{c,1}(x) = U_{c,2}(x) = \frac{x^{\gamma_1}}{\gamma_1}, \quad 
U_{m,1}(x) = \frac{x^{\gamma_2}}{\gamma_2},
\]
where $\gamma_1 \le \gamma_2$, so that the retiree is weakly more risk-averse than the fund manager. Under this specification, $1-\gamma_i$ is the coefficient of relative risk aversion. Motivated by empirical evidence on moderate risk aversion \cite*{andersen_risk_time_2008,barseghyan_risk_preferences_2018}, we set $(\gamma_1,\gamma_2)=(0.2,0.2)$ as a baseline. For a wider robustness comparison, Figure~\ref{fig:gamma} considers $(\gamma_1,\gamma_2)=(0.1,0.3)$, $(0.1,0.1)$, and $(0.3,0.3)$. Moreover, since $\kappa_2=0$, the terminal-fee utility contributes only an additive constant to the objective, so we normalize $U_{m,2}\equiv0$ without affecting the optimal strategy.

We measure all monetary quantities in units of USD $100{,}000$, set $x_0=1$, and define the numerical utility functions on this scale. When $\gamma_1=\gamma_2$, this normalization is without loss of generality by homogeneity; see Appendix~\ref{sec:Homogeneity}. When the exponents differ, the monetary unit fixes the relative cardinal scaling, and hence the implicit weight, of the retiree's and the manager's utilities. We therefore interpret the heterogeneous-$\gamma$ comparisons conditional on this fixed normalization.

Finally, the key constraint parameters are set in three scenarios:
\begin{enumerate}
    \item Scenario 1: $(b_0, \alpha) = (0.035, 0.95)$, resulting in an effective lower bound on wealth $\underline{x} = \max\{\alpha x_0, b_0/r\}=0.95$,
    corresponding to Case I ($\alpha x_0 > \frac{b_0}{r}$).
    
    \item Scenario 2: $(b_0, \alpha) = (0.035, 0.8)$, yielding $\underline{x}=0.8$,
    also corresponding to Case I.
    \item Scenario 3: $(b_0, \alpha) = (0.04, 0.8)$, giving $\underline{x}=b_0/r\approx0.816$,
    corresponding to Case II ($\alpha x_0 < \frac{b_0}{r}$).
\end{enumerate}
The effective lower bound $\underline{x}$ combines the prescribed capital floor $\alpha x_0$ with the minimum capital $b_0/r$ required to sustain the benefit guarantee, thereby enforcing solvency and limiting excessive risk-taking.
 Based on these parameter assignments, we can solve for the optimal strategies, the manager's total utility, and obtain the 5th and 20th percentile paths of fund wealth across 10,000 simulated sample paths.

\begin{figure}[h]
    \centering
    \begin{subfigure}[b]{0.49\textwidth}
        \includegraphics[width=\textwidth]{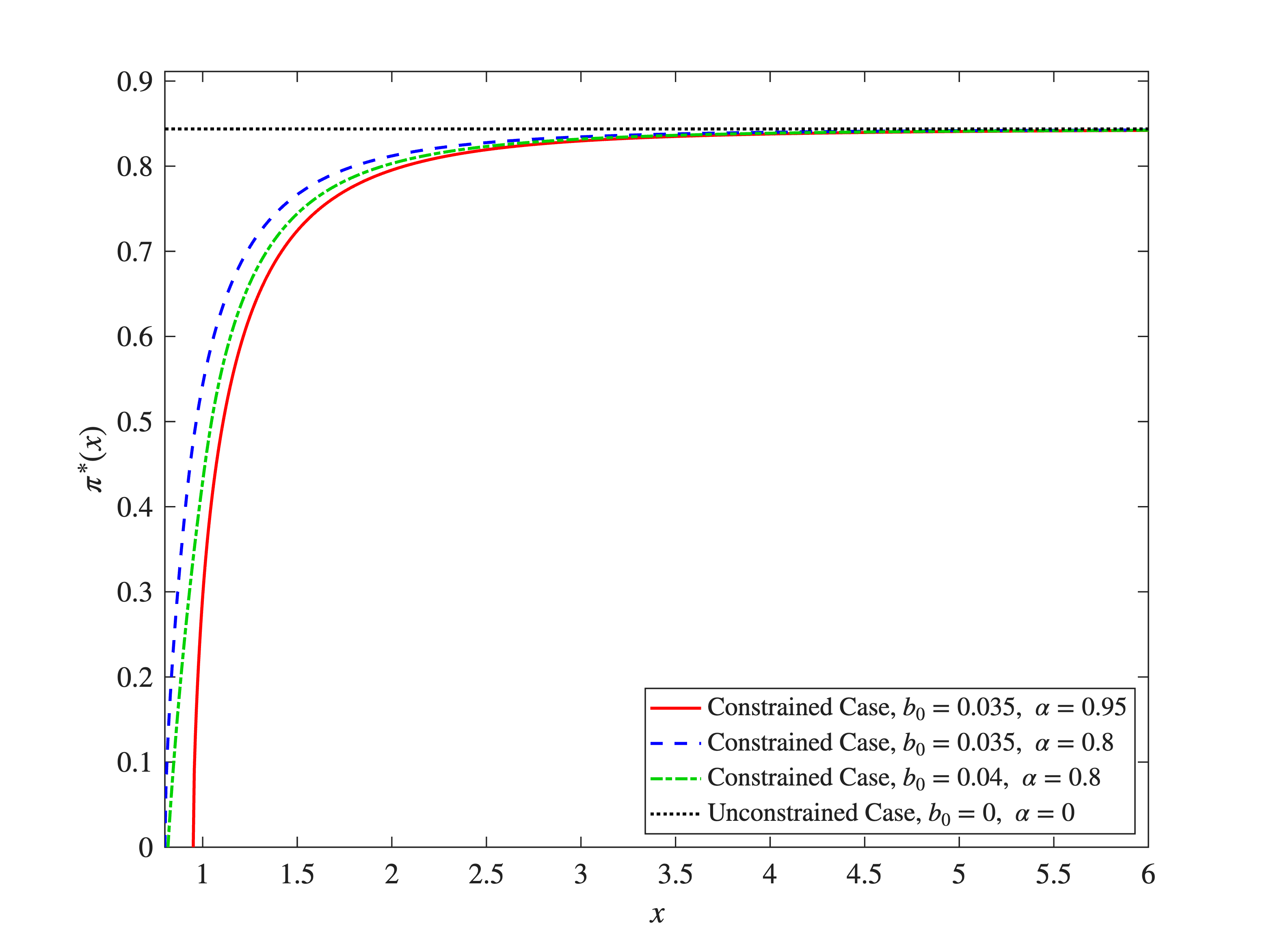}
        \caption{Optimal investment strategy}
    \end{subfigure}
    \hfill 
    \begin{subfigure}[b]{0.49\textwidth}
        \includegraphics[width=\textwidth]{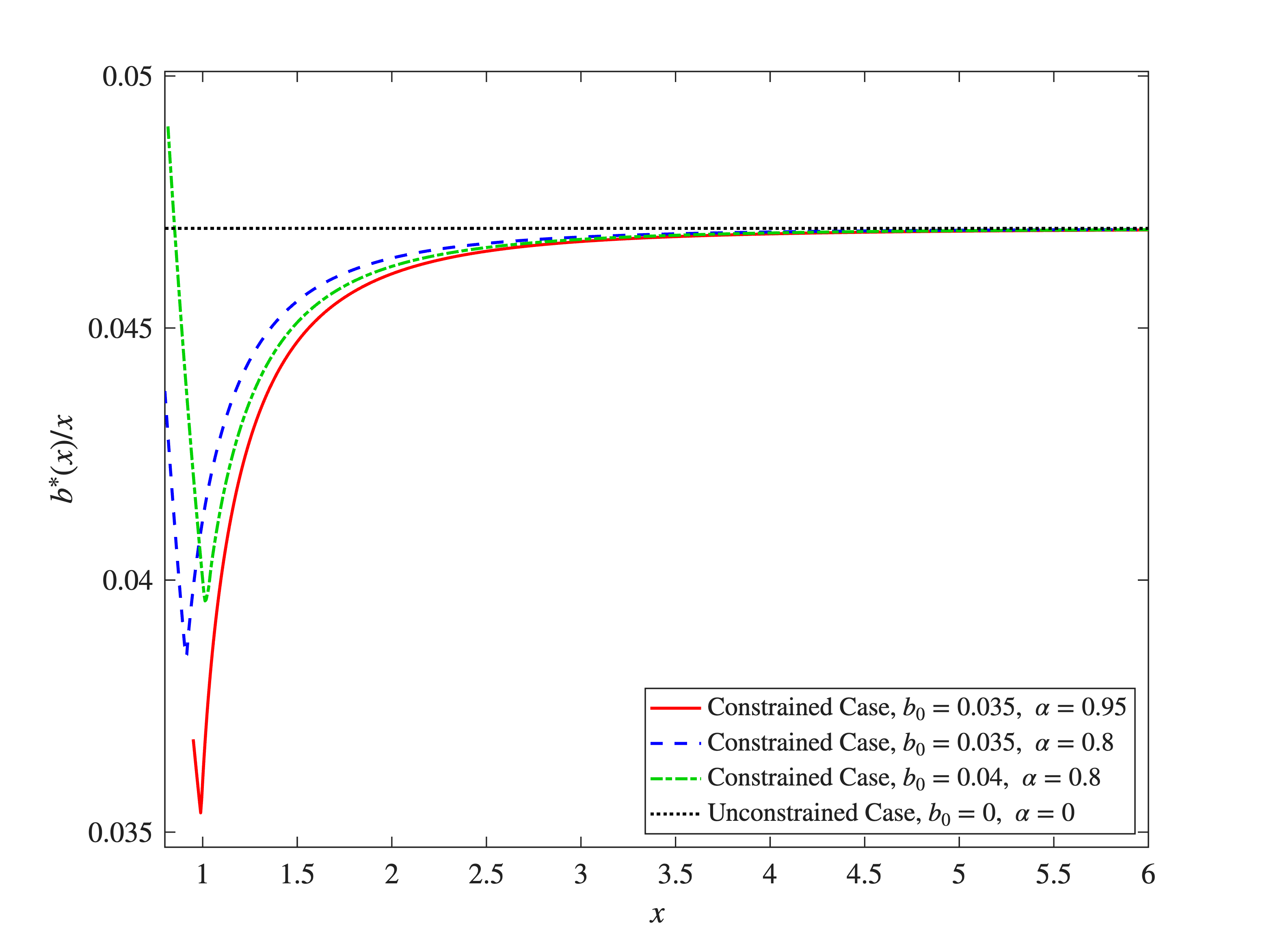}
          \caption{Optimal benefit-to-wealth strategy}
    \end{subfigure}
     \caption{ Optimal strategies. Panel (a) displays the optimal investment strategy, while panel (b) presents the optimal benefit-to-wealth ratio. In both panels, the solid, dashed, dash-dotted, and dotted lines correspond to the parameter sets $(b_0,\alpha)=(0.035,0.95)$, $(0.035,0.8)$, $(0.04,0.8)$, and $(0,0)$ (the unconstrained case, given in \eqref{eq:unconstraint:bpi} in Remark \ref{rmk:unconstraint}), respectively. The corresponding effective lower bounds on wealth, defined by $\underline{x}=\max\{\alpha x_0,b_0/r\}$, are $0.95$, $0.8$, and approximately $0.816$ for the first three cases, respectively, and the curves start from these values accordingly. The utility parameters are set to the base case $(\gamma_1,\gamma_2)=(0.2,0.2)$.}
       \label{fig:investment_benefit}
\end{figure}
Figure \ref{fig:investment_benefit} illustrates the optimal investment strategy and the optimal benefit-to-wealth ratio under different guarantee levels. For the optimal investment strategy in panel (a), all three constrained cases yield lower investment proportions than that under the unconstrained framework, reflecting the risk-limiting effect of the guarantee. Comparing $(b_0,\alpha)=(0.035,0.95)$ and $(0.035,0.8)$ (i.e., the solid and dashed lines), the solid line lies below the dashed line. This is because a larger $\alpha$ imposes a tighter lower bound on wealth, forcing the fund manager to adopt a more conservative investment strategy. Comparing $(b_0,\alpha)=(0.035,0.8)$ and $(0.04,0.8)$ (i.e., the dashed and dash-dotted lines), the dash-dotted line lies below the dashed line. This is due to the higher minimum benefit requirement $b_0$, which similarly tightens the constraint and prevents the manager from taking excessive risk. Moreover, in each constrained case, the optimal investment proportion is increasing in wealth, and as wealth becomes sufficiently large, it gradually converges to that under the unconstrained framework. This is because the influence of the guarantee weakens as wealth increases, making the constraint effectively irrelevant when wealth is high enough, so the problem becomes approximately unconstrained.

Panel (b) shows that the benefit-to-wealth ratio in each constrained case first decreases and then increases with wealth. The initial decline results from the minimum benefit requirement $b_0$. When wealth is near the lower bound $\underline{x}$, the optimal benefit rate $b^*(x)$ starts from $b_0$ and cannot increase too quickly, as the fund manager needs to prioritize wealth accumulation. In this region, $b^*(x)$ grows more slowly than $x$, causing the ratio $b^*(x)/x$ to exhibit a decreasing trend initially. As wealth continues to grow, the ratio starts to increase, and once wealth exceeds approximately $x=1.02$, the benefit-to-wealth ratios in all constrained cases are above the standard risk-free rate $r_0=3.7\%$, indicating that retirees can receive generous benefits when the pension fund wealth becomes relatively large.  Consistent with panel (a), the benefit-to-wealth ratio decreases with $\alpha$ when $b_0$ is fixed, and with $b_0$ when $\alpha$ is fixed and wealth exceeds a threshold near $x_0=1$. Below this threshold, however, the ordering with respect to $b_0$ is reversed: the direct effect of the higher minimum benefit requirement dominates, resulting in a higher ratio.

Figure \ref{fig:value_1} compares the value functions under different guarantee levels. The unconstrained value function dominates all constrained cases. 
Moreover, the value function is nonincreasing in either $\alpha$ or $b_0$ when the other is held fixed, as tighter guarantees shrink the admissible strategy set. As wealth increases, the differences among the value functions gradually diminish, indicating that the impact of the guarantee constraints becomes negligible when the fund wealth is sufficiently large. At the initial wealth level $x=x_0=1$, the relative differences are approximately $1.12\%$, $0.66\%$, and $0.89\%$ for the three constrained cases, respectively. These values can be interpreted as the utility cost of imposing the guarantee constraints.

\begin{figure}[h]
    \centering
    \begin{subfigure}[b]{0.49\textwidth}
        \includegraphics[width=\textwidth]{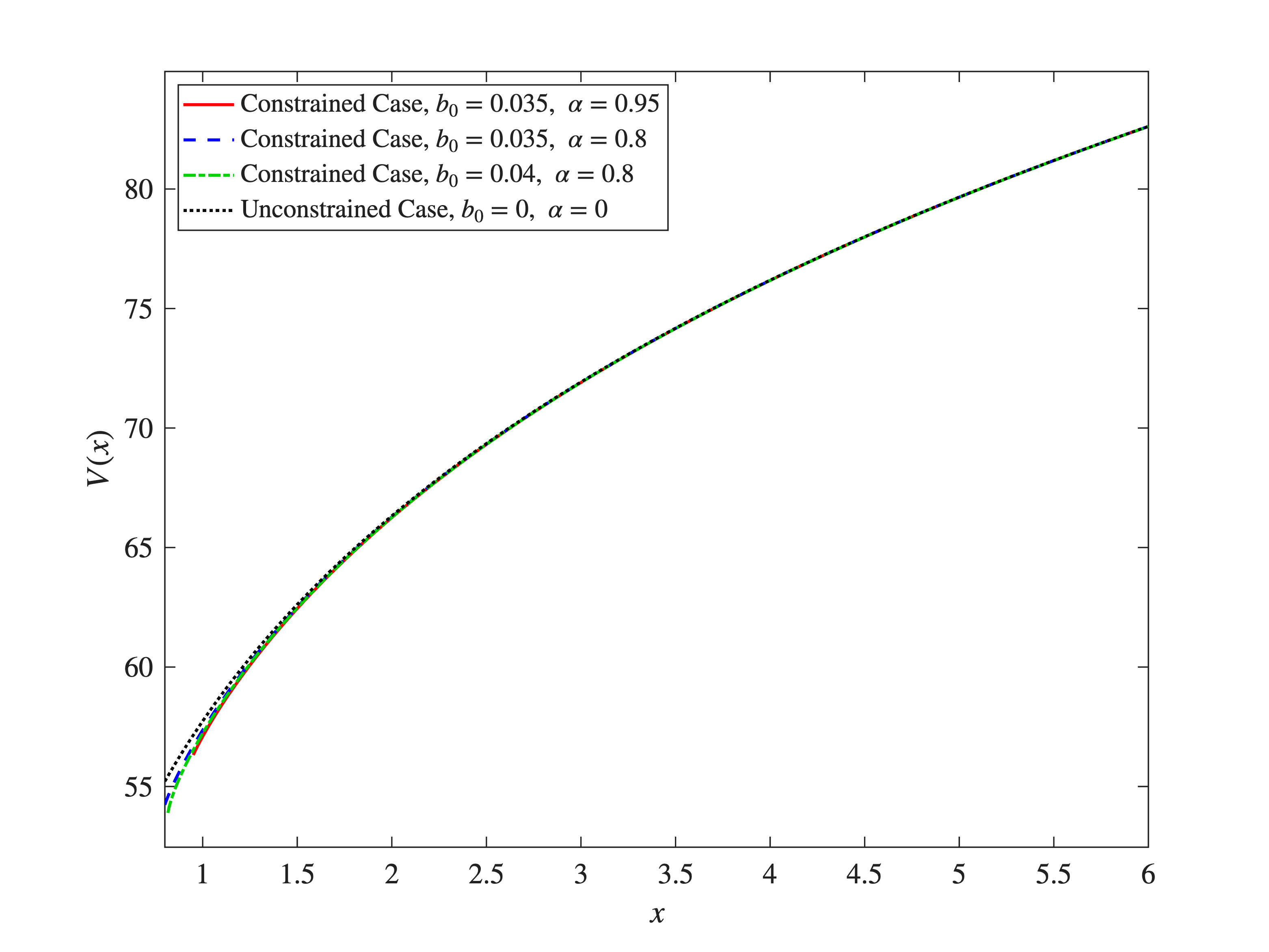}
        \caption{Value function}
    \end{subfigure}
    \hfill 
    \begin{subfigure}[b]{0.49\textwidth}
        \includegraphics[width=\textwidth]{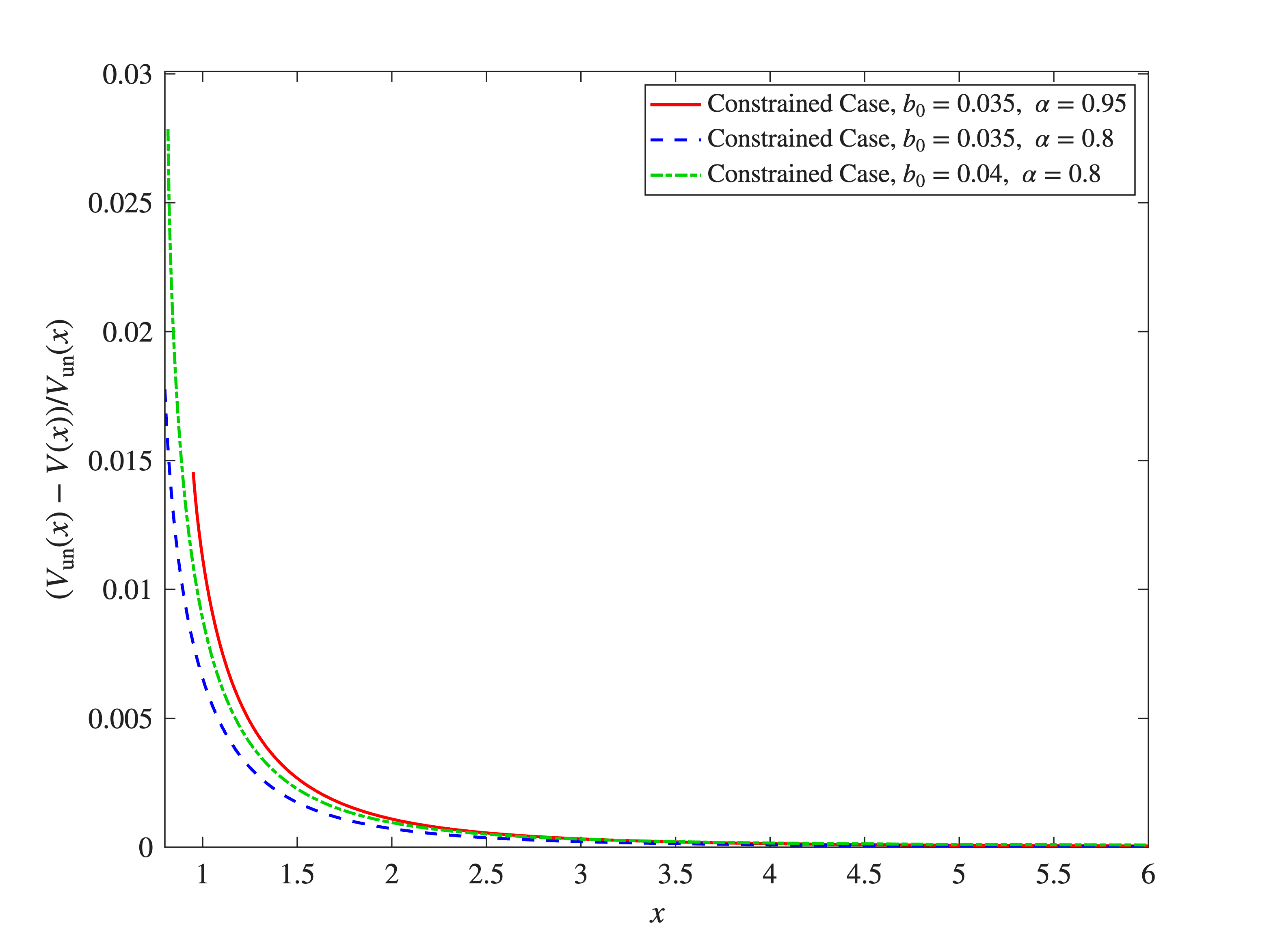}
         \caption{Relative difference}
    \end{subfigure}
        \caption{Comparison of value functions. Panel (a) displays the value functions under different parameter settings. The solid, dashed, dash-dotted, and dotted lines correspond to $(b_0,\alpha)=(0.035,0.95)$, $(0.035,0.8)$, $(0.04,0.8)$, and $(0,0)$ (the unconstrained case, denoted by $V_{\mathrm{un}}$ and given in Remark \ref{rmk:unconstraint}), respectively. The corresponding curves start from their effective lower bounds $\underline{x}=\max\{\alpha x_0,b_0/r\}$, which are $0.95$, $0.8$, and approximately $0.816$ for the first three cases. Panel (b) shows the relative difference $(V_{\mathrm{un}}-V)/V_{\mathrm{un}}$. The solid, dashed, and dash-dotted lines correspond to the three constrained cases $(0.035,0.95)$, $(0.035,0.8)$, and $(0.04,0.8)$, respectively, and start from the same lower bounds as in panel (a). The utility parameters are set to the base case $(\gamma_1,\gamma_2)=(0.2,0.2)$.}
            \label{fig:value_1}
\end{figure}

\begin{table}[!htbp]
    \caption{Relative utility losses (\%) for the retiree and the fund manager under the three baseline guarantee scenarios at $x=x_0=1$, with $(\gamma_1,\gamma_2)=(0.2,0.2)$.}
    \label{tab:utility-loss-decomposition}
    \centering
    \renewcommand{\arraystretch}{1.2}
    \begin{tabular*}{\textwidth}{@{\extracolsep{\fill}}lccc@{}}
        \hline
        Scenario & $(b_0,\alpha)$ & Retiree (\%) &  Manager (\%) \\
        \hline
        1 & $(0.035,0.95)$ & 2.22 & $-0.81$ \\
        2 & $(0.035,0.8)$  & 1.24 & $-0.38$ \\
        3 & $(0.04,0.8)$   & 1.32 & 0.10 \\
        \hline
    \end{tabular*}
    \smallskip
    \begin{minipage}{\textwidth}
        \small
        \textit{Note:} The utility costs in the third and fourth columns are computed as $(V_{un,1}-V_1)/V_{un,1}$ and $(V_{un,2}-V_2)/V_{un,2}$, respectively, and expressed as percentages. Here, $V_{un,1}$ and $V_1$ denote the retiree's expected discounted utility from benefits and bequest under the unrestricted and constrained optimal policies, respectively. Similarly, $V_{un,2}$ and $V_2$ denote the fund manager's expected discounted utility from management fees under these policies. Thus, $V_{un}=V_{un,1}+V_{un,2}$ and $V=V_1+V_2$. These utility components are evaluated under the respective optimal feedback policies by solving the corresponding linear Feynman--Kac equations. Negative values indicate utility gains relative to the unrestricted policy.
    \end{minipage}
\end{table}

To further understand how this cost is distributed, we decompose the value function into the retiree’s utility and the manager’s utility. Table~\ref{tab:utility-loss-decomposition} reports the corresponding relative utility losses in the three baseline scenarios. These results indicate that the utility loss from imposing the guarantee constraints mainly falls on the retiree side, whereas the manager’s utility decreases only marginally, if at all. Overall, even when the utilities are examined separately, the welfare impact remains relatively small, suggesting that the guarantee constraints can provide downside protection for retirees at only a modest efficiency cost.

\begin{table}[!htbp]
    \caption{Maximum relative utility costs under individual parameter variation.}
    \label{tab:utility-loss-sensitivity}
    \centering
    \renewcommand{\arraystretch}{1.2}
    \setlength{\tabcolsep}{4pt}
    \begin{tabular*}{\textwidth}{@{\extracolsep{\fill}}lcccc@{}}
        \hline
        Parameter varied & Range & Retiree ($\%$) & Manager ($\%$) & Combined ($\%$) \\
        \hline
        Wealth guarantee $\alpha$          & $[0.70,0.98]$   & 2.51 & 0.10 & 1.40 \\
        Minimum benefit $b_0$              & $[0.025,0.040]$ & 2.33 & 0.10 & 1.24 \\
        Common utility exponent $\gamma$   & $[0.10,0.30]$   & 3.87 & 0.20 & 2.20 \\
        Market price of risk $\theta$      & $[0.085,0.235]$ & 4.73 & 2.56 & 3.62 \\
        \hline
    \end{tabular*}
    \smallskip
    \begin{minipage}{\textwidth}
        \small
        \textit{Note:} The last three columns report maxima of $(V_{un,1}-V_1)/V_{un,1}$, $(V_{un,2}-V_2)/V_{un,2}$, and $(V_{un}-V)/V_{un}$, respectively, expressed as percentages, with the utility components defined in Table~\ref{tab:utility-loss-decomposition}. Each entry is the largest value found at $x=x_0=1$ in the corresponding numerical search. In each row, only the indicated parameter varies over the listed range, all others remain at their corresponding baseline values, and maxima are taken across the three baseline scenarios. Throughout, $\lambda=0.042$, $\rho=0.04$, $r_1=0.053$, and $\sigma=0.20$. Here, $\gamma_1=\gamma_2=\gamma$, and changes in $\theta$ are implemented through $\mu=r_1+\sigma\theta$. 
    \end{minipage}
\end{table}

Table~\ref{tab:utility-loss-sensitivity} examines the sensitivity of the utility costs to individual changes in the guarantee levels, the common utility exponent, and the market price of risk. The utility costs remain small under these individual parameter changes. Moreover, we examine joint variation in all four parameters over the same ranges, keeping the remaining parameters at their baseline values. The largest relative utility costs found in our numerical search are approximately $9.01\%$, $6.42\%$, and $8.21\%$ for the retiree, the manager, and their combined utility, respectively. These results suggest that the utility costs remain moderate even under joint parameter variation over the ranges considered.

\begin{figure}[h]
    \centering
    \begin{subfigure}[b]{0.49\textwidth}
        \includegraphics[width=\textwidth]{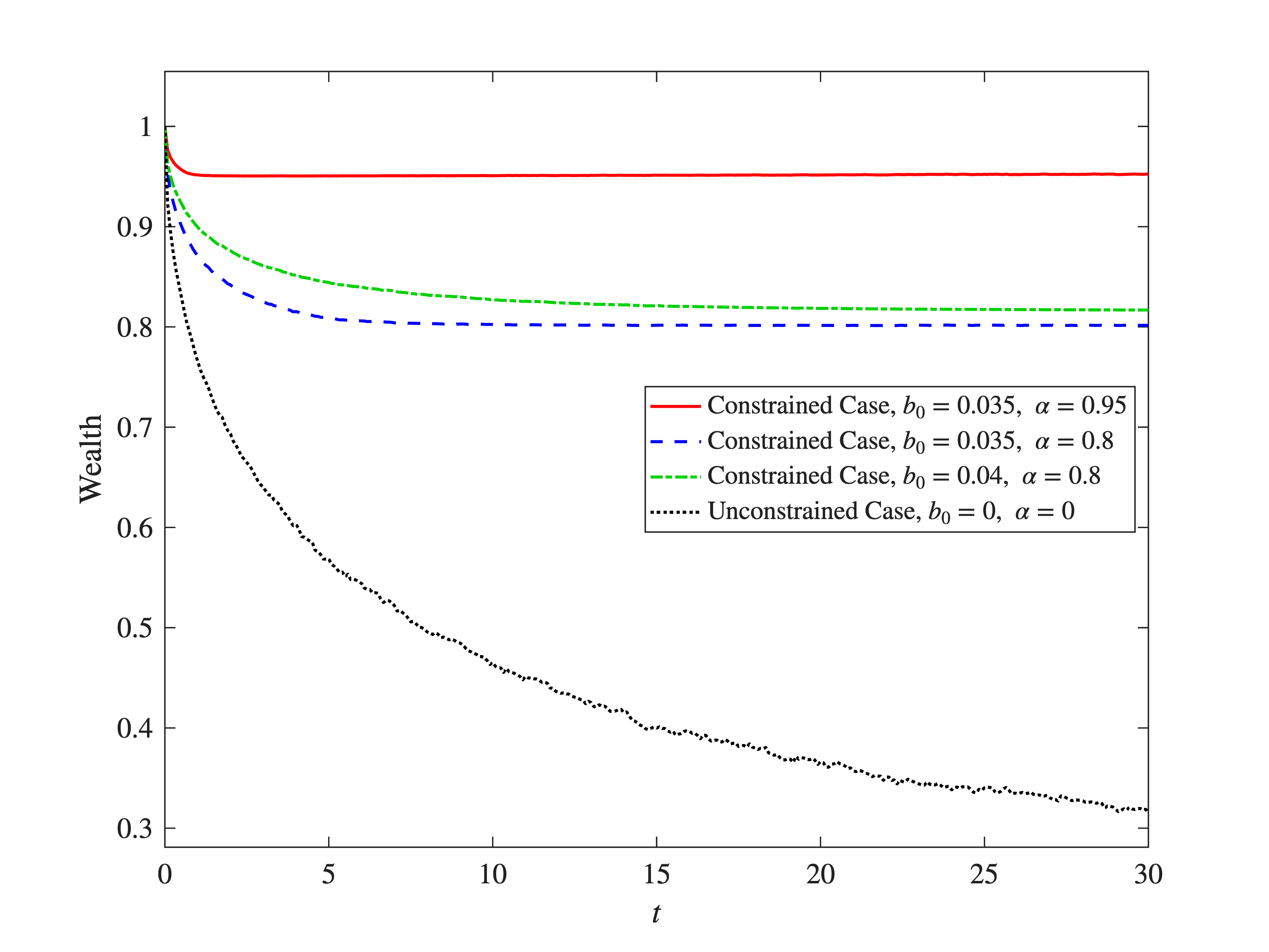}
        \caption{5th percentile of the optimal wealth }
    \end{subfigure}
    \hfill 
        \begin{subfigure}[b]{0.49\textwidth}
        \includegraphics[width=\textwidth]{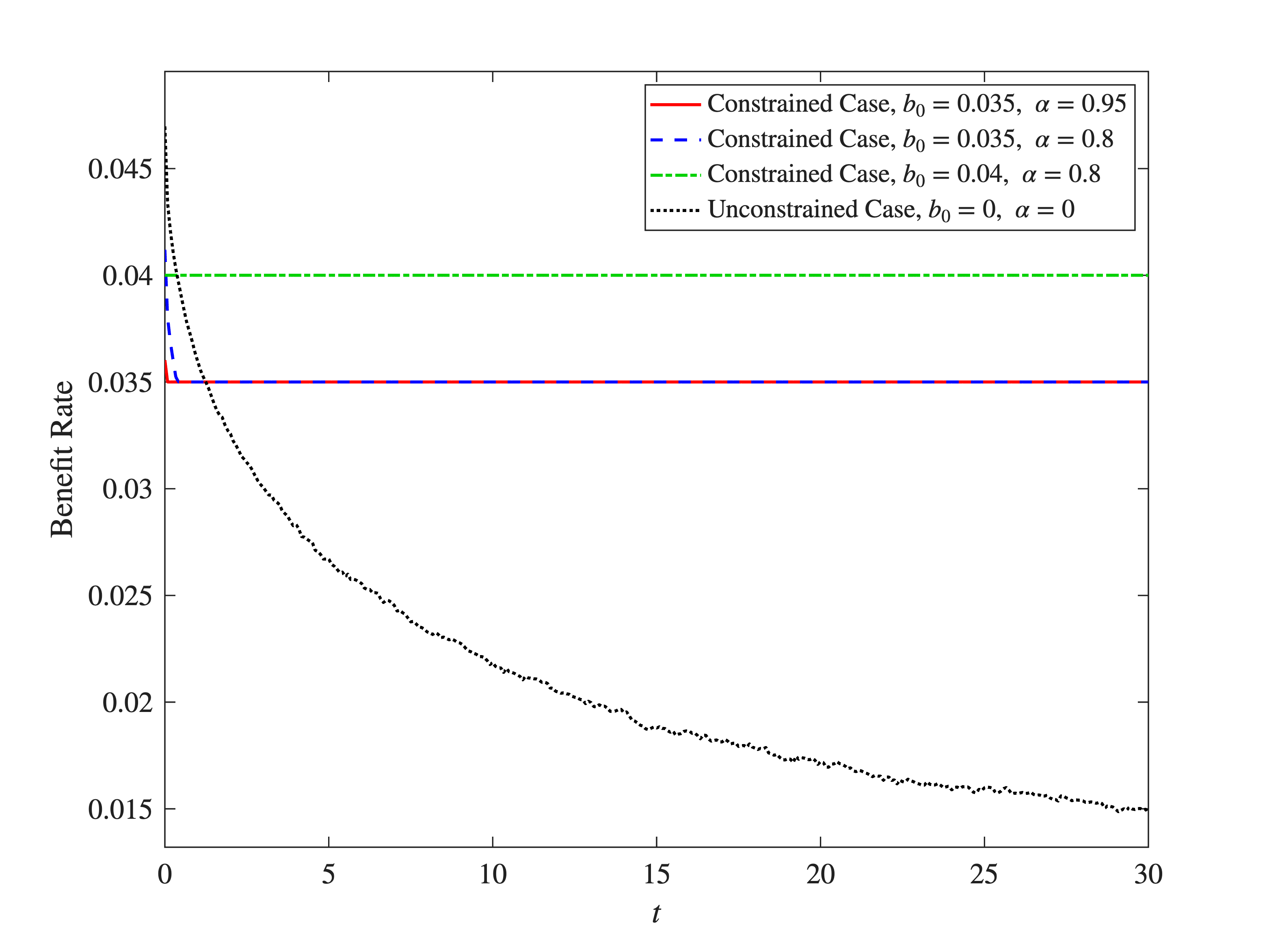}
        \caption{ 5th percentile of the optimal benefit rate }
    \end{subfigure}
    \begin{subfigure}[b]{0.49\textwidth}
        \includegraphics[width=\textwidth]{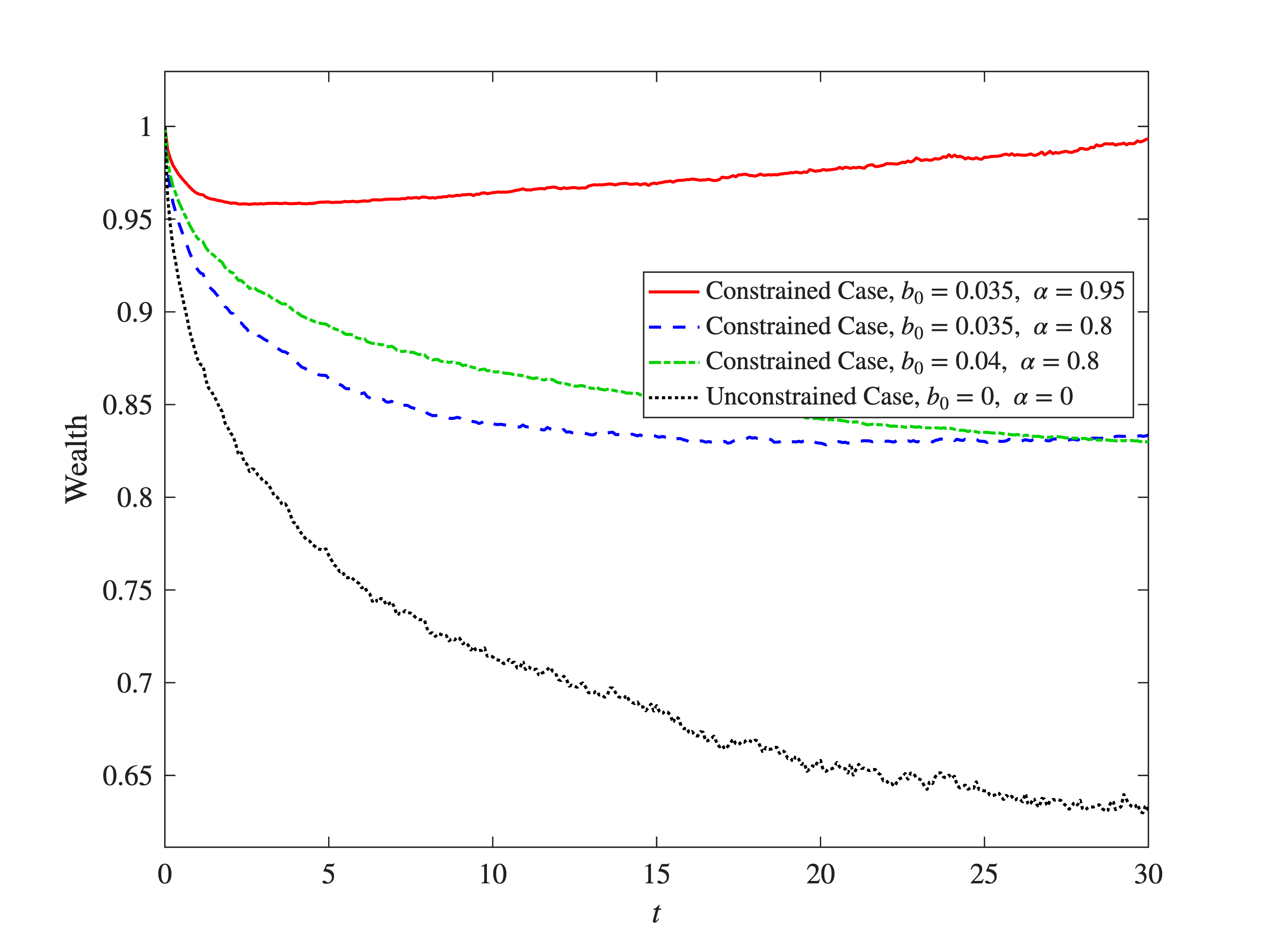}
        \caption{20th percentile of the optimal wealth }
    \end{subfigure}
    \hfill 
        \begin{subfigure}[b]{0.49\textwidth}
        \includegraphics[width=\textwidth]{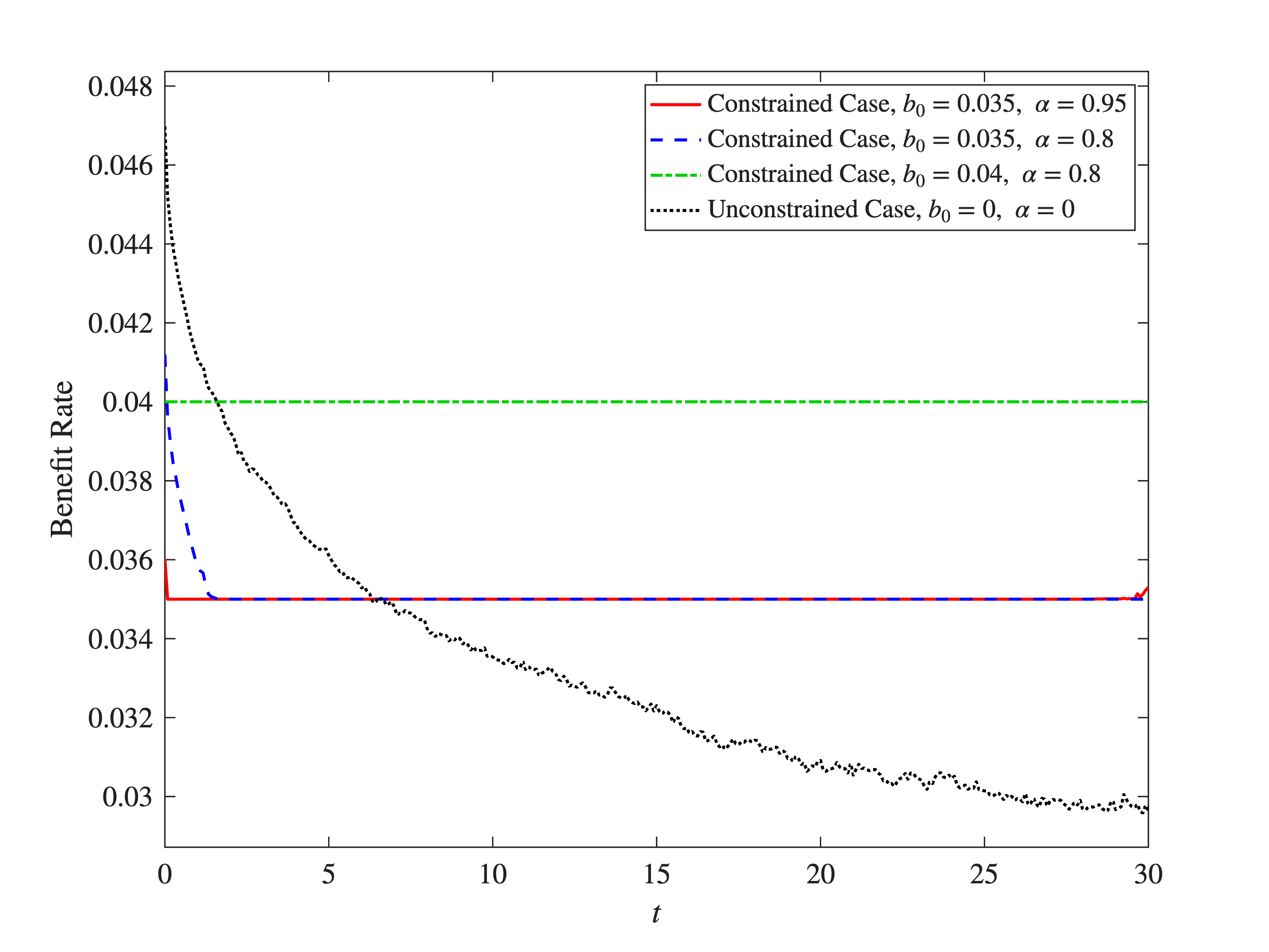}
        \caption{20th percentile of the optimal benefit rate }
    \end{subfigure}
   \caption{Percentiles of the optimal wealth and optimal benefit rate. The solid, dashed, dash-dotted, and dotted lines correspond to $(b_0,\alpha)=(0.035,0.95)$, $(0.035,0.8)$, $(0.04,0.8)$, and the unconstrained case, respectively. Panels (a) and (c) show the 5th and 20th percentiles of the optimal wealth, while panels (b) and (d) display the corresponding percentiles of the optimal benefit rate.}
\label{fig:percentile}
\end{figure}

Because the guarantee constraints are designed to provide downside protection, we focus on outcomes in the lower tail of the simulated distributions. Figure~\ref{fig:percentile} compares the constrained and unconstrained outcomes at the 5th and 20th percentiles across 10,000 simulated paths. The 5th percentile captures severe downside outcomes, while the 20th percentile provides a broader view of adverse performance. Panels (a) and (c) exhibit similar patterns. In both panels, the wealth
levels under all three constrained cases remain above those in the unconstrained
case. Specifically, at
the 5th percentile, wealth in the unconstrained case falls below $0.8$, the lowest effective wealth floor among the constrained cases, after approximately $0.71$ years, whereas
under the constrained cases, the corresponding wealth percentiles approach their respective lower bounds without falling below them. At the 20th percentile, wealth in the unconstrained case drops below
$0.8$ after approximately $3.4$ years, while the corresponding wealth percentiles in the constrained cases remain above their respective lower bounds.

Panels (b) and (d) show the corresponding patterns for benefit payments. In both cases, the unconstrained benefit rate declines over time: in panel (b), it falls below $0.035$ after approximately $1.25$ years, while in panel (d), it drops below $0.04$ after about $1.6$ years and subsequently falls below $0.035$ after approximately $6.3$ years. By contrast, under all three constrained policies, the benefit rates remain at or above their respective minimum levels $b_0$ throughout the 30-year horizon.

\begin{figure}[h]
    \centering
    \begin{subfigure}[b]{0.49\textwidth}
        \includegraphics[width=\textwidth]{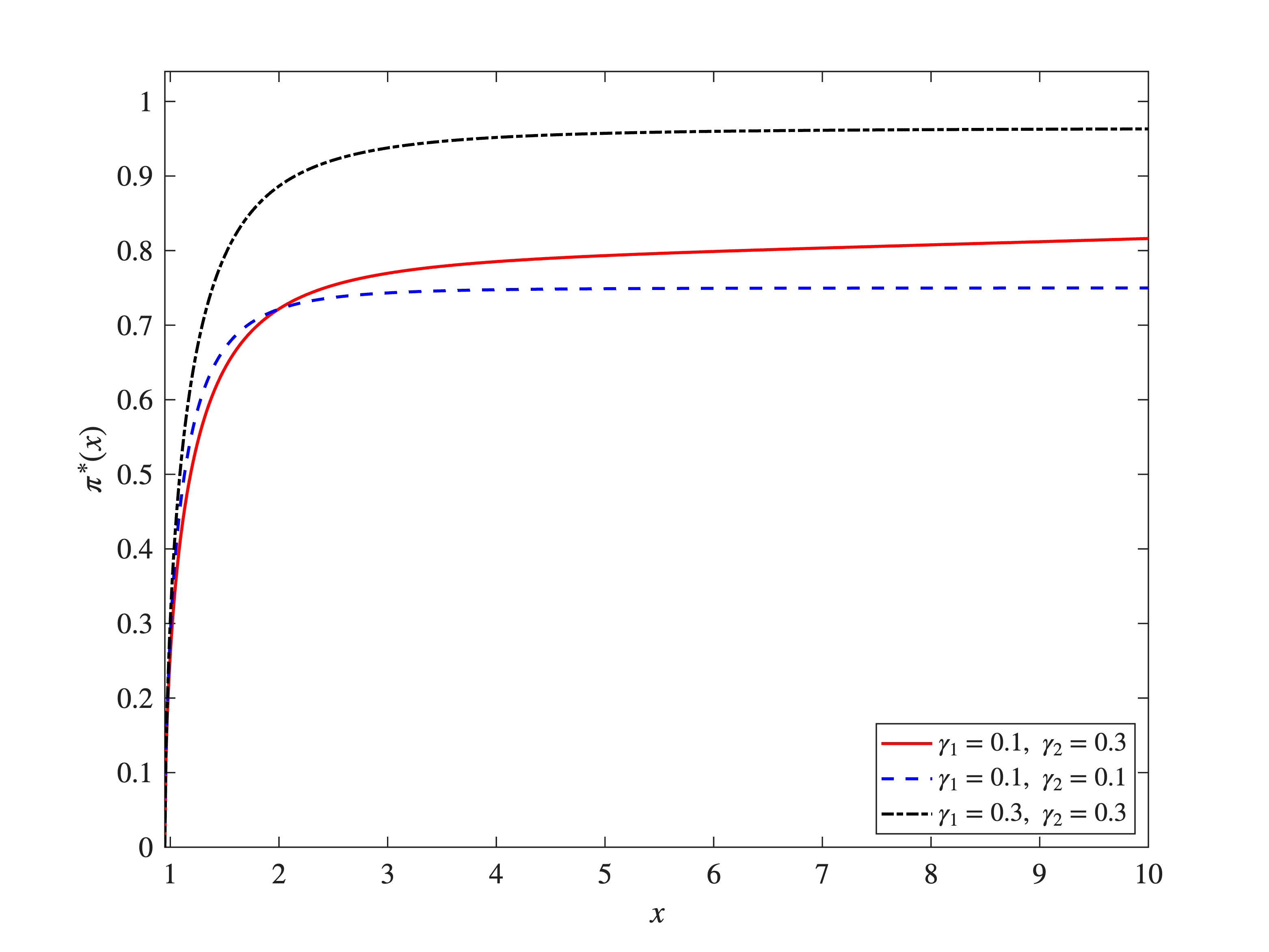}
        \caption{Optimal investment strategy}
    \end{subfigure}
    \hfill 
    \begin{subfigure}[b]{0.49\textwidth}
        \includegraphics[width=\textwidth]{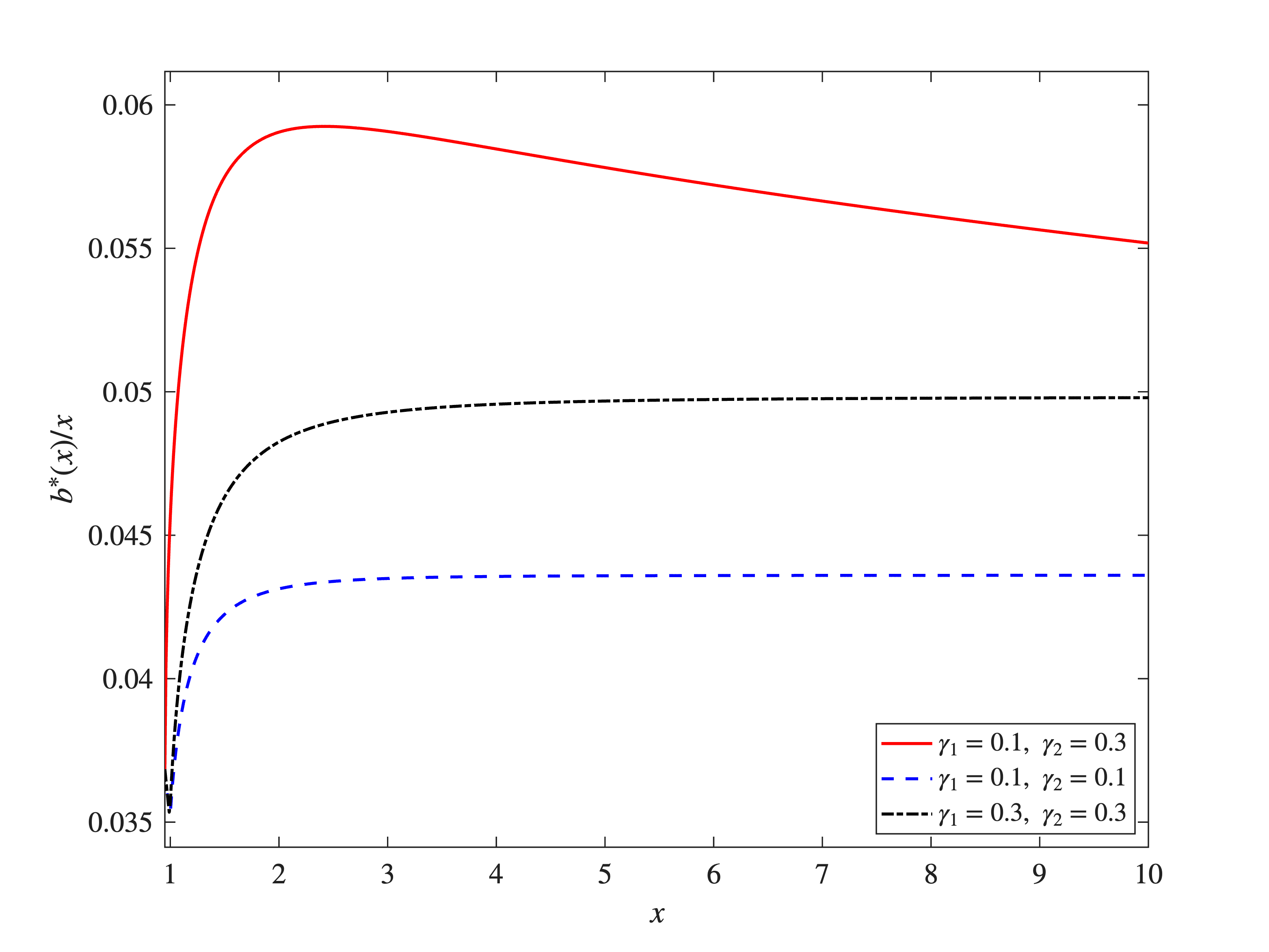}
         \caption{Optimal benefit-to-wealth strategy}
    \end{subfigure}
       \caption{Optimal strategies for different utility exponents with $(b_0,\alpha)=(0.035,0.95)$. Panel (a) shows the optimal investment strategy, while panel (b) displays the optimal benefit-to-wealth ratio. The solid, dashed, and dash-dotted lines correspond to $(\gamma_1,\gamma_2)=(0.1,0.3)$, $(0.1,0.1)$, and $(0.3,0.3)$, respectively.}
\label{fig:gamma}
\end{figure}
 Figure \ref{fig:gamma} illustrates the impact of the utility exponents on the optimal strategies under $(b_0,\alpha)=(0.035,0.95)$. In panel (a), the optimal investment generally decreases with
the overall level of risk aversion: the cases $(\gamma_1,\gamma_2)=(0.3,0.3)$, $(0.1,0.3)$, and $(0.1,0.1)$
correspond to progressively higher risk aversion and lower risky-asset allocation.  Economically, more risk-averse agents are more sensitive to downside fluctuations and therefore reduce their exposure to risky assets in favor of safer positions.

 In panel (b), comparing $(0.1,0.3)$ and $(0.1,0.1)$ (solid vs.\ dashed lines), the solid line lies above the dashed line. For a given level of wealth, a manager with lower risk
aversion is willing to allocate more to risky assets, anticipating higher future returns. As a result, the manager is more inclined to distribute a larger portion of benefits at the current stage. Comparing $(0.1,0.3)$ and $(0.3,0.3)$ (solid vs.\ dash-dotted lines), the solid line again lies above the dash-dotted line. This reflects the fact that the manager's decision incorporates the retiree's utility. When $\gamma_1$ is larger (i.e., the retiree is less risk-averse), the retiree is more willing to tolerate investment risk and therefore prefers to defer current benefits in order to retain more wealth for risky investment and achieve higher future returns. Consequently, the manager allocates more wealth to investment, accompanied by a lower benefit-to-wealth ratio.

\section{Conclusion}\label{sec:conclusion}
In this article, we study a continuous-time retirement planning problem in which a retiree delegates an initial lump sum to a professional fund manager, who determines the benefit rate and portfolio allocation under minimum wealth and benefit guarantees. The wealth floor \(X(t)\ge \alpha x_0\) ensures solvency and a minimum bequest, whereas the benefit floor \(b(t)\ge b_0\) guarantees a lifelong minimum income. The objective allows for general utility functions over the retiree's benefits and bequest and the manager's running and terminal management fees, thereby accommodating a broad range of preferences and trade-offs between retiree welfare and managerial incentives.

Applying dynamic programming, we derive the associated HJB equation for the value function and use a transform technique to reduce the fully nonlinear HJB equation to a semilinear auxiliary equation. When \(\alpha x_0>b_0/r\), monotonicity and asymptotic arguments allow us to select an appropriate solution from a parameterized family of global solutions. When \(\alpha x_0\le b_0/r\), we establish existence using the Schauder fixed point theorem and then verify the required monotonicity and growth properties through qualitative analysis. Transforming these solutions back yields classical solutions to the original HJB equation with the appropriate properties. Finally, verification theorems identify these solutions with the value function and establish the optimality of the associated feedback strategies.

Our numerical experiments reveal several insights. First, the guarantee constraints reduce the fund’s exposure to risky assets to protect solvency and, subject to the minimum-benefit requirement, limit discretionary benefit payments to preserve capital. As wealth increases, the influence of the constraints weakens, and both policies converge to their unconstrained benchmarks. Second, the welfare cost of providing these guarantees is modest and falls mainly on the retiree, while the manager’s utility changes only marginally. This limited welfare loss may therefore represent an acceptable price for the greater security afforded by the wealth and benefit guarantees. Third, lower-tail simulations show that the guarantees provide effective downside protection: constrained policies keep wealth and benefit payments at or above their prescribed floors, whereas both can fall substantially below these prescribed levels under the unconstrained policy. Overall, these findings suggest that minimum guarantees are a practically attractive feature of retirement decumulation products, offering meaningful protection against adverse outcomes at only a modest welfare cost.

Several extensions merit further investigation. First, the wealth and benefit floors could be allowed to evolve deterministically over time. Appendix~\ref{sec:extend} analyzes the special case of exponentially growing floors under CRRA utility, whereas the corresponding problem with general utility functions remains open. Second, the constant mortality intensity could be replaced by an age-dependent specification, such as the Makeham law. In their full generality, both extensions render the control problem time-inhomogeneous, leading to a PDE characterization of the value function and requiring analytical techniques beyond those developed in this paper.

\stepcounter{footnote}
\begin{figure}[H]
    \RawFloats
    \centering
    \begin{subfigure}[t]{0.49\textwidth}
        \vspace{0pt}
        \includegraphics[width=\textwidth]{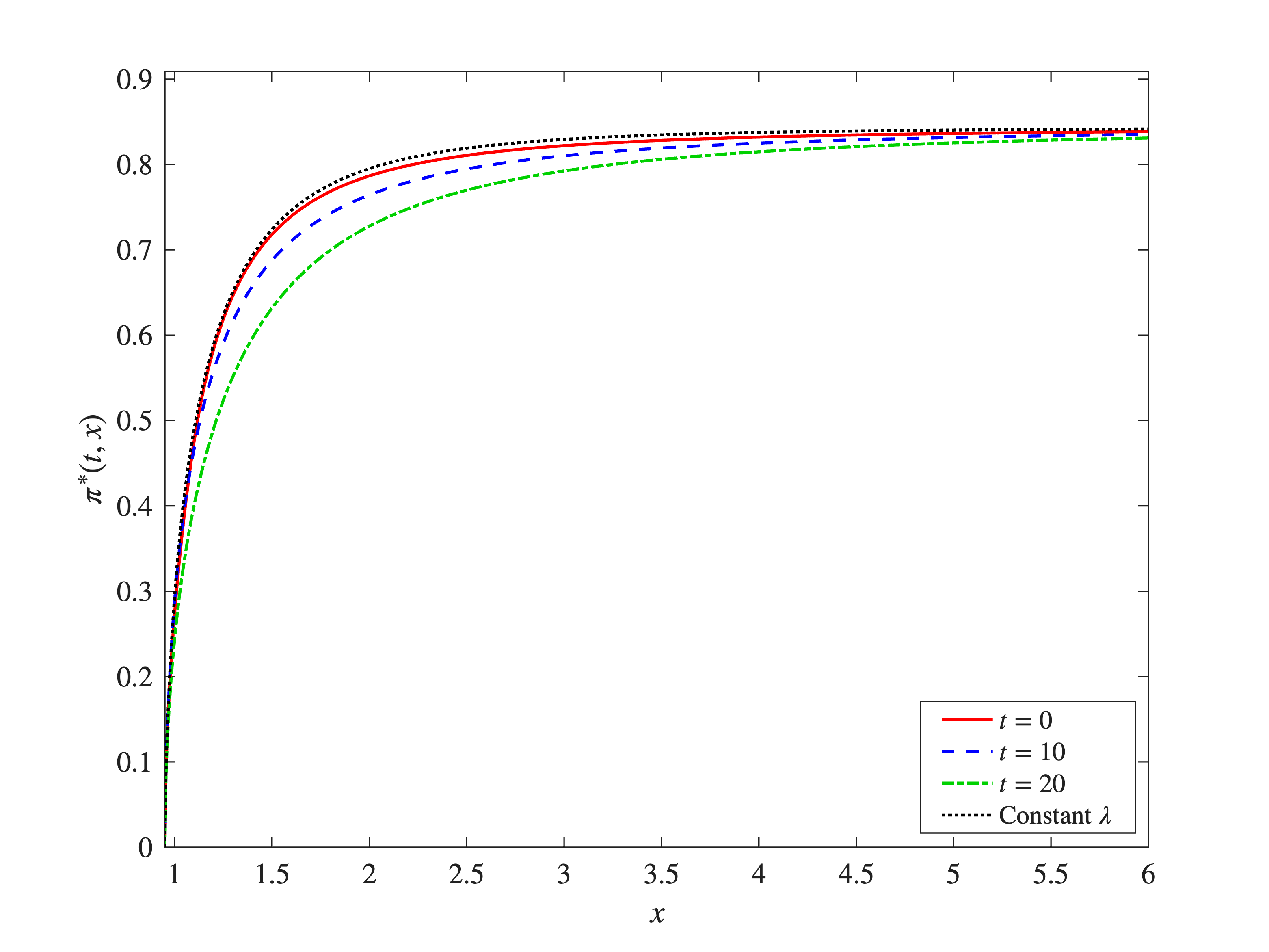}
        \caption{Optimal investment in Scenario~1.}
    \end{subfigure}
    \hfill
    \begin{subfigure}[t]{0.49\textwidth}
        \vspace{0pt}
        \includegraphics[width=\textwidth]{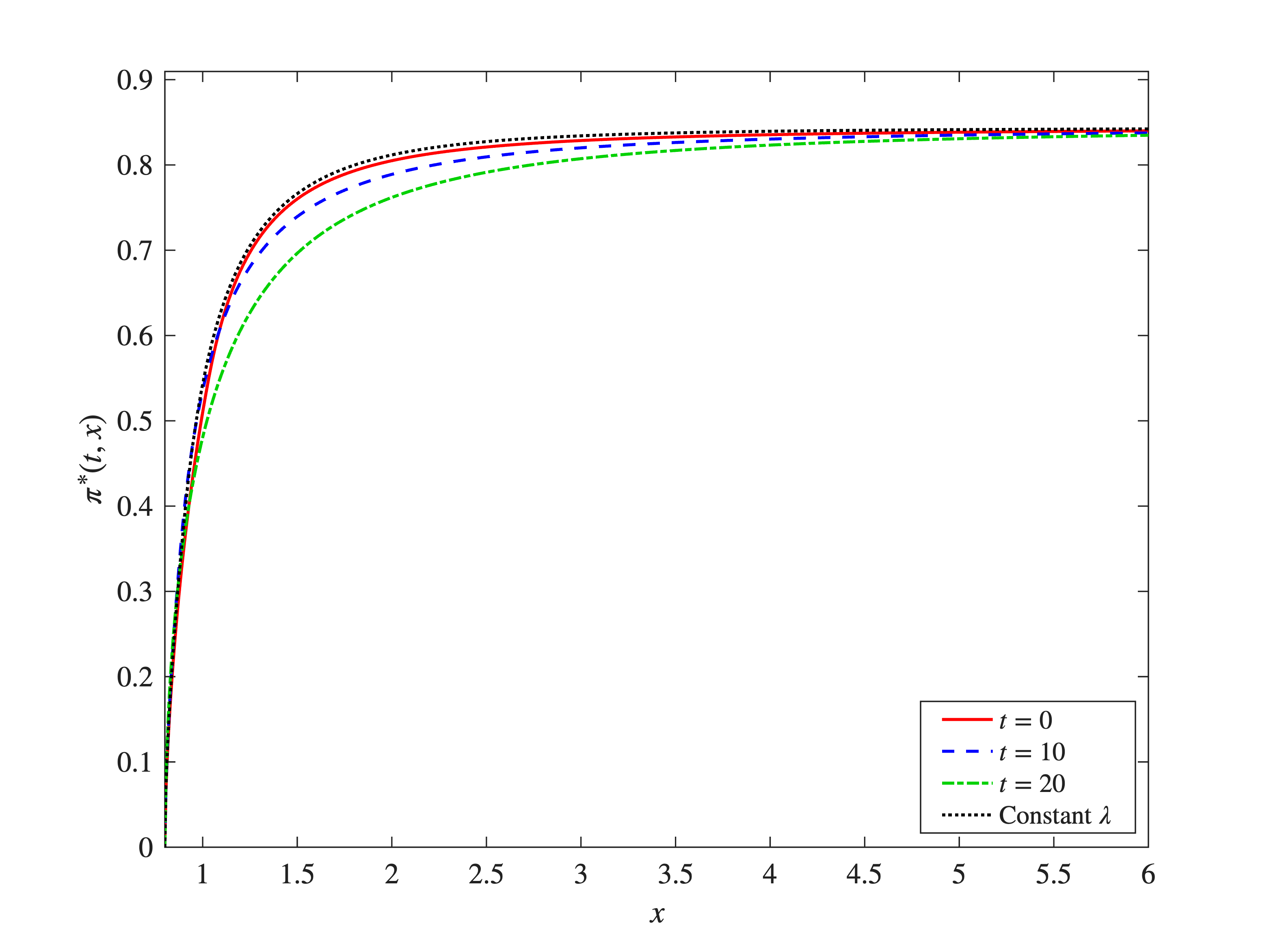}
        \caption{Optimal investment in Scenario~2.}
    \end{subfigure}
    \par\medskip
    \begin{subfigure}[t]{0.49\textwidth}
        \vspace{0pt}
        \includegraphics[width=\textwidth]{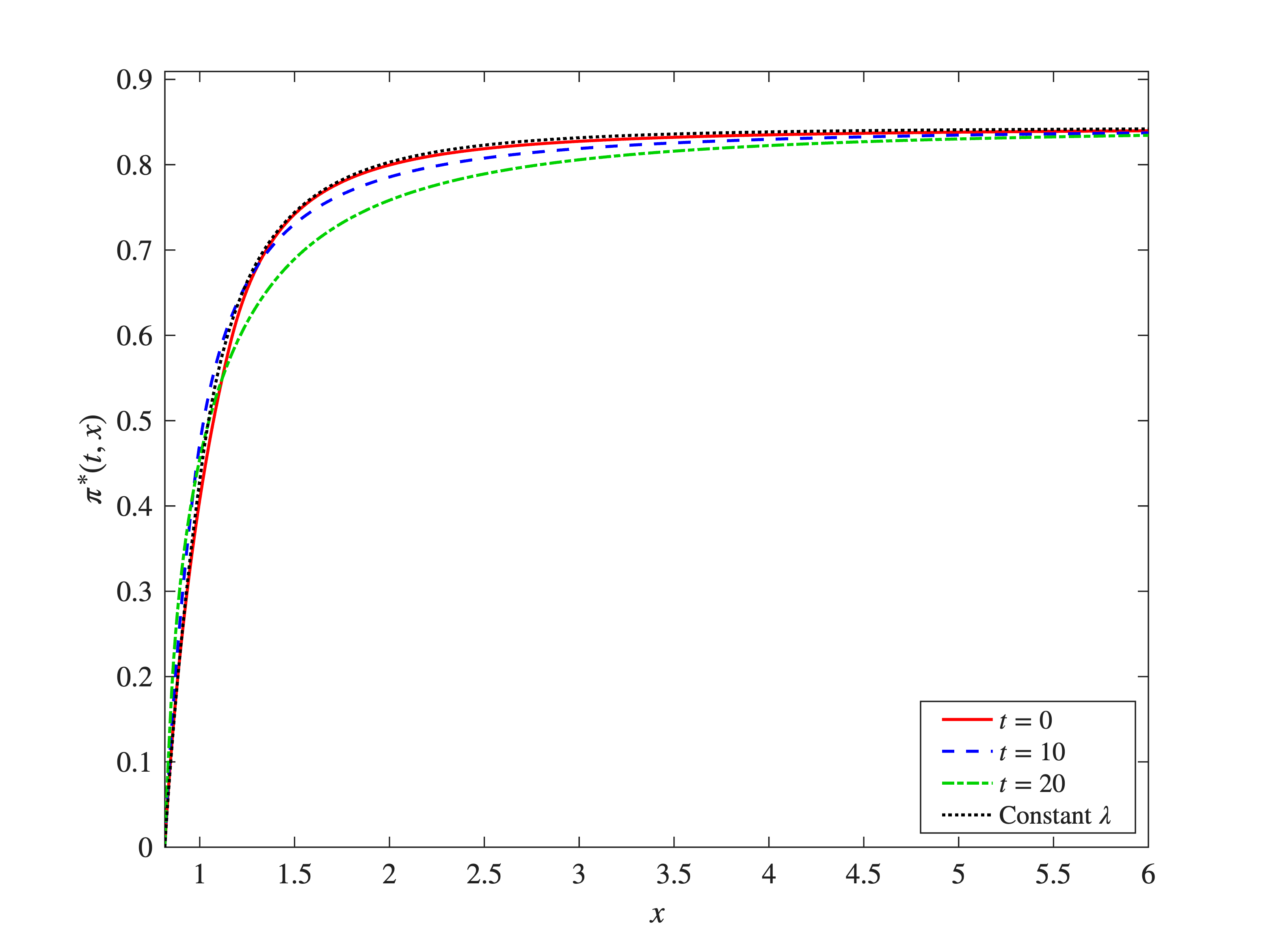}
        \caption{Optimal investment in Scenario~3.}
    \end{subfigure}
     \caption[Optimal investment proportions under Makeham and constant mortality.]{Optimal investment proportions under Makeham and constant mortality. Panels (a), (b), and (c) correspond to $(b_0,\alpha)=(0.035,0.95)$, $(0.035,0.8)$, and $(0.04,0.8)$, respectively. In each panel, the solid, dashed, and dash-dotted lines show the optimal investment proportion $\pi^*(t,x)$ under the recalibrated Makeham law\protect\footnotemark[\value{footnote}] $\lambda(t)=0.0056+0.0035e^{0.121t}$ at $t=0,10,20$, corresponding to ages 60, 70, and 80, respectively. The dotted line shows the optimal investment proportion under the constant mortality intensity $\lambda=0.042$. The respective mean remaining lifetimes at age 60 are approximately $23.54$ and $23.81$ years, close to the U.S. estimate of $23.6$ years \cite*[Table~A]{nchs_life_expectancy}. All other parameters are held at their baseline values.}
    \label{fig:makeham-time-constant-six}
\end{figure}
\footnotetext{The Makeham model $\mu(a)=0.00022+2.7\times10^{-6}(1.124)^a$ in \cite*[Section~4.3]{dickson2013actuarial} is widely used for actuarial illustrations. Here $a$ is actual age in years, whereas $t:=a-60$ is the number of years elapsed since age 60, with $t=0$ at age 60.  The book's model becomes $\lambda_{\mathrm{book}}(t)=\mu(60+t)=0.00022+2.7\times10^{-6}(1.124)^{60+t}$. It implies $27.21$ remaining years at age 60, exceeding the U.S. estimate of $23.6$ years;  see \cite*[Table~A]{nchs_life_expectancy}.
 We therefore recalibrate to $\lambda(t)=0.0056+0.0035e^{0.121t}$, yielding $23.54$  remaining years.}

For the Makeham specification, the numerical experiments in Figure~\ref{fig:makeham-time-constant-six} show that the optimal investment proportions remain close to those under constant mortality across the three scenarios, particularly at $t=0$ and $t=10$, corresponding to ages 60 and 70. These findings suggest that, under the parameter settings considered, the constant-mortality model can provide a useful practical benchmark for fund managers' investment decisions in early retirement. A systematic analysis of how age-dependent mortality affects optimal benefit payments remains an interesting direction for future research.

\section*{Acknowledgement}
 Zhaojie Ren acknowledges the financial support as a postdoctoral fellow from the Department of Statistics and Data Science at The Chinese University of Hong Kong. 
 Sheng Wang acknowledges the financial support as a postdoctoral fellow from the 
  Department of Statistics and Actuarial Science at The University of Hong Kong.  Tak Kwong Wong was partially supported by NSFC grant with project number
12571248.
Phillip Yam acknowledges the financial support from HKGRF-14300123 with the project title ``Well-posedness of Some Poisson-driven Mean Field Learning Models and their Applications'', and  HKSAR-GRF 14300025 with the project title ``A Generic Theory for Stochastic Control against Fractional Brownian Motions''. The work described in this article was also supported by a grant from the Germany/Hong Kong Joint Research Scheme sponsored by the Research Grants Council of Hong Kong and the German Academic Exchange Service of Germany (Reference No. G-CUHK411/23). He also thanks The University of Texas at Dallas for the kind invitation to be a Visiting Professor in Naveen Jindal School of Management.

\bibliographystyle{plain}
\bibliography{retirement}

\appendix
\makeatletter
\let\appendix@seccntformat\@seccntformat
\renewcommand{\@seccntformat}[1]{  \ifstrequal{#1}{section}{\thesection:\quad}{\appendix@seccntformat{#1}}}
\makeatother
\section*{Appendix}
\section{Proofs for the statements in Section \ref{sec:formu}}

For later use, we  introduce an equivalent probability measure. 
For any $T>0$, we  define the probability measure $\mathbb{Q}_{T}$ on $\mathcal{F}_T$ by
\begin{align}\label{eq:Q:measure}
    \frac{\md\mathbb{Q}_{T}}{\md \mathbb{P}}
    =\exp\left(-\theta W_{T}-\frac12 \theta^2T\right).
\end{align}
By Girsanov's theorem, the process 
$\{\widetilde{W}(t):=W(t)+\theta t:t\in[0,T]\}$ 
is a Brownian motion under $\mathbb{Q}_{T}$. 
This change of measure will be repeatedly used in the proofs below.
 
\paragraph{Proof of Lemma \ref{lma:b_0/r}}\label{appen:lma:b_0/r}
 We assume $\frac{b_0}{r}>\alpha x_0$. Suppose, by contradiction, that there exists some $t_0\geq0$ such that $\mathbb{P}( X(t_0)< \frac{b_0}{r})>0$.  For $0<\epsilon<\frac{b_0}r$, define the stopping time
\begin{align*}
    \tau_{\epsilon}:=\inf\left\{s:X(s)\leq \frac{b_0}{r}-\epsilon\right\}.
\end{align*}
Then there exist $\epsilon>0$ and $T>0$ such that $\mathbb{P}(\tau_{\epsilon}\leq T)>0$.
Let $M>0$ be sufficiently large such that
\begin{align}\label{eq:select:M}
    \int_0^M\mathrm{e}^{-rs}b_0\md s\geq \frac{b_0}{r}-\frac{\epsilon}{2}.
\end{align}
Set $\hat{\tau}_{\epsilon}:=\tau_{\epsilon}\wedge T$ and, for $s\in[0,M]$, define the shifted process
\begin{align*}
    Z_{\epsilon}(s):=\mathrm{e}^{-rs}X(\hat{\tau}_{\epsilon}+s)
    +\int_0^s\mathrm{e}^{-ru}b(\hat{\tau}_{\epsilon}+u)\md u,
\end{align*}
where $X$ is defined in \eqref{eq:sde:x:intro}.
Under $\mathbb{Q}_{T+M}$, with the shifted Brownian motion $\widetilde{W}_{\epsilon}(s):=\widetilde{W}(\hat{\tau}_{\epsilon}+s)-\widetilde{W}(\hat{\tau}_{\epsilon})$, $Z_{\epsilon}$ satisfies
\begin{align*}
    \md Z_{\epsilon}(s)
    =\mathrm{e}^{-rs}\sigma\pi(\hat{\tau}_{\epsilon}+s)
    X(\hat{\tau}_{\epsilon}+s)\md\widetilde{W}_{\epsilon}(s),
    \qquad Z_{\epsilon}(0)=X(\hat{\tau}_{\epsilon}).
\end{align*}
Thus, $Z_{\epsilon}$ is a nonnegative local martingale, and hence a supermartingale, with respect to the shifted filtration $\{\mathcal{F}_{\hat{\tau}_{\epsilon}+s}:s\in[0,M]\}$. Let $A_{\epsilon,T}:=\{\tau_{\epsilon}\leq T\}\in\mathcal{F}_{\hat{\tau}_{\epsilon}}$. On $A_{\epsilon,T}$, we have $X(\hat{\tau}_{\epsilon})=\frac{b_0}{r}-\epsilon$, and, by \eqref{eq:select:M},
\begin{align*}
    Z_{\epsilon}(M)\geq\int_0^M\mathrm{e}^{-ru}b_0\md u
    \geq\frac{b_0}{r}-\frac{\epsilon}{2}.
\end{align*}
Therefore,
\begin{align*}
    \left(\frac{b_0}{r}-\epsilon\right)\mathbbm{1}_{A_{\epsilon,T}}
    &\geq \mE^{\mathbb{Q}_{T+M}}
    \left[Z_{\epsilon}(M)\mathbbm{1}_{A_{\epsilon,T}}
    \,\middle|\,\mathcal{F}_{\hat{\tau}_{\epsilon}}\right]\geq\left(\frac{b_0}{r}-\frac{\epsilon}{2}\right)
    \mathbbm{1}_{A_{\epsilon,T}},
\end{align*}
which is a contradiction because $\mathbb{Q}_{T+M}$ is equivalent to $\mathbb{P}$ and $\mathbb{P}(A_{\epsilon,T})>0$.

\section{Proofs for the statements in Section \ref{sec:dpp:hjb}}
The following proof is analogous to that of Lemma \ref{lma:b_0/r}.
\subsection{Proof of Lemma \ref{lma:stop}}\label{appen:lma:stop}
It suffices to show that for any $T>0$ such that $\mathbb{P}(\{X^{x;u}(\tau)=\underline{x}\}\cap\{\tau\leq T\})>0$,
the conclusion holds on the event $\Omega_T:=\{X^{x;u}(\tau)=\underline{x}\}\cap\{\tau\leq T\}$.

Fix $T>0$ and let $\hat{\tau}:=\tau\wedge T$. For any $M>0$ and $s\in[0,M]$, define
\begin{align*}
    Z(s):=\mathrm{e}^{-rs}X^{x;u}(\hat{\tau}+s)
    +\int_0^s\mathrm{e}^{-ru}b(\hat{\tau}+u)\md u.
\end{align*}
As in the proof of Lemma \ref{lma:b_0/r}, under $\mathbb{Q}_{T+M}$, $Z$ is a nonnegative local martingale, and hence a supermartingale, with respect to the shifted filtration $\{\mathcal{F}_{\hat{\tau}+s}:s\in[0,M]\}$. Moreover, $\Omega_T\in\mathcal{F}_{\hat{\tau}}$, and on $\Omega_T$ we have $\hat{\tau}=\tau$, $Z(0)=\underline{x}$, and
\begin{align}\label{eq:Z:boundary:nonnegative}
    Z(s)-\underline{x}
    =\mathrm{e}^{-rs}\left(X^{x;u}(\tau+s)-\underline{x}\right)
    +\int_0^s\mathrm{e}^{-ru}\left(b(\tau+u)-b_0\right)\md u
    \geq0.
\end{align}
Consequently, for every $s\in[0,M]$,
\begin{align*}
    \underline{x}\mathbbm{1}_{\Omega_T}
    &\geq\mE^{\mathbb{Q}_{T+M}}
    \left[Z(s)\mathbbm{1}_{\Omega_T}\,\middle|\,\mathcal{F}_{\hat{\tau}}\right]
    \geq\underline{x}\mathbbm{1}_{\Omega_T}.
\end{align*}
It follows that $Z(s)=\underline{x}$ on $\Omega_T$ for every rational $s\in[0,M]$, and hence, by continuity, for every $s\in[0,M]$. Since both terms on the right hand side of \eqref{eq:Z:boundary:nonnegative} are nonnegative, we obtain $X^{x;u}(\tau+s)=\underline{x}$ for all $s\in[0,M]$ and $b(\tau+s)=b_0$ for a.e. $s\in[0,M]$, a.s. on $\Omega_T$. Taking $M,T\in\mathbb{N}$ and using countability, the conclusion follows on $\{X^{x;u}(\tau)=\underline{x}\}\cap \{\tau<\infty\}=\bigcup_{T\in\mathbb{N}}\Omega_T$.

\subsection{Proof of  Theorem \ref{thm:verification}}\label{appendix:sec:veri1}
Before giving the proof of Theorem \ref{thm:verification}, we first give two lemmas.
\begin{lemma}\label{lma:limit:v''}
 Under Assumptions \ref{assump:u:1} and \ref{assump:parameter}, and the conditions imposed on $v$ in Theorem \ref{thm:verification}, it holds that $r\underline{x}>I(v'(\underline{x}))$ and
    \begin{align}
 		\lim_{x\downarrow \underline{x}}(x-\underline{x})[v''(x)]^2=\frac14\theta^2(v'(\underline{x}))^2\frac{1}{r\underline{x}-I(v'(\underline{x}))}.\label{eq:verify:limit}
 	\end{align}
\end{lemma}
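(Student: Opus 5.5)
First, $r\underline{x}>I(v'(\underline{x}))$ follows from monotonicity of $I$. By hypothesis (ii) of Theorem \ref{thm:verification}, $v'(\underline{x})>U_1'(r\underline{x})$. Since $r\underline{x}>b_0$ in Case I, we have $U_1'(r\underline{x})<U_1'(b_0)$, so $I$ is given by $(U_1')^{-1}$ or by $b_0$ according to where $v'(\underline{x})$ lies relative to $U_1'(b_0)$. If $v'(\underline{x})\le U_1'(b_0)$, then $I(v'(\underline{x}))=(U_1')^{-1}(v'(\underline{x}))<r\underline{x}$, because $(U_1')^{-1}$ is strictly decreasing. If $v'(\underline{x})>U_1'(b_0)$, then $I(v'(\underline{x}))=b_0<r\underline{x}$. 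Either way the strict inequality holds.

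For the limit, I would let $x\downarrow\underline{x}$ in the HJB equation \eqref{eq:hjb:simplify}. The functions $v$ and $v'$ are continuous up to $\underline{x}$, and $D$ and $U_2$ are continuous. The equation then forces
\[
-\tfrac12\theta^2 (v'(x))^2/v''(x)\to L:=\delta v(\underline{x})-D(v'(\underline{x}))-U_2(\underline{x})-r\underline{x}v'(\underline{x}).
\]
Since $v''\to-\infty$, this gives $L=0$, which does not yet identify the rate. So I would instead use the differentiated equation, which is legitimate when $v\in C^3$. If only $C^2$ is assumed, I would argue with the quantity $g(x):=(v'(x))^2/v''(x)$ directly from \eqref{eq:hjb:simplify}: $g$ is $C^1$ on $(\underline{x},\infty)$, because
\[
g=\tfrac{2}{\theta^2}\bigl(D(v')+U_2+rxv'-\delta v\bigr),
\]
and the right-hand side is $C^1$ with $D'=-I$. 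Differentiating gives
\[
g'(x)=\tfrac{2}{\theta^2}\bigl[-I(v'(x))v''(x)+U_2'(x)+rv'(x)+rxv''(x)-\delta v'(x)\bigr]=\tfrac{2}{\theta^2}\bigl[(rx-I(v'(x)))v''(x)+U_2'(x)+(r-\delta)v'(x)\bigr].
\]
Now set $w:=-1/v''>0$, so that $g=-(v')^2w$. Then $w(x)\to0$ as $x\downarrow\underline{x}$, and $-g'=(v')^2w'+2v'v''w=(v')^2w'-2v'$. Combining the two expressions for $g'$ gives
\[
(v')^2w'(x)=2v'(x)+\tfrac{2}{\theta^2}\Bigl[\frac{rx-I(v'(x))}{w(x)}-U_2'(x)-(r-\delta)v'(x)\Bigr].
\]
As $x\downarrow\underline{x}$, we have $c(x):=rx-I(v'(x))\to c_0:=r\underline{x}-I(v'(\underline{x}))>0$ by the first step, while the remaining terms stay bounded. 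Hence
\[
w'(x)=\frac{2c_0}{\theta^2 A^2}\,\frac{1+o(1)}{w(x)},\qquad A:=v'(\underline{x}).
\]
In other words, $(w^2)'\to 4c_0/(\theta^2A^2)$. Integrating from $\underline{x}$, where $w^2\to0$, gives
\[
w(x)^2\sim\frac{4c_0}{\theta^2A^2}(x-\underline{x}),
\]
which means
\[
(x-\underline{x})(v'')^2=\frac{x-\underline{x}}{w^2}\to\frac{\theta^2A^2}{4c_0},
\]
exactly \eqref{eq:verify:limit}.

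The main obstacle is justifying the asymptotic integration. We need $w^2(x)=\int_{\underline{x}}^x (w^2)'$, which uses only that $w^2$ is continuous at $\underline{x}$ with value $0$ and that $(w^2)'$ has a finite limit there; the mean value theorem or L'H\^opital then gives $w^2/(x-\underline{x})\to\lim(w^2)'$. The other delicate point is that the bracketed terms $U_2'$ and $v'$ are bounded near $\underline{x}$. This holds because $U_2\in C^1([\underline{x},\infty))$ and $v\in C^1([\underline{x},\infty))$. It also uses $c_0>0$, which is exactly where the first claim enters.
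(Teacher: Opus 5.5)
Your proof is correct and is essentially the paper's argument. The same case split gives $r\underline{x}>I(v'(\underline{x}))$. For the limit, the paper works with the $C^1$ function $D(v')+U_2+rxv'-\delta v=\tfrac12\theta^2(v')^2/v''$ (a multiple of your $g$), differentiates it to get $(rx-I(v'))v''+U_2'+(r-\delta)v'$, notes that it vanishes at $\underline{x}$, and applies L'H\^opital to $(x-\underline{x})$ divided by its square; your computation via $w=-1/v''$ and $(w^2)'$ is a reparametrization of this. The one step you leave implicit, the differentiability of $w$, follows immediately from $w=-g/(v')^2$ with your $g$ and $v'$ both $C^1$ on $(\underline{x},\infty)$.
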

\begin{proof}
 We first prove $r\underline{x}>I(v'(\underline{x}))$. If $v'(\underline{x})\geq U_1'(b_0)$, then $I(v'(\underline{x}))=b_0<r\underline{x}$; If $U_1'(r\underline{x})<v'(\underline{x})< U_1'(b_0)$, then $I(v'(\underline{x}))<I(U_1'(r\underline{x}))=r\underline{x}$. In both cases, we have $r\underline{x}>I(v'(\underline{x}))$.

We next prove \eqref{eq:verify:limit}. Define 
 	\begin{align}	g(x):=D(v'(x))+U_2(x)+rxv'(x)-\delta v(x),\quad x>\underline{x}.\label{eq:defi_g}
 	\end{align}
    Because $v$ satisfies \eqref{eq:hjb:simplify} and $v''<0$ in $(\underline{x},\infty)$, we have $g<0$ in $(\underline{x},\infty)$ and 
 	\begin{align}\label{eq:v''}
 		v''(x)=\frac{1}{2}\theta^2 \frac{(v'(x))^2}{D(v'(x))+U_2(x)+rxv'(x)-\delta v(x)}=\frac{1}{2}\theta^2\frac{(v'(x))^2}{g(x)}, \quad x\in(\underline{x},\infty).
 	\end{align}
    Moreover,
 	\begin{align*}
 		g'(x)=U_2'(x)+\left[rx- I\left(v'(x)\right)\right]v''(x)+(r-\delta) v'(x), \quad x>\underline{x}.
 	\end{align*}
 	Using \eqref{eq:v''} and $\lim\limits_{x\downarrow\underline{x}}v''(x)=-\infty$ imposed in Theorem \ref{thm:verification}, we have $g(\underline{x})=0$. Applying l’Hôpital’s rule and using $\lim\limits_{x\downarrow\underline{x}}g'(x)=-\infty$, we obtain
 	\begin{align*}
 		\lim_{x\downarrow \underline{x}}(x-\underline{x})[v''(x)]^2&=\frac14\theta^4(v'(\underline{x}))^4\lim_{x\downarrow \underline{x}}\frac{(x-\underline{x})}{g^2(x)}=\frac14\theta^4(v'(\underline{x}))^4\lim_{x\downarrow \underline{x}}\frac{1}{2g(x)g'(x)}\nonumber\\
 		&=\frac14\theta^2(v'(\underline{x}))^2\lim_{x\downarrow \underline{x}}\frac{v''(x)}{g'(x)}=\frac14\theta^2(v'(\underline{x}))^2\frac{1}{r\underline{x}-I(v'(\underline{x}))}.
 	\end{align*}
\end{proof}
\begin{lemma}\label{lma:X^gamma}
Under Assumptions \ref{assump:u:1} and \ref{assump:parameter}, together with the conditions imposed on $v$ in Theorem \ref{thm:verification}, for any $T>0$ and $x\geq \underline{x}$, we have
    \begin{align*}
        \mE\left[|X^{x;*}(T)|^{\gamma}\right]\leq x^{\gamma}\exp\left(\left(\gamma r+\frac12\frac{\gamma\theta^2}{1-\gamma}\right)T\right).
    \end{align*}
\end{lemma}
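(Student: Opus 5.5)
We will apply Itô's formula to $f(X^{x;*})$ with $f(y)=y^{\gamma}$, using only the structure of the feedback controls in \eqref{eq:b^*} and the fact that $X^{x;*}\geq\underline{x}>0$. On $(0,\infty)$ the function $f$ is smooth, and the process never leaves $[\underline{x},\infty)$, so Itô's formula applies on all of $[0,T]$. Writing $\pi^*:=\pi^*(X^{x;*}(s))$ and $b^*:=b^*(X^{x;*}(s))$, we get
\begin{align*}
\md (X^{x;*})^{\gamma}
=\gamma (X^{x;*})^{\gamma}\Big[r+\sigma\theta\pi^*-\frac{b^*}{X^{x;*}}-\frac{1-\gamma}{2}\sigma^2(\pi^*)^2\Big]\md s
+\gamma\sigma\pi^*(X^{x;*})^{\gamma}\md W(s).
\end{align*}

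The key pointwise step bounds the drift bracket uniformly. Because $b^*\geq b_0>0$, the term $-b^*/X^{x;*}$ is nonpositive and can be dropped. Maximizing the quadratic $\sigma\theta\pi-\frac{1-\gamma}{2}\sigma^2\pi^2$ over $\pi\in\mR$ gives at most $\frac{\theta^2}{2(1-\gamma)}$. Hence the drift is bounded by $\kappa(X^{x;*})^{\gamma}$ with $\kappa:=\gamma r+\frac12\frac{\gamma\theta^2}{1-\gamma}$, which is exactly the constant in the statement.

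Next we localize. Let $\tau_n:=\inf\{s: X^{x;*}(s)\geq n\}\wedge T$. On $[0,\tau_n]$ we have $X^{x;*}\in[\underline{x},n]$. By the linear growth of $v'/v''$ in (iii) of Theorem \ref{thm:verification}, $|\pi^*(y)|y=\frac{\theta}{\sigma}\left|\frac{v'(y)}{v''(y)}\right|\leq C(1+y)$, so $\pi^*(X^{x;*})(X^{x;*})^{\gamma}$ is bounded on $[0,\tau_n]$. The stopped stochastic integral is therefore a true martingale. Taking expectations gives
\begin{align*}
\mE\big[(X^{x;*}(t\wedge\tau_n))^{\gamma}\big]\leq x^{\gamma}+\kappa\,\mE\int_0^{t\wedge\tau_n}(X^{x;*}(s))^{\gamma}\md s
\leq x^{\gamma}+\kappa\int_0^{t}\mE\big[(X^{x;*}(s\wedge\tau_n))^{\gamma}\big]\md s,
\end{align*}
where the last step uses $\kappa>0$ and the nonnegativity of the integrand. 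The function $t\mapsto\mE[(X^{x;*}(t\wedge\tau_n))^{\gamma}]$ is bounded by $n^{\gamma}$, so Grönwall's inequality yields $\mE[(X^{x;*}(T\wedge\tau_n))^{\gamma}]\leq x^{\gamma}e^{\kappa T}$.

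Finally, since the strong solution exists globally (it does not explode), $\tau_n\to T$ a.s. Fatou's lemma then gives the claim for $\mE[|X^{x;*}(T)|^{\gamma}]$.

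The only delicate point is the localization and martingale justification. Dropping the negative consumption term and completing the square are both elementary. Non-explosion is already part of the assumption that the unique strong solution exists on $[0,\infty)$, so even this step needs only the linear growth hypothesis to make the stopped integrals square integrable.
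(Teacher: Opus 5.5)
Your proof is correct, but it takes a different route from the paper's.

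\textbf{The paper's route.} The paper never applies It\^o's formula to $y^\gamma$. It works in three steps.
\begin{enumerate}
\item It passes to the measure $\mathbb{Q}_T$ of \eqref{eq:Q:measure}, under which the $\sigma\theta\pi^*X^{x;*}$ drift disappears from the wealth dynamics.
\item It drops $b^*\ge0$ and applies Gr\"onwall to obtain the strategy-independent, budget-type first-moment bound $\mE^{\mathbb{Q}_T}[X^{x;*}(T)]\le xe^{rT}$.
\item It returns to $\mathbb{P}$ via H\"older's inequality with exponents $1/\gamma$ and $1/(1-\gamma)$, applied against the density $\md\mathbb{P}/\md\mathbb{Q}_T=\exp(\theta\widetilde{W}_T-\frac12\theta^2T)$.
\end{enumerate}
In that argument, the term $\frac{\gamma\theta^2}{2(1-\gamma)}$ comes from the exponential moment of the density, not from completing the square in $\pi$.

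\textbf{Your route.} You apply It\^o's formula directly and maximize $\sigma\theta\pi-\frac{1-\gamma}{2}\sigma^2\pi^2$ pointwise, a Merton-type step. This gives the identical constant and avoids Girsanov entirely. It is also more explicit about integrability: your localization at $\tau_n$ (with $X^{\gamma-1}\le\underline{x}^{\gamma-1}$ bounding the integrand) and the final Fatou step are spelled out. By contrast, the paper simply asserts that linear growth of $\sigma\pi^*x$ lets one take expectations under $\mathbb{Q}_T$.

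\textbf{What each buys.} Your argument needs the wealth to stay bounded away from $0$, so that $y^\gamma$ is smooth along the path. That holds here because $X^{x;*}\ge\underline{x}>0$. The paper's duality-style estimate uses only nonnegativity of wealth and linearity of the dynamics, together with an exact Gaussian computation.
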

\begin{proof}
    For any $T>0$ and $x\geq \underline{x}$, the optimal wealth process $X^{x;*}$ satisfies
\begin{align*}\begin{cases}
	\md X^{x;*}(s)=rX^{x;*}(s)\md s-b^*(X^{x;*}(s))\md s+\sigma\pi^*(X^{x;*}(s))X^{x;*}(s)\md \tilde{W}(s),\quad s\in[0,T],\\
	X^{x;*}(0)=x,
	\end{cases}
\end{align*}
where we recall $\widetilde{W}=\left\{\widetilde{W}_t=W_t+\theta t:t\in[0,T]\right\}$ is a Brownian Motion under the probability measure $\mathbb{Q}_T$,  defined  in \eqref{eq:Q:measure}.
Note that $\sigma \pi^*(x)x=-\theta\frac{v'(x)}{v''(x)}$ is of linear growth, as imposed in Theorem \ref{thm:verification}. Therefore, we can take expectations under $\mathbb{Q}_T$ and use the fact that $b^*(\cdot)\ge b_0>0$ to obtain
\begin{align*}
   \mE^{\mathbb{Q}_T}\left[X^{x;*}(t)\right]\leq x+r \int_0^t\mE^{\mathbb{Q}_T}\left[X^{x;*}(s)\right]\md s\quad \text{for all } t\in[0,T].
\end{align*}
An application of Gr\"onwall’s inequality yields
\begin{align*}
    \mE^{\mathbb{Q}_T}\left[X^{x;*}(T)\right]\leq x\exp(rT).
\end{align*}
By the change of measure and Hölder’s inequality, we obtain
\begin{align*}
     \mE\left[|X^{x;*}(T)|^{\gamma}\right]&= \mE^{\mathbb{Q}_T}\left[|X^{x;*}(T)|^{\gamma}\exp\left(\theta \tilde{W}_T-\frac12 \theta^2 T\right)\right]\\
     &\ \leq \left(\mE^{\mathbb{Q}_T}\left[X^{x;*}(T)\right]\right)^{\gamma}\left(\mE^{\mathbb{Q}_T}\left[\exp\left(\frac{1}{1-\gamma}\left(\theta \tilde{W}_T-\frac12 \theta^2 T\right)\right)\right]\right)^{1-\gamma}\\
     &\ \leq x^{\gamma}\exp\left(\left(\gamma r+\frac12\frac{\gamma\theta^2}{1-\gamma}\right)T\right).
\end{align*}
\end{proof}

\paragraph{Proof of Theorem \ref{thm:verification}:}

	Because $\lim\limits_{x\downarrow \underline{x}}v''(x)=-\infty$, we approximate $v$ by a sequence  $\{v_\epsilon\}_{\epsilon\in(0,1)} \subset C^2([\underline{x},\infty))$ in order to apply Itô's formula. For each $\epsilon\in(0,1)$, define    
	\begin{align*}
		v_{\epsilon}(x):=\left\{\begin{aligned}
			&v(x),&&x\in[\underline{x}+\epsilon,\infty),\\
			&a_{\epsilon}+c_{\epsilon}x+d_{\epsilon}x^2, &&x\in[\underline{x},\underline{x}+\epsilon),
		\end{aligned}\right.
	\end{align*} 
	where the constants $a_{\epsilon}, c_{\epsilon}$ and $ d_{\epsilon}$ are chosen so that $v_{\epsilon}\in C^2([\underline{x},\infty))$,  namely,
	\begin{align*}
	&	a_{\epsilon}=v(\underline{x}+\epsilon)-(\underline{x}+\epsilon)v'(\underline{x}+\epsilon)+\frac12 (\underline{x}+\epsilon)^2v''(\underline{x}+\epsilon),\\
		&c_{\epsilon}=v'(\underline{x}+\epsilon)-v''(\underline{x}+\epsilon)(\underline{x}+\epsilon),\quad d_{\epsilon}=\frac12 v''(\underline{x}+\epsilon). 
	\end{align*}
For $x\in[\underline{x},\underline{x}+\epsilon]$, straightforward calculations yield
	\begin{align}
		v_{\epsilon}(x)-v(x)&=v(\underline{x}+\epsilon)-v(x)-(\underline{x}+\epsilon)v'(\underline{x}+\epsilon)+\frac12 (\underline{x}+\epsilon)^2v''(\underline{x}+\epsilon)\nonumber\\
		&\quad\quad\quad+(v'(\underline{x}+\epsilon)-v''(\underline{x}+\epsilon)(\underline{x}+\epsilon) )x+\frac12 v''(\underline{x}+\epsilon) x^2\nonumber\\
		&=v(\underline{x}+\epsilon)-v(x)+(x-\underline{x}-\epsilon)v'(\underline{x}+\epsilon)+\frac12(x-\underline{x}-\epsilon)^2v''(\underline{x}+\epsilon),\label{eq:v:converge}\\
		v'_{\epsilon}(x)-v'(x)&=v'(\underline{x}+\epsilon)-v'(x)+(x-\underline{x}-\epsilon)v''(\underline{x}+\epsilon).\label{eq:v':converge}
	\end{align}
   Moreover, 
    \begin{align}\label{eq:conver:cross}
        \limsup_{\epsilon\downarrow 0} \sup_{x\in [\underline{x},\underline{x}+\epsilon]} \left|(\underline{x}+\epsilon-x)v''(\underline{x}+\epsilon)\right|^2\leq \limsup_{\epsilon\downarrow 0}  \left|\epsilon v''(\underline{x}+\epsilon)\right|^2=0,
    \end{align}
    where the  inequality uses $0\leq \underline{x}+\epsilon-x\leq \epsilon$ for any $x\in[\underline{x},\underline{x}+\epsilon]$, and the equality uses \eqref{eq:verify:limit}.
   Therefore, we can use \eqref{eq:conver:cross} to dominate the remainder of \eqref{eq:v:converge} and \eqref{eq:v':converge}, which ensures the uniform convergence. Together with the fact that $v$ and $v'$ are uniformly continuous and bounded on $[\underline{x},\underline{x}+1]$, we obtain
	\begin{align*}
		&v_{\epsilon}\to v \,\,\, \text{and}\,\,\,v_{\epsilon}'\to v' \quad \text{uniformly on }  [\underline{x},\infty)\quad \text{as } \epsilon\to 0.
	\end{align*}
    Moreover, because $D(\cdot)$ (defined in \eqref{eq:D}) is $C^1$, it follows that, as $\epsilon\to 0$,
    \begin{align*}
		&\sup _{b\geq b_0} \{U_1(b)-bv_\epsilon'(x)\}=D(v_{\epsilon}'(x)) \to D(v'(x))= \sup _{b\geq b_0} \{U_1(b)-bv'(x)\} \quad \text{uniformly on }  [\underline{x},\infty).
    \end{align*}
	Furthermore,  using $\lim\limits_{x\downarrow\underline{x}}v''(x)=-\infty$, as $\epsilon\to 0$,
    \begin{align*}
        \frac{1}{v_{\epsilon}''(x)}-\frac{1}{v''(x)}=\begin{cases}
           &\frac{1}{v''(\underline{x}+\epsilon)} - \frac{1}{v''(x)},x\in[\underline{x}, \underline{x}+\epsilon],\\
           &0,\quad x>\underline{x}+\epsilon,
        \end{cases} \quad \text{ uniformly converges to $0$ on }  [\underline{x},\infty).
    \end{align*}
Consequently, as $\epsilon\to 0$,
	\begin{align*}
	&\sup_{\pi\in\mR}\{\sigma \theta\pi x v'_\epsilon(x)+\frac{\sigma^2}{2}\pi^2x^2v_\epsilon''(x)\}=-\frac{1}{2}\theta^2\frac{\left(v_{\epsilon}'(x)\right)^2}{v_\epsilon''(x)} \\
     &\quad\quad\quad\to -\frac{1}{2}\theta^2\frac{\left(v'(x)\right)^2}{v''(x)}=	\sup_{\pi\in\mR}\{\sigma \theta\pi x v'(x)+\frac{\sigma^2}{2}\pi^2x^2v''(x)\}	 \quad \text{uniformly on } [\underline{x},\infty).
	\end{align*}
	Therefore, $v_\epsilon$ satisfies the following equation:
	\begin{align}
			-\delta v_\epsilon(x)&+\sup _{b\geq b_0} \{U_1(b)-bv_\epsilon'(x)\}+\sup_{\pi\in\mR}\{\sigma \theta\pi x v_\epsilon'(x)+\frac{\sigma^2}{2}\pi^2x^2v_\epsilon''(x)\}\nonumber\\&+U_2(x)+rxv_\epsilon'(x)=g_\epsilon(x),\quad x\geq \underline{x},\label{eq:verify:hjb:epsilon}
	\end{align}
	where $g_\epsilon$ is continuous and bounded, satisfies 
$g_\epsilon(x)=0$ for $x>\underline{x}+\epsilon$, and 
$g_\epsilon\to 0$ uniformly on $[\underline{x},\infty)$.
    
For any $x\geq \underline{x}$, $u\in \mathcal{A}(x)$ and $t>0$, 
applying Itô's formula to $\mathrm{e}^{-\delta t}v_\epsilon(X^{x;u}(t))$ gives
	\begin{align}
		\mathrm{e}^{-\delta t}v_\epsilon(X^{x;u}(t))-	v_\epsilon(x)&=-\int_0^t\delta \mathrm{e}^{-\delta s}v_\epsilon(X^{x;u}(s)) \md s+ \int_0^t\mathrm{e}^{-\delta s}v'_\epsilon(X^{x;u}(s))\mathrm{d} X^{x;u}(s)\nonumber\\
		&\quad\quad+ \frac12\int_0^t\mathrm{e}^{-\delta s}\sigma^2\pi^2(s)(X^{x;u}(s))^2v''_\epsilon(X^{x;u}(s))\md s.\label{eq:ito:epsilon}
	\end{align}
	After rearrangement, we obtain
	\begin{align*}
			&\mathrm{e}^{-\delta t}v_\epsilon(X^{x;u}(t))+ \int_0^t \mathrm{e}^{-\delta s}(U_1(b(s))+U_2(X^{x;u}(s)))\md s\\=&	v_\epsilon(x)+\int_0^t \mathrm{e}^{-\delta s}\bigg[-\delta v_\epsilon(X^{x;u}(s)) +\left(rX^{x;u}(s)+\sigma \theta\pi(s)X^{x;u}(s)-b(s)\right)v'_\epsilon(X^{x;u}(s))\\
			&\quad\quad\quad\quad\quad\quad\quad+\frac12\sigma^2\pi^2(s)(X^{x;u}(s))^2 v''_\epsilon(X^{x;u}(s))\bigg]\md s\\
			&\quad +\int_0^t \mathrm{e}^{-\delta s}(U_1(b(s))+U_2(X^{x;u}(s)))\md s+\int_0^t\mathrm{e}^{-\delta s}v'_\epsilon(X^{x;u}(s))\sigma\pi(s)X^{x;u}(s)\mathrm{d}W(s)\\
			\leq & v_\epsilon(x)+\int_0^t\mathrm{e}^{-\delta s}g_\epsilon(X^{x;u}(s))\md s+\int_0^t\mathrm{e}^{-\delta s}v'_\epsilon(X^{x;u}(s))\sigma\pi(s)X^{x;u}(s)\mathrm{d}W(s).
	\end{align*}
	Because $v$, $U_1$, $U_2$ are non-negative and $v_\epsilon\to v$ uniformly on $[\underline{x},\infty)$ as $\epsilon\to0$ (which implies that $v_\epsilon$ is bounded from below),  $\int_0^{\cdot}\mathrm{e}^{-\delta s}v'_\epsilon(X^{x;u}(s))\sigma\pi(s)X^{x;u}(s)\mathrm{d}W(s)$ is a supermartingale and thus
	\begin{align*}
		&\mE\left[\mathrm{e}^{-\delta t}v_\epsilon(X^{x;u}(t))+ \int_0^t \mathrm{e}^{-\delta s}(U_1(b(s))+U_2(X^{x;u}(s)))\md s\right]\leq v_{\epsilon}(x)+\mE\left[\int_0^t\mathrm{e}^{-\delta s}g_\epsilon(X^{x;u}(s))\md s\right].
	\end{align*}
    Because $v_\epsilon=v$ and $g_\epsilon=0$ on $(\underline{x}+\epsilon,\infty)$,
	letting $\epsilon\to0$ and using the dominated convergence theorem, we have
    	\begin{align*}
		&\mE\left[\mathrm{e}^{-\delta t}v(X^{x;u}(t))+ \int_0^t \mathrm{e}^{-\delta s}(U_1(b(s))+U_2(X^{x;u}(s)))\md s\right]\leq v(x).
	\end{align*}
  Because $v$, $U_1$ and $U_2$ are nonnegative, letting $t\to\infty$ and using  the monotone convergence theorem,  we get 
  \begin{align*}
		&\mE\left[\int_0^\infty \mathrm{e}^{-\delta s}(U_1(b(s))+U_2(X^{x;u}(s)))\md s\right]\leq v(x).
	\end{align*}
As $u=(b,\pi)$ is arbitrary, we conclude that  $V(x)\leq v(x)$. 
	
	Next, we prove that $(b^*(X^{x;*}),\pi^*(X^{x;*})) $ is optimal for any $x$. Note that   under the condition in Theorem \ref{thm:verification}(iii), for any $p>1$, \begin{align}\label{eq:moment:}
\mE\Big[\sup\limits_{s\in[0,t]}|X^{x;*}(s)|^p\Big]<\infty\quad \text{for all } t>0.
\end{align}
        Using  Assumption \ref{assump:u:1}(ii), together with our hypothesis that $I(v')$ and $\frac{v'}{v''}$ are of linear growth, we take expectations on both sides of \eqref{eq:ito:epsilon} (with $(b,\pi,X^{x;u})$ replaced by $(b^*(X^{x;*}),\pi^*(X^{x;*}),X^{x;*})$) to obtain
	\begin{align}\label{eq:verify:ito:*}
		v_\epsilon(x)\!=&\mE \left\{\int_0^t \mathrm{e}^{-\delta s}\left[ -\left(rX^{x;*}(s)+\sigma \theta\pi^*(X^{x;*}(s))X^{x;*}(s)-b^*(X^{x;*}(s))\right)v'_\epsilon(X^{x;*}(s))\right.\right.\nonumber\\
		&\left.\left.+\delta v_\epsilon(X^{x;*}(s))\!-\!\frac12\sigma^2(\pi^*(X^{x;*}(s)))^2(X^{x;*}(s))^2 v''_\epsilon(X^{x;*}(s))\!\right]\md s\!\right\}\!+\!	\mE\!\left[\mathrm{e}^{-\delta t}v_\epsilon(X^{x;*}(t))\!\right].
	\end{align}
Note that
\begin{align}\label{eq:dct:pi}
    (\pi^*(y))^2y^2v''_{\epsilon}(y)=\frac{\theta^2}{\sigma^2}v'(y)\frac{v'(y)}{v''(y)}\frac{v''_{\epsilon}(y)}{v''(y)}=\left\{\begin{aligned}
        &\frac{\theta^2}{\sigma^2}v'(y)\frac{v'(y)}{v''(y)},&&y\in(\underline{x}+\epsilon,\infty),\\
        &\frac{\theta^2}{\sigma^2}[v'(y)]^2\frac{v''(\underline{x}+\epsilon)}{[v''(y)]^2},&&y\in[\underline{x},\underline{x}+\epsilon].
    \end{aligned}\right.
\end{align}
Define
\begin{align*}
    C_{\underline{x}}:=\frac14\theta^2(v'(\underline{x}))^2\frac{1}{r\underline{x}-I(v'(\underline{x}))}>0.
\end{align*}
By \eqref{eq:verify:limit},  there exists $\epsilon_0>0$ such that
\begin{align*}
   C_{\underline{x}}/2\leq (y-\underline{x})[v''(y)]^2\leq 2C_{\underline{x}}\quad \text{for all } y\in (\underline{x},\underline{x}+\epsilon_0].
\end{align*}
Consequently,
\begin{align}    \label{eq:for:something}\sup_{\epsilon\in(0,\epsilon_0)}\sup_{y\in(\underline{x},\underline{x}+\epsilon]}\frac{|v''(\underline{x}+\epsilon)|}{[v''(y)]^2}\leq \sup_{\epsilon\in(0,\epsilon_0)}\sup_{y\in(\underline{x},\underline{x}+\epsilon]}\frac{\sqrt{\frac{2C_{\underline{x}}}{\epsilon}}}{\frac{C_{\underline{x}}/2}{(y-\underline{x})}}\leq \sqrt{\frac{8\epsilon_0}{C_{\underline{x}}}}.
\end{align}
 Combining \eqref{eq:for:something} with \eqref{eq:dct:pi} and the hypothesis that $\frac{v'}{v''}$ has linear growth, we then apply the dominated convergence theorem to  let $\epsilon\to0$, yielding
  \begin{align*}
   &\mE \left\{\int_0^t \mathrm{e}^{-\delta s}\bigg[\frac12\sigma^2(\pi^*(X^{x;*}(s)))^2(X^{x;*}(s))^2 v_{\epsilon}''(X^{x;*}(s))\bigg]\md s \right\}\\
   \to &\mE\left\{ \int_0^t \mathrm{e}^{-\delta s}\bigg[\frac12\sigma^2(\pi^*(X^{x;*}(s)))^2(X^{x;*}(s))^2 v''(X^{x;*}(s))\bigg]\md s\right\}.
  \end{align*}
 The remaining terms in \eqref{eq:verify:ito:*} can be treated in a similar manner. Applying the dominated convergence theorem and letting $\epsilon\to0$, we obtain
  	\begin{align}\label{eq:verify:ito:*:v}
		v(x)=&\mE \left\{\int_0^t \mathrm{e}^{-\delta s}\bigg[ -\left(rX^{x;*}(s)+\sigma \theta\pi^*(X^{x;*}(s))X^{x;*}(s)-b^*(X^{x;*}(s))\right)v'(X^{x;*}(s))\right.\nonumber\\
		&\left.+\delta v(X^{x;*}(s))-\frac12\sigma^2(\pi^*(X^{x;*}(s)))^2(X^{x;*}(s))^2 v''(X^{x;*}(s))\bigg]\md s\right\}+	\mE\left[\mathrm{e}^{-\delta t}v(X^{x;*}(t))\right]\nonumber\\
        =&\mE\left[\mathrm{e}^{-\delta t}v(X^{x;*}(t))+ \int_0^t \mathrm{e}^{-\delta s}\bigl(U_1(b^*(X^{x;*}(s)))+U_2(X^{x;*}(s))\bigr)\md s\right],
	\end{align}
 where the second equality follows from the fact that $v$ satisfies \eqref{eq:hjb}.

Moreover, by Lemma~\ref{lma:X^gamma}, we have
    \begin{align*}
        \lim_{t\to \infty} \mE\left[\mathrm{e}^{-\delta t}|X^{x;*}(t)|^{\gamma}\right]\leq   \lim_{t\to \infty}x^{\gamma}\exp\left(\left(-\delta+\gamma r+\frac12\frac{\gamma\theta^2}{1-\gamma}\right)t\right)=0.
    \end{align*}
Therefore, using the growth bound $v(x)\le C_{v,0} x^\gamma$, it follows that $\lim\limits_{t\to\infty}\mE\left[\mathrm{e}^{-\delta t}v(X^{x;*}(t))\right]=0$.
	Finally,  letting $t\to\infty$ in \eqref{eq:verify:ito:*:v}, the conclusion follows immediately.

\subsection{Proof of Theorem \ref{thm:verification2}}\label{appendix:sec:veri2}
   Fix $x>\underline{x}$ and let $u=(b,\pi)\in\mathcal{A}(x)$. 
For any $\epsilon\in\left(0,\frac12(x-\underline{x})\right)$, define
    \begin{align*}
       \tau:=\inf\{s>0:X^{x;u}(s)=\underline{x}\},\quad \tau_\epsilon:=\inf\{s>0:X^{x;u}(s)=\underline{x}+\epsilon\}.
    \end{align*}
   Then $\tau_\epsilon\uparrow \tau$ a.s..
Applying Itô's formula as in \eqref{eq:ito:epsilon} and using that $v\in C^2((\underline{x},\infty))$ satisfies \eqref{eq:hjb}, we obtain
	\begin{align*}
			&\mathrm{e}^{-\delta (t\wedge \tau_\epsilon)}v(X^{x;u}(t\wedge \tau_\epsilon))+ \int_0^{t\wedge \tau_\epsilon} \mathrm{e}^{-\delta s}(U_1(b(s))+U_2(X^{x;u}(s)))\md s\\
			\leq & v(x)+\int_0^{t\wedge \tau_\epsilon}\mathrm{e}^{-\delta s}v'(X^{x;u}(s))\sigma\pi(s)X^{x;u}(s)\mathrm{d}W(s),
	\end{align*}
    	Because $v$, $U_1$, and $U_2$ are non-negative, the stochastic integral is a supermartingale. 
Taking expectations gives\begin{align}\label{eq:case2:ito}
		&\mE\left[\mathrm{e}^{-\delta (t\wedge \tau_\epsilon)}v(X^{x;u}(t\wedge \tau_\epsilon))+ \int_0^{t\wedge \tau_\epsilon} \mathrm{e}^{-\delta s}(U_1(b(s))+U_2(X^{x;u}(s)))\md s\right]\leq v(x).
	\end{align}
    By Fatou's lemma,
    \begin{align*}
        \liminf_{\epsilon \to 0} \mE\left[\mathbb{I}_{\tau<\infty}\mathrm{e}^{-\delta (t\wedge \tau_\epsilon)}v(X^{x;u}(t\wedge \tau_\epsilon))\right]\geq \mE\left[\mathbb{I}_{\tau<\infty}\mathrm{e}^{-\delta (t\wedge\tau)}v(X^{x;u}(t\wedge\tau))\right].
    \end{align*}
    Since $v$, $U_1$, and $U_2$ are non-negative, letting $\epsilon\to0$ and applying the monotone convergence theorem to \eqref{eq:case2:ito}, we obtain
    \begin{align}
        \mE\left[\mathbb{I}_{\tau<\infty}\mathrm{e}^{-\delta (t\wedge \tau)}v(X^{x;u}(t\wedge \tau))+ \int_0^{t\wedge \tau} \mathrm{e}^{-\delta s}(U_1(b(s))+U_2(X^{x;u}(s)))\md s\right]\leq v(x).
        \label{eq:verify:2:lll}
    \end{align} 
    Applying Fatou's lemma again,
    \begin{align*}
      \liminf_{t\to \infty}\mE\left[\mathbb{I}_{\tau<\infty}\mathrm{e}^{-\delta (t\wedge\tau)}v(X^{x;u}(t\wedge\tau))\right]\geq \mE\left[\mathbb{I}_{\tau<\infty}\mathrm{e}^{-\delta \tau}v(X^{x;u}(\tau))\right].  
    \end{align*}
   Letting $t\to\infty$ in \eqref{eq:verify:2:lll} and using the monotone convergence theorem yields 
  \begin{align*}
		&\mE\left[\int_0^{\tau} \mathrm{e}^{-\delta s}(U_1(b(s))+U_2(X^{x;u}(s)))\md s+\mathbb{I}_{\tau<\infty}\mathrm{e}^{-\delta \tau}v(X^{x;u}(\tau))\right]\leq v(x).
	\end{align*}
  On the event $\{\tau<\infty\}$, we have $X^{x;u}(\tau)=\underline{x}$. 
By Lemma \ref{lma:stop}, $b(\tau+s)=b_0$ and $X^{x;u}(\tau+s)=\underline{x}$  for a.e. $s\geq 0$, a.s.. Therefore,
    \begin{align*}
        \mathbb{I}_{\tau<\infty}\mathrm{e}^{-\delta \tau}v(X^{x;u}(\tau))&\ =\mathbb{I}_{\tau<\infty}\mathrm{e}^{-\delta \tau}v(\underline{x})=\mathbb{I}_{\tau<\infty}\mathrm{e}^{-\delta \tau}V(\underline{x})\\
        &\ =\mathbb{I}_{\tau<\infty}\mathrm{e}^{-\delta \tau}\int_\tau^\infty \mathrm{e}^{-\delta (s-\tau)}\left(U_1(b_0)+U_2(\underline{x})\right)\md s\\
       &\  =\mathbb{I}_{\tau<\infty}\int_\tau^\infty \mathrm{e}^{-\delta s}\left(U_1(b(s))+U_2(X^{x;u}(s))\right)\md s.
    \end{align*}
Consequently,
    \begin{align*}
       &\mE\left[\int_0^{\infty} \mathrm{e}^{-\delta s}(U_1(b(s))+U_2(X^{x;u}(s)))\md s\right]\leq v(x). 
    \end{align*}
As $u=(b,\pi)$ is arbitrary, we conclude that  $V(x)\leq v(x)$. 

We next verify the optimality of $(b^*,\pi^*)$. 

Repeating the above argument and using that $(b^*,\pi^*)$ attains the supremum in the HJB equation, we obtain
\begin{align*}
    &\mathrm{e}^{-\delta (t\wedge \tau_\epsilon)}v(X^{x;*}(t\wedge \tau_\epsilon))+ \int_0^{t\wedge \tau_\epsilon} \mathrm{e}^{-\delta s}(U_1(b^*(X^{x;*}(s)))+U_2(X^{x;*}(s)))\md s\\
    =&v(x)+\int_0^{t\wedge \tau_\epsilon}\mathrm{e}^{-\delta s}v'(X^{x;*}(s))\sigma\pi^*(X^{x;*}(s))X^{x;*}(s)\mathrm{d}W(s).
\end{align*}
Under the linear growth condition in (iii) of Theorem \ref{thm:verification2}, \eqref{eq:moment:} holds for any $p \ge 1$.
Thus, $\eqref{eq:case2:ito}$ holds with equality when $u$ is replaced by $(b^*,\pi^*)$, i.e.
\begin{align*}
		&\mE\left[\mathrm{e}^{-\delta (t\wedge \tau_\epsilon)}v(X^{x;*}(t\wedge \tau_\epsilon))+ \int_0^{t\wedge \tau_\epsilon} \mathrm{e}^{-\delta s}(U_1(b^*(X^{x;*}(s)))+U_2(X^{x;*}(s)))\md s\right]= v(x).
	\end{align*}
Letting $\epsilon\to0$ and applying the dominated convergence theorem, we obtain
\begin{align}\label{eq:veri1}
		&\mE\left[\mathrm{e}^{-\delta (t\wedge \tau)}v(X^{x;*}(t\wedge \tau))+ \int_0^{t\wedge \tau} \mathrm{e}^{-\delta s}(U_1(b^*(X^{x;*}(s)))+U_2(X^{x;*}(s)))\md s\right]= v(x).
	\end{align}
    Note that
    \begin{align}\label{eq:left}
       \mE\left[\mathrm{e}^{-\delta (t\wedge \tau)}v(X^{x;*}(t\wedge \tau))\right]= \mE\left[\mathrm{e}^{-\delta t}v(X^{x;*}(t))\mathbbm{1}_{\tau>t}+\mathrm{e}^{-\delta \tau}v(X^{x;*}(\tau))\mathbbm{1}_{\tau\leq t}\right].
    \end{align}
  The first term on the right-hand side of \eqref{eq:left} converges to $0$ as $t \to \infty$, for the same reason as in the proof of Theorem~\ref{thm:verification}. The second term on the right-hand side of \eqref{eq:left} converges, as $t \to \infty$, to $\mE\left[\mathrm{e}^{-\delta \tau}v(X^{x;*}(\tau))\mathbbm{1}_{\tau<\infty}\right]$, by the monotone convergence theorem .

Letting $t\to\infty$ in \eqref{eq:veri1}, we get
\begin{align*}
    v(x)=&\mE\left[\mathrm{e}^{-\delta \tau}v(X^{x;*}( \tau))\mathbbm{1}_{\tau<\infty}+ \int_0^{ \tau} \mathrm{e}^{-\delta s}(U_1(b^*(X^{x;*}(s)))+U_2(X^{x;*}(s)))\md s\right]\\
    =&\mE\left[\int_0^{ \infty} \mathrm{e}^{-\delta s}(U_1(b^*(X^{x;*}(s)))+U_2(X^{x;*}(s)))\md s\right].
\end{align*}
Thus, $v=V$ and $(b^*,\pi^*)$ is optimal.

\section{Proofs for the Statements in Section~\ref{sec:well:hjb}}
\subsection{Proof of Lemma \ref{lma:local:tildeh}}
\label{sec:proof:local:tildeh}
Fix $y\in(0,z_1)$.
Equation \eqref{eq:extension:u_22} can be rewritten as
\begin{equation}\label{eq:H:system}
\left\{
\begin{aligned}
    &H''(h)=\Phi(h,H(h),H'(h)),\quad h\in(y,\infty),\\
    &H(y)=\underline{x}, \quad H'(y)=0,
\end{aligned}
\right.
\end{equation}
    where $\Phi(\cdot):(0,\infty)\times \mR^2\to \mR$ is defined by
    \begin{align}\label{eq:PHI}
        \Phi(h,p,q):=-\frac{2(r-\delta)}{\theta^2h}{q}-\frac{2}{\theta^2}\tilde{U}_2'({p}){q}+\frac{2r}{\theta^2h^2}{p}-\frac{2}{\theta^2h^2}\tilde{I}(h).
    \end{align}
    Introducing $P(h)=H(h)$ and ${Q}(h)=H'(h)$, the equation  \eqref{eq:H:system} is equivalent to the first-order system
    \begin{align}\label{eq:ode:pq}
        \begin{cases}
            &{P}'(h)={Q}(h) \,\, \,\text{and}\,\,\, {Q}'(h)=\Phi(h,{P}(h),{Q}(h)), \quad h\in(y,\infty),\\
            &{P}(y)=\underline{x},\quad {Q}(y)=0.
        \end{cases}
    \end{align}
   The function $\Phi$ is  locally Lipschitz continuous in $(h,p,q)$. Hence, $\Phi$ is  Lipschitz continuous on $[y,M]\times K$ for any $M>y$ and any compact set $K\subset\mR^2$ containing $(\underline{x},0)$.
 Therefore, system \eqref{eq:ode:pq} admits a unique local solution $({P},{Q})$ on $[y,y+\epsilon(y))$ for some $\epsilon(y)>0$ (see, e.g., \cite*[Theorem 2.2]{teschl2012ordinary}). Consequently, $H(\cdot;y)$ is uniquely determined and belongs to $C^2([y,y+\epsilon(y)))$.

 \subsection{Proof of Lemma \ref{lma:represent}}
 \label{sec:proof:lma:represent}
 \textit{Proof of \textup{(i)}.}
Fix $y\in(0,z_1)$. For notational simplicity, we suppress the
 dependence on $y$ and write $H$ for $H(\cdot;y)$. Rearranging
\eqref{eq:extension:u_22} by moving
$\tilde{I}(h)+\tilde{U}_2'(H(h))h^2H'(h)$ to the right-hand side,
we obtain
 \begin{align}\label{eq:ode:tilde:y:kernel}
\left\{
\begin{aligned}
		&-\frac{1}{2} \theta^2 h^2H''(h) -\left(r-\delta\right) hH'(h) +r H(h) =\tilde{I}(h)+\tilde{U}_2'(H(h))h^2H'(h),\quad h\in (y,M_y),\\
        &H\left(y\right)=\underline{x}, \quad H'\left(y\right)=0.
        \end{aligned}
        \right.
\end{align}
To derive a representation for $H$, we regard the right-hand
side of \eqref{eq:ode:tilde:y:kernel}, evaluated along the fixed
solution, as a prescribed forcing term. The left-hand side of \eqref{eq:ode:tilde:y:kernel} is then a
second-order linear Cauchy--Euler operator. The resulting linear initial-value problem admits a unique solution (see, e.g., \cite*[Theorem 2.2]{teschl2012ordinary}), which can be obtained by variation of parameters using the two fundamental solutions of the associated homogeneous equation.
Omitting the routine calculations,
we obtain
\begin{align}\label{eq:H:old}
&H(h;y)\nonumber
       \\ =~&\left\{\frac{p_2\underline{x}}{p_1+p_2}y^{-p_1}-\frac{2}{\theta^2(p_1+p_2)}\int_{y}^{h}\left[\tilde{I}(z)+\tilde{U}_2'(H(z;y))z^2H'(z;y)\right]z^{-p_1-1}\md z \right\} h^{p_1}\nonumber\\
     &+\left\{\frac{p_1\underline{x}}{p_1+p_2}y^{p_2}+\frac{2}{\theta^2(p_1+p_2)}\int_{y}^{h}\left[\tilde{I}(z)+\tilde{U}_2'(H(z;y))z^2H'(z;y)\right]z^{p_2-1}\md z\right\} h^{-p_2},
\end{align}
where $p_1>0$ and $q_1=-p_2<0$ are the two roots of the
characteristic quadratic equation \eqref{eq:p}. These roots satisfy
$p_1p_2=\frac{2r}{\theta^2}$; hence, \eqref{eq:H:old} can be
equivalently rewritten as \eqref{eq:tildeH}.
Formula \eqref{eq:H:old} can also be verified directly. First, evaluating at $h=y$ yields
$H(y)=\underline{x}$ and $H'(y)=0$, so the initial conditions are satisfied.
Second, differentiating \eqref{eq:H:old} twice with respect to $h$ and substituting into the
left-hand side of \eqref{eq:ode:tilde:y:kernel} shows that the ODE is
identically satisfied for all $h\in(y,M_y)$. By the uniqueness of solutions
to linear Cauchy--Euler equations with prescribed initial conditions,
\eqref{eq:H:old} is the desired formula.

\medskip
\noindent{\textit{Proof of \textup{(ii)}.}}
For notational simplicity, we also write $\bar H$ for
$\bar H(\cdot;y)$.

For any $h\in[y,\phi(y))$, by \eqref{eq:H:old}, we have
    \begin{align*}
    H(h)=~&\frac{p_2\underline{x}}{p_1+p_2}y^{-p_1}h^{p_1}+\frac{p_1\underline{x}}{p_1+p_2}y^{p_2}h^{-p_2}\\
    &-\frac{2}{\theta^2(p_1+p_2)}\int_{y}^{h}\left[\tilde{I}(z)+\tilde{U}_2'(H(z))z^2H'(z)\right]\left[\left(\frac{h}{z}\right)^{p_1}-\left(\frac{h}{z}\right)^{-p_2}\right]z^{-1}\md z \\
    \le~&\frac{p_2\underline{x}}{p_1+p_2}y^{-p_1}h^{p_1}+\frac{p_1\underline{x}}{p_1+p_2}y^{p_2}h^{-p_2}
    \\
    =~& \frac{\underline{x}}{p_1+p_2}h^{p_1}\left(p_2y^{-p_1}+p_1y^{p_2}h^{-p_1-p_2}\right)\\
    \le~& \underline{x} h^{p_1}y^{-p_1}:=\bar H(h), 
    \end{align*}
    where the first inequality uses that $\tilde{U}_2'(x)\ge 0$ for $x\ge \underline{x}$ and $H'(h)\ge 0$ on $[y,\phi(y))$, and the second inequality follows from $h^{-p_1-p_2} \le y^{-p_1-p_2}$ for all $h\ge y$.

For $h\in[\phi(y),M_y)$, we have $\bar H'(h)>0$ while $H'(h)\le 0$.
It then follows that $H(h)\le  \bar H(h)$ for all $h\in[y,M_y)$.

\subsection{Proof of Proposition \ref{lma:global}}
\label{sec:proof:lma:global}
Fix $y \in (0,z_1)$. It suffices to show that for any $M>y$, \eqref{eq:extension:u_22} admits a unique solution $H(\cdot;y)\in C^2([y,M])$. 
By Lemma~\ref{lma:represent}(ii), $H(\cdot;y)$ is bounded above, and thus $\tilde{U}_2'(H)$ is also bounded on $[y,M]$. Since $y>0$, there exists a constant $C_M>0$
such that
\begin{align*}
&\left|-\frac{2(r-\delta)}{\theta^2h}
-\frac{2}{\theta^2}\tilde{U}_2'(H(h))\right|,
\qquad
\left|\frac{2r}{\theta^2h^2}\right|,
\qquad
\left|-\frac{2}{\theta^2h^2}\tilde{I}(h)\right|
\leq C_M
\end{align*}
for all $h\in[y,M]$.
The ODE system \eqref{eq:ode:pq} can be rewritten as
\begin{equation}\label{eq:system}
\left\{
\begin{aligned}
    P'(h)&=Q(h),\\
    Q'(h)&=\left[-\frac{2(r-\delta)}{\theta^2h}-\frac{2}{\theta^2}\tilde{U}_2'(H(h))\right]Q(h)
          +\frac{2r}{\theta^2h^2}P(h)-\frac{2}{\theta^2h^2}\tilde{I}(h), \quad h\in[y,M],\\
    P(y)&=\underline{x}, \quad Q(y)=0.
\end{aligned}
\right.
\end{equation}
From ${P}'(h) = {Q}(h)$, we have
\begin{align*}
    |{P}(h)|\leq \underline{x}+\int_y^h|{Q}(z)|\md z\quad  \text{for all } h\in[y,M].
\end{align*}
Similarly, from $Q'(h)=[-\frac{2(r-\delta)}{\theta^2h}-\frac{2}{\theta^2}\tilde{U}_2'(H(h))]Q(h)
          +\frac{2r}{\theta^2h^2}P(h)-\frac{2}{\theta^2h^2}\tilde{I}(h)$, it follows that
\begin{align*}
    |{Q}(h)|\leq C_M\int_y^h (|P(z)|+|Q(z)|)\md z+C_M(M-y)\quad  \text{for all } h\in[y,M]. 
\end{align*}
Combining these estimates, we obtain
\begin{align*}
    |{P}(h)|+|{Q}(h)|\leq \underline{x}+C_M(M-y) +(C_M+1)\int_y^h (|P(z)|+|Q(z)|)\md z\quad  \text{for all } h\in[y,M].
\end{align*}
By Grönwall’s inequality, $|{P}(h)|+|{Q}(h)|$ is bounded on $[y,M]$, which implies that neither ${P}(h)$ nor ${Q}(h)$ can blow up on this interval. Therefore, the solution exists throughout $[y,M]$; see, e.g., \cite*[Corollary 2.15]{teschl2012ordinary}.

Since $M>y$ is arbitrary, it follows that the local solution can be extended to $[y,\infty)$. Uniqueness then follows directly from the locally Lipschitz property of the ODE system.

\subsection{Proof of Lemma~\ref{lma:G}}
\label{sec:proof:lma:G}
\textit{Proof of part~\textup{(i)}.}
By \eqref{eq:tildeI:growth} and
$p_1>\frac{1}{1-\gamma}$, for every $y\in(0,z_0)$,
\begin{align*}
		\int_{y}^{\infty}\tilde{I}(z)z^{-p_1-1}\md z \le  \ &M_{1,1}\int_{z_0}^{\infty}z^{\frac{1}{1-\gamma}-p_1-1}\md z+b_0\int_y^{z_0}z^{-p_1-1}\md z\\
		=\ &\frac{ M_{1,1}}{p_1-\frac{1}{1-\gamma}}z_0^{\frac{1}{1-\gamma}-p_1}+b_0\frac{y^{-p_1}-z_0^{-p_1}}{p_1}.
\end{align*}
Consequently, by the definition of $\bar{G}$,
	\begin{align}\label{eq:barG:lowerbound}
		\bar{G}(y)\ge r\underline{x}-\frac{ M_{1,1}p_1}{p_1-\frac{1}{1-\gamma}}z_0^{\frac{1}{1-\gamma}-p_1}y^{p_1}-b_0+b_0z_0^{-p_1}y^{p_1}.
	\end{align}
It follows that $\liminf\limits_{y\downarrow0}\bar{G}(y)\geq r\underline{x}-b_0>0$.
 
Moreover, since $\tilde{I}(z) > \tilde{I}(z_1) = r\underline{x}$ 
for $z>z_1$, we have
	\begin{align*}
		\bar{G}\left(z_1\right)
		&< r\underline{x}\left[1-p_1\int_{z_1}^{\infty}z^{-p_1-1}\md z\,z_1^{p_1}\right]=0.
	\end{align*} 
	
By the dominated convergence theorem, $\bar{G}(\cdot)$ is continuous on $(0,z_1]$.
Since $\liminf\limits_{y\downarrow0}\bar{G}(y)>0$ and $\bar{G}(z_1)<0$, the intermediate value theorem yields some
$\bar{y}\in(0,z_1)$ such that $\bar{G}(\bar{y})=0$.

Uniqueness follows by the change of variables $z=yt$:
\[
\bar G(y)=r\underline{x}
-p_1y^{p_1}\int_y^\infty \tilde I(z)z^{-p_1-1}\md z
=r\underline{x}-p_1\int_1^\infty \tilde I(yt)t^{-p_1-1}\md t.
\]
Since $\tilde I$ is nondecreasing on $(0,\infty)$ and strictly increasing on
$(z_0,\infty)$, $\bar G$ is strictly decreasing, and its root is unique.

\medskip
\noindent\textit{Proof of part~\textup{(ii)}.}
We first show that $G$ is well-defined. To this end, it suffices to verify that
\begin{align*}  \int_{y}^{\infty}\left|\tilde{U}_2'(H(z;y))z^2H'(z;y)\right|z^{-p_1-1}\md z<\infty\quad \text{for all } y\in(0,\bar{y}].
\end{align*}
We begin with the interval $(y,\phi(y))$. For any $h\in(y,\phi(y))$, since $\tilde{U}_2'(x)\ge0$ and $H'(z;y)>0$ for all $z\in(y,h)$, we have
\begin{align*}
  &\int_{y}^{h}\left|\tilde{U}_2'(H(z;y))z^2H'(z;y)\right|z^{-p_1-1}\md z=\int_{y}^{h}\tilde{U}_2'(H(z;y))H'(z;y)z^{-p_1+1}\md z\\
=~&\tilde{U}_2(H(h;y))h^{-p_1+1}-\tilde{U}_2(H(y;y))y^{-p_1+1}+(p_1-1)\int_{y}^{h}\tilde{U}_2(H(z;y))z^{-p_1}\md z.
\end{align*}
Using the growth and monotonicity conditions in
Assumption~\ref{assmup:U12}(ii), the extension properties in
\eqref{eq:tildeU2:properties}, and the upper bound in
Lemma~\ref{lma:represent}(ii), we have
\begin{align*}
    0\leq \tilde{U}_2(H(z;y))\leq \tilde{U}_2(\bar H(z;y)) \leq C_{2,1}(\bar H(z;y))^\gamma\leq C_{2,1}\left(\underline{x} y^{-p_1}\right)^\gamma z^{p_1\gamma}\quad \text{for all }z\in[y,\infty).
\end{align*}
Since $p_1>\frac{1}{1-\gamma}$, it follows that
\begin{align*}
   \lim_{h\uparrow \phi(y)}\tilde{U}_2(H(h;y))h^{-p_1+1} =\begin{cases}
       \tilde{U}_2(H(\phi(y);y))(\phi(y))^{-p_1+1},\quad &\phi(y)<\infty,\\
       0,\quad &\phi(y)=\infty.
   \end{cases}
\end{align*}
Moreover,
\begin{align}\label{eq:bound:u2h}
    \int_{y}^{h}\tilde{U}_2(H(z;y))z^{-p_1}\md z\leq  \int_{y}^{\infty}\tilde{U}_2(H(z;y))z^{-p_1}\md z\leq \frac{C_{2,1}}{(1-\gamma)p_1-1}\underline{x}^\gamma y^{1-p_1}.
\end{align}
By the monotone convergence theorem,
\begin{equation}\label{eq:G:ibp:increasing}
\begin{aligned}
    \ & \int_{y}^{\phi(y)}\left|\tilde{U}_2'(H(z;y))z^2H'(z;y)\right|z^{-p_1-1}\md z= \int_{y}^{\phi(y)}\tilde{U}_2'(H(z;y))z^2H'(z;y)z^{-p_1-1}\md z\\
  =\ &\lim_{h\uparrow \phi(y)}\tilde{U}_2(H(h;y))h^{-p_1+1}-\tilde{U}_2(H(y;y))y^{-p_1+1}+(p_1-1)\int_{y}^{\phi(y)}\tilde{U}_2(H(z;y))z^{-p_1}\md z <\infty.
\end{aligned}
\end{equation}
If $\phi(y)=\infty$, the preceding estimate already establishes the desired integrability. Next, if $\phi(y)<\infty$, then $H'(z;y)<0$ for all $z\in(\phi(y),\infty)$, and a similar argument yields
\begin{equation}\label{eq:G:ibp:decreasing}
\begin{aligned}
  \ &\int_{\phi(y)}^{\infty}\left|\tilde{U}_2'(H(z;y))z^2H'(z;y)\right|z^{-p_1-1}\md z=-\int_{\phi(y)}^{\infty}\tilde{U}_2'(H(z;y))z^2H'(z;y)z^{-p_1-1}\md z\\
  =\ &\tilde{U}_2(H(\phi(y);y))(\phi(y))^{-p_1+1}-(p_1-1)\int_{\phi(y)}^{\infty}\tilde{U}_2(H(z;y))z^{-p_1}\md z<\infty.
\end{aligned}
\end{equation}
Therefore, $G$ is well-defined. Moreover, using \eqref{eq:G:ibp:increasing} and \eqref{eq:G:ibp:decreasing}, it can be rewritten as
\begin{align*}
    G(y)=r\underline{x}-p_1\int_{y}^{\infty}\tilde{I}(z)z^{-p_1-1}\md z\,y^{p_1}+p_1\tilde{U}_2(\underline{x})y-p_1(p_1-1)\int_y^{\infty}\tilde{U}_2(H(z;y))z^{-p_1}\md z\,y^{p_1}.
\end{align*}
Combining the estimate \eqref{eq:barG:lowerbound} with \eqref{eq:bound:u2h}, we obtain
\begin{align*}
    G(y)\ge r\underline{x}-\frac{ M_{1,1}p_1}{p_1-\frac{1}{1-\gamma}}z_0^{\frac{1}{1-\gamma}-p_1}y^{p_1}-b_0+b_0z_0^{-p_1}y^{p_1}+p_1\tilde{U}_2(\underline{x})y-\frac{C_{2,1}p_1(p_1-1)}{(1-\gamma)p_1-1}\underline{x}^\gamma y.
\end{align*}
Hence,
$\liminf\limits_{y\downarrow0}G(y)\geq r\underline{x}-b_0>0$. Finally, since $\Phi$ defined in \eqref{eq:PHI}  is locally Lipschitz, the solution of \eqref{eq:ode:pq} (equivalently, \eqref{eq:extension:u_22}) depends continuously on the initial condition and parameters (see \cite*[Theorem 2.9]{teschl2012ordinary}). Hence, both $H(h;y)$ and $H'(h;y)$ are continuous on $\{(h,y):0<y\leq\bar y,\ h\geq y\}$. Therefore, by the dominated convergence theorem, $G$ is continuous on $(0,\bar{y}]$.

Since $\liminf\limits_{y\downarrow0}G(y)>0$, the set
\[
\mathcal{Y}:=\{0<y\leq\bar y:G(y)>0\}
\]
is nonempty. Hence $y_0:=\sup\mathcal{Y}$ is well-defined and belongs
to $(0,\bar y]$. Choose a sequence $\{y_n\}\subset\mathcal{Y}$ such that
$y_n\to y_0$. Since $G(y_n)>0$ for every $n$, the continuity of $G$
yields $G(y_0)=\lim\limits_{n\to\infty}G(y_n)\geq0$.

\subsection{Proof of Theorem \ref{thm:v}}\label{appendix:sec:case1:2}
Let $y_0$ and $H$ be as in Theorem~\ref{thm:H}, and set
$A:=\frac{1}{y_0}\in\left(U_1'(r\underline{x}),\infty\right)$.
Define 
\begin{align*}
    F(z):=H\left(\frac{1}{z}\right), \quad z\in \left(0,A\right].
\end{align*}
Then $F\in C^2(\left(0,A\right])$ satisfying \eqref{eq:F} and $F'(z)<0$ for any $z\in \left(0,A\right)$.
Next, let $f:[\underline{x},\infty)\to \left(0,A\right]$ be the inverse function of $F$ satisfying $F(f(x))=x$ for any  $x\in [\underline{x},\infty)$.  Then, $f\in C([\underline{x},\infty))\cap C^2((\underline{x},\infty))$ satisfying \eqref{eq:f}. Define
\begin{align}\label{eq:v:integral}
    v(x):=a+\int_{\underline{x}}^xf(z)\md z, \quad x\in[\underline{x},\infty),
\end{align}
where $a$ is a constant to be determined. Then, $v\in C^1([\underline{x},\infty))\cap C^3((\underline{x},\infty))$.
Moreover, since $f>0$ and $F'<0$, \eqref{eq:v:integral} gives
$v'(x)=f(x)>0$ for $x\in[\underline{x},\infty)$ and
$v''(x)=f'(x)=1/F'(f(x))<0$ for $x\in(\underline{x},\infty)$.

For any $x>\underline{x}$ and $\epsilon\in(0,x-\underline{x})$, 
integrating  both sides of the ODE in \eqref{eq:f} from $\underline{x}+\epsilon$ to $x$ yields
\begin{align}
    (\delta-r+\theta^2) (v(x)-v(\underline{x}+\epsilon))&= U_2(x)- U_2(\underline{x}+\epsilon)+\frac{1}{2} \theta^2 \int_{\underline{x}+\epsilon}^x\frac{\left(f(z)\right)^2 f''(z)}{\left(f'(z)\right)^2}\md z\nonumber \\
		&\quad \quad\quad +\int_{\underline{x}+\epsilon}^x\left[r z-I\left(f(z)\right)\right] f'(z)\md z.\label{eq:inte:v}
\end{align}
Using $v'=f$ from \eqref{eq:v:integral}, $D'=-I$ from
\eqref{eq:I} and \eqref{eq:D}, and integration by parts, we obtain the
following identities:
\begin{align*}  &(i)\quad \int_{\underline{x}+\epsilon}^x\frac{\left(f(z)\right)^2 f''(z)}{\left(f'(z)\right)^2}\md z=\int_{\underline{x}+\epsilon}^x\left(2f(z)-\frac{\md}{\md z}\frac{f^2(z)}{f'(z)}\right)\md z\\&\quad\quad\quad\quad\quad\quad\quad\quad\quad\quad\quad\,\,\,=2(v(x)-v(\underline{x}+\epsilon))-\frac{[v'(x)]^2}{v''(x)}+\frac{[v'(\underline{x}+\epsilon)]^2}{v''(\underline{x}+\epsilon)};\\
     &(ii)
     \quad \int_{\underline{x}+\epsilon}^x zf'(z)\md z=xv'(x)-(\underline{x}+\epsilon)v'(\underline{x}+\epsilon)-(v(x)-v(\underline{x}+\epsilon));\\  &(iii)\quad\int_{\underline{x}+\epsilon}^xI\left(f(z)\right) f'(z)\md z= -\int_{\underline{x}+\epsilon}^x \md D(f(z))=-D(v'(x))+D(v'(\underline{x}+\epsilon)).
\end{align*}
Substituting (i), (ii) and (iii) above into \eqref{eq:inte:v} yields
\begin{align}
     (\delta-r+\theta^2) (v(x)-v(\underline{x}&+\epsilon))= U_2(x)- U_2(\underline{x}+\epsilon) +rxv'(x)-r(\underline{x}+\epsilon)v'(\underline{x}+\epsilon) \nonumber\\ &\quad -r(v(x)-v(\underline{x}+\epsilon))+D(v'(x))-D(v'(\underline{x}+\epsilon))\nonumber\\
     &\quad +\frac{1}{2} \theta^2\left(2(v(x)-v(\underline{x}+\epsilon))-\frac{[v'(x)]^2}{v''(x)}+\frac{[v'(\underline{x}+\epsilon)]^2}{v''(\underline{x}+\epsilon)}\right).\label{eq:v:case2:epsilon}
\end{align}
By $F'(A)=0$ and the definitions of $f$ and $v$, as
$\epsilon\downarrow0$, we have $v(\underline{x}+\epsilon)\to a$,
$v'(\underline{x}+\epsilon)\to A$, and
$v''(\underline{x}+\epsilon)\to-\infty$. Hence, by the continuity of
$U_2$ and $D$, letting $\epsilon\downarrow0$ in
\eqref{eq:v:case2:epsilon} yields
\begin{align*}
     (\delta-r+\theta^2) (v(x)-a)&= U_2(x)- U_2(\underline{x})+\frac{1}{2} \theta^2\left(2(v(x)-a)-\frac{[v'(x)]^2}{v''(x)}\right) +rxv'(x)-r\underline{x}A \\ &\quad -r(v(x)-a)+D(v'(x))-D(A).
\end{align*}
Let $a:=\frac{1}{\delta}\left(U_2(\underline{x})+r\underline{x}A+D(A)\right)$, then $v$ satisfies \eqref{eq:ode:final}. 

It remains to establish the bounds in \eqref{eq:property:v}.
By the upper bound in \eqref{eq:H:property}, we have
\begin{align*}
    \frac{v'(x)}{x^{\gamma-1}}=\frac{f(x)}{x^{\gamma-1}}\overset{z=f(x)}{=}\frac{z}{[F(z)]^{\gamma-1}}=\left(\frac{z^{\frac{1}{\gamma-1}}}{F(z)}\right)^{\gamma-1}\overset{h=1/z}=\left(\frac{H(h)}{h^{\frac{1}{1-\gamma}}}\right)^{1-\gamma}\leq C_{H,0}^{1-\gamma}.
\end{align*}
Thus, we can take $C_{v,3}:=C_{H,0}^{1-\gamma}$. The existence of $C_{v,2}$ is a direct consequence of $C_{v,3}$ and we omit the proof.
 Next, the lower bound in \eqref{eq:H:property} gives
\begin{align*}
    \frac{I(v'(x))}{x}=\frac{I(f(x))}{x}\overset{z=f(x)}{=}\frac{I(z)}{F(z)}\overset{h=1/z}=\frac{I(\frac1h)}{H(h)}\le \frac{\theta^2p_1(p_1+p_2)}{2}.
\end{align*}
Thus, we can take $C_{v,4}:=\frac{\theta^2p_1(p_1+p_2)}{2}$.
Finally, \eqref{eq:H':property} gives
\begin{align*}
    -\frac{v'(x)}{xv''(x)}=-\frac{f(x)}{xf'(x)}\overset{z=f(x)}{=}-\frac{zF'(z)}{F(z)}\overset{h=1/z}{=}\frac{hH'(h)}{H(h)}\leq p_1.
\end{align*}
Therefore, we can take $C_{v,5}=p_1$.

\subsection{Proof of Theorem \ref{thm:strong:case1}}\label{appendix:sec:case1:3}
\textbf{Step 1: Construction of a locally Lipschitz drift.}
To prepare for the change of measure argument used in Step~3, for
each $c\in\mR$, define
\begin{align*}
    \Psi_c(x):=I(v'(x))+c\frac{v'(x)}{v''(x)}, \quad x\in[\underline{x},\infty).
\end{align*}
By $\lim_{x\downarrow\underline{x}}v'(x)/v''(x)=0$, we set $\Psi_c(\underline{x}):=I(v'(\underline{x}))$. Since $I$ and $v'$ are
continuous, $\Psi_c\in C([\underline{x},\infty))$.
We aim to prove that there exists $c^*\in\mR$ such that $\Psi_{c^*}$ is
locally Lipschitz continuous on $[\underline{x},\infty)$. By Assumption~\ref{assmup:U12}(i) and the definition of $I$ in \eqref{eq:I},
$I$ is locally Lipschitz continuous on $(0,\infty)$. Moreover,
Theorem~\ref{thm:v} gives $v\in C^3((\underline{x},\infty))$ and
$v''<0$.  It therefore remains only to
establish local Lipschitz continuity near $\underline{x}$.

We distinguish two cases. 

Case (i): $v'(\underline{x})>U_1'(b_0)$. The
continuity of $v'$ yields some $\epsilon>0$ such that
$v'(x)>U_1'(b_0)$ for every
$x\in[\underline{x},\underline{x}+\epsilon]$. It then follows that
$\Psi_0(x)=I(v'(x))=b_0$ for all $x\in[\underline{x},\underline{x}+\epsilon]$. Hence,
we may take $c^*=0$.

Case (ii):  $v'(\underline{x})\leq U_1'(b_0)$. Since $v''<0$, we
have $v'(x)\leq v'(\underline{x})$ for $x\geq\underline{x}$.
Consequently, 
$I(v'(x))=(U_1')^{-1}(v'(x))$ for every
$x\in[\underline{x},\infty)$.
By Assumption~\ref{assmup:U12}(i),
$(U_1')^{-1}\in C^2((0,U_1'(b_0)])$, as needed below. The mean value theorem ensures
that it  suffices to prove that $\Psi_c'$ is bounded near
$\underline{x}$. Using $v'=f$ from \eqref{eq:v:integral},
differentiating the inverse relation $F(f(x))=x$, and recalling that
$F(z)=H(1/z)$, we obtain
\begin{align}\label{eq:psi'}
    \Psi_c'(x)&=I'(v'(x))v''(x)-c\frac{v'(x)v'''(x)}{[v''(x)]^2}+c=I'(f(x))f'(x)-c\frac{f(x)f''(x)}{[f'(x)]^2}+c\nonumber\\
    &=\frac{I'(f(x))}{F'(f(x))}+c\frac{f(x)F''(f(x))}{F'(f(x))}+c\overset{z=f(x)}{=}\frac{I'(z)}{F'(z)}+c\frac{zF''(z)}{F'(z)}+c\nonumber\\
    &= -\frac{I'(z)z^2}{H'(1/z)}-\frac{c}{z}\frac{H''(1/z)}{H'(1/z)}-c\overset{h=1/z}{=}-\frac{I'(1/h)}{h^2H'(h)}-ch\frac{H''(h)}{H'(h)}-c.
\end{align}
    Since $h=1/f(x)\downarrow y_0$ as
    $x\downarrow\underline{x}$, the last expression in
    \eqref{eq:psi'} shows that, to prove $\Psi_c'$ is bounded near
    $\underline{x}$, it suffices to prove that
    \begin{align}\label{eq:Psi:quotient}
        \frac{I'(1/h)+ch^3H''(h)}{h^2H'(h)},\quad h\in(y_0,\infty),
    \end{align}
    is bounded near $y_0$.
    Since $H$ satisfies \eqref{eq:extension:u_222}, we have
     \begin{align}\label{eq:H''}
        H''(h)=-\frac{2}{\theta^2h^2}\left[\left(r-\delta\right) hH'(h)+ U_2'(H(h))h^2H'(h) -r H(h) +I\left( \frac1h\right)\right],\quad h\in[y_0,\infty).
    \end{align}
By Theorem~\ref{thm:H}, $H\in C^2([y_0,\infty))$,
$H(y_0)=\underline{x}$, $H'(y_0)=0$, and
$y_0<1/U_1'(r\underline{x})$. Hence,
$I(1/y_0)<r\underline{x}$. Evaluating \eqref{eq:H''} at $y_0$ and
using $H'(y_0)=0$, we obtain
\begin{align*}
    H''(y_0)=\frac{2}{\theta^2y_0^2}\left(r\underline{x}-I(1/y_0)\right)>0.
\end{align*}
Since $H'(y_0)=0$, boundedness of the quotient in
\eqref{eq:Psi:quotient} near $y_0$ requires the numerator
$I'(1/h)+ch^3H''(h)$ to vanish at $h=y_0$. This motivates us to choose 
\begin{align*}
    c^*:=-\frac{I'(1/y_0)}{y_0^3H''(y_0)}.
\end{align*}
It remains to verify that this choice is sufficient.
Telescoping with the term
$\frac{2c^*h^3}{\theta^2}U_2'(H(h))H'(h)$ in the numerator and using
$H(h)\to H(y_0)=\underline{x}$ as $h\downarrow y_0$, we obtain
\begin{align*}
    &\lim_{h\downarrow y_0}
    \frac{I'(1/h)+c^*h^3H''(h)}{h^2H'(h)}\!=\!-\frac{2c^*y_0}{\theta^2}U_2'(\underline{x})
    \!+\!\lim_{h\downarrow y_0}
    \frac{I'(1/h)+c^*h^3
    \left[H''(h)+\dfrac{2}{\theta^2}U_2'(H(h))H'(h)\right]}
    {h^2H'(h)}.
\end{align*}
In the second limit, both the numerator and the denominator vanish at
$h=y_0$ respectively by the definition of $c^*$ and $H'(y_0)=0$ .
Its numerator is $C^1$ by \eqref{eq:H''} and
Assumption~\ref{assmup:U12}(i), while
\begin{align*}
    \left.\frac{\md}{\md h}\left(h^2H'(h)\right)\right|_{h=y_0}
    =y_0^2H''(y_0)>0.
\end{align*}
Thus, l'H\^opital's rule shows that the second limit is finite.
Therefore, the quotient in \eqref{eq:Psi:quotient} with $c=c^*$ is
bounded near $y_0$, and thus $\Psi_{c^*}'$ is bounded near
$\underline{x}$ by the last equality of \eqref{eq:psi'}.

\textbf{Step 2:  H\"older continuity of the diffusion coefficient.}
By Theorem \ref{thm:v}, $\frac{\theta^2v'}{2v''}\in
C^1((\underline{x},\infty))$, thus it suffices to prove that
$\frac{\theta^2v'}{2v''}$ is H\"older continuous of order $\frac12$ near $\underline{x}$.
By \eqref{eq:ode:final}, we have
\begin{align}\label{eq:diffusion:g}  \frac{\theta^2v'(x)}{2v''(x)}=\frac{D(v'(x))+U_2(x)
			+rx v'(x)-\delta v(x)}{v'(x)}=\frac{g(x)}{v'(x)},
\end{align}
where we recall that $g$ is defined in \eqref{eq:defi_g}, and 
\begin{align}\label{eq:g:prime}
    g'(x)=U_2'(x)+(r-\delta)v'(x)+\left[rx- I\left(v'(x)\right)\right]v''(x),\quad x\in(\underline{x},\infty).
\end{align}
For any $\epsilon>0$, there exist constants $C_{\epsilon,1},C_{\epsilon,2}>0$ such that, for $\underline{x}\leq x<y<\underline{x}+\epsilon$,
\begin{align*}
     \frac{\theta^2}{2}\left|\frac{v'(x)}{v''(x)}-\frac{v'(y)}{v''(y)}\right|
    &\leq \left|\frac{1}{v'(x)}-\frac{1}{v'(y)}\right|\left|g(x)\right|+ \left|\frac{1}{v'(y)}\right|\left|g(x)-g(y)\right|\\
    &\leq C_{\epsilon,1}\left\{\int_x^y|v''(z)|\md z+\left(|x-y|+\int_x^y|v''(z)|\md z\right)\right\}\\
    &\leq C_{\epsilon,2}\sqrt{|x-y|}.
\end{align*}
Here the first inequality follows by telescoping with the term $g(x)/v'(y)$; the second follows from $(1/v')'=-v''/(v')^2$, and the fact that all terms in \eqref{eq:g:prime} other than $v''$ are continuous and thus bounded on $[\underline{x},\underline{x}+\epsilon]$; and the last follows from Lemma \ref{lma:limit:v''}.

\textbf{Step 3:  Global well-posedness of the wealth SDE.}     It suffices to show that, for any $T>0$, the following SDE admits a unique strong solution taking values in $[\underline{x},\infty)$:
\begin{align}\begin{cases}\label{eq:wealth:ws}
	\md X^{x;*}(s)=\tilde{b}(X^{x;*}(s))\md s+\tilde{\sigma}(X^{x;*}(s))\md \widehat{W}(s), \quad s\in[0,T],\\
	X^{x;*}(0)=x,
	\end{cases}
\end{align}
where
\begin{align*}
    &\tilde{b}(x):=rx-b^*(x)+c^* \frac{\sigma x}{\theta}\pi^*(x)=\left\{\begin{aligned}
        &rx-\Psi_{c^*}(x),&&x\in[\underline{x},\infty),\\
        &rx-I(v'(\underline{x})), &&x\in(-\infty,\underline{x}),
    \end{aligned}\right. \\
    &\tilde{\sigma}(x):=\sigma\pi^*(x)x
    =\begin{cases}
        -\theta\dfrac{v'(x)}{v''(x)},&x\in(\underline{x},\infty),\\
        0,&x\in(-\infty,\underline{x}].
    \end{cases}
\end{align*}
and $\{\widehat{W}(t)=W(t)+\left(\theta-\frac{c^*}{\theta}\right)t:t\in[0,T]\}$ is a Brownian Motion under $\mathbb{Q}^*_T$ defined by
\begin{align*}
    \frac{\md \mathbb{Q}^*_T}{\md \mathbb{P}}=\exp\left(-\left(\theta-\frac{c^*}{\theta}\right) W_T-\frac12 \left(\theta-\frac{c^*}{\theta}\right)^2T\right).
\end{align*}
Choose a nondecreasing function $\varphi\in C^1_b(\mR)$\footnote{$ C^1_b(\mR)$ denotes the set of all functions in $C^1(\mR)$ such that both the functions themselves and their first derivatives are bounded.} such that
\begin{align*}
    \varphi(x)=x \quad \text{for } |x|\leq 1,\qquad
    \varphi(x)=2 \quad \text{for } x\geq 2,\qquad
    \varphi(x)=-2 \quad \text{for } x\leq -2.
\end{align*}
For $n\in\mathbb{N}$, set $\varphi_n(x):=n\varphi(x/n)$,  $b_n(x):=\tilde{b}(\varphi_n(x))$ and $\sigma_n(x):=\tilde{\sigma}(\varphi_n(x))$.
By Steps~1 and~2, the truncated coefficient $b_n$ is bounded and
Lipschitz continuous, while $\sigma_n$ is bounded and satisfies, for
some $C_n>0$,
\begin{align*}
    |\sigma_n(x)-\sigma_n(y)|^2\leq C_n|x-y|,
    \qquad x,y\in[\underline{x},\infty).
\end{align*}
Hence the following SDE admits a unique strong solution as shown in, for example,  \cite*[Theorem IX.1.7 and IX.3.5]{RevuzYor1999}:
\begin{align*}\begin{cases}
	\md X^{x;*}_n(s)=b_n(X^{x;*}_n(s))\md s+\sigma_n(X^{x;*}_n(s))\md \hat{W}(s),\\
	X^{x;*}_n(0)=x.
	\end{cases}
\end{align*}
Moreover, for $n>x$, since $b_n(\underline{x})=\tilde{b}(\underline{x})>0$ and $\sigma_n(\underline{x})=\tilde{\sigma}(\underline{x})=0$, we have $X^{x;*}_n\geq \underline{x}$.
Define
\begin{align*}
    \tau_n :=\inf\left\{s\geq 0:X^{x;*}_n(s)\geq n\right\}\quad \text{and}\quad
    X^{x;*}(s):=X^{x;*}_n(s), \quad s\in[0,\tau_n].
\end{align*}
Then $X^{x;*}$ satisfies \eqref{eq:wealth:ws} on $[0,\tau_n]$.
Moreover, as $I(v')$ and $\frac{v'}{v''}$ are of linear growth in $x$ by \eqref{eq:property:v}, $X^{x;*}$ will not explode in any finite time; see e.g. \cite*[Theorem IV.2.4]{ikeda1981stochastic}. Therefore, SDE \eqref{eq:wealth:ws} admits a unique global strong solution taking values in $[\underline{x},\infty)$.

\section{Proofs for the Statements in Section~\ref{sec:degenerate}}

\subsection{Proof of Proposition \ref{prop:Hnew:existence}}
\label{appendix:case2:schauder}

\medskip
\noindent
 We first use the Schauder fixed point theorem to construct a solution to ODE~\eqref{eq:Hnew:case2:extended}.
 Define the Banach space
	\begin{align}
	\mathcal{B} := \bigg\{ H \in C((0,\infty)) : \|H\|_\mathcal{B} := \sup_{{h}\in (0,\infty)} \frac{|H(h)|}{h^{-\beta_1} + h^{\frac{1}{1-\gamma}+\beta_2}} < \infty \bigg\},\label{eq:norm:weight}
	\end{align}
     where $\beta_1>0$ and $\beta_2>0$ are two constants satisfying $p_2+1-\beta_1\gamma>0$ and $p_1-\frac{1}{1-\gamma}-\beta_2\gamma >0$.  Let $\mathcal{D}$ be a  closed and convex subset of $\mathcal{B}$ defined as follows:
	\begin{align*}
	\mathcal{D} :=\left\{ H \in \mathcal{B} : |H(h)| \leq  C_{\mathcal D}\left(1+h^{\frac{1}{1-\gamma}}\right) \text{ for all } h\in(0,\infty) \right\},
	\end{align*}
	where $C_{\mathcal D}\ge\underline{x}$ is a constant to be determined later.
	As in Section~\ref{sec:well:hjb}, applying variation of parameters to  \eqref{eq:Hnew:case2:extended}, using the asymptotic condition at $0$ and the growth condition at $\infty$ 
     to eliminate the homogeneous terms, and then integrating by parts motivates the following operator.
	For \(H \in \mathcal{D}\), define
	\begin{align}
	(\Gamma H)(h) :=~& \frac{2}{\theta^2(p_1+p_2)}\bigg\{\int_h^\infty\left[\tilde{I}(z)+(p_1-1)\tilde{U}_2(H(z))z\right]z^{-p_1-1}\mathrm{d}z \, h^{p_1}\nonumber\\
    &+\int_0^h\left[\tilde{I}(z)-(p_2+1)\tilde{U}_2(H(z))z\right]z^{p_2-1}\mathrm{d}z \, h^{-p_2}\bigg\}, \quad h\in (0,\infty).\label{eq:GammaH}
	\end{align}
	    We next verify the conditions for Schauder fixed point theorem:

	    \medskip
\noindent\textbf{1.} \(\Gamma(\mathcal{D}) \subset \mathcal{D}\):
 For any $H\in\mathcal{D}$, the properties of
$\tilde{U}_2$ in \eqref{eq:tildeU2:properties:case2} and
$C_{\mathcal D}\ge\underline{x}$ imply that
\begin{align}\label{eq:tildeu h}
       0\le \tilde{U}_2(H(z))\le C_{2,1}C_{\mathcal D}^{\gamma}\left(1+z^{\frac{1}{1-\gamma}}\right)^\gamma\le C_{2,1}C_{\mathcal D}^{\gamma}\left(1+z^{\frac{\gamma}{1-\gamma}}\right)\quad \text{for all } z\in(0,\infty),
\end{align}
where the last inequality follows from the subadditivity of \(t\mapsto t^\gamma\) for \(0<\gamma<1\). Setting $\widetilde{M}_{1,1}:=\max\{C_{1,1}^{\frac{1}{1-\gamma}},b_0\}$, we obtain from
\eqref{eq:tildeI:growth} that
\begin{align}\label{eq:tildeu i}
       0\le \tilde{I}(z)\le  \widetilde{M}_{1,1}\left(1+z^{\frac{1}{1-\gamma}}\right)\quad \text{for all } z\in(0,\infty).
\end{align}
By \eqref{eq:tildeu h}--\eqref{eq:tildeu i}, $p_2>0$, and $p_1>\frac{1}{1-\gamma}$, $\Gamma H$ is well-defined. By the dominated convergence theorem, $\Gamma H\in C((0,\infty))$.
Moreover, for any $h\in(0,\infty)$,
\begin{align}
    |(\Gamma H)(h)|\le\ & \frac{2}{\theta^2(p_1+p_2)}\Bigg\{ \widetilde{M}_{1,1}\left(\frac{1}{p_1}+\frac{1}{p_1-\frac{1}{1-\gamma}}h^{\frac{1}{1-\gamma}}+\frac{1}{p_2}+\frac{1}{p_2+\frac{1}{1-\gamma}}h^{\frac{1}{1-\gamma}}\right)\nonumber\\
    &\quad+C_{2,1} C_{\mathcal D}^{\gamma}\left[h+\frac{p_1-1}{p_1-\frac{1}{1-\gamma}}h^{\frac{1}{1-\gamma}}+h+\frac{p_2+1}{p_2+\frac{1}{1-\gamma}}h^{\frac{1}{1-\gamma}}\right]\Bigg\}\nonumber\\
    &\leq C_{\Gamma,0}\left(1+C_{\mathcal D}^{\gamma}\right)
    \left(1+h^{\frac{1}{1-\gamma}}\right),\quad h\in(0,\infty).\label{eq:Gamma H}
\end{align}
Here, $C_{\Gamma,0}>0$ is independent of $H$ and $C_{\mathcal D}$.
 Since $0<\gamma<1$, we may choose $C_{\mathcal D}$ sufficiently large so that
\(
 C_{\Gamma,0}\left(1+C_{\mathcal D}^{\gamma}\right)\leq C_{\mathcal D}.
\)
Consequently,
\[
 |(\Gamma H)(h)|\le  C_{\mathcal D}\left(1+h^{\frac{1}{1-\gamma}}\right)\quad \text{for all }h\in(0,\infty).
\]
Thus, $\Gamma H\in \mathcal{D}$.

    \medskip
\noindent
\textbf{2.} \(\Gamma(\mathcal{D})\) is precompact in \(\mathcal{B}\):
For any $H\in \mathcal{D}$, we have
\begin{align*}
    (\Gamma H)'(h)=\ & \frac{2}{\theta^2(p_1+p_2)}\bigg\{p_1\int_h^\infty\left[\tilde{I}(z)+(p_1-1)\tilde{U}_2(H(z))z\right]z^{-p_1-1}\mathrm{d}z \, h^{p_1-1}\\
    \ &\quad -p_2\int_0^h\left[\tilde{I}(z)-(p_2+1)\tilde{U}_2(H(z))z\right]z^{p_2-1}\mathrm{d}z \, h^{-p_2-1}-(p_1+p_2)\tilde{U}_2(H(h))\bigg\}.
\end{align*}
By \eqref{eq:tildeu h} and \eqref{eq:tildeu i}, we obtain
\begin{align}
|(\Gamma H)'(h)|
\leq{}&\frac{2}{\theta^2(p_1+p_2)}\Bigg\{
\widetilde{M}_{1,1}\left[2h^{-1}
+\left(\frac{p_1}{p_1-\frac{1}{1-\gamma}}
+\frac{p_2}{p_2+\frac{1}{1-\gamma}}\right)
h^{\frac{\gamma}{1-\gamma}}\right]\nonumber\\
&\quad+C_{2,1}C_{\mathcal D}^{\gamma}\Bigg[2(p_1+p_2)
+\Bigg(\frac{p_1(p_1-1)}{p_1-\frac{1}{1-\gamma}}
+\frac{p_2(p_2+1)}{p_2+\frac{1}{1-\gamma}}+p_1+p_2\Bigg)
h^{\frac{\gamma}{1-\gamma}}\Bigg]\Bigg\}\nonumber\\
\leq{}&C_{\Gamma,1}\left(h^{-1}+h^{\frac{\gamma}{1-\gamma}}\right),
\quad h\in(0,\infty).\label{eq:GAMMA'}
\end{align}
where $C_{\Gamma,1}>0$ is independent of $H$.

For any sequence $\{Y_n\}\subset\Gamma (\mathcal{D})$ and any $R>0$,  it follows from \eqref{eq:Gamma H} and \eqref{eq:GAMMA'} that $\{Y_n\}$ is uniformly bounded and equicontinuous on $\left[\frac{1}{R},R\right]$. By the Arzelà–Ascoli theorem, there exists a subsequence that converges in \(C\left(\left[\frac{1}{R},R\right]\right)\). A diagonal argument over \(R \in \mathbb{N}\) yields a subsequence (still denoted by $\{Y_n\}$ for convenience) and a function \(Y \in C((0,\infty))\) such that, subsequentially, \(Y_n \to Y\) locally uniformly on $(0,\infty)$. By pointwise convergence and \eqref{eq:Gamma H}, we have \( |Y(h)| \leq C_{\mathcal D}\left(1 + h^{\frac{1}{1-\gamma}}\right)\) for any $h\in(0,\infty)$, which yields $Y \in \mathcal{D}\subset \mathcal{B}$.

Since \(Y_n,Y\in\mathcal D\), for any $\epsilon > 0$, choose $R=R(\epsilon) > 0$ sufficiently large such that, for all $n \in \mathbb{N}$,
	\begin{align}
	\sup_{h\in \left(0,\frac 1R\right)\cup (R,\infty)} \frac{ |Y_{n}(h) - Y(h)| }{h^{-\beta_1} + h^{\frac{1}{1-\gamma}+\beta_2}} \leq 2  C_{\mathcal D}\cdot\!\sup_{h\in \left(0,\frac 1R\right)\cup (R,\infty)} \frac{1+h^{\frac{1}{1-\gamma}}}{h^{-\beta_1} + h^{\frac{1}{1-\gamma}+\beta_2}} < \epsilon,\label{eq:0infty}
	\end{align} 
    where the last inequality follows from \(\beta_1,\beta_2>0\) by choosing \(R\) sufficiently large.
	On \([\frac1R,R]\), the local uniform convergence implies that there exists $N=N(\epsilon)\in \mathbb{N}$ such that, for all $n>N$,
	\begin{align}
	\sup_{{h}\in \left[\frac1R,R\right]} \frac{|Y_{n}(h) - Y(h)|}{h^{-\beta_1} + h^{\frac{1}{1-\gamma}+\beta_2}}  < \epsilon.\label{eq:1RR}
	\end{align}
    Thus, combining \eqref{eq:0infty} and \eqref{eq:1RR}, we obtain
	 $\|Y_n-Y\|_{\mathcal{B}}<\epsilon$ for any $n>N$. Consequently,
    $\{Y_{n}\}$  converges to  $Y$  in $\mathcal{B}$.
    
    \medskip
\noindent
	\textbf{3.} \(\Gamma\) is continuous:
   By the global H\"older property of $\tilde U_2$, for \(H_n,H\in\mathcal D\) with
    \(\|H_n-H\|_{\mathcal B}\to0\) and $h\in(0,\infty)$,
    \begin{align*}
        &\frac{\theta^2(p_1+p_2)}{2}|(\Gamma H_n)(h)-(\Gamma H)(h)|\\
        \leq&\!\int_h^\infty\!(p_1\!-\!1)\!\left|\tilde{U}_2(H_n(z))\!-\!\tilde{U}_2(H(z))\!\right|z^{-p_1}\mathrm{d}z \, h^{p_1}
    \!+\!\int_0^h\!(p_2\!+\!1)\!\left|\tilde{U}_2(H_n(z))\!-\!\tilde{U}_2(H(z))\!\right|z^{p_2}\mathrm{d}z \, h^{-p_2}\\
    \le\ & C_{2,2}\Bigg\{\int_h^\infty(p_1-1)\left|H_n(z)-H(z)\right|^{\gamma_0} z^{-p_1}\mathrm{d}z \, h^{p_1}
    +\int_0^h(p_2+1)\left|H_n(z)-H(z)\right|^{\gamma_0} z^{p_2}\mathrm{d}z \, h^{-p_2}\Bigg\}\\
    \le\ & C_{2,2}\|H_n - H\|_\mathcal{B}^{\gamma_0}\Bigg\{\int_h^\infty(p_1-1)\left(z^{-\beta_1}+z^{\frac{1}{1-\gamma}+\beta_2}\right)^{\gamma_0} z^{-p_1}\mathrm{d}z \, h^{p_1}
    \\&\quad\quad\quad\quad\quad\quad\quad\quad+\int_0^h(p_2+1)\left(z^{-\beta_1}+z^{\frac{1}{1-\gamma}+\beta_2}\right)^{\gamma_0} z^{p_2}\mathrm{d}z \, h^{-p_2}\Bigg\}\\
    \le\ & C_{2,2}\|H_n - H\|_\mathcal{B}^{\gamma_0}\Bigg\{\int_h^\infty(p_1-1)\left(z^{-\beta_1\gamma_0}+z^{\frac{\gamma_0}{1-\gamma}+\beta_2\gamma_0}\right) z^{-p_1}\mathrm{d}z \, h^{p_1}
    \\&\quad\quad\quad\quad\quad\quad\quad\quad+\int_0^h(p_2+1)\left(z^{-\beta_1\gamma_0}+z^{\frac{\gamma_0}{1-\gamma}+\beta_2\gamma_0}\right)z^{p_2}\mathrm{d}z \, h^{-p_2}\Bigg\}\\
    =\ & C_{2,2}\|H_n - H\|_\mathcal{B}^{\gamma_0}\Bigg\{\frac{p_1-1}{p_1+\beta_1\gamma_0-1}h^{-\beta_1\gamma_0+1}+\frac{p_1-1}{p_1-1-\frac{\gamma_0}{1-\gamma}-\beta_2\gamma_0}h^{\frac{\gamma_0}{1-\gamma}+1+\beta_2\gamma_0}
    \\&\quad\quad\quad\quad\quad\quad\quad\quad+\frac{p_2+1}{p_2+1-\beta_1\gamma_0}h^{-\beta_1\gamma_0+1}+\frac{p_2+1}{p_2+1+\frac{\gamma_0}{1-\gamma}+\beta_2\gamma_0}h^{\frac{\gamma_0}{1-\gamma}+1+\beta_2\gamma_0}\Bigg\}\\
    \le\ &  C_{\Gamma,2}\|H_n - H\|_\mathcal{B}^{\gamma_0}\left(h^{-\beta_1}+h^{\frac{1}{1-\gamma}+\beta_2}\right),
    \end{align*}
    where the last inequality uses $0<\gamma_0\le\gamma$, $p_1-\frac{1}{1-\gamma}-\beta_2\gamma >0$ and $p_2+1-\beta_1\gamma>0$, and $C_{\Gamma,2}>0$ is a constant independent of $H_n$ and $H$. Hence, the continuity of $\Gamma$ follows.
        
By the Schauder fixed point theorem, there exists \(H \in \mathcal{D}\) such that \(\Gamma H = H\); its fixed-point identity is precisely \eqref{eq:Hdef}.
The fixed-point identity also improves the interior regularity. Indeed, by the integral representation of \(\Gamma H\) in \eqref{eq:GammaH} together with the continuity of \(\tilde I\) and \(\tilde U_2\circ H\), we obtain \(\Gamma H\in C^1((0,\infty))\). Since $H=\Gamma H$, we have $H\in C^1((0,\infty))$ and hence $\tilde U_2\circ H\in C^1((0,\infty))$. Consequently, the integral representation of \(\Gamma H\) now gives \(\Gamma H=H\in C^2((0,\infty))\). 

Differentiating~\eqref{eq:Hdef} once and twice yields, for any $h\in(0,\infty)$,
    \begin{align}
        H'(h)&=\frac{2}{\theta^2(p_1+p_2)}\bigg\{p_1\int_h^\infty\left[\tilde{I}(z)+(p_1-1)\tilde{U}_2(H(z))z\right]z^{-p_1-1}\mathrm{d}z \, h^{p_1-1}\nonumber\\
   &\quad -p_2\int_0^h\left[\tilde{I}(z)-(p_2+1)\tilde{U}_2(H(z))z\right]z^{p_2-1}\mathrm{d}z \, h^{-p_2-1}-(p_1+p_2)\tilde{U}_2(H(h))\bigg\},\label{eq:Hpdef}\\
        H''(h)&=\frac{2}{\theta^2(p_1+p_2)}\Bigg\{p_1(p_1-1)\int_h^\infty\left[\tilde{I}(z)+(p_1-1)\tilde{U}_2(H(z))z\right]z^{-p_1-1}\mathrm{d}z \, h^{p_1-2}\nonumber\\
        &\quad-p_2(-p_2-1)\int_0^h\left[\tilde{I}(z)-(p_2+1)\tilde{U}_2(H(z))z\right]z^{p_2-1}\mathrm{d}z \, h^{-p_2-2}-(p_1+p_2)\tilde{I}(h)h^{-2}\nonumber\\
       &\quad+\left[ -p_1(p_1-1)+p_2(p_2+1)\right]\tilde{U}_2(H(h))h^{-1}-(p_1+p_2)\tilde{U}_2'(H(h))H'(h)\Bigg\}.\nonumber
    \end{align}
Consequently, for any $h\in (0,\infty)$,
\begin{align*}
    &-\frac{1}{2} \theta^2 h^2H''(h) -\left(r-\delta\right) hH'(h)- \tilde{U}_2'(H(h))h^2H'(h) +r H(h) -\tilde{I}\left( h\right)\\
    =\ &\frac{2}{\theta^2(p_1+p_2)}\left\{-\frac{1}{2} \theta^2\left[ -p_1(p_1-1)+p_2(p_2+1)\right]+(r-\delta)(p_1+p_2)\right\}\tilde{U}_2(H(h))h=0.
\end{align*}
Therefore, the fixed point $H$ solves ODE~\eqref{eq:Hnew:case2:extended} on $(0,\infty)$.

The remaining task is to show that $H$ extends to $h=0$ with $H(0)=b_0/r$ and $H'(0)=0$.
   Using $H\in C((0,\infty))$ and $|H(h)|\le C_{\mathcal D}(1+h^{\frac{1}{1-\gamma}})$, and applying l'H\^opital's rule separately to the two terms on the right-hand side of~\eqref{eq:Hdef}, we obtain
\begin{align*} &\lim_{h\downarrow0}\frac{\int_h^\infty\left[\tilde{I}(z)\!+\!(p_1\!-\!1)\tilde{U}_2(H(z))z\right]z^{-p_1-1}\mathrm{d}z}{h^{-p_1}}
    \!=\!\lim_{h\downarrow0}\frac{-\!\left[\tilde{I}(h)\!+\!(p_1\!-\!1)\tilde{U}_2(H(h))h\right]h^{-p_1-1}}{-p_1h^{-p_1-1}}
    \!=\!\frac{b_0}{p_1},\\
	    &\lim_{h\downarrow0}\frac{\int_0^h\left[\tilde{I}(z)\!-\!(p_2\!+\!1)\tilde{U}_2(H(z))z\right]z^{p_2-1}\mathrm{d}z}{h^{p_2}}
	    \!=\!\lim_{h\downarrow0}\frac{\left[\tilde{I}(h)\!-\!(p_2\!+\!1)\tilde{U}_2(H(h))h\right]h^{p_2-1}}{p_2h^{p_2-1}}
	    \!=\!\frac{b_0}{p_2}.
\end{align*}
Substituting these limits into~\eqref{eq:Hdef} and using $p_1p_2=2r/\theta^2$ yield
$\lim\limits_{h\downarrow0}H(h)=b_0/r$. 

Furthermore, since $\tilde I(z)=b_0$ for $z\in(0,z_0)$, for any $h\in(0,z_0]$, the $\tilde I$-terms in~\eqref{eq:Hpdef} satisfy
\begin{align}
  &p_1\int_h^\infty\tilde{I}(z)z^{-p_1-1}\mathrm{d}z \, h^{p_1-1}
  -p_2\int_0^h\tilde{I}(z)z^{p_2-1}\mathrm{d}z \, h^{-p_2-1}\nonumber\\
  =\ &\left[p_1\int_{z_0}^\infty \tilde{I}(z)z^{-p_1-1}\mathrm{d}z-b_0z_0^{-p_1}\right]h^{p_1-1}.\label{eq:h':simplify}
\end{align}
 For the remaining terms in~\eqref{eq:Hpdef}, l'H\^opital's rule gives
\begin{align*}
    &\lim_{h\downarrow 0}\left(p_1(p_1-1)\frac{\int_h^\infty \tilde{U}_2(H(z))z^{-p_1}\mathrm{d}z}{h^{1-p_1}}+p_2(p_2+1)\frac{\int_0^h\tilde{U}_2(H(z))z^{p_2}\mathrm{d}z}{h^{p_2+1}}-(p_1+p_2)\tilde{U}_2(H(h))\right)\\
    =&p_1\tilde{U}_2(H(0))+p_2\tilde{U}_2(H(0))-(p_1+p_2)\tilde{U}_2(H(0))=0,
\end{align*}
Combining this limit with~\eqref{eq:h':simplify}, we obtain $\lim\limits_{h\downarrow 0}H'(h)=0$. Hence, $H\in C^1([0,\infty))\cap C^2((0,\infty))$.

\subsection{Proof of  Theorem \ref{thm:vnew:case2}}\label{appendix:sec:case2:2}
Let $H$ be the function obtained in Theorem \ref{thm:Hnew:case2}. 
As in the proof of Theorem \ref{thm:v}, define
\begin{align*}
    F(z):=H\left(\frac{1}{z}\right), \quad z\in \left(0,\infty\right).
\end{align*}
Then $F\in C^2(\left(0,\infty\right))$ satisfies \eqref{eq:F} (with $A=\infty$) and $F'(z)<0$ for any $z\in \left(0,\infty\right)$.
Next, let $f:(\underline{x},\infty)\to \left(0,\infty\right)$ be the inverse function of $F$ satisfying $F(f(x))=x$ for any  $x\in (\underline{x},\infty)$.  Then, $f\in  C^2((\underline{x},\infty))$ satisfies \eqref{eq:f}.  Moreover,
\begin{align}
    \lim_{x\downarrow \underline{x}}\frac{f(x)}{(x-\underline{x})^{-\frac{1}{p_1}}}&=\lim_{x\downarrow \underline{x}}\frac{f(x)}{(F(f(x))-\underline{x})^{-\frac{1}{p_1}}}\overset{z=f(x)}{=}\lim_{z\uparrow \infty}\frac{z}{(F(z)-\underline{x})^{-\frac{1}{p_1}}}\nonumber\\&\overset{h=1/z}{=}\left(\lim_{h\downarrow 0}\frac{H(h)-\underline{x}}{h^{p_1}}\right)^{\frac{1}{p_1}}
    =\left(\lim_{h\downarrow 0}\frac{H' 
    (h)}{p_1h^{p_1-1}}\right)^{\frac{1}{p_1}}=\left(\frac{l_H}{p_1}\right)^{\frac{1}{p_1}}.\label{eq:limit:v'}
\end{align}
Combining this with the fact that $p_1>1$, $f$ is integrable at $\underline{x}$. 
Define
\begin{align*}
    v(x):=V(\underline{x})+\int_{\underline{x}}^xf(z)\md z, \quad x\in[\underline{x},\infty).
\end{align*}
Then $v\in C([\underline{x},\infty))\cap C^3((\underline{x},\infty))$ because $f\in  C^2((\underline{x},\infty))$, and  the first limit in \eqref{eq:property:v:case2} holds. We then show that $v$ satisfies \eqref{eq:ode:final:case2}.  As in the proof of Theorem \ref{thm:v}, for any $x>\underline{x}$ and $\epsilon\in(0,x-\underline{x})$, integrating both sides of the ODE in \eqref{eq:f} from $\underline{x}+\epsilon$ to $x$, \eqref{eq:v:case2:epsilon} remains valid.
Note that
\begin{align*}
    &\lim_{\epsilon\downarrow 0}\frac{[v'(\underline{x}+\epsilon)]^2}{v''(\underline{x}+\epsilon)}=\lim_{\epsilon\downarrow 0}\frac{[f(\underline{x}+\epsilon)]^2}{f'(\underline{x}+\epsilon)}\overset{z=f(\underline{x}+\epsilon)}{=}\lim_{z\to \infty}z^2F'(z)\overset{1/h=z}{=}-\lim_{h\downarrow0}H'(h)=0,\\
    &\lim_{\epsilon \downarrow 0}\epsilon v'(\underline{x}+\epsilon)= \lim_{\epsilon \downarrow 0}\epsilon f(\underline{x}+\epsilon)\overset{z=f(\underline{x
    }+\epsilon)}{=}\lim_{z\to \infty}z(F(z)-\underline{x})\overset{1/h=z}{=}\lim_{h\downarrow0}\frac{H(h)-\underline{x}}{h}=0.
\end{align*}
Using the fact that, by \eqref{eq:I} and \eqref{eq:D}, $D(v'(\underline{x}+\epsilon))=U_1(b_0)-b_0v'(\underline{x}+\epsilon)=U_1(b_0)-r\underline{x}v'(\underline{x}+\epsilon)$ for $\epsilon$ small enough, and letting $\epsilon\to0$ in \eqref{eq:v:case2:epsilon}, we obtain
\begin{align*}
     (\delta-r+\theta^2) (v(x)-V(\underline{x}))&= U_2(x)- U_2(\underline{x}) +rxv'(x) -r(v(x)-V(\underline{x}))+D(v'(x))-U_1(b_0)\nonumber\\
     &\quad +\frac{1}{2} \theta^2\left(2(v(x)-V(\underline{x}))-\frac{[v'(x)]^2}{v''(x)}\right).
\end{align*}
Combining this with
$V(\underline{x})
=\frac{1}{\delta}U_1(b_0)
+\frac{1}{\delta}U_2(\underline{x})$
in \eqref{eq:V:underline:x}, we conclude that
$v$ satisfies \eqref{eq:ode:final:case2}.

It remains to prove the remaining assertions in \eqref{eq:property:v:case2}. Following the same arguments as in the proof of
Theorem~\ref{thm:v}, and using
\eqref{eq:Hnew:lower-bound} and
\eqref{eq:Hnew:derivative-bound}, we obtain
\[
v'(x)>0,\qquad v''(x)<0,\qquad
I(v'(x))\leq
\frac{\theta^2p_1(p_1+p_2)}{2}x,\qquad
-\frac{v'(x)}{xv''(x)}\leq p_1,
\quad x>\underline{x}.
\]
Thus, one may take
$C_{v,7}=\frac{\theta^2p_1(p_1+p_2)}{2}$ and $C_{v,8}=p_1$. Moreover,
\begin{align*}
    &\limsup_{x\to\infty}\frac{v'(x)}{x^{\gamma-1}}= \limsup_{x\to\infty}\frac{f(x)}{x^{\gamma-1}}\overset{z=f(x)}{=} \limsup_{z\to0}\frac{z}{[F(z)]^{\gamma-1}}\\=\ &\limsup_{z\to0}\left(\frac{z^{\frac{1}{\gamma-1}}}{F(z)}\right)^{\gamma-1}\overset{h=1/z}=\limsup_{h\to\infty}\left(\frac{H(h)}{h^{\frac{1}{1-\gamma}}}\right)^{1-\gamma}<\infty.
\end{align*}
Here, the second equality is justified by
$\lim\limits_{x\to\infty}f(x)=0$, which follows from $f=F^{-1}$ and
$
\lim\limits_{z\downarrow0}F(z)
=\lim\limits_{z\downarrow0}H(1/z)=\infty$. The preceding estimate for $v'$ directly yields
$v(x)\leq C_{v,6}x^\gamma$ for some $C_{v,6}>0$ and all
$x\geq\underline{x}$.

\subsection{Proof of Theorem \ref{thm:strong:case2}}\label{appendix:sec:case2:3}

By \eqref{eq:property:v:case2}, the drift coefficient
$x\mapsto rx+\sigma\theta\pi^*(x)x-b^*(x)$ and the diffusion
coefficient $x\mapsto\sigma\pi^*(x)x$ of
SDE~\eqref{eq:verify:wealth} have linear growth and vanish at
$\underline{x}$. Thus, as in the proof of
Theorem~\ref{thm:strong:case1}, it remains only to verify their local
Lipschitz continuity on $[\underline{x},\infty)$; the same truncation argument then yields a unique global strong
solution taking values in $[\underline{x},\infty)$.

First, since
$\lim\limits_{x\downarrow\underline{x}}v'(x)=\infty$, the first-case
argument in Step~1 of the proof of Theorem~\ref{thm:strong:case1}
shows that $x\mapsto rx-b^*(x)$ is locally Lipschitz continuous on
$[\underline{x},\infty)$.

Second, 
$\sigma\pi^*(x)x=-\theta\frac{v'(x)}{v''(x)}$ is locally Lipschitz continuous
on $(\underline{x},\infty)$ by the regularity of $v$. It therefore
suffices to establish its local Lipschitz continuity at
$\underline{x}$. With
\eqref{eq:ode:final:case2} in place of \eqref{eq:ode:final},
Equation~\eqref{eq:diffusion:g} remains valid.
Since $\lim\limits_{x\downarrow\underline{x}}v'(x)=\infty$,
\eqref{eq:I} yields some $\epsilon>0$ such that
$I(v'(x))=b_0=r\underline{x}$ for
$x\in(\underline{x},\underline{x}+\epsilon]$.
Differentiating \eqref{eq:diffusion:g} and using \eqref{eq:g:prime}, we obtain
\begin{align}
\frac{\md}{\md x}\left(\frac{\theta^2v'(x)}{2v''(x)}\right)
&=\frac{\md}{\md x}\left(\frac{g(x)}{v'(x)}\right)
=\frac{g'(x)}{v'(x)}-\frac{v''(x)g(x)}{[v'(x)]^2}=\frac{g'(x)}{v'(x)}-\frac{\theta^2}{2}\nonumber\\
&=r-\delta-\frac{\theta^2}{2}
+\frac{U_2'(x)}{v'(x)}
+\frac{r(x-\underline{x})v''(x)}{v'(x)},
\quad x\in(\underline{x},\underline{x}+\epsilon].
\label{eq:diffusion:derivative:case2}
\end{align}
Since \eqref{eq:limit:v'} implies that
$\lim\limits_{x\downarrow\underline{x}}f(x)=\infty$, we have
\begin{align*}
    &\lim_{x\downarrow \underline{x}}\frac{f'(x)}{(x-\underline{x})^{-\frac{1}{p_1}-1}}=\lim_{x\downarrow \underline{x}}\frac{f'(x)}{(F(f(x))-\underline{x})^{-\frac{1}{p_1}-1}}\overset{z=f(x)}{=}\lim_{z\uparrow \infty}\frac{(F(z)-\underline{x})^{\frac{1}{p_1}+1}}{F'(z)}\\\overset{h=1/z}{=}\ &-\lim_{h\downarrow0}\frac{(H(h)-\underline{x})^{\frac{1}{p_1}+1}}{h^2H'(h)}=-\left(\lim_{h\downarrow0}\frac{H(h)-\underline{x}}{h^{p_1}}\right)^{\frac{1}{p_1}+1}\lim_{h\downarrow0}\frac{h^{p_1+1}}{h^2H'(h)}=-\frac{l_H^{\frac{1}{p_1}}}{p_1^{\frac{1}{p_1}+1}}.
\end{align*}
Since $v'=f$ and $v''=f'$, combining the preceding limit with
\eqref{eq:limit:v'} yields
\begin{align*}
    \lim_{x\downarrow\underline{x}}\frac{v'(x)}{(x-\underline{x})v''(x)}=-p_1.
\end{align*}
By the preceding limit and the continuity of $U_2'$, the right-hand
side of \eqref{eq:diffusion:derivative:case2} is bounded near
$\underline{x}$. Hence, $x\mapsto\sigma\pi^*(x)x$ is locally
Lipschitz continuous at $\underline{x}$. This completes the proof.

\section{Difficulties of Alternative Methods} \label{sec:discussion}

In this section, we discuss whether classical methods commonly used for constrained problems can be applied to our model. In particular, we consider the martingale  method, and the viscosity solution method, and show that these methods are not directly applicable in our setting. The underlying difficulties are discussed in detail below.

\subsection{The Martingale  Method}
For simplicity, in this section, we assume that $b_0=0$ and $\alpha=1$. Moreover, we consider the investment strategy in terms of the  amount rather than the proportion of wealth. As can be seen from the main analysis, our method applies to both formulations and leads to identical results. For any $x\geq\underline{x}$, set $y:=x-\underline{x}\geq0$ and define the shifted wealth process
	$Y^{y;u}:=X^{x;u}-\underline{x}$. Then $Y^{y;u}$ satisfies
	\begin{align*}
		\begin{cases}
			&\mathrm{d} Y^{y;u}(s)=rY^{y;u}(s)\md s+r\underline{x}\md s +\sigma \theta\pi(s)\mathrm{d}s-b(s)\mathrm{d}s+\sigma\pi(s)\mathrm{d}W(s),\\
			&Y^{y;u}(0)=y.
		\end{cases}
	\end{align*}
	Then $X^{x;u}\geq \underline{x}$ is equivalent to the  borrowing constraint $Y^{y;u}\geq 0$. Define the admissible  set
 \begin{align*}
		\tilde{\mathcal{A}}(y):&=\bigg\{u=(b,\pi):\int_0^T(|b(s)|+|\pi(s)|^2)\md s<\infty, \quad \text{for all } T\geq0, \text{ a.s.}; \\ &\quad \quad\quad\quad\quad\quad \quad b(s)\geq 0\,\, \text{and}\,\,{Y^{y;u}(s)}\geq 0, \quad \text{for a.e. } s\geq0, \text{ a.s.} \bigg\}.
	\end{align*}
 Accordingly, the shifted problem is a state-constrained control problem with the standard borrowing constraint $Y^{y;u}\geq0$:
	\begin{align}\label{eq:value:shifted}
		\tilde{V}(y):=V(y+\underline{x})=\sup_{(b,\pi)\in\tilde{\mathcal{A}}(y)}\mathbb{E}\bigg[\int_0^{\infty}\mathrm{e}^{-\delta s}U_1(b(s))\md s+\int_0^{\infty}\mathrm{e}^{-\delta s}U_2(Y^{y;u}(s)+\underline{x})\md s\bigg].
	\end{align}
 Under this borrowing constraint, a benefit process $b$ can be supported by some portfolio strategy while maintaining $Y^{y;u}\geq0$ if and only if the following inequality holds; see \cite*{he1993labor} and \cite*{el1998optimization}:
	$$
	\sup _{\mathcal{D} \in \mathcal{N} \mathcal{I}} \mathbb{E}\left[\int_0^\infty H_t \mathcal{D}_t\left(b_t-r\underline{x}\right) \md t\right] \leq y,
	$$
where $\mathcal{N} \mathcal{I}$ denotes the set of all nonnegative, non-increasing, right-continuous processes with left limits and starting at $1$, and $H$ is the pricing kernel satisfying:
\begin{align*}
    \md H_t=-H_t\left(r\md t+\theta \md W(t)\right).
\end{align*}
 In the case $U_2 = 0$, the objective in \eqref{eq:value:shifted} depends solely on the benefit process $b(\cdot)$, and the problem admits a classical martingale dual formulation. The dual control variable $\mathcal{D} \in \mathcal{NI}$ acts as a state-price density distortion, and the primal problem can be recovered from the solution to the dual HJB equation via standard convex duality arguments.

However, when $U_2\not\equiv0$, the above benefit-only dual construction is no longer sufficient because the objective in \eqref{eq:value:shifted} also depends explicitly on the wealth process. Under an appropriate transversality condition, eliminating the wealth state through its conditional-valuation representation would make the resulting benefit-only objective depend nonlocally on the future path of $b(\cdot)$. Consequently, the classical benefit-only martingale reduction does not apply immediately.

\subsection{The Viscosity Solution Method}
We consider the case $\alpha x_0>b_0/r$. A viscosity characterization of the value function provides a natural starting point, but two further steps are needed for the conclusions of this paper: establishing classical regularity and obtaining the estimates required by the verification theorem.

First, even supposing that $V$ has been shown to be a strictly increasing, concave $C^1$ viscosity solution, the improvement to $C^2$ requires an additional argument for the present HJB equation. Define
\[
K(x):=\delta V(x)-U_2(x)-D(V'(x))-rxV'(x).
\]
For a classical solution with $V'>0$ and $V''<0$, the HJB equation gives
\[
K(x)=-\frac{\theta^2(V'(x))^2}{2V''(x)}>0,
\qquad
V''(x)=-\frac{\theta^2(V'(x))^2}{2K(x)}.
\]
To obtain classical regularity from this rearrangement, one must first establish that $K$ is strictly positive throughout $(\underline{x},\infty)$. Continuity of $V$ and $V'$ alone does not exclude zeros of $K$, at which the displayed expression for $V''$ would become singular. Moreover, an identity valid at points of twice differentiability does not by itself imply $C^2$ regularity. A separate argument is therefore needed to exclude such zeros and to justify the passage from the viscosity equation to the classical equation.

Second, even if
\[
V\in C^2((\underline{x},\infty)),
\qquad V'>0,\qquad V''<0,
\]
has been established, this interior regularity alone does not provide the boundary properties required by Theorem~\ref{thm:verification}. Thus, obtaining interior $C^2$-regularity through a viscosity solution method is far from sufficient for verification; precise boundary and growth estimates are also
needed, and our approach works directly with the HJB equation, obtaining these estimates in
the course of constructing a classical solution. In particular, one must also prove that $V\in C^1([\underline{x},\infty))$ and that
\[
U_1'(r\underline{x})<V'(\underline{x})<\infty,
\qquad
\lim_{x\downarrow\underline{x}}V''(x)=-\infty.
\]
Together with the HJB equation, these properties yield the more precise estimate in Lemma~\ref{lma:limit:v''}:
\[
\lim_{x\downarrow\underline{x}}
(x-\underline{x})[V''(x)]^2
=
\frac{\theta^2[V'(\underline{x})]^2}
{4\bigl[r\underline{x}-I(V'(\underline{x}))\bigr]}.
\]
This estimate describes the behavior of the diffusion coefficient
\[
\sigma\pi^*(x)x=-\theta\frac{V'(x)}{V''(x)}
\]
near the boundary and enters the analysis of the wealth process under the proposed optimal strategy. Moreover, the verification theorem requires global growth estimates, including
\[
\sup_{x>\underline{x}}\frac{I(V'(x))}{x}<\infty,
\qquad
\sup_{x>\underline{x}}
\left(-\frac{V'(x)}{xV''(x)}\right)<\infty,
\]
as well as the prescribed growth bound on $V$. One must further establish that the SDE associated with the feedback controls admits a unique strong solution satisfying the state constraint. These conclusions require additional analysis beyond interior $C^2$-regularity.

Our construction obtains the required derivative estimates through the auxiliary function $H$. The required properties of $H$ and the resulting regularity, boundary behavior, and growth estimates are established in Theorems~\ref{thm:H} and~\ref{thm:v}, respectively.

\section{Further Results under CRRA Utility}\label{sec:crra-results}
Throughout this appendix, assume that the utility functions in \eqref{eq:valuefunction:def} satisfy $U_1(b)=b^\gamma/\gamma$ and $U_2(x)=\eta x^\gamma/\gamma$ for some $\gamma\in(0,1)$ and $\eta>0$.

\subsection{Homogeneity and Scale Normalization}\label{sec:Homogeneity}
We first show that the value function is jointly homogeneous in current wealth, initial wealth, and the minimum benefit. For any $x_0>0$, $b_0\in(0,rx_0]$, $\alpha\in(0,1]$, and $x\geq \max\left\{\alpha x_0,\frac{b_0}{r}\right\}$, define
\begin{align*}
V(x;x_0,b_0):=\sup_{(b,\pi)\in\mathcal{A}(x;x_0,b_0)}\mathbb{E}\bigg[\int_0^{\infty}\mathrm{e}^{-\delta s}U_1(b(s))\md s+\int_0^{\infty}\mathrm{e}^{-\delta s}U_2(X^{x;u}(s))\md s\bigg],
\end{align*}
where
\begin{align*}
\mathcal{A}(x;x_0,b_0):=\bigg\{u=(b,\pi):&\int_0^T(|b(s)|+|\pi(s)|^2)\md s<\infty\quad \text{for all } T\geq0, \text{ a.s.}; \\
&b(s)\geq b_0\,\, \text{and}\,\, X^{x;u}(s)\geq \max\left\{\alpha x_0,\frac{b_0}{r}\right\}
\quad \text{for a.e. } s\geq0, \text{ a.s.}\bigg\}.
\end{align*}
The optimal benefit rate and investment strategy are given by
\begin{equation}\label{eq:homogeneity:strategies}
b^*(x;x_0,b_0)=
\max\left\{b_0,\left[V_x(x;x_0,b_0)\right]^{\frac{1}{\gamma-1}}\right\},\quad 
\pi^*(x;x_0,b_0)
=
-\dfrac{(\mu-r_1)V_x(x;x_0,b_0)}{\sigma^2xV_{xx}(x;x_0,b_0)},
\end{equation}

For any $k>0$, the linearity of the state dynamics yields $X^{kx;(kb,\pi)}=kX^{x;(b,\pi)}$. Hence, the map $(b,\pi)\mapsto(kb,\pi)$ is a bijection from $\mathcal{A}(x;x_0,b_0)$ onto $\mathcal{A}(kx;kx_0,kb_0)$. The homogeneity of $U_1$ and $U_2$ then gives
\begin{align*}
V(kx;kx_0,kb_0)=k^\gamma V(x;x_0,b_0).
\end{align*}
Differentiating this identity yields
\begin{equation}\label{eq:homogeneity:derivative-scaling}
V_x(kx;kx_0,kb_0)=k^{\gamma-1}V_x(x;x_0,b_0),\quad
V_{xx}(kx;kx_0,kb_0)=k^{\gamma-2}V_{xx}(x;x_0,b_0).
\end{equation}
It follows directly from \eqref{eq:homogeneity:strategies} and \eqref{eq:homogeneity:derivative-scaling} that
\begin{align*}
b^*(kx;kx_0,kb_0)
&=\max\left\{kb_0,k\left[V_x(x;x_0,b_0)\right]^{\frac{1}{\gamma-1}}\right\}=k b^*(x;x_0,b_0),\\
\pi^*(kx;kx_0,kb_0)
&= -\frac{(\mu-r_1)k^{\gamma-1}V_x(x;x_0,b_0)}{\sigma^2kx\,k^{\gamma-2}V_{xx}(x;x_0,b_0)}
=\pi^*(x;x_0,b_0).
\end{align*}
Taking $k=1/x_0$ yields
\begin{align*}
 \frac{V(x;x_0,b_0)}{x^\gamma}
=\frac{V\left(\frac{x}{x_0};1,\frac{b_0}{x_0}\right)}{(x/x_0)^\gamma},\quad
\frac{b^*(x;x_0,b_0)}{x}
=\frac{b^*\left(\frac{x}{x_0};1,\frac{b_0}{x_0}\right)}{x/x_0},\quad
 \pi^*(x;x_0,b_0)
=\pi^*\left(\frac{x}{x_0};1,\frac{b_0}{x_0}\right).
\end{align*}
Thus, both $\pi^*$ and $b^*/x$ are homogeneous of degree zero under the joint scaling of all monetary quantities. Holding $b_0/x_0$ fixed, the numerical curves of $b^*/x$ as functions of $x/x_0$ can therefore be computed under the normalization $x_0=1$ without loss of generality.

\subsection{Exponentially Growing Guarantees}\label{sec:extend}
\begingroup
\setlength{\abovedisplayskip}{5pt plus 1pt minus 1pt}
\setlength{\belowdisplayskip}{5pt plus 1pt minus 1pt}
\setlength{\abovedisplayshortskip}{3pt plus 1pt}
\setlength{\belowdisplayshortskip}{5pt plus 1pt minus 1pt}
We next allow the minimum benefit and wealth floor to grow at a common rate $0<\nu<r$. The constant constraints are then replaced by
\begin{align*}
b(t)\geq b_0\mathrm{e}^{\nu t}\quad\text{and}\quad
X(t)\geq \alpha x_0\mathrm{e}^{\nu t}\quad\text{for a.e. }t\geq0,\ \text{a.s.},
\end{align*}
where $b_0\in(0,(r-\nu)x_0]$ and $\alpha\in(0,1]$.
Introduce $\tilde{X}(t):=\mathrm{e}^{-\nu t}X(t)$ and $\tilde{b}(t):=\mathrm{e}^{-\nu t}b(t)$. Applying this transform to the wealth dynamics \eqref{eq:wealth:statics}, we obtain
\begin{align*}
\left\{
\begin{aligned}
    &\mathrm{d} \tilde{X}(t)=[r-\nu+\sigma \theta\pi(t)]\tilde{X}(t)\mathrm{d}t-\tilde{b}(t)\mathrm{d}t+\sigma\pi(t)\tilde{X}(t)\mathrm{d}W(t),\\
    &\tilde{X}(0)=x_0.
    \end{aligned}
    \right.
\end{align*}
 Applying the same transform to the objective \eqref{eq:valuefunction:def} and using CRRA homogeneity gives the equivalent problem
\begin{align*}
& \sup_{(\tilde{b},\pi)}\mathbb{E}\bigg[
\int_0^\infty \mathrm{e}^{-(\delta-\nu\gamma)t}
\left(U_1(\tilde{b}(t))+U_2(\tilde{X}(t))\right)\md t\bigg],\\
&\text{subject to}\quad \tilde{b}(t)\geq b_0\quad\text{and}\quad
\tilde{X}(t)\geq \alpha x_0\quad\text{for a.e. }t\geq0,\ \text{a.s.}
\end{align*}
 Thus, the problem with growing guarantees reduces to the baseline problem with $r$ and $\delta$ replaced by $r-\nu$ and $\delta-\nu\gamma$, respectively.
\endgroup

\end{document}